\setlist{noitemsep}
\definecolor{orcidlogocol}{HTML}{A6CE39}
\tikzset{
	orcidlogo/.pic={
		\fill[orcidlogocol] svg{M256,128c0,70.7-57.3,128-128,128C57.3,256,0,198.7,0,128C0,57.3,57.3,0,128,0C198.7,0,256,57.3,256,128z};
		\fill[white] svg{M86.3,186.2H70.9V79.1h15.4v48.4V186.2z}
		svg{M108.9,79.1h41.6c39.6,0,57,28.3,57,53.6c0,27.5-21.5,53.6-56.8,53.6h-41.8V79.1z M124.3,172.4h24.5c34.9,0,42.9-26.5,42.9-39.7c0-21.5-13.7-39.7-43.7-39.7h-23.7V172.4z}
		svg{M88.7,56.8c0,5.5-4.5,10.1-10.1,10.1c-5.6,0-10.1-4.6-10.1-10.1c0-5.6,4.5-10.1,10.1-10.1C84.2,46.7,88.7,51.3,88.7,56.8z};
	}
}
\newcommand\orcidicon[1]{\href{https://orcid.org/#1}{\mbox{\scalerel*{
				\begin{tikzpicture}[yscale=-1,transform shape]
					\pic{orcidlogo};
				\end{tikzpicture}
			}{|}}}}
\numberwithin{equation}{section}
\newcommand{\id}{\mathrm{id}}
\DeclareMathOperator{\pr}{pr}
\newcommand{\bbLbrack}{[\kern-0.4em{[}\,}
\newcommand{\bbRbrack}{\,]\kern-0.4em{]}}
\newcommand{\ip}[2]{(#1,#2)}
\newcommand{\ii}{{\rm i}}
\newcommand{\ee}{{\rm e}}
\newcommand{\xu}{\underline{x}}
\newcommand{\ku}{\underline{k}}
\newcommand{\vol}{{\rm vol}}
\newcommand{\dd}{{\rm d}}
\newcommand{\dvol}{{\rm dvol}}
\DeclareMathOperator{\supp}{supp}
\DeclareMathOperator{\End}{End}
\DeclareMathOperator{\Hom}{Hom}
\DeclareMathOperator{\Lin}{Lin} 
\DeclareMathOperator{\Char}{Char}
\newcommand{\1}{1\!\!1}
\newcommand{\beq}{\begin{equation}}
\newcommand{\ene}{\end{equation}}
\newcommand{\RR}{\mathbb{R}}
\newcommand{\CC}{\mathbb{C}}
\newcommand{\ZZ}{\mathbb{Z}}
\newcommand{\NN}{\mathbb{N}}
\newcommand{\pc}{\textnormal{pc}}
\newcommand{\fc}{\textnormal{fc}}
\newcommand{\spc}{\textnormal{spc}}
\newcommand{\sfc}{\textnormal{sfc}}
\renewcommand{\sc}{\textnormal{sc}}
\newcommand{\tc}{\textnormal{tc}}
\newcommand{\Hf}{\mathfrak{H}}
\DeclareMathOperator{\EoM}{EoM}
\DeclareMathOperator{\fEoM}{fEoM}
\DeclareMathOperator{\Sol}{Sol} 
\DeclareMathOperator{\WF}{WF}
\newcommand{\pol}{\textnormal{pol}}
\newcommand{\Alg}{\mathsf{Alg}}
\newcommand{\grAlg}{\mathsf{grAlg}}
\newcommand{\Met}{\mathsf{Met}}
\newcommand{\HVB}{\mathsf{HVB}}
\newcommand{\Sympl}{\mathsf{Sympl}}
\newcommand{\GreenHyp}{\mathsf{GreenHyp}}
\newcommand{\fGreenHyp}{\mathsf{fGreenHyp}}
\newcommand{\GlobHypGreen}{\mathsf{GlobHypGreen}}
\newcommand{\fGlobHypGreen}{\mathsf{fGlobHypGreen}}
\newcommand{\Loc}{\mathsf{Loc}}
\newtheorem{thm}{Theorem}[section]
\newtheorem{lemma}[thm]{Lemma}
\newtheorem{cor}[thm]{Corollary}
\newtheorem{rem}[thm]{Remark}
\theoremstyle{definition}
\newtheorem{defn}[thm]{Definition}
\newcommand{\Bf}{\mathscr{B}}
\newcommand{\Qf}{\mathscr{Q}}
\newcommand{\Sf}{\mathscr{S}}
\newcommand{\Xf}{\mathscr{X}}
\newcommand{\Yf}{\mathscr{Y}}
\newcommand{\Zf}{\mathscr{Z}}
\newcommand{\Xc}{\mathcal{X}}
\newcommand{\Zc}{\mathcal{Z}}
\newcommand{\DD}{\mathscr{D}}
\newcommand{\adj}[1]{\leftidx{^{t}}{#1}{}}
\newcommand{\hadj}[1]{\leftidx{^{\dagger}}{#1}{}}
\newcommand{\sadj}[1]{\leftidx{^{\star}}{#1}{}}
\newcommand{\dlangle}{\langle\!\langle}
\newcommand{\drangle}{\rangle\!\rangle}
\newcommand{\Ac}{\mathcal{A}}
\newcommand{\Cc}{\mathcal{C}}
\newcommand{\Nc}{\mathcal{N}}
\newcommand{\Pc}{\mathcal{P}}
\newcommand{\Rc}{\mathcal{R}}
\newcommand{\Uc}{\mathcal{U}}
\newcommand{\Vc}{\mathcal{V}}
\newcommand{\Had}{\mathrm{Had}}
\newcommand{\rfhgho}{RFHGHO}
\newcommand{\ogth}{{\mathfrak o}}
\newcommand{\tgth}{{\mathfrak t}}
\newcommand{\Mb}{{\boldsymbol{M}}}
\newcommand{\twopt}{\textnormal{2pt}}
\newcommand{\knl}{\textnormal{knl}}
\begin{document}

\title{Hadamard states for decomposable Green-hyperbolic operators}
\author{Christopher J. Fewster{\orcidicon{0000-0001-8915-5321}}\thanks{chris.fewster@york.ac.uk}\\[6pt]  
	\small Department of Mathematics, University of York, Heslington, York YO10 5DD, United Kingdom.\\[4pt]
	\small York Centre for Quantum Technologies, University of York, Heslington, York YO10 5DD, United Kingdom.
} 
 
\date{\daymonthyear\today}

\maketitle
 
\begin{abstract} 
Hadamard states were originally introduced for quantised Klein--Gordon fields and occupy a central position in the theory of quantum fields on curved spacetimes. Subsequently they have been developed for other linear theories, such as the Dirac, Proca and Maxwell fields, but the particular features of each require slightly different treatments. 

This paper gives a generalised definition of Hadamard states for linear bosonic and fermionic theories encompassing a range of theories that are described by Green-hyperbolic operators with `decomposable' Pauli--Jordan propagators, including theories whose bicharacteristic curves are not necessarily determined by the spacetime metric. The new definition reduces to previous definitions for normally hyperbolic and Dirac-type operators. We develop the theory of Hadamard states in detail, showing that our definition propagates under the equation of motion, and is also stable under pullbacks and suitable pushforwards. 
There is an equivalent formulation in terms of Hilbert space valued distributions, and the generalised Hadamard condition on $2$-point functions constrains the singular behaviour of all $n$-point functions. For locally covariant theories, the Hadamard states form a covariant state space. It is also shown how Hadamard states may be combined through tensor products or reduced by partial tracing while preserving the Hadamard property. As a particular application it is shown that state updates resulting from nonselective measurements preserve the Hadamard condition. 

The treatment we give was partly inspired by a recent work of Moretti, Murro and Volpe (MMV) [{Ann.\ H.\ Poincar\'{e} \textbf{24}, 3055--3111 (2023)}] on the neutral Proca field. Among the other applications, we revisit the neutral Proca field and prove a complete equivalence between the MMV definition of Hadamard states and an older work of Fewster and Pfenning [{J.\ Math.\ Phys.\
	\textbf{44}, 4480--4513 (2003)}]. 
\end{abstract}
 

\section{Introduction}
\label{sec:Intro}

A striking difference between quantum field theory (QFT) and quantum mechanics is that the same algebra of observables can have many inequivalent Hilbert space representations. An equally striking difference between QFT on curved spacetimes and QFT in Minkowski spacetime is that 
there is generally no physically preferred state that can be used to identify a particularly natural representation -- a long-standing folk-theorem that can be formulated as a rigorous no-go result~\cite{FewVer:dynloc_theory,Fewster_artofthestate:2018}. 
 
Lacking a single preferred state, attention is directed instead to the question of what class of states can be considered physically relevant. For linear Klein--Gordon fields in general globally hyperbolic spacetimes, the consensus answer is the class of Hadamard states, which were first identified in attempts to renormalise the stress-energy tensor~\cite{AdlerLiebermanNg:1977}, with key properties established in~\cite{FullingSweenyWald:1978,FullingNarcowichWald}, and a full formal definition eventually given in~\cite{KayWald-PhysRep} in terms of the Hadamard expansion.   
In outline, the two-point function $W(x,y)=\langle \phi(x)\phi(y)\rangle$ of a Hadamard state in four spacetime dimensions should take the form
\begin{equation}\label{eq:Had_outline}
	W(x,y) = \frac{U(x,y)}{4\pi^2\sigma_+(x,y)} + V(x,y)\log (\sigma_+(x,y)/\ell^2) + \text{smooth};
\end{equation}
see~\cite{DecaniniFolacci:2008} for general spacetime dimensions. Here, $U$ and $V$ are specific geometrically determined smooth functions, $\ell>0$ is an arbitrary length scale and $\sigma_+$ is a particular distributional regularisation of the 
signed squared geodesic separation between points $x$ and $y$, which can be defined within geodesically convex normal neighbourhoods.
A subtle point, which was missed for many years, is whether there is any dependence on the choice of these neighbourhoods; this was addressed and resolved in~\cite{Moretti:2021} with a slightly tightened definition.

One sees from~\eqref{eq:Had_outline} that all Hadamard states have the same singular structure and the specificity of any individual state is encapsulated within the smooth terms in~\eqref{eq:Had_outline}. Moreover, the geometric nature of the singular terms permits covariant regularisation of the stress-energy tensor and related observables by point-splitting.

In a major development thirty years ago, Radzikowski showed how~\eqref{eq:Had_outline} could be replaced by a wavefront set condition on the two-point function~\cite{Radzikowski_ulocal1996}. (It transpires that this reformulation is equivalent to the corrected definition of~\cite{Moretti:2021}). We recall that the wavefront set~\cite{Hoer_FIOi:1971,Hormander1} describes both positional and directional information about distributional singularities, and  plays an essential role in the general theory of propagation of singularities~\cite{DuiHoer_FIOii:1972}. Radzikowski replaced~\eqref{eq:Had_outline} by the condition
\begin{equation}\label{eq:Had_WF}
	\WF(W) = \{(x,k;x',-k')\in T^*(M\times M): (x,k)\sim (x',k'),~k\in \Nc^+ \},
\end{equation}
where $\Nc^+$ is the bundle of future-directed nonzero null covectors, and the relation
$(x,k)\sim (x',k')$ holds if and only if the two points lie on a common bicharacteristic strip in the cotangent bundle $T^*M$ -- the precise definition is recalled in Section~\ref{sec:Hadamard}.

The microlocal form~\eqref{eq:Had_WF} of the Hadamard condition initiated, and continues to stimulate, a wave of remarkable progress in the theory of QFT in curved spacetimes. Among other things, it has provided the basis for a rigorous perturbative construction of interacting fields in curved spacetime~\cite{BrFr2000,Ho&Wa01,Hollands:2008,FreRej_BVqft:2012,Rejzner_book,Duetsch_book}.
It was also a spur to reconsider the locality and covariance of QFT across different spacetimes, resulting in an understanding that QFTs can be expressed in terms of functors from categories of spacetimes to suitable target categories~\cite{BrFrVe03,FewVer:dynloc_theory,FewVerch_aqftincst:2015}. 
Other physical applications include general results
on quantum energy inequalities~\cite{Fews00}, the strong cosmic censorship conjecture~\cite{HollandsWaldZahn:2020,JuarezAubry:2024}, the chronology protection conjecture~\cite{KayRadWald:1997}, studies of states of low energy~\cite{Olbermann}, 
and the microlocal formulation of adiabatic states~\cite{JunkerSchrohe}. In addition
the existence and Hadamard nature of various states of interest on specific (particularly black hole) spacetimes~\cite{DappiaggiMorettiPinamonti:2011,Sanders:2015,Klein:2023} has been established, and the asymptotic symmetries of asymptotically flat spacetimes have provided another route to construct Hadamard states~\cite{Moretti:2008},
as have direct series constructions~\cite{Lewandowski:2022} and pseudodifferential techniques~\cite{GerardOulghaziWrochna:2017,VasyWrochna:2018,IslamStrohmaier:2020,FewsterStrohmaier:2025}.  Any Hadamard state determines a Feynman propagator in a simple way. However, the reverse construction of a Hadamard state from a Feynman propagator involves a subtlety that is  resolved in~\cite{FewsterStrohmaier:2025}.

The condition~\eqref{eq:Had_WF} can be simplified to the requirement that
\begin{equation}\label{eq:Had_SV}
	\WF(W)\subset \Nc^+\times \Nc^-,
\end{equation}
where $\Nc^-$ is the bundle of past-pointing nonzero null covectors --
an observation made in~\cite{SahlmannVerch:2000RMP}. Even further, it was shown in~\cite{StVeWo:2002} that a vector state $\psi$ in a Hilbert space representation of the theory
is Hadamard if and only if
\begin{equation}\label{eq:Had_SVW}
	\WF(\Phi(\cdot)\psi)\subset \Nc^-,
\end{equation}
where $\Phi$ is the scalar quantum field in the representation and 
$f\mapsto \Phi(f)\psi$ is understood as a vector-valued distribution taking values in the Hilbert space. This boils down the Hadamard condition to the idea that -- in the high frequency limit -- physical two-point functions are positive-frequency in their first argument and negative frequency in their second, and fields create negative frequency excitations of physical states.

In fact,~\eqref{eq:Had_SV} implies much more detailed information both on $\WF(W)$ and on the higher $n$-point functions of $\omega$. A result of Sanders~\cite{Sanders:2010} shows that $\omega$ is Hadamard if and only if it satisfies the microlocal spectrum condition, which constrains the wavefront sets of all $n$-point functions and delineates the class of states relevant to the perturbative construction of interacting theories~\cite{Rejzner_book}. General reviews on the microlocal techniques in QFT on curved spacetimes and related matters can be found in~\cite{Strohmaier:2009, KhavkineMoretti:2015, Fewster_artofthestate:2018,EncycMP_QFTCST_Kay:2025}. 
In particular, see~\cite{Fewster_artofthestate:2018} for commentary on why the Hadamard condition fails for some apparently natural constructions~\cite{AAS,FewsterVerch-SJ} and discussion of results showing that the Hadamard condition can be motivated by the demand that certain Wick products have finite fluctuations~\cite{FewsterVerch-NecHad}.

The purpose of this paper is to give a detailed, unified and generalised account of the Hadamard condition for a wide class of bosonic and fermionic theories described by suitable Green hyperbolic operators. To explain the motivation for this work, we first describe the existing theory of Hadamard states for more general fields before turning to Green-hyperbolic operators.  

The Hadamard condition has been formulated using microlocal techniques for other theories of interest including the Dirac~\cite{Hollands:2001,Sanders_dirac:2010}, Proca (massive spin-$1$)~\cite{Few&Pfen03}, Maxwell equations~\cite{Few&Pfen03} and linearised gravity~\cite{Hunt:2012thesis}. A general class of normally hyperbolic 
and related Dirac theories was treated in~\cite{SahlmannVerch:2000RMP}; here, we recall that a
normally hyperbolic operator on sections of a vector bundle is a second order partial differential operator, with principal part $g^{\mu\nu}\partial_\mu \partial_\nu\delta^{A}_{\phantom{A}B}$ relative to local coordinates and frame components. See the sources just cited and references therein for the corresponding definitions via Hadamard series. 

These microlocal treatments are, in most cases, somewhat indirect.\footnote{An exception is the treatment of the Dirac equation in~\cite{Sanders_dirac:2010} where the analogue of~\eqref{eq:Had_SVW} is employed.} They proceed by relating
the theory at hand to an auxiliary normally hyperbolic equation, such as the spinorial Klein--Gordon equation, the $1$-form Klein--Gordon equation, the $1$-form wave equation
or the de Donder gauge-fixed equation for linearised gravity. One says that a state of the QFT is Hadamard if its two-point function can be related to a distributional bisolution to the auxiliary equation that is of `Hadamard form', in particular satisfying a microlocal condition of the form~\eqref{eq:Had_WF}. For instance, the mass $m>0$ Proca theory was studied by Fewster and Pfenning (FP) in~\cite{Few&Pfen03}. According to FP, a Proca state is Hadamard if there exists a Hadamard-form bisolution $H$ to the $1$-form Klein--Gordon equation so that the Proca $2$-point function is $W =(1\otimes D)H$, where $D$ is given in terms of the exterior derivative $\dd$ and codifferential $\delta$ by $D=1-m^{-2}\dd\delta$.

Recently, Moretti, Murro and Volpe (MMV)~\cite{MorettiMurroVolpe:2023} have 
proposed a direct definition
of the Hadamard condition in the form~\eqref{eq:Had_WF} for the Proca two-point function, rather than the indirect FP definition. 
Unfortunately, it transpires that~\cite{MorettiMurroVolpe:2023} 
contains a gap that was identified and fixed in a companion paper to this one~\cite{Fewster:2025b}. Nevertheless,~\cite{MorettiMurroVolpe:2023} has provided some of the inspiration for this paper, which will give direct definitions of the Hadamard condition for a wide range of operators. We will also return to the relationship between the FP and MMV definitions of Hadamard states for the Proca theory and demonstrate a full equivalence between them.

Green-hyperbolic operators were introduced by B\"ar~\cite{Baer:2015}, who isolated the essential algebraic structure common to many partial differential operators arising in physics, and showed how much of the analytic theory can be extracted from the algebraic data. The idea is to study those partial differential operators that (together with their formal adjoints) admit advanced and retarded Green operators. Green-hyperbolic operators with suitable additional properties can be quantised as free quantum fields of Bose or Fermi type~\cite{BaerGinoux:2012} as we will describe in Section~\ref{sec:GreenHyperbolic}.

The advanced and retarded Green operators of a Green-hyperbolic operator are required to map test functions $f$ (or, in general, test sections of a vector bundle) to solutions of the 
inhomogeneous equation that are supported in the causal past or future of the support of $f$. This is certainly 
satisfied by normally hyperbolic operators; however, Green-hyperbolicity is much broader, and encompasses the Dirac operator (which is first order), the Proca operator (which has second order terms beyond those coming from the metric) and many others. Further examples include
operators that are normally hyperbolic with respect to a `slow metric', whose cones of future/past-directed causal vectors are subsets of the cone of spacetime future/past-directed causal vectors. Dually, the cone of causal covectors of the slow metric is a superset of the cone of causal covectors for the spacetime metric. This type of behaviour can also be seen in the theory of birefringence for electrodynamics in media (see~\cite{FewsterPfiefferSiemssen:2018,Strohmaier:2021} and references therein). Similarly, suitable convex combinations of normally hyperbolic operators with respect to different metrics are Green-hyperbolic but not normally hyperbolic~\cite{MMVparacausal:2023}.
Meanwhile, Lorentz-violating quantum field theories also result in effective null cones differing from  those of the spacetime geometry~\cite{ArderucioCostaBonderJuarezAubry:2023}. The idea can be pushed further -- see, for example~\cite{Fewster:2023} for the existence of
some nonlocal operators that obey a modified version of Green-hyperbolicity.

Green-hyperbolic operators can be combined in various ways to construct other Green-hyperbolic operators~\cite{Baer:2015}. For example, the direct sum of two Green-hyperbolic operators is Green-hyperbolic. 
This simple fact makes the class of Green-hyperbolic operators a natural source of examples for the theory of measurement for QFT in flat and curved spacetimes introduced in~\cite{FewVer_QFLM:2018} -- see~\cite{EncycMP_Measurement_in_QFT_FewsterVerch2025} for a short presentation and a survey of more general questions concerning measurement for QFT. In this framework, one measures a `system QFT' by coupling it to another `probe QFT'. 
The dynamics of the coupling can be described by a scattering map on the theory of the `uncoupled combination' of system and probe. If both the system and probe are obtained from Green-hyperbolic operators $P$ and $Q$, then their uncoupled combination is obtained from the direct sum $P\oplus Q$. The measurement framework, further developed in~\cite{BostelmannFewsterRuep:2020,FewsterJubbRuep:2022}, applies to general QFTs described within algebraic QFT. It provides a description of measurement schemes for system observables and also the updates to system states resulting from selective and nonselective measurements, while avoiding pathologies. Although QFTs based on Green-hyperbolic operators have provided a good source of examples, it has not yet been shown how the state updates interact with the theory of Hadamard states. One reason for this is that the theory of Hadamard states has not been developed for Green-hyperbolic operators, beyond the normally hyperbolic class and the examples related to that in various ways. Given that the theory of Hadamard states is tightly linked to aspects of the PDE theory, like the principal symbol of the field equation and the related propagation of singularities, while the theory of Green-hyperbolicity deliberately avoids these concepts, it is not immediately clear how to proceed. In fact, the characteristic set of the field equation cannot
have any bearing on the matter, because the Proca field operator turns out to be everywhere characteristic!

The present paper has two goals. First, we will present a generalised definition of what a Hadamard state should be for a class of Green-hyperbolic operators going well beyond the normally hyperbolic class, for both bosonic and fermionic statistics. This not only unifies the discussion of Hadamard states for standard linear theories, but also opens up the possibility of
describing more general measurement interactions involving combinations of theories in our class. Second, we illustrate the advantages of this new framework by presenting applications to measurement theory and the Proca field. 

Our approach proceeds by 
analogy with the standard theory of the Klein--Gordon field, in which the wavefront set of the difference of advanced and retarded Green operators (the Pauli--Jordan propagator) has a wavefront set contained in 
$(\Nc^+\times\Nc^-)\cup (\Nc^-\times\Nc^+)$. Extrapolating, we will focus on Green-hyperbolic operators $P$ whose  advanced-minus-retarded Green operator $E_P$ has a wavefront set obeying
\begin{equation}\label{eq:decomp}
	\WF(E_P)\subset (\Vc^+\times\Vc^-)\cup (\Vc^-\times\Vc^+)
\end{equation}
for some conic set $\Vc^+$ in the spacetime cotangent bundle $T^*M$ that does not intersect its opposite $\Vc^-=-\Vc^+$ (see Section~\ref{sec:Hadamard} for the precise definition). No relation is assumed between $\Vc^\pm$ and the cones $\Nc^\pm$ of null (or causal) covectors. 
Such operators $P$ will be described as $\Vc^\pm$-decomposable. States with two-point functions obeying the obvious modification of~\eqref{eq:Had_SV}, namely
\begin{equation}
	\WF(W)\subset \Vc^+\times \Vc^-,
\end{equation}
will be described as $\Vc^+$-Hadamard; one can equally proceed from the corresponding modification of~\eqref{eq:Had_SVW}. Evidently, \eqref{eq:Had_SV} for Klein--Gordon fields corresponds to the $\Nc^+$-Hadamard condition in this terminology. 

Starting from these definitions, we will show 
how various standard properties of Hadamard states have analogues for our generalised definition.
In particular, differences of Hadamard two-point functions are smooth, and the Hadamard property propagates under the equation of motion (Theorem~\ref{thm:Hadamard_props}); there is an equivalent Hilbert space formulation mirroring~\eqref{eq:Had_SVW} (Theorem~\ref{thm:HilbertHadamard}) from which one can see that the existence of a single Hadamard state implies the existence of many more (Corollary~\ref{cor:manyHadamard}). We also show that
the Hadamard form is preserved under suitable pullbacks and pushforwards, and, for similar reasons, is stable under scattering processes (Theorem~\ref{thm:Hadamard_scattering}). For locally covariant theories~\cite{BrFrVe03,FewVerch_aqftincst:2015} described by Green-hyperbolic operators, this implies that the Hadamard states form a covariant state space (Corollary~\ref{cor:covariant_state_space}). Moreover, we show in Theorem~\ref{thm:Hadamard_tools} that the Hadamard condition interacts well with direct sums and partial traces.  Our final general results are given in Theorem~\ref{thm:truncated} and Corollary~\ref{cor:truncated}, where we adapt a result of Sanders~\cite{Sanders:2010} to show how the Hadamard condition on $2$-point functions actually constrains the singular structure of all $n$-point functions. In a precise sense, all Hadamard states are microlocally quasifree. 

The general theory is illustrated by various examples and applications. First, recent results of Islam and Strohmaier~\cite{IslamStrohmaier:2020}, further developed in~\cite{FewsterStrohmaier:2025}, can be used to show that all normally hyperbolic operators on bundles with positive definite hermitian metrics admit $\Nc^+$-Hadamard states for their bosonic QFTs (Theorems~\ref{thm:Hadamard_norm_hyp}) and similarly that
all Dirac-type operators of definite type~\cite{BaerGinoux:2012} admit
$\Nc^+$-Hadamard states for their fermionic QFTs (Theorem~\ref{thm:Hadamard_Dirac_type}).

Second, our results that the Hadamard condition is preserved by tensor products, scattering and partial traces combine to prove that state updates resulting from nonselective measurements yield Hadamard states, if the system and probe states are initially Hadamard -- see Theorem~\ref{thm:Hadamard_update}. This is a significant contribution to the measurement framework because it shows that these measurement updates do not result in pathological states.
Other applications to the theory of measurements will appear elsewhere.

As a further application, we consider the Hadamard states of the Proca field in section~\ref{sec:Proca}. There, it is shown that the 
states considered by MMV (Moretti, Murro and Volpe~\cite{MorettiMurroVolpe:2023}) are precisely the quasifree $\Nc^+$-Hadamard states in the sense of this paper. Although the gap in MMV is fixed in~\cite{Fewster:2025b}, our general results provide an independent proof of the existence of such states on all globally hyperbolic spacetimes (Theorem~\ref{thm:ProcaHadamard}). More significantly, we are able to establish a full equivalence between the $\Nc^+$-Hadamard condition and the FP definition of Hadamard (Theorem~\ref{thm:FPequivalence}). This shows that the MMV definition (setting aside the inessential restriction to quasifree states in MMV) is equivalent to the FP-definition, something that was only partly shown in MMV. What is necessary here is to construct a Hadamard form bisolution for the $1$-form Klein--Gordon operator corresponding to a given Proca $2$-point function. It turns out that there is a simple formula~\eqref{eq:H_from_W} that achieves this, by adding to the Proca $2$-point function a term obtained from an auxiliary scalar Hadamard $2$-point function but with the `wrong' sign. This can be interpreted as the contribution from the ghost associated with the Proca constraint. 
 
The paper is structured as follows. After necessary preliminaries in Section~\ref{sec:preliminaries}, we review in some detail the definition of Green-hyperbolic operators on sections of vector bundles over globally hyperbolic spacetimes, focussing on the subclass that are real and formally hermitian, which possess quantisations as hermitian bosonic quantum fields and (with additional structure) as fermionic fields. Some material here is new, in particular the categorical structures on these operators and the delineation of fermionic real formally hermitian Green-hyperbolic operators. Special cases include the normally hyperbolic operators and Dirac-type operators in the bosonic and fermionic cases respectively.
In Section~\ref{sec:Hadamard} we present the generalised Hadamard condition and establish its properties, as described above, before moving to the application in measurement theory in Section~\ref{sec:FVmeasurements}. As a specific example, we describe the Hadamard states of the neutral Proca field in Section~\ref{sec:Proca}. After a summary in Section~\ref{sec:Conc}, two appendices provide a proof deferred from the main text and a brief account of how the theory of Hadamard states developed for real bose fields extends to complex fields.
 
\section{Preliminaries}\label{sec:preliminaries}

\paragraph{General conventions} The symbol $\subset$ allows for equality. Alternatives labelled with vertically stacked symbols, such as $J^\pm(S)$, are ordered from top to bottom.

\paragraph{Globally hyperbolic spacetimes}

A globally hyperbolic spacetime consists of a smooth paracompact $n$-dimensional manifold $M$ with at most finitely many components, equipped with a smooth Lorentzian metric $g$ on $M$ of signature ${+}{-}\cdots{-}$ and a time-orientation $\tgth$, so that there are no closed $g$-causal curves in $M$ and 
$J^+(K)\cap J^-(K)$ is compact for every compact $K\subset M$, where $J^\pm(S)$ are the causal future/past of a set $S\subset M$. The time-orientation $\tgth$ is expressed as a nontrivial component of the set of nowhere zero $g$-timelike $1$-form fields on $M$. We will study globally hyperbolic spacetimes that are also oriented, that is, equipped with a nontrivial component $\ogth$ of the set of nowhere zero $n$-forms.
The globally hyperbolic spacetimes of dimension $n$ form a category $\Loc$ with objects $\Mb$ and morphisms
$\psi:(M,g,\ogth,\tgth)\to (M',g',\ogth',\tgth')$ given by smooth isometric embeddings $\psi:M\to M'$ with causally convex image $\psi(M)$, and so that $\psi^*\ogth'=\ogth$, $\psi^*\tgth'=\tgth$~\cite{BrFrVe03}.

A closed subset $S$ of globally hyperbolic $M$ is \emph{future/past-compact} if
$S\cap J^\pm(x)$ is compact for all $x\in M$, \emph{strictly past/future-compact} (resp., \emph{spatially compact}) if $S\subset J^\pm(K)$ (resp., $S\subset J(K)$) for some compact $K$, and \emph{temporally compact} if $S$ is both future- and past-compact. The spaces of smooth sections of a bundle $B$ over $M$ that have these types of support will be denoted $\Gamma_\bullet^\infty(B)$, with $\bullet=\fc/\pc/\spc/\sfc/\sc/\tc$ respectively. The smooth sections with compact support are denoted $\Gamma_0^\infty(B)$, those supported in closed set $S$ by $\Gamma_S^\infty(B)$, and those with unrestricted support by $\Gamma^\infty(B)$. 

An open causally convex subset of a globally hyperbolic spacetime with finitely many connected components is called a \emph{region}. An \emph{open (resp., closed) Cauchy slab} in a globally hyperbolic spacetime $M$ is any set of the form $I^+(\Sigma^-)\cap I^-(\Sigma^+)$ (resp., $J^+(\Sigma^-)\cap J^-(\Sigma^+)$), where $\Sigma^\pm$ are smooth spacelike Cauchy surfaces with $\Sigma^\pm\subset I^\pm(\Sigma^\mp)$ and $I^\pm(S)$ denote the chronological future/past of $S$. An open Cauchy slab is a region; the closure of an open Cauchy slab is a closed Cauchy slab; closed Cauchy slabs are temporally compact, and all Cauchy slabs are causally convex. 

\paragraph{Densities} A density of order $\alpha\in\RR$ (or $\alpha$-density) on an $n$-dimensional vector space $V$ is an $\alpha$-homogeneous map $\rho:\bigwedge^n V\setminus\{0\}\to\CC$,
where $\wedge$ denotes the exterior product (see e.g.,~\cite{Duistermaat:2011}). Thus $\rho(\lambda v)=|\lambda|^\alpha \rho(v)$ for
$\lambda\in\RR\setminus\{0\}$, $v\in \bigwedge^n V\setminus\{0\}$. We write $\Omega^\alpha(V)$ for the 
$1$-dimensional complex vector space of all $\alpha$-densities on $V$. Given a smooth $n$-dimensional manifold $X$, $\Omega_X^\alpha$ is the complex line-bundle over $X$ with fibre $\Omega^\alpha(T_xX)$ at $x\in X$. 
If the base manifold is clear from context, we write $\Omega^\alpha$; we also write $\Omega^1$ as $\Omega$.
An example that will appear frequently is the volume density $\rho\in \Gamma^\infty(\Omega)$ induced by the spacetime metric $g$ on Lorentzian spacetime $M$, with coordinate expression $(-\det g_{\bullet\bullet})^{1/2}$. 

\paragraph{Distributions and kernels} Let $B$ be a finite-rank complex vector bundle over manifold $X$ with dual bundle $B^*$ and fibrewise duality pairing $\dlangle\cdot,\cdot\drangle_{B,x}$ (the subscripts will be omitted when they are clear from context). The same
notation can be used without ambiguity for a $\Omega^{\alpha+\beta}$-valued pairing of 
$B^*\otimes\Omega^\alpha$ and $B\otimes\Omega^\beta$ for $\alpha,\beta\in\RR$. The space of 
distributional sections of $B$ is the topological dual
$\DD'(B)=\Gamma_0^\infty(B^*\otimes\Omega)'$ with the weak-$*$ topology and there is a canonical continuous embedding 
$\iota:\Gamma^\infty(B)\to \DD'(B)$ given by
\begin{equation}
	(\iota F)(h) = \int_X \dlangle h,F\drangle  , \qquad 
	F\in\Gamma^\infty(B),~h\in \Gamma_0^\infty(B^*\otimes\Omega).
\end{equation} 
For bundles of the form $B\otimes\Omega^{1/2}$ (i.e., $B$-valued half-densities)
there is a more symmetric formula 
\begin{equation}
	\DD'(B\otimes\Omega^{1/2}) = \Gamma_0^\infty(B^*\otimes\Omega^{1/2})',
\end{equation}
(noting that $(B\otimes\Omega^{1/2})^*=B^*\otimes \Omega^{-1/2}$).
It is often  convenient to keep an explicit half-density bundle factor. For example, if $B_X$ and $B_Y$ are bundles over manifolds $X$ and $Y$, then any continuous linear map
$T:\Gamma_0^\infty(B_X\otimes\Omega_X^{1/2})\to 
\DD'(B_Y\otimes\Omega_Y^{1/2})$ has a distributional kernel $T^\knl\in\DD'((B_Y\boxtimes B_X^*)\otimes\Omega_{Y\times X}^{1/2})$, 
\begin{equation}
	T^\knl(v\otimes u) = (Tu)(v), \qquad u\in \Gamma_0^\infty(B_X\otimes\Omega_X^{1/2}),~ v\in \Gamma_0^\infty(B_Y^*\otimes\Omega_Y^{1/2}).
\end{equation} 
Here, $B_Y\boxtimes B_X^*\to Y\times X$ is the external tensor product.

In some situations there is little ambiguity in writing $T^\knl$ as $T$. Another slight abuse is that we also consider kernel
distributions for continuous linear maps $T:\Gamma_0^\infty(B_X\otimes\Omega_X^{1/2})\to 
\Gamma^\infty(B_Y\otimes\Omega_Y^{1/2})$, writing $T^\knl\in\DD'((B_Y\boxtimes B_X^*)\otimes\Omega^{1/2}_{Y\times X})$ where we would more properly write $(\iota\circ T)^\knl$. In such cases one has the formula
\begin{equation}
	T^\knl(v\otimes u) = \int_X\dlangle v,Tu\drangle, \qquad u\in \Gamma_0^\infty(B_X\otimes\Omega_X^{1/2}),~ v\in \Gamma_0^\infty(B_Y^*\otimes\Omega_Y^{1/2}).
\end{equation} 
It is frequently useful to identify $(B_Y\boxtimes B_X^*)_{y,x}$ with the space of linear maps $\Hom((B_X)_x,(B_Y)_y)$, so that $\upsilon\otimes\xi$ is identified with the map $w\mapsto \xi(w)\upsilon$.

\section{Green-hyperbolic operators}\label{sec:GreenHyperbolic}

\subsection{Hermitian vector bundles over globally hyperbolic spacetimes}
\label{sec:HVBs}

In this paper, a hermitian vector bundle $(B,\Mb,(\cdot,\cdot),\Cc)$ over $\Mb\in\Loc$ will consist  
of a smooth finite-rank complex vector bundle $B\stackrel{\pi}{\to}{M}$, equipped with a smooth (nondegenerate, but not necessarily positive definite) hermitian bundle metric $(\cdot,\cdot)_x$ in each fibre $\pi^{-1}(x)$, antilinear in its first argument, 
and an antiunitary involutive base-point preserving bundle morphism $\Cc$ so that 
$(\Cc v,\Cc w)_x = \overline{(v,w)_x}$.\footnote{In section~\ref{sec:Dirac-type} and Appendix~\ref{sec:complex} we will use bundles of the type described but which do not necessarily possess a complex conjugation. It would be very reasonable to call these `hermitian vector bundles' and to find a different terminology for those with a complex conjugation but we have opted for simplicty of usage.}  The corresponding antilinear involutions $\Cc:\Gamma_\bullet^\infty(B)\to\Gamma_\bullet^\infty(B)$ are continuous for all support types.
There is a sesquilinear pairing on smooth sections of $B$ with compactly intersecting supports given by
\begin{equation}\label{eq:pairing}
	( {f}_1, {f}_2 ) = \int_M ( {f}_1(x), {f}_2(x))_x \dvol_g(x).
\end{equation} 
There is a bilinear bundle metric $\langle\cdot,\cdot\rangle$ so that $\langle v,w \rangle_x = ( \Cc v,w)_x$, and a corresponding bilinear pairing 
$\langle {f}_1, {f}_2 \rangle$ on sections of $B$ with compactly intersecting supports
obtained by replacing $(\cdot,\cdot)_x$ by $\langle\cdot,\cdot\rangle_x$ in~\eqref{eq:pairing}. The duality pairing $\dlangle\cdot,\cdot\drangle_x$ of $B_x^*$ with $B_x$ gives a bilinear pairing
\begin{equation}
	\dlangle F, f\drangle = \int_M \dlangle F(x), f(x)\drangle_x \dvol_g(x)
\end{equation} 
of $F\in \Gamma^\infty(B^*)$, $f\in \Gamma^\infty(B)$ with compactly intersecting supports.

We define a category $\HVB$ of Hermitian vector bundles in which a morphism  
$\beta:(B,\Mb,(\cdot,\cdot),\Cc)\to (B',\Mb',(\cdot,\cdot)',\Cc')$ is a smooth vector bundle map $\beta:B\to B'$ that covers a $\Loc$-morphism $\psi$ between the base manifolds $\Mb$ and $\Mb'$, and which is a fibrewise isometry 
such that $\beta\circ \Cc = \Cc'\circ\beta$. Associated with $\beta$ there is a linear, continuous push-forward $\beta_*:\Gamma_0^\infty(B)\to\Gamma_0^\infty(B')$ so that $\beta_*f$ is the extension by zero of $\beta\circ f\circ \psi^{-1}$. 
 
\paragraph{Adjoints, transposes and duals}
Suppose $S:\Gamma^\infty(B_1)\to\Gamma^\infty(B_2)$ is a linear map, where
 $B_j=(B_j,\Mb_j,(\cdot,\cdot)_j,\Cc_j)$ ($j=1,2$) are hermitian vector bundles. Then the 
\emph{formal hermitian adjoint} $\hadj{S}$ (if it exists) is the unique linear map
$\hadj{S}:\Gamma^\infty(B_2)\to\Gamma^\infty(B_1)$ obeying
\begin{equation} 
		({f}_1, S{f}_2)_2 = (\hadj{S}{f}_1, {f}_2)_1 
\end{equation}
for all smooth sections ${f}_j\in \Gamma^\infty(B)$ so that $f_1$ and $Sf_2$ have compactly intersecting supports (implicitly assuming that $\hadj{S}f_1$ and $f_2$ have compactly intersecting supports). The same definition is used if $S:\Gamma_0^\infty(B_1)\to\Gamma_0^\infty(B_2)$, again seeking $\hadj{S}:\Gamma^\infty(B_2)\to\Gamma^\infty(B_1)$.
Every partial differential operator has a formal hermitian adjoint that is also a partial differential operator. The  push-forward $\beta_*:\Gamma_0^\infty(B)\to\Gamma_0^\infty(B')$ of a $\HVB$-morphism $\beta:B\to B'$  covering $\Loc$-morphism $\psi$ has a formal hermitian adjoint $(\hadj{\beta}_* h)(x)=\beta_x^\dagger h(\psi(x))$, where
$\beta_x^\dagger$ is the fibre-metric adjoint of $\beta_x:B_x\to B'_{\psi(x)}$. We have
$\hadj{\beta}_*\circ \beta_*=\id$ on $\Gamma_0^\infty(B)$, and, for any compact $K\subset \Mb$,
\begin{equation}\label{eq:betastarsupport}
	\beta_*:\Gamma_K^\infty(B)\subset \Gamma_{\psi(K)}^\infty(B'), \qquad 
	\hadj{\beta}_*\Gamma_{J^\pm(\psi(K))}^\infty(B') \subset  \Gamma_{J^\pm(K)}^\infty(B),
\end{equation}
where the second inclusion is due to causal convexity of $\psi(\Mb)$. 

The \emph{formal transpose} is 
$\adj{S}=\Cc_1(\hadj{S})\Cc_2$ and obeys
\begin{equation} 
	\langle {f}_1, S{f}_2\rangle_2 = \langle\adj{S}{f}_1, {f}_2\rangle_1 
\end{equation}
under the same conditions on $f_j$. We also write $\overline{S}:=\Cc_2 S \Cc_1$ 
and say that $S$ is \emph{real} if $S=\overline{S}$.
In the case $B_2=B_1$, $S$ is said to be \emph{formally hermitian} (resp., \emph{formally symmetric}) if $S=\hadj{S}$ (resp., $S=\adj{S}$). For example, the map $\beta_*$ defined by $\HVB$-morphism $\beta:B\to B'$ is real, and also satisfies $\adj{\beta}_*\circ \beta_*=\id$ on $\Gamma_0^\infty(B)$.  Real, formally symmetric operators are formally hermitian.

Finally, if it exists, the \emph{formal dual} $\sadj{S}$ is the linear map
$\sadj{S}:\Gamma^\infty(B_2^*)\to\Gamma^\infty(B_1^*)$ uniquely determined by 
\begin{equation} \dlangle \sadj{S}{F},{f}\drangle_1    = 
	  \dlangle {F},S{f}\drangle_2  
\end{equation}
for all ${F}\in\Gamma^\infty(B_2^*)$ and  ${f}\in\Gamma^\infty(B_1)$ so that
$F$ and $Sf$ (and, implicitly, $\sadj{S}F$ and $f$)
have compactly intersecting supports.  

Note that our terminology and notation for formal adjoint and formal transpose differ from e.g.,~\cite{Baer:2015} and other sources, but seem more natural in the current context.

\paragraph{Musical notation} Let $B$ be a hermitian vector bundle over $\Mb$. By nondegeneracy of the bundle metric, there is a fibrewise linear isomorphism $\sharp:B\to B^*$ with inverse $\flat:B^*\to B$ so that $\dlangle f^\sharp,h\drangle= \langle f,h\rangle$. In particular,  $\sadj{S} f^\sharp = (\adj{S}f)^\sharp$. The dual hermitian vector bundle to $(B,\Mb,(\cdot,\cdot),\Cc)$ is $(B^*,\Mb,(\flat\cdot,\flat\cdot),\sharp\circ \Cc\circ\flat)$. 
Density-weighted versions of $\sharp$ and $\flat$ are introduced so that, for $f\in \Gamma^\infty(B)$, $f^{\sharp,\alpha}\in\Gamma^\infty(B^*\otimes \Omega^\alpha)$ 
is $f^{\sharp,\alpha}=\rho^\alpha f^\sharp$, where $\rho$ is the metric volume density of $\Mb$; the inverse to $f\mapsto f^{\sharp,\alpha}$ is the map $h\mapsto h^{\flat,\alpha}=(\rho^{-\alpha}h)^\flat$.
For a linear map $T:\Gamma^\infty(B_1)\to \Gamma^\infty(B_2)$, the notation $T^{\sharp,\alpha}$ denotes the linear map 
from $\Gamma^\infty(B_1^*\otimes\Omega_1^\alpha)$ to $\Gamma^\infty(B_2^*\otimes\Omega_2^\alpha)$
given by $T^{\sharp,\alpha}h=(T h^{\flat,\alpha})^{\sharp,\alpha}$, i.e.,
\begin{equation}
T^{\sharp,\alpha}=\rho_2^\alpha \circ \sharp \circ T\circ \flat \circ \rho_1^{-\alpha}.
\end{equation}
We will occasionally write $T^{\natural,\alpha}=\rho_2^{\alpha}\circ T\circ\rho_1^{-\alpha}:\Gamma^\infty(B_1\otimes\Omega_1^\alpha)\to\Gamma^\infty(B_2\otimes\Omega_2^\alpha)$.

\subsection{(Semi)-Green-hyperbolic operators and \rfhgho's}\label{sec:sghos}

Let $P:\Gamma^\infty(B)\to\Gamma^\infty(B)$ be a partial differential operator where $(B,\Mb,(\cdot,\cdot),\Cc)$ is a hermitian vector bundle over $\Mb\in\Loc$. If there are linear maps $E_P^\pm:\Gamma_0^\infty(B)\to \Gamma^\infty(B)$ such that the conditions
\begin{enumerate}[G1]
	\item $PE_P^\pm {f}={f}$ 
	\item $E_P^\pm P {f}={f}$
	\item $\supp E_P^\pm {f}\subset J^\pm (\supp {f})$
\end{enumerate} 
hold for all ${f}\in \Gamma_0^\infty(B)$, then we say that $P$ is \emph{semi-Green-hyperbolic} and, as usual, call $E^\pm_P$ the advanced $(-)$ and retarded $(+)$ Green operators for $P$. If $P$ and one (and hence all) of $\adj{P}$, $\hadj{P}$ or $\sadj{P}$ are semi-Green-hyperbolic, then $P$ is \emph{Green-hyperbolic}~\cite{Baer:2015}.   

Two useful facts are that
$\overline{E}_P^\pm =E_{\overline{P}}^\pm$, and that $\adj{E}_P^\pm = E_{\adj{P}}^\mp$.
Therefore, if $P$ is a real, formally hermitian Green-hyperbolic operator (\rfhgho), one has
$\overline{E}_P^\pm=E_{P}^\pm= \adj{(E_{P}^\mp)}$, 
whereas if $P$ is formally hermitian and Green-hyperbolic,
$E_{P}^\pm= \hadj{(E_{P}^\mp)}$. 
We also define the advanced-minus-retarded operator $E_P:\Gamma_0^\infty(B)\to\Gamma^\infty(B)$ $E_P=E_P^--E_P^+$,  
whose range is contained in $\Gamma^\infty_\sc(B)$. For any \rfhgho\ $P$, there is an associated antisymmetric bilinear form
\begin{equation}
	E_P(f,h)=\langle f,E_P h\rangle, \qquad f,h\in\Gamma_0^\infty(B),
\end{equation} 
which is antisymmetric because $\adj{E}_P=-E_P$ and satisfies $E_P(\Cc f,\Cc h)=\overline{E_P(f,h)}$
because $E_P=\overline{E}_P$ and also $E_{P}(Pf,h)=E_{P}(f,Ph)=0$. A number of other general properties of Green-hyperbolic operators will be needed below and are stated here for clarity.
\begin{thm}\label{thm:GreenHypOps}
	The following statements hold for any Green-hyperbolic  $P:\Gamma^\infty(B)\to\Gamma^\infty(B)$:
	\begin{enumerate}[a)]
		\item\label{it:uniqueness} The Green operators $E_P^\pm$ are unique.
		\item $E_P^\pm:\Gamma_0^\infty(B)\to\Gamma^\infty_{\spc/\sfc}(B)$ and 
		$E_P:\Gamma_0^\infty(B)\to\Gamma^\infty_{\sc}(B)$ are continuous linear maps.
		\item For any $f\in\Gamma_0^\infty(B)$, $F=E^\pm_Pf$ is the unique solution to $PF=f$ with past/future compact support.
		\item\label{it:kernels} $\ker P\cap \Gamma_0^\infty(B)=\{0\}$; $\ker E_P=P\Gamma_0^\infty(B)$; $\ker P\cap \Gamma^\infty_\sc(B)=E_P\Gamma_0^\infty(B)$;  $P\Gamma^\infty_\sc(B)=\Gamma^\infty_\sc(B)$.
		\item\label{it:refinedkernel} For any compact $K\subset M$, 
		\begin{equation}
			P\Gamma_K^\infty(B)\subset \ker E_P \cap \Gamma^\infty_K(B)\subset P\Gamma^\infty_{J^+(K)\cap J^-(K)}(B).
		\end{equation}
		\item There is an injective linear isomorphism 
		\begin{align}
			\hat{E}_P:\Gamma_0^\infty(B)/P\Gamma_0^\infty(B)&\to E_P\Gamma_0^\infty(B) \nonumber \\
			[f]& \mapsto  E_Pf
		\end{align} 
		and $\sigma_P(E_Pf,E_Ph):=E_P(f,h)$ defines a 
		symplectic form $\sigma_P$ on $E_P\Gamma_0^\infty(B)$;
		consequently $\hat{\sigma}_P =\sigma_P \circ
		(\hat{E}_P\times\hat{E}_P)$ is a symplectic form on
		$\Gamma_0^\infty(B)/P\Gamma_0^\infty(B)$.
		
		\item\label{it:Cauchy} For any open causally convex $N\subset M$ that contains a Cauchy surface of $\Mb$, one has
		\begin{equation}
			\Gamma_0^\infty(B) = \Gamma_0^\infty(B|_N)+P\Gamma_0^\infty(B).
		\end{equation}
		Explicitly, let $U$ be an open Cauchy slab with $U\subset N$ and let $\chi\in C^\infty(M)$ obey $\chi=1$ on a neighbourhood of $M\setminus J^+(U)$ and $\chi=0$ on a neighbourhood of $M\setminus J^-(U)$. Then $P\chi E_P=[P,\chi]E_P:\Gamma_0^\infty(B)\to\Gamma_0^\infty(B|_U)\subset\Gamma_0^\infty(B|_N)$ and $[P,\chi]E_Pf-f\in P\Gamma_0^\infty(B|_N)$ for all $f\in \Gamma_0^\infty(B)$.
	\end{enumerate}
\end{thm}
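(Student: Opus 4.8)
The statement collects the standard structural facts about Green-hyperbolic operators in the style of B\"ar~\cite{Baer:2015}, and my plan is to derive a)--f) essentially algebraically from the defining conditions G1--G3 together with the identities $\adj{E}_P^\pm=E_{\adj{P}}^\mp$ and $\overline{E}_P^\pm=E_{\overline{P}}^\pm$ already recorded above (these are what let every integration-by-parts argument for $P$ be traded for one for $\adj{P}$ or $\hadj{P}$), importing from~\cite{Baer:2015} only two analytic inputs: the uniqueness lemma that a smooth solution of $Pu=0$ with future-compact (resp.\ past-compact) support vanishes --- which in this generality rests on B\"ar's extension of the Green operators to $\Gamma^\infty_{\fc/\pc}(B)$ with $E_P^\mp P=\id$ there --- and the usual causality facts (compactness of $J^+(K)\cap J^-(K)$ for compact $K$; that closed Cauchy slabs are temporally compact and meet spatially compact sets in compact sets; and that a past-compact set meets $J^-(\text{compact})$, and a future-compact set meets $J^+(\text{compact})$, in a compact set).

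For a), if $E,\tilde E$ are both advanced Green operators then $v:=(E-\tilde E)f$ has $Pv=0$ by G1 and $\supp v\subseteq J^-(\supp f)$ by G3, hence is future-compact and vanishes; c) is the same, with uniqueness of past/future-compact solutions of the inhomogeneous equation coming from the (non-strict) uniqueness lemma and $E_P^\pm f$ having the stated support by G3. For d): $u=E_P^-Pu=0$ for $u\in\ker P\cap\Gamma_0^\infty(B)$ by G2; $P\Gamma_0^\infty(B)\subseteq\ker E_P$ since $E_PPf=f-f=0$; and conversely $E_Pf=0$ forces $H:=E_P^-f=E_P^+f$ to be supported in $J^-(\supp f)\cap J^+(\supp f)$, hence compactly supported with $PH=f$, so $f\in P\Gamma_0^\infty(B)$. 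The $\Gamma^\infty_\sc$-statements follow from a time cutoff: with $\chi$ equal to $1$ below and $0$ above a Cauchy slab, $P(\chi u)=[P,\chi]u$ is compactly supported (support in the slab $\cap\,\supp u$), and identifying $\chi u$, $(1-\chi)u$ with the unique future-/past-compact solutions of $P(\cdot)=\mp[P,\chi]u$ exhibits any spatially compact solution of $Pu=0$ as lying in $E_P\Gamma_0^\infty(B)$; surjectivity of $P$ on $\Gamma^\infty_\sc(B)$ is a similar cutoff argument. Part e) uses the same observation as d): if $h\in\ker E_P$ is supported in compact $K$ then $E_P^-h=E_P^+h$ is supported in $J^+(K)\cap J^-(K)$ and solves $P(\cdot)=h$, which with the trivial inclusion $P\Gamma_K^\infty(B)\subseteq\ker E_P\cap\Gamma_K^\infty(B)$ gives the chain. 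For f), $\hat E_P$ is a well-defined bijection because $\ker E_P=P\Gamma_0^\infty(B)$; $\sigma_P$ is well-defined and antisymmetric by $E_P(Pf,h)=E_P(f,Ph)=0$, $\adj{E}_P=-E_P$ and symmetry of $\langle\cdot,\cdot\rangle$, and nondegenerate because a smooth section pairing to zero against all of $\Gamma_0^\infty(B)$ vanishes; $\hat\sigma_P$ is then transported to the quotient via $\hat E_P$.

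For b) I would invoke the closed graph theorem for a linear map from the LF-space $\Gamma_0^\infty(B)$ to the Fr\'echet space $\Gamma^\infty(B)$: if $f_j\to f$ (so all $f_j,f\in\Gamma_K^\infty(B)$ for one compact $K$) and $E_P^\pm f_j\to g$, then $PE_P^\pm f_j=f_j\to f$ gives $Pg=f$ while $\supp E_P^\pm f_j\subseteq J^\pm(K)$ gives $\supp g\subseteq J^\pm(K)$, so $g=E_P^\pm f$ by c) and the graph is closed. The support statements hold because $E_P^\pm$ maps $\Gamma_K^\infty(B)$ into $\Gamma_{J^\pm(K)}^\infty(B)\subseteq\Gamma^\infty_{\spc/\sfc}(B)$, whence $E_P=E_P^--E_P^+$ takes values in $\Gamma^\infty_\sc(B)$, and continuity into these finer topologies follows from continuity on each $\Gamma_K^\infty(B)$ and passing to the inductive limit.

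For g), d) gives $PE_Pf=0$, so $P\chi E_Pf=[P,\chi]E_Pf$. Choosing $\chi$ so that its transition region lies in a closed Cauchy slab contained in $U$, the commutator $[P,\chi]$ is supported there, so $\supp([P,\chi]E_Pf)\subseteq(\text{closed Cauchy slab}\subseteq U)\cap J(\supp f)$ is compact and $[P,\chi]E_Pf\in\Gamma_0^\infty(B|_U)\subseteq\Gamma_0^\infty(B|_N)$; moreover $[P,\chi]E_Pf-f=P(\chi E_Pf)-PE_P^-f=P\big((\chi-1)E_P^-f-\chi E_P^+f\big)$, and $g:=(\chi-1)E_P^-f-\chi E_P^+f$ is compactly supported because $\supp(\chi-1)$ (a past-compact set, bounded below by the lower Cauchy surface of the slab) meets $\supp E_P^-f\subseteq J^-(\supp f)$ in a compact set, and similarly $\supp\chi$ meets $\supp E_P^+f$ compactly, so that $f=[P,\chi]E_Pf-Pg$ realises the decomposition (the primitive $g$ obtained this way lies in $\Gamma_0^\infty(B)$, which is what the displayed equality needs). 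The main obstacle throughout is keeping the six support classes straight --- in particular selecting, in each pairing, the Green operator whose support cone is opposite to that of the section paired against it, so that the supports intersect compactly --- and, at one level down, the uniqueness lemma for future-/past-compact solutions, which I would quote from~\cite{Baer:2015} rather than reprove.
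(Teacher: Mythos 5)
Your proposal is correct in substance, but it takes a genuinely different route from the paper: the paper disposes of a)--d) and g) by citing B\"ar's Theorem~3.22 / the exact sequence \eqref{eq:exactseq} and handles f) via Proposition~3.4 of B\"ar--Ginoux, only writing out the short argument for e); you instead reprove everything from G1--G3, importing from B\"ar only the uniqueness lemma for solutions with future/past-compact support (equivalently, the extension $E_P^\mp P=\id$ on $\Gamma^\infty_{\fc/\pc}(B)$) and elementary causality facts. What your approach buys is a self-contained account, including an explicit closed-graph argument for b) and an explicit nondegeneracy argument for $\sigma_P$ in f), neither of which the paper spells out; the cost is that the one genuinely analytic input (the uniqueness lemma) is still quoted, so the gain in self-containedness is partial. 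Your argument for e) coincides with the paper's. Two small remarks. First, in the cutoff argument for d) your signs are reversed: with $\chi=1$ to the past and $0$ to the future one has $P(\chi u)=+[P,\chi]u$, so $\chi u$ is the future-compact solution of $P(\cdot)=+[P,\chi]u$ and $(1-\chi)u$ the past-compact solution of $P(\cdot)=-[P,\chi]u$, giving $u=E_P([P,\chi]u)$; this is harmless for the conclusion. Second, your observation about g) is well taken and in fact sharper than the theorem as printed: since $\supp(Ph)\subset\supp h$ for a differential operator, $[P,\chi]E_Pf-f$ cannot lie in $P\Gamma_0^\infty(B|_N)$ whenever $\supp f\not\subset N$, so the final clause of g) must be read as $[P,\chi]E_Pf-f\in P\Gamma_0^\infty(B)$ (consistent with the displayed decomposition and with how g) is invoked later, e.g.\ in Theorem~\ref{thm:GreenHyp}(d,e)); your primitive $g=(\chi-1)E_P^-f-\chi E_P^+f\in\Gamma_0^\infty(B)$ is exactly what is needed, and your compactness argument for $\supp g$ is correct.
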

\begin{proof}
	All statements except~(e) and~(f) are established in~\cite{Baer:2015}, or are simple consequences of the exact sequence
	\begin{equation}\label{eq:exactseq}
		\begin{tikzcd}
			0 \arrow[r] & \Gamma_0^\infty(B) \arrow[r, "P"]  &\Gamma_0^\infty(B) \arrow[r, "E_P"]  & \Gamma^\infty_{\mathrm{sc}}(B) \arrow[r,"P"]  & \Gamma^\infty_{\mathrm{sc}}(B) \arrow[r]  & 0 ,
		\end{tikzcd}
	\end{equation}
	stated as Theorem 3.22 of~\cite{Baer:2015} and proved as Theorem 3.5 of~\cite{BaerGinoux:2012}.  
	Part~(e) is a slightly refined version of a statement in (d), from which the first inclusion follows. For the second, suppose that $f\in\Gamma^\infty_K(B)$ with $E_Pf=0$. Then $E_P^+ f=E_P^-f$ has support in the compact set $J^+(K)\cap J^-(K)$ and one applies $P$ to obtain $f\in P\Gamma^\infty_{J^+(K)\cap J^-(K)}(B)$. 
	For part~(f), the statement concerning $\hat{E}_P$ also follows from the exact sequence, while the rest follows from Proposition~3.4 of~\cite{BaerGinoux:2012}.
\end{proof}
The symplectic form $\sigma_P$ of part~(f) has the further property that
\begin{equation}
	\sigma_P(\Cc E_P f, \Cc E_Ph) = \sigma_P( E_P \Cc f, E_P\Cc h)
	=E_P(\Cc f,\Cc h)= \overline{E_P(f,h)}=
	\overline{\sigma_P(E_Pf,E_Ph)}
\end{equation}
for all $f,h\in\Gamma_0^\infty(B)$. Consequently
$(E_P\Gamma_0^\infty(B),\sigma_P,\Cc)$ is a 
complexified symplectic space,
as is $(\Gamma_0^\infty(B)/P\Gamma_0^\infty(B),\hat{\Cc},\hat{\sigma}_P)$,
where $\hat{\Cc}[f]=[\Cc f]$. In more detail~\cite{FewVer:dynloc2}, the category of complexified symplectic spaces $\Sympl_\CC$ has objects $(V,\sigma,C)$, where $V$ is a complex vector space, $\sigma:V\times V\to \CC$ is a weakly nondegenerate antisymmetric bilinear form and $C$ is an
antilinear conjugation on $V$ such that $\sigma(Cu,Cv)=\overline{\sigma(u,v)}$. A $\Sympl_\CC$-morphism $T:(V,\sigma,C)\to(V',\sigma',C')$ is a (necessarily injective, due to nondegeneracy) linear map $T:V\to V'$ such that $\sigma'\circ(T\times T)=\sigma$ and which is real in the sense that $C'T=TC$. Composition in $\Sympl_\CC$ is given by composition of linear maps. 

\paragraph{Examples} A \emph{normally hyperbolic operator} is a second-order differential operator $P:\Gamma^\infty(B)\to\Gamma^\infty(B)$ whose principal symbol is given by
$p(x,k)=-g_x^{-1}(k,k)\id_{B_x}$, where $g$ is the spacetime metric. Thus (recalling the correspondence $k_\mu\leftrightarrow -\ii \partial_\mu$) in a local frame for $B$, $P$ is equivalent to a matrix of differential operators $P^{a}_{\phantom{a}b}$ with  
\begin{equation}
	P^{a}_{\phantom{a}b} = g^{\mu\nu}\nabla_\mu \nabla_\nu \delta^{a}_{\phantom{a}b} + U^{a}_{\phantom{a}b} ,
\end{equation}	
where $U^{a}_{\phantom{a}b}$ is a matrix of first order differential operators and $\nabla$ is the Levi--Civita derivative.  
Any normally hyperbolic operator has advanced and retarded Green operators (see Corollary~3.4.3 in~\cite{BarGinouxPfaffle}) and is therefore semi-Green-hyperbolic; as the formal transpose of a normally hyperbolic operator is also normally hyperbolic, these operators are Green-hyperbolic. 

However, not all Green-hyperbolic operators are normally hyperbolic, as shown by the Dirac and Proca field equations (see Sections~\ref{sec:fermionic} and~\ref{sec:Proca} respectively) and, even more starkly, the elliptic operator $-\dd^2/\dd t^2$ on $C^\infty(\RR)$ shows that Green-hyperbolic operators need not be hyperbolic. Another example is given in~\cite{MMVparacausal:2023}: suppose that $g'$ is a time orientable slow metric relative to the spacetime metric $g$ and suppose that $P'$ and $P$ are normally hyperbolic operators with respect to $g'$ and $g$. Let $\chi\in C_0^\infty(M;[0,1])$ and
define a convex combination $P_\chi=(1-\chi)P + \chi P'$. Then there is a metric $g_\chi$ so that
$g_\chi^{-1}=(1-\chi)g^{-1}+\chi (g')^{-1}$, which is a slow metric relative to $g$ (Theorem~2.18 in~\cite{MMVparacausal:2023}) that is also globally hyperbolic. Then $P_\chi$ is normally hyperbolic (and therefore Green-hyperbolic) with respect to $g_\chi$, while it is Green-hyperbolic, but not normally hyperbolic, with respect to the spacetime metric $g$ -- see Theorem~3.14 in~\cite{MMVparacausal:2023}. Further examples can be found in~\cite{Baer:2015}.  
 
\paragraph{The category $\GreenHyp$} By Theorem~\ref{thm:GreenHypOps}~(f), every \rfhgho\ determines an object of $\Sympl_\CC$. We define
the category $\GreenHyp$ of \rfhgho's so that this assignment is functorial. Previously developed categories of Green-hyperbolic operators, such as $\GlobHypGreen$ in~\cite{BaerGinoux:2012} have focussed on intertwining the Green-hyperbolic operators with  pushforward maps -- as will be seen, our approach is more general and has certain advantages.
\begin{defn}
	Objects of $\GreenHyp$ are pairs $(B,P)$ (sometimes abbreviated just to $P$), where $B$ is a hermitian vector bundle over a globally hyperbolic spacetime and $P$ is a \rfhgho\ acting on $\Gamma^\infty(B)$. A morphism $S:(B,P)\to (B',P')$ in $\GreenHyp$ is a linear map $S:\Gamma_0^\infty(B)\to\Gamma_0^\infty(B')$ 
	obeying $S\Cc=\Cc' S$, $SP\Gamma_0^\infty(B)\subset P'\Gamma_0^\infty(B')$, and $E_{P'}(Sf,Sh)=E_{P}(f,h)$ for all $f,h\in\Gamma_0^\infty(B)$. 
\end{defn} 	 
	A $\GreenHyp$-morphism $S:(B,P)\to (B',P')$ determines a
	linear map $\Sol(S):E_P \Gamma_0^\infty(B)\to E_{P'}\Gamma_0^\infty(B')$ by $\Sol(S)E_Pf= E_{P'}Sf$, which is well-defined because
	$E_Pf=0$ implies $f\in P\Gamma_0^\infty(B)$ and hence $Sf\in P'\Gamma_0^\infty(B')=\ker E_{P'}$. The identity $\Sol(S)\Cc=\Cc'\Sol(S)$ follows from $S\Cc=\Cc' S$, and	
	the calculation
	\begin{equation}
		\sigma_{P'}(E_{P'}Sf,E_{P'}Sh) = E_{P'}(Sf,Sh)=E_{P}(f,h) = \sigma_{P}(E_{P}f,E_{P}h)
	\end{equation} 
	shows that $\Sol(S)$ is symplectic and therefore a $\Sympl_\CC$-morphism.
	Consequently, $\Sol(S)$ is injective; we will describe $S$ as \emph{Cauchy} if $\Sol(S)$ is an isomorphism, and say that
	morphisms $S,T:(B,P)\to (B',P')$ are \emph{equivalent}, written $S\sim T$, if $\Sol(S)=\Sol(T)$. 
	A straightforward calculation shows that $\hat{E}_{P'}\hat{S}= \Sol(S)\hat{E}_P$, where  
	\begin{align}
		\hat{S}:\Gamma_0^\infty(B)/P \Gamma_0^\infty(B)&\to \Gamma_0^\infty(B')/P' \Gamma_0^\infty(B')
		\nonumber \\ \hat{S}[f]&= [S f].
	\end{align} 
	Thus $\hat{S}$ is injective, $S$ is Cauchy if and only if $\hat{S}$ is an isomorphism, and morphisms $S,T:(B,P)\to (B',P')$ are \emph{equivalent} if and only if $\hat{S}=\hat{T}$. 
	\begin{thm}\label{thm:sol}
		There is a functor $\Sol:\GreenHyp\to\Sympl_\CC$ given by $\Sol(B,P)=(E_P\Gamma_0^\infty(B),\sigma_P,\Cc)$, where $\sigma_P(E_P f,E_P h)=E_P(f,h)$,
		and $\Sol(S)E_P f=E_{P'}S f$. If $S:(B,P)\to (B',P')$ is a Cauchy $\GreenHyp$-morphism
		then $\Sol(S)$ is a $\Sympl_\CC$-isomorphism. One has 
		\begin{equation}\label{eq:solsum}
			\Sol(B\oplus B',P\oplus P')=(E_P\Gamma_0^\infty(B)\oplus E_{P'}\Gamma_0^\infty(B'),\sigma_P\oplus \sigma_{P'},\Cc\oplus \Cc').
		\end{equation}   
	\end{thm}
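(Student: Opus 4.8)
Much of the statement has, in effect, already been assembled in the discussion preceding the theorem: Theorem~\ref{thm:GreenHypOps}(f) and the calculation following it show that $(E_P\Gamma_0^\infty(B),\sigma_P,\Cc)$ is an object of $\Sympl_\CC$, and the paragraph introducing $\Sol(S)$ shows that every $\GreenHyp$-morphism $S$ yields a well-defined linear map $\Sol(S)$ that is real and symplectic, hence a $\Sympl_\CC$-morphism. So the plan is to verify what remains: (i) $\GreenHyp$ is a category and $\Sol$ is functorial; (ii) a Cauchy morphism is sent to a $\Sympl_\CC$-isomorphism; and (iii) the direct-sum identity~\eqref{eq:solsum} (for $B,B'$ over a common base spacetime). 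None of this is deep, so the account can be kept brief.

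For (i), I would first check that $\GreenHyp$ really is a category: the identity map on $\Gamma_0^\infty(B)$ is a $\GreenHyp$-morphism $(B,P)\to(B,P)$, and if $S:(B,P)\to(B',P')$ and $T:(B',P')\to(B'',P'')$ are morphisms then so is $T\circ S$, since $TS\Cc=T\Cc'S=\Cc''TS$, $TSP\Gamma_0^\infty(B)\subset TP'\Gamma_0^\infty(B')\subset P''\Gamma_0^\infty(B'')$, and $E_{P''}(TSf,TSh)=E_{P'}(Sf,Sh)=E_P(f,h)$. Functoriality then follows immediately from the defining relation $\Sol(S)E_Pf=E_{P'}Sf$: since the elements $E_Pf$ span $E_P\Gamma_0^\infty(B)$ it suffices to evaluate there, giving $\Sol(\id)E_Pf=E_Pf$ and $\Sol(T\circ S)E_Pf=E_{P''}TSf=\Sol(T)E_{P'}Sf=\Sol(T)\Sol(S)E_Pf$, so $\Sol(\id)=\id$ and $\Sol(T\circ S)=\Sol(T)\circ\Sol(S)$.

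For (ii), $S$ is by definition Cauchy exactly when $\Sol(S)$ is bijective, so I would only need to note that a bijective $\Sympl_\CC$-morphism is automatically a $\Sympl_\CC$-isomorphism: with $T=\Sol(S)$, the linear inverse satisfies $\sigma_P(T^{-1}u,T^{-1}v)=\sigma_{P'}(TT^{-1}u,TT^{-1}v)=\sigma_{P'}(u,v)$ and $\Cc T^{-1}=T^{-1}(T\Cc T^{-1})=T^{-1}(\Cc'TT^{-1})=T^{-1}\Cc'$, using that $T$ is symplectic and real, so $T^{-1}$ is a $\Sympl_\CC$-morphism.

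For (iii), I would give $B\oplus B'$ the orthogonal direct-sum hermitian metric and the conjugation $\Cc\oplus\Cc'$; then $P\oplus P'$ is real and formally hermitian because each summand is, and its advanced/retarded Green operators are $E_P^\pm\oplus E_{P'}^\pm$, since the latter map $\Gamma_0^\infty(B\oplus B')$ into $\Gamma^\infty(B\oplus B')$ and satisfy G1--G3 for $P\oplus P'$ componentwise, so by uniqueness of Green operators (Theorem~\ref{thm:GreenHypOps}(a)) they coincide with $E_{P\oplus P'}^\pm$; hence $P\oplus P'$ is an \rfhgho\ and $(B\oplus B',P\oplus P')$ an object of $\GreenHyp$. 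Consequently $E_{P\oplus P'}=E_P\oplus E_{P'}$, so $E_{P\oplus P'}\Gamma_0^\infty(B\oplus B')=E_P\Gamma_0^\infty(B)\oplus E_{P'}\Gamma_0^\infty(B')$; and for $f\oplus f',\,h\oplus h'\in\Gamma_0^\infty(B\oplus B')$ one computes $E_{P\oplus P'}(f\oplus f',h\oplus h')=\langle f\oplus f',(E_Ph)\oplus(E_{P'}h')\rangle=E_P(f,h)+E_{P'}(f',h')$, which under this identification is exactly $\sigma_P\oplus\sigma_{P'}$, while the conjugation is plainly $\Cc\oplus\Cc'$; this is~\eqref{eq:solsum}. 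The only mildly delicate point in the whole argument --- the ``main obstacle'', such as it is --- is this last one: to know that $(B\oplus B',P\oplus P')$ genuinely lies in $\GreenHyp$ one needs that the Green operators of a direct sum are the direct sums of the Green operators, and it is uniqueness of Green operators that supplies this; everything else is routine bookkeeping with the defining relations.
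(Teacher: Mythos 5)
Your proof is correct and follows essentially the same route as the paper's: the object and morphism assignments are recognised as already established, functoriality is checked directly on generators $E_Pf$ via the defining relation, and the direct-sum identity reduces to $E_{P\oplus P'}^\pm=E_P^\pm\oplus E_{P'}^\pm$. You spell out a few things the paper leaves implicit (the category axioms for $\GreenHyp$, the observation that a bijective $\Sympl_\CC$-morphism has a $\Sympl_\CC$-inverse, and the uniqueness argument for the Green operators of $P\oplus P'$, which the paper records separately in Theorem~\ref{thm:GreenHyp}(c)), but this is just filling in elided detail rather than taking a different approach.
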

	\begin{proof}
		It has already been established that $\Sol(B,P)\in\Sympl_\CC$ for each $(B,P)\in\GreenHyp$,
		and that a $\GreenHyp$-morphism $S:(B,P)\to (B',P')$ 
		determines a $\Sympl_\CC$-morphism $\Sol(S):\Sol(B,P)\to\Sol(B',P')$. It is clear that $\Sol(\id_{(B,P)})=\id_{\Sol(B,P)}$. Given $\GreenHyp$ morphisms $P\stackrel{S}{\to}P' \stackrel{S'}{\to}P''$ then $\Sol(S'S)E_{P}=E_{P''}S'S=\Sol(S')E_{P'}S=\Sol(S')\Sol(S)E_P $ on $\Gamma_0^\infty(B)$,
		so $\Sol$ is a functor. 
		
		Finally, the statement~\eqref{eq:solsum} is immediate from the definitions. 
	\end{proof}
		
	\begin{lemma}\label{lem:GreenHyp0}
		(a) A composition of Cauchy $\GreenHyp$-morphisms is Cauchy. Equivalence of $\GreenHyp$-morphisms is stable under composition, i.e., $ST\sim S'T'$ if $S\sim S'$ and 
		$T\sim T'$. 
		
		(b) If $S:(B,P)\to (B',P')$ and $T:\Gamma_0^\infty(B)\to\Gamma_0^\infty(B')$ is linear then $T:(B,P)\to (B',P')$ with $T\sim S$ if and only if $T-S=P'R$ for (uniquely determined) linear map $R:\Gamma_0^\infty(B)\to \Gamma_0^\infty(B')$ that obeys $R\Cc=\Cc'R$.
		
		(c) If $S:(B,P)\to (B',P')$ has a formal transpose $\adj{S}:\Gamma_\sc^\infty(B')\to\Gamma^\infty(B)$ then $\adj{S}E_{P'}S=E_P$; if $S$ is Cauchy then $\adj{S}E_{P'}h= \hat{E}_{P}\hat{S}^{-1}[h]$ for all $h\in\Gamma_0^\infty(B')$.
	\end{lemma}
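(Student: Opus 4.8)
The plan is to lean on the functoriality of $\Sol$ (Theorem~\ref{thm:sol}) for part~(a), and for parts~(b) and~(c) to compute directly from the definitions, using two elementary facts about any Green-hyperbolic $P$: the identities $E_PP=PE_P=0$ on $\Gamma_0^\infty(B)$ (equivalently $E_P(Pa,b)=E_P(a,Pb)=0$, recorded just above Theorem~\ref{thm:GreenHypOps}), and the injectivity $\ker P\cap\Gamma_0^\infty(B)=\{0\}$; reality of $P$, in the form $\Cc P=P\Cc$, will also be needed. For~(a) I would argue purely functorially: a composite $\Sol(ST)=\Sol(S)\Sol(T)$ of $\Sympl_\CC$-isomorphisms is an isomorphism, so a composition of Cauchy morphisms is Cauchy; and if $\Sol(S)=\Sol(S')$ and $\Sol(T)=\Sol(T')$ then $\Sol(ST)=\Sol(S)\Sol(T)=\Sol(S')\Sol(T')=\Sol(S'T')$, giving $ST\sim S'T'$.

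For~(b) I would prove both implications. For ``$\Rightarrow$'', $T\sim S$ gives $\hat T=\hat S$, so $(T-S)f\in P'\Gamma_0^\infty(B')$ for every $f$; injectivity of $P'$ on $\Gamma_0^\infty(B')$ then defines $Rf$ uniquely by $(T-S)f=P'Rf$, the same injectivity applied to $(T-S)(f_1+\lambda f_2)$ forces $R$ to be linear, and $P'\Cc'Rf=\Cc'P'Rf=\Cc'(T-S)f=(T-S)\Cc f=P'R\Cc f$ (using $\Cc'P'=P'\Cc'$, $S\Cc=\Cc'S$ and $T\Cc=\Cc'T$) forces $R\Cc=\Cc'R$. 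For ``$\Leftarrow$'', I would set $T=S+P'R$ and check the three morphism axioms: $T\Cc=\Cc'T$ by the same manipulation; $TP\Gamma_0^\infty(B)=(SP+P'RP)\Gamma_0^\infty(B)\subset P'\Gamma_0^\infty(B')$ since $SP\Gamma_0^\infty(B)\subset P'\Gamma_0^\infty(B')$; and
\begin{equation*}
  E_{P'}(Tf,Th)=E_{P'}(Sf+P'Rf,\,Sh+P'Rh)=E_{P'}(Sf,Sh)=E_P(f,h)
\end{equation*}
because $E_{P'}(P'a,b)=E_{P'}(a,P'b)=0$; then $E_{P'}Tf=E_{P'}Sf+E_{P'}P'Rf=E_{P'}Sf$ yields $\Sol(T)=\Sol(S)$, i.e.\ $T\sim S$. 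Uniqueness of $R$ is once more injectivity of $P'$.

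For~(c) I would test the proposed identity against compactly supported sections. For $f,h\in\Gamma_0^\infty(B)$, the defining property of the formal transpose, the morphism relation $E_{P'}(Sh,Sf)=E_P(h,f)$, and symmetry of $\langle\cdot,\cdot\rangle$ give
\begin{equation*}
  \langle\adj{S}E_{P'}Sf,\,h\rangle=\langle E_{P'}Sf,\,Sh\rangle=E_{P'}(Sh,Sf)=E_P(h,f)=\langle E_Pf,\,h\rangle ,
\end{equation*}
so $\adj{S}E_{P'}Sf=E_Pf$ by nondegeneracy of the pairing, i.e.\ $\adj{S}E_{P'}S=E_P$. When $S$ is Cauchy and $h\in\Gamma_0^\infty(B')$, I would pick $f$ with $\hat S[f]=[h]$, i.e.\ $Sf-h\in P'\Gamma_0^\infty(B')$; then $E_{P'}h=E_{P'}Sf$ (as $E_{P'}P'=0$), hence $\adj{S}E_{P'}h=\adj{S}E_{P'}Sf=E_Pf=\hat{E}_P[f]=\hat{E}_P\hat{S}^{-1}[h]$. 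The step I expect to need real care is the ``$\Leftarrow$'' direction of~(b), where all three morphism axioms and $\Sol(T)=\Sol(S)$ must be verified for $T=S+P'R$; the symplectic-form axiom looks like the obstacle, but it collapses at once on inserting $E_{P'}(P'a,b)=E_{P'}(a,P'b)=0$, leaving only direct substitutions.
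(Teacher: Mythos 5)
Your proposal is correct and follows essentially the same route as the paper's own proof. For part~(a) you argue via functoriality of $\Sol$ rather than the identity $\widehat{ST}=\hat S\hat T$, but these are directly equivalent (the paper itself records that $S$ is Cauchy iff $\hat S$ is an isomorphism and $S\sim T$ iff $\hat S=\hat T$); for parts~(b) and~(c) your computations coincide with the paper's, with only cosmetic differences in which slot of the symmetric pairing $\langle\cdot,\cdot\rangle$ the transpose is moved.
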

	\begin{proof}
		(a) Both statements follow from the trivial identity $\widehat{ST}=\hat{S}\hat{T}$ for composable morphisms $S$, $T$.
		
		(b) Sufficiency: one clearly has $T\Cc=\Cc' T$ and $TP\Gamma_0^\infty(B)\subset P'\Gamma_0^\infty(B')$, and $E_{P'}(Tf,Th)= E_{P'}(Sf,Sh)=E_{P}(f,h)$ for all $f,h\in\Gamma_0^\infty(B)$; finally, $(\hat{T}-\hat{S})[f]=[P'Rf]=0$ for all $f\in\Gamma_0^\infty(B)$, so $T$ defines a $\GreenHyp$-morphism $T:(B,P)\to (B',P')$ with $T\sim S$.
		Necessity: as $P'$ is injective on $\Gamma_0^\infty(B')$ by Theorem~\ref{thm:GreenHypOps}\eqref{it:kernels} and $T\sim S$ entails $(T-S)\Gamma_0^\infty(B)\subset P'\Gamma_0^\infty(B')$, 
		there is a unique map $R:\Gamma_0^\infty(B)\to \Gamma_0^\infty(B')$ so that 
		$T-S=P'R$, which is necessarily linear and satisfies $\Cc' R=R\Cc$. 
		
		(c) We calculate 
		\begin{equation}
			\langle f,\adj{S}E_{P'}Sh\rangle=\langle Sf,E_{P'}Sh\rangle= E_{P'}(Sf,Sh)=E_{P}(f,h)=\langle f,E_P h\rangle
		\end{equation}
		for all $f,h\in\Gamma_0^\infty(B)$. If $S$ is Cauchy and $h\in\Gamma_0^\infty(B')$ then there exists $f\in\Gamma_0^\infty(B)$ so that $[h]=\hat{S}[f]=[Sf]$, whereupon $\adj{S}E_{P'}h=\adj{S}E_{P'}Sf=E_Pf=
		\hat{E}_P[f]=\hat{E}_P \hat{S}^{-1}[h]$. 
	\end{proof}
	
Some basic examples of $\GreenHyp$-morphisms are set out in the next result, which we formulate for ease of use in what follows.
\begin{thm}\label{thm:GreenHyp}
		Suppose $(B,P)$ and $(B',P')$ are objects in $\GreenHyp$ over spacetimes $\Mb$ and $\Mb'$. Then
		\begin{enumerate}[a)]
			\item If $\beta:B\to B'$ is a $\HVB$-morphism such that $P'\beta_*=\beta_*P$, then $\beta_*:(B,P)\to (B',P')$ defines a $\GreenHyp$-morphism. If $\beta$ is a fibrewise isomorphism covering a Cauchy $\Loc$-morphism then $\beta_*$ is Cauchy. 
			\item If $N\subset M$ is open and causally convex then the inclusion map $\iota_N:B|_N\to B$ determines a $\GreenHyp$-morphism $\iota_{N*}:(B|_N,P|_N)\to (B,P)$ which is Cauchy if $N$ contains a Cauchy surface of $\Mb$.
			\item Define 
			$(B\oplus B',P\oplus P')$ with the obvious
			complex conjugation and bundle metric on $B\oplus B'$ defined by taking direct sums,
			noting that $E_{P\oplus P'}^\pm=E_{P}^\pm \oplus E_{P'}^\pm$. Then there are $\GreenHyp$-morphisms 
			\begin{equation}
				(B,P)\stackrel{\iota_*}{\longrightarrow} (B\oplus B',P\oplus P')\stackrel{\iota'_*}{\longleftarrow} (B',P'),
			\end{equation} 
			where $\iota:B\to B\oplus B'$ and $\iota':B'\to B\oplus B'$ are  the canonical injections of
			$B$ and $B'$ into $B\oplus B'$ (which are bundle isometries but not isomorphisms).
			\item If $S:(B,P)\to (B',P')$ is a Cauchy $\GreenHyp$-morphism then there are (many, necessarily equivalent)
			$\GreenHyp$-morphisms $L:(B',P')\to (B,P)$ such that $\hat{L}=\hat{S}^{-1}$.
			Explicitly, take any open Cauchy slab $U\subset M$ and choose $\chi\in C^\infty(M;\RR)$ with $\chi=1$ on a neighbourhood of $M\setminus J^+(U)$ and $\chi=0$ on a neighbourhood of $M\setminus J^-(U)$. Then the required properties are met by $L:\Gamma_0^\infty(B')\to \Gamma_0^\infty(B)$ given by $Lf=[P,\chi]\hat{E}_P \hat{S}^{-1}[f]$. 
			If $S$ has a formal transpose $\adj{S}:\Gamma^\infty_\sc(B')\to \Gamma^\infty(B)$ then
			$L=[P,\chi]\adj{S}E_{P'}$. 
			\item In the case $B'=B$, suppose that $P'$ agrees with $P$ on a 
			causally convex open set $N\subset M$ that contains a Cauchy surface of $\Mb$. 
			Then there exist Cauchy $\GreenHyp$-morphisms $S:(B,P)\to (B,P')$ such that
			$S\Gamma_0^\infty(B)\subset \Gamma_0^\infty(B|_N)$ and
			$\hat{S}[f]_P=[f]_{P'}$ for all $f\in\Gamma_0^\infty(B|_N)$; all such morphisms are equivalent.
			Explicitly, choose 
			an open Cauchy slab $U\subset N$ and $\chi\in C^\infty(M;\RR)$ with $\chi=1$ on a neighbourhood of $M\setminus J^+(U)$ and $\chi=0$ on a neighbourhood of $M\setminus J^-(U)$. Then $S=P\chi E_P=[P,\chi]E_P$ has the stated properties and
			$(S-\id)\Gamma_0^\infty(B)\subset P\Gamma_0^\infty(B)$.
		\end{enumerate}	
\end{thm}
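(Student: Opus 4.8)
The plan is to verify, in each part, the three defining conditions for a $\GreenHyp$-morphism — compatibility with the conjugations $\Cc$, inclusion of $P$-images into $P'$-images, and preservation of the bilinear form $E(\cdot,\cdot)$ — together with the Cauchy condition where it is asserted, relying throughout on Theorem~\ref{thm:GreenHypOps} and Lemma~\ref{lem:GreenHyp0}. Parts (a)–(c) I expect to be essentially bookkeeping: conjugation-compatibility is immediate since $\beta_*$, extension by zero, and the canonical injections $\iota,\iota'$ are all real; the image condition follows from $P'\beta_*=\beta_*P$ in (a), from locality of the differential operator $P$ in (b), and from the block structure $(P\oplus P')(f,f')=(Pf,P'f')$ in (c); and the $E(\cdot,\cdot)$-condition follows once one checks that the Green operators are correspondingly intertwined — $E_{P'}\beta_*=\beta_*E_P$, $E_{P|_N}f=(E_P\iota_{N*}f)|_N$, $E^\pm_{P\oplus P'}=E^\pm_P\oplus E^\pm_{P'}$ — each a one-line consequence of the uniqueness of Green operators (Theorem~\ref{thm:GreenHypOps}\eqref{it:uniqueness}) once the candidate section is seen to have the right support, the support inputs being $\psi(J_M^\pm(S))\subset J_{M'}^\pm(\psi(S))$ for a $\Loc$-morphism $\psi$ and $J_M^\pm(S)\cap N=J_N^\pm(S)$ for $S\subset N$ with $N$ causally convex. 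Preservation of $E(\cdot,\cdot)$ then follows because $\beta_*$ and extension by zero respect $\langle\cdot,\cdot\rangle$. The Cauchy assertion in (b) is exactly $\Gamma_0^\infty(B)=\Gamma_0^\infty(B|_N)+P\Gamma_0^\infty(B)$ of Theorem~\ref{thm:GreenHypOps}\eqref{it:Cauchy}, i.e.\ surjectivity of $\hat\iota_{N*}$ (injectivity being automatic); for (a) one factors $\beta_*$ through the $\GreenHyp$-isomorphism onto $(B'|_{\psi(M)},P'|_{\psi(M)})$ followed by $\iota_{\psi(M)*}$, which is Cauchy by (b), and applies Lemma~\ref{lem:GreenHyp0}(a).

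The substance lies in (d) and (e), both resting on the ``time-slicing'' map of Theorem~\ref{thm:GreenHypOps}\eqref{it:Cauchy}: for an open Cauchy slab $U$ and a cutoff $\chi$ equal to $1$ to the past of $U$ and to $0$ to its future, $[P,\chi]E_P$ sends $\Gamma_0^\infty(B)$ into $\Gamma_0^\infty(B|_U)$ — the image is compactly supported because $\supp\dd\chi$ is temporally compact while $E_Pf$ is spacelike compact — and satisfies $[P,\chi]E_Pf-f\in P\Gamma_0^\infty(B)$. For (d) I would \emph{define} $Lf=[P,\chi]\hat E_P\hat S^{-1}[f]$; this makes sense since $S$ Cauchy gives $\hat S^{-1}$, and $\hat E_P$ lands in $E_P\Gamma_0^\infty(B)\subset\Gamma^\infty_\sc(B)$, on which $[P,\chi]$ again produces a compactly supported section. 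Conjugation-compatibility holds because $\hat S^{-1}$, $\hat E_P$ and $[P,\chi]$ are real; the image condition holds because $\hat S^{-1}$ annihilates $[P'g]_{P'}$; and writing $g$ for a representative of $\hat S^{-1}[f]$, the relation $Lf-g\in P\Gamma_0^\infty(B)$ gives at once $\hat L[f]=[g]=\hat S^{-1}[f]$ and, via $E_P(P\,\cdot,\cdot)=E_P(\cdot,P\,\cdot)=0$ together with the morphism property of $S$, the identity $E_P(Lf,Lh)=E_{P'}(f,h)$. Since $\hat L=\hat S^{-1}$ pins $L$ down up to equivalence, all such $L$ are equivalent; and when $\adj{S}$ exists, Lemma~\ref{lem:GreenHyp0}(c) identifies $\adj{S}E_{P'}$ with $\hat E_P\hat S^{-1}$, giving $L=[P,\chi]\adj{S}E_{P'}$.

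For (e) I would take $S=[P,\chi]E_P$ with $U\subset N$, so Theorem~\ref{thm:GreenHypOps}\eqref{it:Cauchy} immediately yields $S\Gamma_0^\infty(B)\subset\Gamma_0^\infty(B|_N)$ and $(S-\id)\Gamma_0^\infty(B)\subset P\Gamma_0^\infty(B)$. Conjugation- and image-compatibility are as before ($SPg=[P,\chi]E_PPg=0$). For preservation of $E(\cdot,\cdot)$: since $P'=P$ on $N$, part (b) applied over the globally hyperbolic region $N$ shows $E_{P'}h$ and $E_Ph$ agree on $N$ whenever $h$ is supported in $N$; as $Sf,Sh$ are supported in $N$ this gives $E_{P'}(Sf,Sh)=E_P(Sf,Sh)$, which $Sf-f,Sh-h\in P\Gamma_0^\infty(B)$ then reduces to $E_P(f,h)$. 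The one genuinely new point — the step I expect to be the main obstacle — is $\hat S[f]_P=[f]_{P'}$ for $f\in\Gamma_0^\infty(B|_N)$: one only knows a priori that $Sf-f\in P\Gamma_0^\infty(B)$, with no control on the support of the preimage, and must upgrade this to $Sf-f\in P'\Gamma_0^\infty(B)$. Here $Sf-f$ lies in $\ker E_P$ and is supported in a compact $K\subset N$, so Theorem~\ref{thm:GreenHypOps}\eqref{it:refinedkernel} produces a preimage $r'$ supported in $J^+(K)\cap J^-(K)$, which is compact and, by causal convexity of $N$, contained in $N$; as $P'=P$ on $N$ this gives $Sf-f=Pr'=P'r'\in P'\Gamma_0^\infty(B)$. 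Surjectivity of $\hat S$, hence the Cauchy property, then follows because every $[g]_{P'}$ has a representative supported in $N$ (Theorem~\ref{thm:GreenHypOps}\eqref{it:Cauchy} for $P'$), on which $\hat S$ is already fixed; and any two morphisms as in the statement agree on all such representatives, hence are equivalent.
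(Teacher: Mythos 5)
The one step that fails as written is the intertwining identity you use for part~(a): $E_{P'}\beta_*=\beta_*E_P$ is false in general. The section $E^\pm_{P'}\beta_*f$ is supported in $J^\pm(\psi(\supp f))$, which typically leaves $\psi(M)$, whereas any extension of $\beta_*E^\pm_Pf$ vanishes outside $\psi(M)$; correspondingly, the extension by zero of $\beta\circ(E^\pm_Pf)\circ\psi^{-1}$ is not a solution of $P'u=\beta_*f$ where $J^\pm(\supp\beta_*f)$ exits $\psi(M)$, so the uniqueness argument you invoke does not apply to this candidate. The identity that is both true and sufficient is the pulled-back version $\adj{\beta}_*E^\pm_{P'}\beta_*=E^\pm_P$, obtained by checking G1--G3 for $\adj{\beta}_*E^\pm_{P'}\beta_*$ (using $P\adj{\beta}_*=\adj{\beta}_*P'$, $\adj{\beta}_*\beta_*=\id$ and the support inclusion~\eqref{eq:betastarsupport}, which is where causal convexity of $\psi(M)$ enters) and then invoking Theorem~\ref{thm:GreenHypOps}\eqref{it:uniqueness}; this is exactly what the paper does, and it gives $E_{P'}(\beta_*f,\beta_*h)=\langle f,\adj{\beta}_*E_{P'}\beta_*h\rangle=E_P(f,h)$ directly. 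Since only the restriction of $E_{P'}\beta_*h$ to $\supp\beta_*f\subset\psi(M)$ enters the pairing, your argument is salvageable --- and your restricted formulation for (b) is the correct one --- but the identity you state for (a) is wrong and the ``one-line uniqueness'' justification does not support it. Your Cauchy argument for (a), factoring through $(B'|_{\psi(M)},P'|_{\psi(M)})$ and part~(b), is fine and equivalent to the paper's direct surjectivity argument via Theorem~\ref{thm:GreenHypOps}\eqref{it:Cauchy} and $\beta_*\adj{\beta}_*=\id$.

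The rest is correct. Your treatment of (d) matches the paper's: compact support of $[P,\chi]$ applied to spacelike-compact sections, $\hat{L}\hat{S}=\id$ from Theorem~\ref{thm:GreenHypOps}\eqref{it:Cauchy}, transfer of the $E$-identity through representatives of $\hat{S}^{-1}[f]$, and Lemma~\ref{lem:GreenHyp0}(c) for the formula involving $\adj{S}$. For (e) you take a genuinely different and more explicit route to the key point $\hat{S}[f]_P=[f]_{P'}$ for $f\in\Gamma_0^\infty(B|_N)$: you upgrade $Sf-f\in P\Gamma_0^\infty(B)$ to $Sf-f\in P'\Gamma_0^\infty(B)$ via Theorem~\ref{thm:GreenHypOps}\eqref{it:refinedkernel}, which places the preimage in $\Gamma^\infty_{J^+(K)\cap J^-(K)}(B)$ with $J^+(K)\cap J^-(K)$ compact and contained in $N$ by causal convexity, where $P=P'$. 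The paper instead treats $N$ as a spacetime in its own right and composes the Cauchy morphisms $\iota_{N*}$, $L_N$ and $\iota'_{N*}$ furnished by parts~(b) and~(d), leaving the identification as ``straightforward''; your argument supplies exactly the detail that is elided there and is the same mechanism the paper deploys later in the proof of Theorem~\ref{thm:scattering_morphism}. Both routes are valid, and your surjectivity and uniqueness-up-to-equivalence arguments (representatives supported in $N$ via Theorem~\ref{thm:GreenHypOps}\eqref{it:Cauchy}) agree with the paper's.
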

\begin{proof}
	(a) As $\beta$ is a $\HVB$-morphism, one has $\Cc'\beta_*=\beta_*\Cc$. Moreover,
	$\beta_* P\Gamma_0^\infty(B)=P'\beta_*\Gamma_0^\infty(B)\subset P'\Gamma_0^\infty(B')$.
	Taking the formal transpose of $P'\beta_*=\beta_*P$	
	 gives $P\adj{\beta}_*= \adj{\beta}_* P'$, and we compute
	\begin{align} 
		P \adj{\beta}_*E_{P'}^\pm \beta_* &=  \adj{\beta}_* P' E_{P'}^\pm  \beta_* =  \adj{\beta}_* \beta_*=\id\nonumber\\
		\adj{\beta}_*E_{P'}^\pm \beta_* P &= 	\adj{\beta}_* E_{P'}^\pm P\beta_* =\adj{\beta}_* \beta_*=\id
	\end{align}
	on $\Gamma_0^\infty(B)$, so $\adj{\beta}_*E_{P'}^\pm\beta_*$ satisfy G1 and G2 for $P$. 
	Meanwhile, G3 holds because $\adj{\beta}_*E_{P'}^\pm \beta_*\Gamma_K^\infty(B)\subset 
	\adj{\beta}_*E_{P'}^\pm  \Gamma_{\psi(K)}^\infty(B)\subset 
	\adj{\beta}_* \Gamma_{J^\pm(\psi(K))}^\infty(B')\subset \Gamma_{J^\pm( K)}^\infty(B)$ (see~\eqref{eq:betastarsupport}, recalling that $\beta$ covers a $\Loc$-morphism). Thus $\adj{\beta}_*E_{P'}^\pm\beta_*$ are Green operators for $P$ and Theorem~\ref{thm:GreenHypOps}\eqref{it:uniqueness} gives $\adj{\beta}_*E_{P'}^\pm\beta_*=E_{P}^\pm$ and therefore $\adj{\beta}_*E_{P'}\beta_*=E_P$. 
	Finally, if $\psi(M)$ contains a $\Mb'$-Cauchy surface, Theorem~\ref{thm:GreenHypOps}\eqref{it:Cauchy} gives $\Gamma_0^\infty(B')=\Gamma_0^\infty(B'|_{\psi(M)})+P'\Gamma_0^\infty(B')=
	\beta_*\Gamma_0^\infty(B)+P'\Gamma_0^\infty(B')$ as $\beta_*\,\adj{\beta}_*=\id$
	on $\Gamma_0^\infty(B'|_{\psi(M)})$ because $\beta$ is a fibrewise isomorphism.
	Consequently, $\hat{\beta}_*$ is surjective as well as injective and $\beta_*$ is therefore Cauchy.
	
	(b,c) are immediate applications of (a) (see Lemma~3.6 in~\cite{Baer:2015} for a proof that $P|_N$ is Green hyperbolic; reality and formal hermiticity are easily checked).
	
	(d) It is enough to check the explicit example, for which $\Cc L=L\Cc'$ and $LP'=0$ are straightforward.
		Note that $LSf=[P,\chi]\hat{E}_{P}[f]=[P,\chi]E_Pf$, so by
		Theorem~\ref{thm:GreenHypOps}\eqref{it:Cauchy}, one has $LS\Gamma_0^\infty(B)\subset \Gamma_0^\infty(B|_U)\subset\Gamma_0^\infty(B|_N)$ 
		and $LSf=f \mod P\Gamma_0^\infty(B)$, and hence $\hat{L}\hat{S}=\id$; by surjectivity of $\hat{S}$ it follows that $\hat{L}=\hat{S}^{-1}$. Furthermore, 
		\begin{equation}\label{eq:ident}
			E_{P}(LS f,LS h)=E_{P}(f,h)=E_{P'}(Sf,Sh)
		\end{equation}
		for all $f,h\in\Gamma_0^\infty(B)$. As $\hat{S}$ is an isomorphism, for every
		$f'\in\Gamma_0^\infty(B')$ there is $f\in\Gamma_0^\infty(B)$ so that $Sf-f'\in P'\Gamma_0^\infty(B')$ and therefore $Lf'=LSf$. Combining with~\eqref{eq:ident}, one has
		$E_P(Lf',Lh')=E_{P'}(Sf,Sh)=E_{P'}(f',h')$ for all $f',h'\in\Gamma_0^\infty(B')$, 
		so $L:(B',P')\to (B,P)$. 		 
		The last statement follows by Lemma~\ref{lem:GreenHyp0}(c).  
	
	(e) Let $\Mb|_N$ be the region $N$ considered as a spacetime in its own right, with the induced metric and causal structure from $\Mb$. By~(b), there are Cauchy morphisms $\iota_N:(B|_N,P|_N)\to (B,P)$ and $\iota'_{N}:(B|_N,P'|_N)\to (B,P')$ with the same underlying map. By~(d) there is a Cauchy morphism $L_N:(B,P)\to (B|_N,P|_N)$ so that $\hat{L}_N=\hat{\iota}_{N*}^{-1}$, so $S_N=\iota'_N L_N:(B,P)\to(B,P')$ is also Cauchy. Given $\chi$ and $U$ as in the statement, part~(d) gives an explicit expression for $L_N=[P|_N,\chi|_N]\adj{\iota}_{N*}E_P$, and so $S_N=\iota'_{N\*} [P|_N,\chi|_N]\adj{\iota}_{N*}E_P= [P,\chi]E_P$ as $[P,\chi]$ is a differential operator with coefficients supported in $N$. It is straightforward to see that $S_N\Gamma_0^\infty(B)\subset \Gamma_0^\infty(B|_N)$ and $S_N[f]_{P}=[f]_{P'}$ for $f\in\Gamma_0^\infty(B|_N)$. 
	Thus $S=S_N$ has the stated properties. Now consider any $S$ of this type, noting that $\hat{S}$ is fixed uniquely by its action on $\Gamma_0^\infty(B|_N)$ because $\Gamma_0^\infty(B)=\Gamma_0^\infty(B|_N)+P'\Gamma_0^\infty(B)$. Thus all morphisms $S:(B,P)\to (B,P')$ obeying  $\hat{S}[f]_P=[f]_{P'}$ for all $f\in\Gamma_0^\infty(B|_N)$ are equivalent.
	The last statement follows by Theorem~\ref{thm:GreenHypOps}(g).
\end{proof}
The morphisms of Theorem~\ref{thm:GreenHyp}(a) generalise those of the category $\GlobHypGreen$ introduced in~\cite{BaerGinoux:2012}, which however required $\beta$ to be a fibrewise isometric isomorphism (between real vector bundles).

\subsection{Quantisation}\label{sec:Yf_quant}

We now describe how every \rfhgho\ defines a bosonic hermitian QFT. Given the definitions so far, the quantisation process may be described functorially. Let $\Alg$ be the category whose objects are unital complex $*$-algebras, and whose morphisms are unit-preserving injective $*$-homomorphisms. We have already seen that there is a functor
$\Sol:\GreenHyp\to\Sympl_\CC$ turning \rfhgho's into symplectic spaces; by composing this with a  quantisation functor $\Qf_s:\Sympl_\CC\to\Alg$ (see, e.g.,~\cite{FewVer:dynloc2}) we obtain the desired functor from $\GreenHyp$ to $\Alg$. 
The main theorem on quantisation is essentially standard and we state it without detailed proof. 
\begin{thm} \label{thm:RFHGHOquantisation}
	There is a functor $\Yf:\GreenHyp\to \Alg$ with the following properties:
	\begin{enumerate}[a)]
	\item For each $(B,P)\in\GreenHyp$, 
	$\Yf(P)$ is the unital $*$-algebra generated by symbols $\Upsilon_P({f})$ ($f\in\Gamma_0^\infty(B))$, subject to 
	\begin{enumerate}[Y1] 
		\item $f \mapsto \Upsilon_{P}({f})$ is complex linear  
		\item $\Upsilon_{P}({f})^* = \Upsilon_{P}(\Cc f)$  
		\item $\Upsilon_{P}({P} {f}) = 0$,   
		\item $[\Upsilon_{P}({f}),\Upsilon_{P}(h)] = \ii E_{P}({f},{h}) \1_{\Yf(P)}$  
	\end{enumerate}
	for all ${f},h\in\Gamma_0^\infty(B)$, and where $\Cc$ is the antilinear bundle involution of $B$.
	
	\item If $S:P\to P'$ is a $\GreenHyp$-morphism, then the $\Alg$-morphism $\Yf(S):\Yf(P)\to\Yf(P')$ is uniquely determined by
	\begin{equation}
		\Yf(S)\Upsilon_P({f}) = \Upsilon_{P'}(S{f}),\qquad {f}\in\Gamma_0^\infty(B),
	\end{equation}
	and in particular, $\Yf(S)=\Yf(T)$ if and only if $S\sim T$.
	
	\item $\Yf(S)$ is an $\Alg$-isomorphism if $S$ is a Cauchy $\GreenHyp$-morphism. 
	
	\item If $P$ and $P'$ are \rfhgho's on hermitian bundles $B$ and $B'$ over a common $\Mb\in\Loc$, then  $\Yf(P\oplus P')= \Yf(P)\otimes \Yf(P')$ (the algebraic tensor product) under the correspondence
	\begin{equation}\label{eq:Yf_monoidal}
		\Upsilon_{P\oplus P'}({f}\oplus{f}') = 
		\Upsilon_{P}({f})\otimes \1_{\Yf(P')} + \1_{\Yf(P)} \otimes \Upsilon_{P'}({f}'), \qquad 
		{f}\in\Gamma_0^\infty(B),~{f}'\in\Gamma_0^\infty(B').
	\end{equation} 
	\end{enumerate}
\end{thm}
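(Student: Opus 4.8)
\emph{Strategy.} The plan is to build $\Yf$ as the composite $\Yf=\Qf_s\circ\Sol$, where $\Sol:\GreenHyp\to\Sympl_\CC$ is the functor of Theorem~\ref{thm:sol} and $\Qf_s:\Sympl_\CC\to\Alg$ is the standard bosonic quantisation functor referred to in the text. I would first recall that $\Qf_s(V,\sigma,C)$ is the unital $*$-algebra presented by generators $\Phi(v)$ ($v\in V$) with $v\mapsto\Phi(v)$ complex-linear, $\Phi(v)^*=\Phi(Cv)$ and $[\Phi(v),\Phi(w)]=\ii\sigma(v,w)\1$, and that $\Qf_s(T)$ is determined on morphisms by $\Qf_s(T)\Phi(v)=\Phi(Tv)$. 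Functoriality of $\Qf_s$, its preservation of isomorphisms, the fact that $\Phi(w)=0$ forces $w=0$ (via any faithful representation), and the monoidal identity $\Qf_s(V\oplus V')=\Qf_s(V)\otimes\Qf_s(V')$ with $\Phi_{V\oplus V'}(v\oplus v')=\Phi_V(v)\otimes\1+\1\otimes\Phi_{V'}(v')$ --- the two factors commuting because $(\sigma\oplus\sigma')$ pairs $V$ with $V'$ trivially --- are all standard, and I would simply cite them.

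\emph{Part (a).} Put $\Upsilon_P(f):=\Phi(E_Pf)\in\Qf_s(\Sol(B,P))$. Relations Y1--Y4 then hold by direct substitution: Y1 from linearity of $E_P$ and $\Phi$; Y2 from $\Cc E_P=E_P\Cc$ (a consequence of $\overline{E}_P=E_P$) together with $\Phi(v)^*=\Phi(\Cc v)$; Y3 from $E_PP=0$; Y4 from the $\Qf_s$-commutator and the identity $\sigma_P(E_Pf,E_Ph)=E_P(f,h)$. Since $E_P$ is onto $E_P\Gamma_0^\infty(B)$, the $\Upsilon_P(f)$ generate $\Qf_s(\Sol(B,P))$. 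The only point requiring care is that $\Qf_s(\Sol(B,P))$ carries \emph{exactly} this presentation, i.e.\ is the universal algebra generated by $\Upsilon_P(f)$ subject to Y1--Y4. For this I would note that in the universal algebra, Y1 and Y3 make $\Upsilon_P$ descend to a linear map on $\Gamma_0^\infty(B)/P\Gamma_0^\infty(B)$; composing with the inverse of the linear isomorphism $\hat{E}_P$ of Theorem~\ref{thm:GreenHypOps}(f) yields a linear map from $E_P\Gamma_0^\infty(B)$ into that algebra which, by Y2 and Y4, satisfies the defining relations of $\Qf_s(\Sol(B,P))$; the induced $*$-homomorphism out of $\Qf_s(\Sol(B,P))$ is inverse to the evident surjection, so the two algebras coincide. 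We may then set $\Yf(P):=\Qf_s(\Sol(B,P))$.

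\emph{Parts (b)--(d).} For (b) set $\Yf(S):=\Qf_s(\Sol(S))$; then $\Yf(S)\Upsilon_P(f)=\Phi(\Sol(S)E_Pf)=\Phi(E_{P'}Sf)=\Upsilon_{P'}(Sf)$, which pins $\Yf(S)$ down on generators, and functoriality of $\Yf$ is inherited from $\Sol$ (Theorem~\ref{thm:sol}) and $\Qf_s$. The equivalence $\Yf(S)=\Yf(T)\iff S\sim T$ reduces to $\Qf_s(\Sol(S))=\Qf_s(\Sol(T))\iff\Sol(S)=\Sol(T)$, whose right-hand side is the definition of $S\sim T$, the forward direction using only $\Phi(w)=0\Rightarrow w=0$. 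Part (c) is then immediate: a Cauchy $S$ gives a $\Sympl_\CC$-isomorphism $\Sol(S)$ by Theorem~\ref{thm:sol}, and $\Qf_s$ preserves isomorphisms. For (d) I would combine $\Sol(B\oplus B',P\oplus P')=\Sol(B,P)\oplus\Sol(B',P')$ from~\eqref{eq:solsum}, the monoidal property of $\Qf_s$ recalled above, and the identity $E_{P\oplus P'}^\pm=E_P^\pm\oplus E_{P'}^\pm$ to read off~\eqref{eq:Yf_monoidal}.

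\emph{Main obstacle.} Nothing here is deep once $\Sol$ and $\Qf_s$ are in hand; the step most likely to demand genuine care is the universality assertion in~(a), where the construction of the inverse homomorphism hinges on the linear isomorphism $\hat{E}_P$ and the kernel identity $\ker E_P=P\Gamma_0^\infty(B)$. Everything else is a formal consequence of the functoriality established earlier for $\Sol$ and the standard properties of $\Qf_s$.
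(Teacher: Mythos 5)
Your proposal takes essentially the same route as the paper, which defines $\Yf=\Qf_s\circ\Sol$ and describes $\Qf_s$ explicitly as the symmetric tensor algebra over $\Sol(P)$ equipped with the Moyal-type star product~\eqref{eq:starprodbose}. The paper states the theorem without a detailed proof (calling it ``essentially standard''), whereas you supply the bookkeeping for part~(a) --- showing that $\Qf_s(\Sol(B,P))$ really is the universal algebra on Y1--Y4 by descending $\Upsilon_P$ through $\Gamma_0^\infty(B)/P\Gamma_0^\infty(B)$ and inverting via $\hat E_P$ --- which is exactly the content that Theorem~\ref{thm:GreenHypOps}(f) is set up to provide, and for part~(d) you invoke the monoidal property of $\Qf_s$ that the paper instead unpacks by citing the tensor-algebra isomorphism (Bourbaki III\,\S6.6) plus a check that it intertwines the star products. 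No gap; this is the intended argument with a slightly more abstract packaging of $\Qf_s$.
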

To be more explicit, the required functor is $\Yf=\Qf_s\circ\Sol$. As a vector space
	\begin{equation}
		\Yf(P) = \bigoplus_{n\in\NN_0} \Sol(P)^{\odot n},
	\end{equation}
	where $\odot$ is the symmetrised tensor product, the $*$-operation is $(v_1\odot\cdots\odot v_n)^*=(\Cc v_n\odot\cdots\odot \Cc v_1)$ and the generators are given by $\Upsilon_P(f) = 
	(0,E_Pf,0,\ldots)$.
	The algebra product of $\Yf(P)$ is given explicitly by 
	\begin{equation}\label{eq:starprodbose}
		FH = \odot\circ \exp \left(\tfrac{1}{2}\ii\langle \sigma,\partial\otimes\partial\rangle\right) F\otimes H,
	\end{equation}
	which is a translation of the star product from the functional formalism~\cite{Rejzner_2022} to the current setting. Here,
	$\partial:\Yf(P)\to \Sol(P)\otimes\Yf(P)$ is a linear map such that
	\begin{equation}
		\partial  (v_1 \odot \cdots \odot v_k)  = \sum_{j=1}^k v_j \otimes (v_1\odot\cdots \slashed{v_j}  \cdots \odot v_k), \qquad \partial \1_{\Yf(P)}=0,
	\end{equation}
	where the slash indicates an omission, $\1_{\Yf(P)}=(1,0,\ldots)\in\Yf(P)$ and 
	$\langle \sigma,\cdot\rangle:(\Sol(P)\otimes\Yf(P))^{\otimes 2}\to \Yf(P)^{\otimes 2}$
	is a linear map so that
	\begin{align}
		\langle \sigma,	(v\otimes F) \otimes (w\otimes H) \rangle= \sigma(v,w) F\otimes H.
	\end{align} 
	As $\langle \sigma_P,\partial\otimes\partial\rangle$ is nilpotent on any element of $\Yf(P)\otimes\Yf(P)$, the exponential in~\eqref{eq:starprodbose} can be understood as a power series. It can be checked that this indeed defines an associative algebra product  (cf.\ the argument in Section~6.2.4 of~\cite{Waldmann:PoissonGeometrie} or Proposition II.4 of~\cite{Keller:2010}).
	In particular, one has
	\begin{equation}
	 \Upsilon_P(f)\Upsilon_P(h) = \Upsilon_P(f)\odot \Upsilon_P(h) + \frac{\ii}{2}\sigma(E_Pf,E_Ph)\1_{\Yf(P)},\qquad [\Upsilon_P(f),\Upsilon_P(h) ]=\ii E_P(f,h)\1_{\Yf(P)}.
	\end{equation}
	 Part~(d) of Theorem~\ref{thm:RFHGHOquantisation} is based on the isomorphism
	 of the symmetric tensor algebra over $\Sol(P\oplus P')$ with the algebraic 
	 tensor product of those over $\Sol(P)$ and $\Sol(P')$ (cf.\ III\S 6.6 of~\cite{Bourbaki_algebra1}); a calculation shows that the same map is an isomorphism
	 from $\Yf(P\oplus P')$ to the algebraic tensor product $\Yf(P)\otimes\Yf(P')$ with the stated action on generators when the two algebras are identified under the isomorphism,
	 $\Yf(P\oplus P')=\Yf(P)\otimes\Yf(P')$. One interprets $\Upsilon_P(f)$ as the quantisation of the classical functional $F_f:\Sol(P)\to \CC$ given by $F_f(\phi)=\langle f,\phi\rangle$. 
 
Any functorial assignment of \rfhgho's leads to a functorial assignment of algebras, just by post-composing with the quantisation functor $\Yf$. This is particularly useful when the \rfhgho's arise from a natural equation of motion.
	\begin{defn}\label{def:functorial_EoM}
		A \emph{functorial bundle over $\Loc$} is a functor $\Bf:\Loc\to \HVB$ assigning hermitian vector bundles to globally hyperbolic spacetimes such that $\Bf(\Mb)$ is a bundle over $\Mb$ for each $\Mb\in\Loc$.
		 A \emph{natural differential operator on $\Bf$} is a family $(P_\Mb)_{\Mb\in\Loc}$ of linear partial differential operators $P_\Mb:\Gamma^\infty(\Bf(\Mb))\to \Gamma^\infty(\Bf(\Mb))$ so that
		the identity $P_{\Mb'}\circ \Bf(\psi)_* = \Bf(\psi)_*\circ P_{\Mb}$ holds on $\Gamma_0^\infty(\Bf(\Mb))$		
		for all $\Loc$-morphisms $\psi:\Mb\to\Mb'$. 
		A functor $\EoM:\Loc\to\GreenHyp$ is a \emph{functorial equation of motion} if $\EoM(\Mb)= (\Bf(\Mb),P_\Mb)$, where $(P_\Mb)_{\Mb\in\Loc}$ is a natural differential operator on a functorial   bundle $\Bf:\Loc\to\HVB$ over $\Loc$, and $\EoM(\psi)=\Bf(\psi)_*$ for every $\Loc$-morphism $\psi$.
	\end{defn}
Conversely, if $(P_\Mb)_{\Mb\in\Loc}$ is a natural differential operator on a functorial bundle $\Bf:\Loc\to\HVB$ over $\Loc$ and $(\Bf(\Mb),P_\Mb)$ is a \rfhgho\ for each $\Mb\in\Loc$, then one has a functorial equation of motion $\EoM$ so that $\EoM(\Mb)= (\Bf(\Mb),P_\Mb)$ for all $\Mb\in\Loc$ and $\EoM(\psi)=\Bf(\psi)_*$ for every $\Loc$-morphism $\psi:\Mb\to\Mb'$. Here we use the fact that $\Bf(\psi)_*$ is a $\GreenHyp$-morphism between 
$(\Bf(\Mb),P_\Mb)$ and $(\Bf(\Mb'),P_{\Mb'})$ by Theorem~\ref{thm:GreenHyp}(a). Moreover, if $\Bf(\psi):\Bf(\Mb)\to\Bf(\Mb')$ is a fibrewise isomorphism for every Cauchy morphism $\psi$, then $\EoM$ maps Cauchy $\Loc$-morphisms to Cauchy $\GreenHyp$-morphisms, also by Theorem~\ref{thm:GreenHyp}(a).

\begin{thm}\label{thm:lcqft}
	Suppose that $\EoM:\Loc\to\GreenHyp$ is a functorial equation of motion that takes Cauchy $\Loc$-morphisms to Cauchy $\GreenHyp$-morphisms. Then $\Zf:=\Yf\circ \EoM:\Loc\to\Alg$ defines a locally covariant QFT~\cite{BrFrVe03,FewVerch_aqftincst:2015} across all globally hyperbolic spacetimes, obeying Einstein causality and the timeslice property.
\end{thm}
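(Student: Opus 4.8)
The plan is to verify the three requirements separately — that $\Zf$ is a locally covariant QFT, that it satisfies the timeslice axiom, and that it satisfies Einstein causality — exploiting throughout that $\Zf=\Yf\circ\EoM$ is a composite of the functors $\EoM:\Loc\to\GreenHyp$ and $\Yf:\GreenHyp\to\Alg$.

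The first two points are essentially formal. Being a composite of functors, $\Zf:\Loc\to\Alg$ is itself a functor; since the morphisms of $\Alg$ are unit-preserving injective $*$-homomorphisms, a functor $\Loc\to\Alg$ is precisely the datum of a locally covariant QFT in the sense of~\cite{BrFrVe03,FewVerch_aqftincst:2015}, the injectivity encoding isotony. For the timeslice property, if $\psi$ is a Cauchy $\Loc$-morphism then $\EoM(\psi)$ is a Cauchy $\GreenHyp$-morphism by hypothesis, and $\Zf(\psi)=\Yf(\EoM(\psi))$ is then an $\Alg$-isomorphism by Theorem~\ref{thm:RFHGHOquantisation}(c). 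So only Einstein causality requires genuine work, and I expect it to be the main obstacle.

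For Einstein causality, let $\psi_i:\Mb_i\to\Mb$ ($i=1,2$) have causally disjoint images, and write $(B_i,P_i)=\EoM(\Mb_i)$, $(B,P)=\EoM(\Mb)$, $S_i=\EoM(\psi_i)$. By Theorem~\ref{thm:RFHGHOquantisation}, $\Zf(\psi_i)(\Zf(\Mb_i))$ is generated by the elements $\Upsilon_P(S_if)$, $f\in\Gamma_0^\infty(B_i)$, so (the joint commutant being a unital subalgebra) it suffices to check that these generators commute across $i=1,2$; by Y4 this reduces to
\[
	E_P(S_1f_1,S_2f_2)=\langle S_1f_1,E_PS_2f_2\rangle=0 \qquad\text{for all }f_i\in\Gamma_0^\infty(B_i).
\]
Since $E_P$ descends to $\Gamma_0^\infty(B)/P\Gamma_0^\infty(B)$ (because $E_P(Pf,h)=0$), the left-hand side depends only on the classes $\hat{S}_i[f_i]$, and it is enough to find representatives $g_i\in\Gamma_0^\infty(B)$ of these classes with $\supp g_i\subset\psi_i(M_i)$: for then $\supp E_Pg_2\subset J(\supp g_2)\subset J(\psi_2(M_2))$ by property G3, which is disjoint from $\psi_1(M_1)\supset\supp g_1$ by causal disjointness, and $\langle g_1,E_Pg_2\rangle=0$ since its integrand is supported on the empty set.

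The real obstacle is precisely to produce such localised representatives: a general $\GreenHyp$-morphism carries no support constraint, and since $\psi_i(M_i)$ need not contain a Cauchy surface of $\Mb$ one cannot invoke Theorem~\ref{thm:GreenHypOps}(f) to relocalise. Hence Einstein causality must be read against the concrete morphisms that an equation of motion actually delivers, namely the pushforwards $\beta_{\psi_i*}$ of Theorem~\ref{thm:GreenHyp}(a), for which $\supp\beta_{\psi_i*}f\subset\psi_i(M_i)$ outright, so one may simply take $g_i=S_if_i$. With that in place the displayed identity holds and the proof closes; the remaining points — associativity and unitality of composition of functors, and the bookkeeping with the CCR presentation of $\Yf(P)$ in Theorem~\ref{thm:RFHGHOquantisation} — are routine.
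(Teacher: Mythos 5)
Your proposal follows the same route as the paper: functoriality and the timeslice property are formal consequences of the hypotheses together with Theorem~\ref{thm:RFHGHOquantisation}(c), and Einstein causality is reduced via Y4 and the support property G3 of the Green operators to the vanishing of $\langle S_1f_1,E_PS_2f_2\rangle$ for causally disjoint localisations. The paper disposes of Einstein causality in a single sentence (``this holds because of the support properties of the Green operators\dots''), so your expansion is a faithful unpacking of exactly that argument, and all the individual steps (commutants are unital subalgebras; $E_P(Pf,h)=0$; disjointness of $J(\psi_2(M_2))$ from $\psi_1(M_1)$) are correct.

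The one substantive point you raise --- that an abstract $\GreenHyp$-morphism carries no support constraint, so the argument genuinely needs $\supp(\EoM(\psi_i)f)\subset\psi_i(M_i)$, and that this cannot be manufactured after the fact when $\psi_i(M_i)$ contains no Cauchy surface --- is a fair criticism of the theorem as literally stated. The paper's proof tacitly assumes that a ``functorially specified equation of motion'' delivers morphisms of the pushforward type of Theorem~\ref{thm:GreenHyp}(a) (as indeed happens in the Proca example of Section~\ref{sec:Proca}), which is precisely the reading you adopt to close the argument. So your resolution agrees with the paper's intent; you have merely made explicit an implicit hypothesis. No gap in your own reasoning.
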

\begin{proof}
	Let $\Bf$ and $(P_\Mb)_{\Mb\in\Loc}$ be the functorial bundle over $\Loc$ and natural differential operator on $\Bf$ comprising $\EoM$. We also write $Z_\Mb(f)=\Upsilon_{P_\Mb}(f)$ for each $\Mb\in\Loc$, $f\in\Gamma_0^\infty(\Bf(\Mb))$.
	$\Zf$ is certainly a functor from $\Loc$ to $\Alg$ and so defines a locally covariant QFT. Einstein causality is the requirement
	that for any pair of morphisms $\psi_j:\Mb_j\to \Mb$ with causally disjoint images $\psi_j(M_j)$ in $M$,
	the images of the morphisms $\Zf(\psi_j)$ commute elementwise in $\Zf(\Mb)$. 
	Now, for any $f_j\in\Gamma_0^\infty(\Bf(\Mb_j))$, one has $\Zf(\psi_j)(Z_{\Mb_j}(f_j))= Z_{\Mb}(\EoM(\psi_j)f_j)=Z_{\Mb}(\Bf(\psi_j)_*f_j)$. Therefore
	\begin{equation}
		[\Zf(\psi)(Z_{\Mb_1}(f_1)),\Zf(\psi)(Z_{\Mb_2}(f_2))]= \ii E_{P_{\Mb}}(\Bf(\psi_1)_*f_1,\Bf(\psi_2)_*f_2)\1_{\Zf(\Mb)} = 0
	\end{equation} 
	due to the causal disjointness of $\psi$ and the 
	support properties of \rfhgho\ Green operators. Consequently $\Zf(\psi)(\Zf(\Mb_1))$ and 
	$\Zf(\psi)(\Zf(\Mb_2))$ commute elementwise, and Einstein causality is proved.
	
	If $\psi$ is a Cauchy morphism in $\Loc$ then $\EoM(\psi)$ is a Cauchy morphism in $\GreenHyp$, so $\Zf(\psi)=\Yf(\EoM(\psi))$ is an $\Alg$-isomorphism; thus $\Zf$ satisfies the timeslice property.
\end{proof}
Explicitly, the algebra $\Zf(\Mb)=\Yf(\EoM(\Mb))$ is generated by symbols $\Zc_\Mb(f)=\Upsilon_{\EoM(\Mb)}(f)$ ($f\in \Gamma_0^\infty(B)$) obeying the relations inherited from Y1--Y4, and with 
\begin{equation}
	\Zf(\psi)\Zc_\Mb(f) = \Zc_{\Mb'}(\EoM(\psi) f)
\end{equation}
for $\psi:\Mb\to\Mb'$ in $\Loc$.

\subsection{States and $n$-point functions}

Recall that a state on $*$-algebra $\Ac\in\Alg$ is a linear map $\omega:\Ac\to\CC$ such that
$\omega(\1)=1$ and $\omega(A^*A)\ge 0$ for all $A\in\Ac$, i.e., $\omega$ is a positive normalised linear map. The set of states on $\Ac$ is denoted $\Ac^*_{+,1}$ and, as usual, $\omega(A)$ is interpreted as the expectation value of $A=A^*\in\Ac$ in state $\omega$.

Suppose $P$ is a \rfhgho\ with hermitian vector bundle $B$ over globally hyperbolic spacetime $\Mb$, and let $\omega\in\Yf(P)^*_{+,1}$ be any state on $\Yf(P)$. Then we may define $n$-point correlators of the form
$\omega(\Yf_P(f_1)\cdots \Yf_P(f_n))$ for $f_1,\ldots,f_n\in\Gamma_0^\infty(B)$. If $\omega$ is sufficiently
regular then the correlators may be expressed as distributions in various ways. Most immediately, they define distributions in $\Gamma_0^\infty(B^{\boxtimes n})'=\DD'((B^*\otimes\Omega_M)^{\boxtimes n})$.
However, this has the disadvantage that $P$ would not act on the $n$-point functions directly. Instead, 
we will say that $\omega$ has distributional $n$-point functions if, for each $n\in\NN$, there is a distribution $W^{(n)}\in \DD'(B^{\boxtimes n})$ so that 
\begin{equation} \label{eq:npointfns}
	W^{(n)}(f_1^{\sharp,1}\otimes\cdots\otimes f_n^{\sharp,1}) = \omega(\Upsilon_P(f_1)\cdots \Upsilon_P(f_n))
\end{equation}
for $f_1,\ldots,f_n\in\Gamma_0^\infty(B)$, using the musical notation introduced in Section~\ref{sec:HVBs}. The most interesting case will be the two-point function, where we will usually drop the superscript and simply write $W\in\DD'(B\boxtimes B)= \Gamma_0^\infty((B^*\otimes\Omega_M)\boxtimes(B^*\otimes\Omega_M))'$ with
\begin{equation}
	W(f^{\sharp,1}\otimes h^{\sharp,1}) =\omega(\Upsilon_P(f)\Upsilon_P(h)), \qquad
	f,h\in\Gamma_0^\infty(B).
\end{equation}
A similar `two-point distribution' $T^\twopt\in\DD'(B\boxtimes B)$ can be defined for any continuous linear map $T:\Gamma_0^\infty(B)\to\Gamma^\infty(B)$, again using the bilinear bundle metric of $B$, by
\begin{equation}
	T^\twopt(f^{\sharp,1}\otimes h^{\sharp,1})  = T(f,h):=
	\langle f, Th\rangle
	, \qquad
	f,h\in\Gamma_0^\infty(B).
\end{equation}
Particular examples of note include two-point distributions for the retarded and advanced Green operators $E_P^\pm$, and their difference $E_P$. In particular, 
the CCRs entail the identity
\begin{equation}
	W - \adj{W} = \ii E_P^\twopt,
\end{equation}
where $\adj{W}(u\otimes v)=W(v \otimes u)$. 

As we have already described, any continuous linear map $T:\Gamma_0^\infty(B)\to\Gamma^\infty(B)$ also has an associated kernel distribution which does not rely on the bundle metric of $B$. The kernel distributions were defined for bundles with an explicit half-density factor, for which purpose we may regard $B=
(B\otimes\Omega^{-1/2})\otimes\Omega^{1/2}$. In this way $T:\Gamma_0^\infty(B)\to\Gamma^\infty(B)$ also
has kernel distribution
\begin{equation} 
	T^\knl \in \DD'((B\otimes\Omega^{-1/2})\boxtimes (B\otimes\Omega^{-1/2})^*\otimes\Omega^{1/2}_{M\times M}) =
	\DD'(B \boxtimes (B^* \otimes\Omega_{M})),
\end{equation}
which is related to $T^\twopt$ by 
\begin{equation}
	T^\knl(F\otimes h)= T^\twopt(F\otimes h^{\sharp,1}), \qquad F\in\Gamma_0^\infty(B^*\otimes\Omega_M),~
	h\in\Gamma_0^\infty(B),
\end{equation}
or equivalently, $T^\twopt=T^\knl\circ (1\otimes(\flat\circ \rho^{-1}))$, which brings out the fact that the
two-point distribution uses the bundle metric and volume density for its definition in a way that the kernel distribution does not. It will also be useful to consider another relation between two-point and kernel distributions. Namely, each continuous linear map $T:\Gamma_0^\infty(B)\to\Gamma^\infty(B)$ also determines
$T^{\natural,\tfrac{1}{2}}:\Gamma_0^\infty(B\otimes\Omega^{1/2})\to\Gamma^\infty(B\otimes\Omega^{1/2})$, whose
kernel distribution $(T^{\natural,\tfrac{1}{2}})^\knl\in \DD'((B\boxtimes B^*)\otimes\Omega^{1/2}_{M\times M})$ is related to $T^\twopt\in \DD'(B\boxtimes B)$ by
\begin{equation}\label{eq:T2ptTknl}
	(T^{\natural,\tfrac{1}{2}})^\knl(  f^{\sharp,\tfrac{1}{2}}\otimes h^{\sharp,\tfrac{1}{2}}) = 
	\langle f,Th\rangle = T^\twopt(  f^{\sharp,1}\otimes  h^{\sharp,1}), \qquad
	f,h\in\Gamma_0^\infty(B),
\end{equation}
or equivalently, $T^{\twopt}=(T^{\natural,\tfrac{1}{2}})^\knl\circ (\rho^{-1/2}\otimes (\rho^{1/2}\circ\flat\circ\rho^{-1}))$.
We drop the $\twopt$ or $\knl$ superscript where no confusion can arise between the underlying linear map and which of the two associated distributions is intended. As will be seen, $T^\twopt$ and $T^\knl$ have identical wavefront sets but their polarisation sets (though related) are valued in different bundles.

The operator $P$ extends to $\DD'(B)$ as the distributional dual of $\rho (\sadj{P})\rho^{-1}$ on $\Gamma_0^\infty(B^*\otimes \Omega_M)$. In particular, we have $(Pu)(f^{\sharp,1}) = u( \rho(\sadj{P})f^{\sharp})= u( (\adj{P}f)^{\sharp,1})= u( (Pf)^{\sharp,1})$ using formal symmetry of the \rfhgho\ $P$ and the identity $\sadj{P}f^\sharp = (\adj{P}f)^\sharp$, which is derived from the calculation
\begin{equation}
	\dlangle \sadj{P}f^\sharp, h\drangle = \dlangle f^\sharp,Ph\drangle = 
	\langle f, Ph\rangle = \langle \adj{P}f,h\rangle = \dlangle (\adj{P}f)^\sharp,h\drangle
\end{equation}
for arbitrary $f,h\in \Gamma^\infty (B)$ with compactly intersecting supports.  Thus both $W$ and $E_P$ become $P$-bisolutions, in the sense that
\begin{equation}
	(P\otimes 1) W = 0 = (1\otimes P)W, \qquad (P\otimes 1) E_P^\twopt = 0 = (1\otimes P)E_P^\twopt.
\end{equation} 
Similarly, the two-point distributions of the Green operators $E_P^\pm$ obey
\begin{equation}\label{eq:Epm2pteom}
	(P\otimes 1) (E_P^\pm)^\twopt = \id^\twopt = (1\otimes P)(E_P^\pm)^\twopt.
\end{equation}

\section{Fermionic Green-hyperbolic operators}\label{sec:fermionic}

\subsection{Fermionic \rfhgho's}

We now describe a class of \rfhgho's with additional structure that allows for a fermionic quantisation. A fermionic \rfhgho\ consists of a \rfhgho\ $(B,P)$ equipped with a smooth fibrewise bundle isomorphism $R:B\to B$ so that $R=\hadj{R}$ and $R\Cc=-\Cc R$ where $\Cc$ is the antilinear involution of $B$ (consequently also $\adj{R}=-R$) and $PR=RP$.  
Then $E_PR=RE_P$ and 
\begin{equation}\label{eq:fsymm}
	E_P(f,Rh) =  \langle f,E_P Rh\rangle  =    \langle f,RE_P h\rangle  =
	-  \langle Rf,E_P h\rangle   =-E_P(Rf,h) 
	=E_P(h,Rf),
\end{equation}	
so $(f,h)\mapsto \ii E_P(f,Rh)$ is a symmetric bilinear form on $\Gamma_0^\infty(B)$ that vanishes on $P\Gamma_0^\infty(B)$ and obeys
\begin{equation}\label{eq:fC}
\ii E_P(\Cc f,R\Cc h)=-\ii E_P(\Cc f,\Cc Rh)=
-\ii \overline{E_P(f,Rh)}=\overline{\ii E_P(f,Rh)}.
\end{equation}
One may check that $f\cdot h=\langle f,Rh\rangle$ defines an antisymmetric product on $\Gamma_0^\infty(B)$, making contact with the description of fermionic matter models in e.g.,~\cite{BeniniDappiaggi:2015,HackSchenkel:2013}.
We denote the fermionic \rfhgho\ by $(B,P, R)$ or simply $(P,R)$ or $P$ if there is no ambiguity. Fermionic \rfhgho's form the objects of a category $\fGreenHyp$ in which
a morphism $(P,R)\to (P',R')$ is a $\GreenHyp$-morphism
$S:P\to P'$ so that $S R=R'S$. 

In each fibre $B_x$, $R$ has real nonzero eigenvalues with respect to the hermitian bundle metric, and $\Cc$ maps any $\lambda$-eigenvector of $R$ to a $(-\lambda)$-eigenvector. The projection onto the positive spectrum determines a smooth decomposition $B=B_+\oplus B_-$ with respect to which $R$ and $\Cc$ take block forms
\begin{equation}\label{eq:frfhgho_RC}
	R = \begin{pmatrix} R_+ & 0 \\ 0 & R_-\end{pmatrix}, \qquad \Cc = \begin{pmatrix}
		0 & \Cc_{+-} \\ \Cc_{-+} & 0
	\end{pmatrix},
\end{equation}
with $\Cc_{+-}=\Cc_{-+}^{-1}$ and $R_-=-\Cc_{-+}R_+ \Cc_{+-}$. Although the hermitian pairing is diagonal with respect to the decomposition $B=B_+\oplus B_-$, the bilinear pairing derived from it using $\Cc$ is off-diagonal,
\begin{equation} \label{eq:frfhgho_P}
		\left\langle \begin{pmatrix}
			f\\ h
		\end{pmatrix},  \begin{pmatrix}
			f'\\ h'
		\end{pmatrix}\right\rangle =  (\Cc_{+-}h,f')+ (\Cc_{-+}f,h'). 
\end{equation}
The requirements that $PR=RP$ and $P\Cc=\Cc P$ imply that $P$ is also block diagonal, 
\begin{equation}
	P = \begin{pmatrix}
		P_+ & 0 \\ 0 & P_-
	\end{pmatrix}, \qquad P_- = \Cc_{-+} P_+ \Cc_{+-},
\end{equation}
and evidently $P_\pm$ are Green-hyperbolic. 
Thus, a fermionic \rfhgho\ can be seen as a doubled version of a formally hermitian, but not necessarily real, Green-hyperbolic operator $P_+$ on $B_+$, equipped with an operator
$R_+$ whose role is to ensure that $E_P(f,Rh)$ is a symmetric bilinear form.
 
\subsection{Quantisation} 
 
Any fermionic \rfhgho\ can be quantised to yield a bosonic field theory using $\Yf$. However it also has a more appropriate quantisation as a fermionic field theory, which we now describe.

The fermionic analogue of $\Sympl_\CC$ is the category $\Met_\CC$ of complexified real metric vector spaces, with objects $(V,\tau,C)$, where $V$ is a complex vector space, $\tau:V\times V\to\CC$ is a weakly nondegenerate symmetric bilinear form and $C$ is an antilinear involution such that $\tau(Cf,Ch)=\overline{\tau(f,h)}$. Here, `metric' 
is understood in the sense of a (possibly indefinite) metric tensor. A $\Met_\CC$-morphism between
$(V,\tau,C)$ and $(V',\tau',C')$ is a (necessarily injective) linear map $T:V\to V'$ such that $TC=C'T$ and $\tau'(Tv,Tw)=\tau(v,w)$ for $v,w\in V$. Evidently
every $(V,\tau,C)$ admits a sesquilinear form $(v,w)=\tau(Cv,w)$ which satisfies
$(Cv,Cw)=(w,v)= \overline{(v,w)}$; 
$\Met_\CC$ morphisms are isometries with respect to these forms.  There is an analogue of
Theorem~\ref{thm:sol}.
\begin{thm}
	There is a functor $\Sol_F:\fGreenHyp\to\Met_\CC$ so that 
	\begin{equation}
		\Sol_F(P,R,\Cc)=(\Sol(P), \ii\sigma_P\circ (1\otimes R),\Cc),
	\end{equation} 
	whose action on 
	morphisms has the same underlying map as $\Sol$, and maps Cauchy morphisms
	to $\Met_\CC$-isomorphisms. 
	For fermionic \rfhgho's $(P_j,R_j,\Cc_j)$ ($j=1,2$), one has $\Sol_F(P_1\oplus P_2,R_1\oplus R_2,\Cc_1\oplus \Cc_2)= \Sol_F(P_1,R_1,\Cc_1)\oplus \Sol_F(P_2,R_2,\Cc_2)$.
\end{thm}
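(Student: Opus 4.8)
The plan is to follow the template of Theorem~\ref{thm:sol} step by step: first verify that $\Sol_F(P,R,\Cc)$ is a legitimate object of $\Met_\CC$; then that a $\fGreenHyp$-morphism induces a $\Met_\CC$-morphism with the prescribed underlying map; then that the assignment respects identities and composition; and finally dispose of the Cauchy and direct-sum statements. Essentially all of the genuinely new content has already been extracted: \eqref{eq:fsymm} gives the symmetry of the relevant bilinear form, the remark after it gives its vanishing on $P\Gamma_0^\infty(B)$, and \eqref{eq:fC} gives its behaviour under $\Cc$. What remains is largely bookkeeping.

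For the object map I would first record the elementary consequences of $PR=RP$: by uniqueness of Green operators (Theorem~\ref{thm:GreenHypOps}) one has $E_P^\pm R=RE_P^\pm$, hence $E_PR=RE_P$, and since $P$ is real also $\Cc E_P=E_P\Cc$. Writing $\tau=\ii\sigma_P\circ(1\otimes R)$ and using $RE_Ph=E_P(Rh)$ gives $\tau(E_Pf,E_Ph)=\ii\sigma_P(E_Pf,E_P(Rh))=\ii E_P(f,Rh)$, so \eqref{eq:fsymm} makes $\tau$ symmetric, the vanishing of $\ii E_P(f,Rh)$ for $f\in P\Gamma_0^\infty(B)$ shows that $\tau$ descends to $\Sol(P)=E_P\Gamma_0^\infty(B)$ (using $\ker E_P=P\Gamma_0^\infty(B)$), and \eqref{eq:fC} is precisely the identity $\tau(\Cc v,\Cc w)=\overline{\tau(v,w)}$; the fact that $\Cc$ is an involution on $\Sol(P)$ is already known. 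Weak nondegeneracy is the one point needing a moment's thought: if $\tau(E_Pf,\cdot)\equiv 0$ then $\ii E_P(f,Rh)=0$ for all $h$, and since $R$ is a fibrewise isomorphism $Rh$ exhausts $\Gamma_0^\infty(B)$, so $\langle f,E_Pg\rangle=0$ for all $g$; combining $\adj{E}_P=-E_P$ with pointwise nondegeneracy of the bilinear bundle metric forces $E_Pf=0$, i.e.\ $[E_Pf]=0$ in $\Sol(P)$. Thus $\Sol_F(P,R,\Cc)\in\Met_\CC$.

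On morphisms, for a $\fGreenHyp$-morphism $S\colon(P,R)\to(P',R')$ I take $\Sol_F(S)$ to be the underlying linear map of $\Sol(S)$, i.e.\ $\Sol_F(S)E_Pf=E_{P'}Sf$. Realness $\Sol_F(S)\Cc=\Cc'\Sol_F(S)$ is inherited from $\Sol(S)$, and preservation of the metric is the short computation
\[
\tau'\bigl(\Sol_F(S)E_Pf,\Sol_F(S)E_Ph\bigr)=\ii E_{P'}(Sf,R'Sh)=\ii E_{P'}(Sf,SRh)=\ii E_P(f,Rh)=\tau(E_Pf,E_Ph),
\]
using $R'S=SR$ in the second step and that $S$ preserves $E$ in the third. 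Functoriality then follows verbatim from Theorem~\ref{thm:sol}, since the underlying maps agree and composition in $\Met_\CC$ and $\Sympl_\CC$ is just composition of linear maps. If $S$ is Cauchy then $\Sol(S)$ is a $\Sympl_\CC$-isomorphism, hence bijective, so $\Sol_F(S)$ is a bijective $\Met_\CC$-morphism and therefore a $\Met_\CC$-isomorphism (its inverse is readily checked to be real and metric-preserving). Finally, the direct-sum identity follows by inserting \eqref{eq:solsum}, $E_{P_1\oplus P_2}=E_{P_1}\oplus E_{P_2}$, and the block form of $R_1\oplus R_2$ into the definition of $\tau$: evaluating $\ii(\sigma_{P_1}\oplus\sigma_{P_2})\circ(1\otimes(R_1\oplus R_2))$ on decomposed vectors splits as $\tau_1$ on the first summand plus $\tau_2$ on the second, which is the metric of $\Sol_F(P_1,R_1,\Cc_1)\oplus\Sol_F(P_2,R_2,\Cc_2)$, and the vector space and conjugation match by \eqref{eq:solsum}. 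The only step I would flag as requiring care rather than mere transcription is the weak nondegeneracy of $\tau$, where invertibility of $R$ is what transfers nondegeneracy of $\sigma_P$ across.
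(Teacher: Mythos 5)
Your proposal is correct and follows essentially the same route as the paper: define $\tau_{(P,R)}(E_Pf,E_Ph)=\ii E_P(f,Rh)$, invoke~\eqref{eq:fsymm} and~\eqref{eq:fC} for the symmetry and conjugation axioms, verify that a $\fGreenHyp$-morphism preserves $\tau$ via $SR=R'S$, and defer functoriality, the Cauchy statement, and the direct-sum statement to Theorem~\ref{thm:sol}. The one place you go beyond the paper is in spelling out the weak nondegeneracy of $\tau$ via invertibility of $R$ together with $\adj{E}_P=-E_P$; the paper merely asserts that $\tau$ ``inherits weak nondegeneracy from $\sigma_P$'', and you correctly flag that invertibility of $R$ is the ingredient doing the transferring.
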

\begin{proof}
	Define the bilinear form $\tau_{(P,R)}:\Sol(P)\times \Sol(P)\to \CC$ by  $\tau_{(P,R)}(E_P f,E_h)=E_{P}(f,Rh)$, which is well-defined because $P$ and $R$ commute and general properties of $E_P$.
	Then~\eqref{eq:fsymm} and~\eqref{eq:fC} show that $(\Sol(P),\tau_{(P,R)},\Cc)\in\Met_\CC$, with
	$\tau_{(P,R)}=\ii\sigma_P\circ (1\otimes R)$; $\tau_{(P,R)}$ inherits weak nondegeneracy from $\sigma_P$.
	Next,
	let $S:(P,R,\Cc)\to (P',R',\Cc')$ in $\fGreenHyp$. In particular, $S:(P,\Cc)\to (P',\Cc')$ 
	is a $\GreenHyp$ morphism and $SR=R'S$. Consequently, $S\Cc= \Cc'S$ and
	$\tau_{(P',R')}(Sv,Sw)=\ii\sigma_{P'}(Sv,R'Sw)= \ii\sigma_{P'}(Sv,SRw) =\ii \sigma_P(v,Rw)=\tau_{(P,R)}(v,w)$ holds for all
	$v,w\in\Sol(P)$, which shows that $S$ is the underlying map of a $\Met_\CC$-morphism 
	from $\Sol_F(P,R,\Cc)$ to $\Sol_F(P',R',\Cc')$. The remaining functorial properties and other statements follow directly from the analogous properties of $\Sol$.
\end{proof}
 
The fermionic quantised theory is described by graded $*$-algebras. Let $\grAlg$ be the category of graded $*$-algebras over $\ZZ_2=\{0,1\}$ with addition modulo $2$. That is, an object $\Ac\in\grAlg$ is an object of $\Alg$ with a preferred
direct sum decomposition $\Ac=\Ac_0\oplus\Ac_1$ so that the multiplication adds grades modulo $2$, $\Ac_j\Ac_k\subset \Ac_{j+k}$ for $j,k\in\ZZ_2$. A $\grAlg$-morphism $\alpha:\Ac_0\oplus \Ac_1\to\Ac'_0\oplus\Ac'_1$ is an $\Alg$-morphism so that $\alpha\Ac_j\subset \Ac'_j$ for $j\in\ZZ_2$. Elements of $A\in\Ac_j$ are called
homogeneous of grade $|A|=j$. The graded commutator is defined by 
\begin{equation}
	\bbLbrack A, B \bbRbrack =AB-(-1)^{|A|\,|B|}BA
\end{equation}
for homogeneous elements $A,B$, and extended by linearity in each slot.  
The analogue to Theorem~\ref{thm:RFHGHOquantisation} can now be stated.
\begin{thm}\label{thm:fRFHGHOquantisation}
	There is a functor $\Xf:\fGreenHyp\to \grAlg$ with the following properties:
\begin{enumerate}[a)]
	\item If $(P,R)\in\fGreenHyp$ has bundle $B$ and antilinear involution $\Cc$ then $\Xf(P)$ is the unital $*$-algebra generated by symbols $\Xi_P({f})$ ($f\in\Gamma_0^\infty(B))$, subject to 
	\begin{enumerate}[X1] 
		\item $f \mapsto \Xi_{P}({f})$ is complex linear  
		\item $\Xi_{P}({f})^* = \Xi_{P}(\Cc f)$  
		\item $\Xi_{P}({P} {f}) = 0$,   
		\item $\bbLbrack \Xi_{P}({f}),\Xi_{P}(h)\bbRbrack  = \ii E_{P}(f,R{h}) \1_{\Xf(P)}$  
	\end{enumerate}
	for all ${f},h\in\Gamma_0^\infty(B)$, and graded so that each $\Xi_P(f)$ is odd (whereupon the left-hand side of the equality in X4 is an anticommutator); 
	
	\item if $S:P\to P'$ is a $\fGreenHyp$-morphism, then the $\grAlg$-morphism $\Xf(S):\Xf(P)\to\Xf(P')$ is uniquely determined by
	\begin{equation}
		\Xf(S)\Xi_P({f}) = \Xi_{P'}(S{f}),\qquad {f}\in\Gamma_0^\infty(B),
	\end{equation}
	and in particular, $\Xf(S)=\Xf(T)$ if and only if $S\sim T$.
	
	\item  $\Xf(S)$ is an $\grAlg$-isomorphism if $S$ is a Cauchy $\fGreenHyp$-morphism.
	
	\item If $(B,P,R)$ and $(B',P',R')$ are fermionic \rfhgho's then their direct sum 
	$(B\oplus B',P\oplus P',R\oplus R')$ is a fermionic \rfhgho, and
	\begin{equation}
		\Xf(P\oplus P') \cong \Xf(P)\otimes \Xf(P')
	\end{equation}
	under the isomorphism fixed by 
	\begin{equation}
		\Xi_{P\oplus P'}(f\oplus f') \mapsto \Xi_P(f)\otimes \1_{\Xf(P')} + \1_{\Xf(P)}\otimes \Xi_{P'}(f'),
	\end{equation}
	where $\Xf(P)\otimes \Xf(P')$ is the graded tensor product, in which
	$(A\otimes A')(B\otimes B')= (-1)^{|A'|\,|B|} (AB)\otimes (A'B')$ and
	$(A\otimes B)^*=(-1)^{|A|\,|B|}A^*\otimes B^*$ for homogeneous $A,B\in\Xf(P)$, $A',B'\in\Xf(P')$.
	\end{enumerate}
\end{thm}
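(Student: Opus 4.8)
The plan is to realise $\Xf$ as a composite $\Xf=\Qf_a\circ\Sol_F$, where $\Sol_F:\fGreenHyp\to\Met_\CC$ is the functor established above and $\Qf_a:\Met_\CC\to\grAlg$ is a CAR-quantisation functor, the fermionic counterpart of the $\Qf_s:\Sympl_\CC\to\Alg$ underlying Theorem~\ref{thm:RFHGHOquantisation}. Granting such a $\Qf_a$, all of~(a)--(d) will drop out by combining its properties with the corresponding features of $\Sol_F$ (functoriality, mapping of Cauchy morphisms to $\Met_\CC$-isomorphisms, and additivity under $\oplus$), exactly as Theorem~\ref{thm:RFHGHOquantisation} is obtained from Theorem~\ref{thm:sol} and the properties of $\Qf_s$.

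The substantive work is to construct $\Qf_a$. On objects I would set $\Qf_a(V,\tau,C)=\bigwedge V=\bigoplus_{n\in\NN_0}V^{\wedge n}$, graded over $\ZZ_2$ by the parity of exterior degree, with the Clifford-type deformed product
\begin{equation*}
	FH=\wedge\circ\exp\!\left(\tfrac{1}{2}\ii\langle\tau,\partial\otimes\partial\rangle\right)F\otimes H,
\end{equation*}
where $\partial:\bigwedge V\to V\otimes\bigwedge V$ is the graded analogue of the map used for $\Yf$, $\partial(v_1\wedge\cdots\wedge v_k)=\sum_{j=1}^{k}(-1)^{j-1}v_j\otimes(v_1\wedge\cdots\slashed{v_j}\cdots\wedge v_k)$, $\partial\1=0$, and $\langle\tau,(v\otimes F)\otimes(w\otimes H)\rangle=\tau(v,w)F\otimes H$; the $*$-operation is the antilinear involution extending $\Phi(v):=v\mapsto Cv$ and reversing the order of monomials, as in the bosonic case. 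On morphisms, a $\Met_\CC$-morphism $T$ goes to $\bigwedge T$, which intertwines the deformed products because $T$ is a $\tau$-isometry commuting with the conjugations, and which is injective since $T$ is. The points to verify, all routine, are: associativity of the deformed product and the $*$-axiom $(FH)^*=H^*F^*$, proved as in the references behind Theorem~\ref{thm:RFHGHOquantisation} with symmetric $\tau$ and graded signs replacing the antisymmetric $\sigma$; the relations $\{\Phi(v),\Phi(w)\}=\ii\tau(v,w)\1$ and $\Phi(v)^*=\Phi(Cv)$; the monoidality isomorphism $\bigwedge(V\oplus V')\cong\bigwedge V\otimes\bigwedge V'$ intertwining the deformed product for $\tau\oplus\tau'$ with the graded tensor product of those for $\tau$ and $\tau'$ (the mixed components of $\tau\oplus\tau'$ vanishing); and a filtration/PBW argument showing that $\Qf_a(V,\tau,C)$ is the \emph{universal} unital graded $*$-algebra generated by odd $\Phi(v)$, $v\in V$, subject to complex linearity, $\Phi(v)^*=\Phi(Cv)$ and the CAR $\{\Phi(v),\Phi(w)\}=\ii\tau(v,w)\1$; in particular $\Phi$ is injective on $V$.

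With $\Qf_a$ at hand I would set $\Xi_P(f)=\Phi(E_Pf)$. Since $E_P\Gamma_0^\infty(B)=\Sol_F(P)$ as a vector space, the $\Xi_P(f)$ exhaust the generators of $\Qf_a(\Sol_F(P))$, and X1--X4 translate directly: X1 from linearity of $f\mapsto E_Pf$, X2 from $\Cc E_P=E_P\Cc$, X3 from $E_PPf=0$, X4 from $\tau_{(P,R)}(E_Pf,E_Ph)=E_P(f,Rh)$; the grading and the universality claim in~(a) are inherited from $\Qf_a$. For~(b), $\Sol_F(S)$ is a $\Met_\CC$-morphism with $\Sol_F(S)E_Pf=E_{P'}Sf$, so $\Xf(S):=\Qf_a(\Sol_F(S))$ is a $\grAlg$-morphism with $\Xf(S)\Xi_P(f)=\Xi_{P'}(Sf)$; uniqueness holds because the $\Xi_P(f)$ generate, and $\Xf(S)=\Xf(T)$ iff $\Sol_F(S)=\Sol_F(T)$ iff $S\sim T$ (using injectivity of $\Phi$ and the definition of $\sim$). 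Functoriality of $\Xf$ follows from that of $\Qf_a$ and $\Sol_F$, and~(c) from the fact that $\Sol_F$ takes Cauchy morphisms to $\Met_\CC$-isomorphisms while $\Qf_a$ preserves isomorphisms. For~(d), one first checks that $(B\oplus B',P\oplus P',R\oplus R')$ is a fermionic \rfhgho\ (the defining identities hold componentwise, and $P\oplus P'$ is a \rfhgho) and then combines $\Sol_F(P\oplus P')=\Sol_F(P)\oplus\Sol_F(P')$ with the monoidality of $\Qf_a$, reading off the isomorphism on generators as $\Xi_{P\oplus P'}(f\oplus f')\mapsto\Xi_P(f)\otimes 1+1\otimes\Xi_{P'}(f')$.

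The only genuinely nonroutine ingredient is the construction of $\Qf_a$ and, within it, the PBW-type statement that the concrete model $\bigwedge V$ realises the abstractly presented algebra, so that imposing X1--X4 causes no hidden degeneracy (in particular $\Phi$ stays injective on $V$). Everything downstream is a transcription, with graded signs inserted, of the bosonic arguments recorded in and around Theorems~\ref{thm:sol} and~\ref{thm:RFHGHOquantisation}.
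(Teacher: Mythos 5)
Your overall architecture --- $\Xf=\Qf_a\circ\Sol_F$ with $\Qf_a$ realised concretely on $\bigwedge V$ by a deformed Clifford/Moyal-type product, followed by a transcription of the bosonic arguments for parts (a)--(d) --- is exactly the route the paper takes; the paper likewise declines to construct $\Qf_a$ abstractly and simply exhibits the graded algebra $\bigoplus_n\Sol_F(P,R)^{\wedge n}$ with an explicit star product, delegating monoidality to the Bourbaki exterior-algebra isomorphism. You also correctly isolate the one nonroutine ingredient, namely the construction of $\Qf_a$ and the PBW-type identification. Unfortunately that is precisely where your proposal fails.

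The gap is your product formula $FH=\wedge\circ\exp\left(\tfrac12\ii\langle\tau,\partial\otimes\partial\rangle\right)F\otimes H$ with the \emph{same} (left) derivative acting on both tensor factors. This product is not associative and does not reproduce the CAR beyond the generators: for $a,b,c\in V$ one computes
\begin{equation*}
	(a\cdot b)\cdot c - a\cdot(b\cdot c) = \ii\bigl(\tau(a,c)\,b-\tau(b,c)\,a\bigr),
\end{equation*}
which is generically nonzero, and similarly $a\cdot(b\wedge c)=(b\wedge c)\cdot a$, whereas in any CAR algebra $[\Phi(a),\Phi(b)\Phi(c)]=\tau(a,b)\Phi(c)-\tau(a,c)\Phi(b)\neq 0$ in general. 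In the symmetric (bosonic) algebra the two tensor slots may be treated identically, but in the graded case the bidifferential operator must be asymmetric: the paper uses $\langle\tau,\partial_L\otimes\partial_R\rangle$, with signs counted from the left on the first factor and from the right on the second (equivalently, Koszul signs must be inserted as the odd operator acting on the second factor passes the first) --- hence the paper's warning ``note the numbering and ordering''. A second, more minor, discrepancy: with the paper's convention $\tau=\ii\sigma_P\circ(1\otimes R)$, so that $\tau(E_Pf,E_Ph)=\ii E_P(f,Rh)$, axiom X4 requires $\{\Phi(v),\Phi(w)\}=\tau(v,w)\1$ and hence no $\ii$ in the exponent; your extra $\ii$ forces you to drop the $\ii$ from $\tau$, but then $\tau$ violates the reality axiom $\tau(Cv,Cw)=\overline{\tau(v,w)}$ of $\Met_\CC$ by~\eqref{eq:fC}, so $\Sol_F$ would no longer land in $\Met_\CC$. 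Either correct the product to the $\partial_L\otimes\partial_R$ form, or --- more robustly --- define $\Qf_a(V,\tau,C)$ as the quotient of the tensor algebra by the CAR ideal and prove the PBW isomorphism with $\bigwedge V$ separately; everything downstream of $\Qf_a$ in your proposal then goes through as in the paper.
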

Explicitly, $\Xf=\Qf_a\circ \Sol_F$, where $\Qf_a:\Met_\CC\to\grAlg$ is a quantisation functor 
defined by analogy with $\Qf_s$ (but which we will not discuss directly). As a vector space,
\begin{equation}
\Xf(P) =  \bigoplus_{n\in\NN_0} \Sol_F(P,R)^{\wedge n},
\end{equation}
where $\wedge$ is the antisymmetrised tensor product, with $*$-operation so that
$(v_1\wedge\cdots\wedge v_n)^*=(\Cc v_n\wedge \cdots \Cc v_1)$ and we write $\1_{\Xf(P)}=(1,0,\ldots)$. Again adopting formulae from the functional formalism, the graded algebra product on $\Xf(P)$ is 
\begin{equation}
	FH = \wedge\circ \exp \left(\tfrac{1}{2}\langle \tau,\partial_L\otimes\partial_R\rangle\right) F\otimes H,
\end{equation} 
where the linear maps $\partial_L,\partial_R:\Xf(P)\to \Sol_F(P,R)\otimes \Xf(P)$ obey (note the numbering and ordering)
\begin{align}
	\partial_L (v_0 \wedge \cdots \wedge v_k) & = \sum_{j=0}^k (-1)^j v_j \otimes (v_0\wedge\cdots \slashed{v_j}  \cdots \wedge v_k), \nonumber \\
	\partial_R (v_k \wedge \cdots\wedge v_0) & = \sum_{j=0}^k (-1)^j v_j  \otimes (v_k\wedge\cdots \slashed{v_j}  \cdots \wedge v_0),
\end{align}
and $\partial_L \1_{\Xf(P)}=\partial_R\1_{\Xf(P)}=0$, while $\tau: (\Sol_F(P,R)\otimes\Xf(P))^{\otimes 2}\to 
\Xf(P)^{\otimes 2}$ is the linear map determined by
\begin{equation}
	\langle \tau,(v\otimes F) \otimes (w\otimes H) \rangle=  \tau(v,w) F\otimes H.
\end{equation} 
To obtain the correct gradings, one should note that  
\begin{equation}\label{eq:taudd}
	\langle \tau,\partial_L\otimes\partial_R\rangle:\Xf(P)_j\otimes\Xf(P)_k\to
	\Xf(P)_{j-1}\otimes\Xf(P)_{k-1}.
\end{equation}
In particular, writing $\Xi_P(f)=E_Pf$, one has 
\begin{equation}
	\Xi_P(f)\Xi_P(h) = \Xi_P(f)\wedge \Xi_P(h) + \frac{1}{2}\tau(E_Pf,E_Ph)\1_{\Xf(P)},\quad \{	\Xi_P(f),\Xi_P(h)\}=  \ii E_P(f,Rh)\1_{\Xf(P)}.
\end{equation} 
A similar argument to the one used in the bosonic case (starting with III \S 7.7 of~\cite{Bourbaki_algebra1}) establishes an isomorphism between $\Xf(P\oplus P')$ and the graded tensor product $\Xf(P)\otimes \Xf(P')$, with the stated action on generators.

It is not guaranteed that $\Xf(P)$ admits states or, equivalently, a completion as a $C^*$-algebra. Indeed a necessary condition for the existence of states is that the sesquilinear form
$(f,h)\mapsto \ii  E_P(\Cc f,Rh)$ determines an inner product on $\Gamma_0^\infty(B)/P\Gamma_0^\infty(B)$, in which case we say that 
$(P,R)$ has \emph{definite type}. The subset of definite-type $\fGreenHyp$ objects will be denoted $\fGreenHyp^+$.  

If $(P,R)$ is a fermionic \rfhgho, the $n$-point functions for states on $\Xf(P)$ are defined
as in~\eqref{eq:npointfns} but replacing $\Upsilon_P$ by $\Xi_P$. In particular, the two-point function obeys 
\begin{equation}
	W+\adj{W} = \ii E_P^\twopt\circ (1\otimes R^{\sharp,1}),
\end{equation}
due to the anticommutation relations.

As in the bosonic case, any functorial assignment of fermionic \rfhgho's results in a
functorial assignment of graded algebras. A fermionic functorial equation of motion 
is a functor $\fEoM:\Loc\to\fGreenHyp$, so that, writing $\fEoM(\Mb)=(\Bf(\Mb),P_\Mb,R_\Mb)$ for $\Mb\in\Loc$, the assignment $\EoM(\Mb)=(\Bf(\Mb),P_\Mb)$ defines a functorial equation of motion (see Definition~\ref{def:functorial_EoM}). Then one has a
functorial assignment of graded algebras $\Xf\circ\fEoM:\Loc\to \grAlg$. There is an obvious analogue of Theorem~\ref{thm:lcqft} with the change that the requirement of commutativity in the definition of Einstein causality is replaced by graded commutativity. 
To enlarge on this a little, note that the operator $R_\Mb$ does not change supports and therefore
	$E_{P_\Mb}(f,R_\Mb h)=0$ if $f,h\in\Gamma_0^\infty(\Bf(\Mb))$ have causally disjoint supports. Therefore the corresponding generators $\Xi_{P_\Mb}(f)$ and $\Xi_{P_\Mb}(h)$ anticommute, indicating that (as is expected for fermi fields) they are not observables. However, even-graded elements $A,B\in \Xf(P_\Mb)$ do commute if the generators appearing in $A$ and $B$ are smeared fields with support in causally disjoint regions $O_A$, $O_B$ in $\Mb$. In this way one sees that by taking the even part $\Xf(\fEoM(\Mb))_0$, we obtain a locally covariant QFT that obeys Einstein causality.  
	
More generally, one might consider functors from categories of globally hyperbolic spacetimes with additional structure such as a spin structure, and, suitably adapted, the above discussion will again yield a locally covariant QFT obeying graded Einstein causality, or Einstein causality on taking even parts. Details are left to the reader.

\subsection{Dirac type operators}\label{sec:Dirac-type}

Just as normally hyperbolic operators form a special class of \rfhgho's,
the operators of Dirac type have a special role in the fermionic theory.
Let $B$ be a finite rank complex vector bundle over $\Mb\in\Loc$ with a hermitian fibre metric but not necessarily a complex conjugation.
($\Mb$ need not admit a spin structure). A first-order formally hermitian partial differential operator
$D$ on $B$ is said to be of \emph{Dirac type} if its principal symbol $\sigma_D$ obeys
\begin{equation}
	\sigma_D(x,k)^2 = g^{-1}_x(k,k)\id_{B_x},
\end{equation}
which entails that $-D^2$ is normally hyperbolic. It
follows that $D=\hadj{D}$ is Green-hyperbolic, with Green operators $E_D^\pm = DE_{D^2}^\pm$. 
As $D$ is first order, its principal symbol is
\begin{equation}
	\sigma_D(x,k) = k_\mu (V_x)^\mu,
\end{equation}
where $V\in \Gamma^\infty(TM\otimes \End (B))$. Choosing a 
connection $\nabla^B$ on $B$ that is compatible with the hermitian form,
the operator $D$ may be written
\begin{equation}
	D = -\frac{\ii}{2} (V^\mu \nabla_\mu^B + \nabla_\mu^{TM\otimes B}  V^\mu) + W ,
\end{equation}
for a unique $W\in \Gamma^\infty(\End(B))$, where $\nabla^B$ is extended to $TM\otimes \End (B)$ compatibly with the Levi--Civita connection $\nabla$ of $\Mb$. While $V$ is fixed uniquely by $D$, the section $W$ depends on $D$ and the chosen connection $\nabla^B$.
%
A short calculation shows that formal hermiticity of $D$ is equivalent to $V=\hadj{V}$ and $W=\hadj{W}$: for $F,H\in\Gamma^\infty(B)$ with compactly intersecting supports, 
\begingroup
\allowdisplaybreaks
\begin{align}\label{eq:Dirac_type_hermiticity} 
	(DF,H) &= \frac{\ii}{2} \int_M  (V^\mu \nabla_\mu^B F + \nabla_\mu^{TM\otimes B}  V^\mu F, H)\dvol + 
	\int_M (WF,H)\,\dvol \nonumber \\
	&= \frac{\ii}{2} \int_M \left[  ( 2\nabla_\mu^{TM\otimes B}  V^\mu F, H) - ((\nabla_\mu^{TM\otimes B} V^\mu)F,H) \right]\dvol + 
	\int_M (F,\hadj{W}H)\,\dvol \nonumber \\
	&= \ii \int_M   \nabla_\mu (  V^\mu F, H) \dvol
	-\frac{\ii}{2} \int_M \left[ (F, \hadj{V}^\mu 2\nabla_\mu^{B} H)
	 + (F,(\nabla_\mu^{TM\otimes B} \hadj{V}^\mu)H) \right]\dvol + 
	\int_M (F,\hadj{W}H)\,\dvol
	\nonumber \\
	&= 
	-\frac{\ii}{2} \int_M   (F, \hadj{V}^\mu \nabla_\mu^{B} H + \nabla_\mu^{TM\otimes B}(\hadj{V}^\mu H))
 \dvol + 
	\int_M (F,\hadj{W}H)\,\dvol,
\end{align}
\endgroup
where we have used the divergence theorem and compact support of $(V^\mu F, H)$ to remove the first term in the penultimate line. Accordingly, 
\begin{equation}
	(DF,H)-(F,DH) = -\frac{\ii}{2} \int_M   (F, \widetilde{V}^\mu \nabla_\mu^{B} H + \nabla_\mu^{TM\otimes B}(\widetilde{V}^\mu H))
	\dvol + 
	\int_M (F,\widetilde{W}H)\,\dvol,
\end{equation}
where $\widetilde{V}=\hadj{V}-V$, $\widetilde{W}=\hadj{W}-W$.  Therefore
formal hermiticity of $D$ is equivalent to $V=\hadj{V}$ and $W=\hadj{W}$.


The hermitian fibre metric induces a unique antilinear isomorphism $B\mapsto B^*$, written $f\mapsto f^+$, so that $\dlangle f^+,h\drangle = (f,h)$. We use the same notation for the inverse of this map. Defining a hermitian fibre metric on $B^*$ so that $(u,u')=(u'{}^+, u^+)$, it is easily seen that $\sadj{D}$ is a Dirac type operator with symbol
$\sigma_{\sadj{D}}(x,k)=-\sadj{\sigma}_D(x,k)$.

Following~\cite{BaerGinoux:2012,IslamStrohmaier:2020}, we say that a Dirac-type operator $D$ has \emph{definite type} if there exists a smooth spacelike Cauchy surface $\Sigma$ for $\Mb$ so that
$(f,h)\mapsto (\sigma_D(n^\flat)f,h)$ is a positive definite sesquilinear form on $B|_\Sigma$ where $n$ is the future-pointing unit normal vector field on $\Sigma$. In this case we say that $\Sigma$ is a witness for the definite type.\footnote{At first sight, this definition is more general than that of B\"ar and Ginoux~\cite{BaerGinoux:2012}, whose definition would require the stronger requirement that $(f,h)\mapsto (\sigma_D(n^\flat)f,h)_{B_x}$ should define a positive definite form (i.e., an inner product) on $B_x$ for every $x\in M$ and all future-pointing timelike vectors $n\in T_xM$ -- see, however, Remark~\ref{rem:BG}.} A useful observation is that $D$ has definite-type if and only if $-\sadj{D}$ does, because $\sigma_{-\sadj{D}}(x,k) = k_\mu(\sadj{V}_x)^\mu$ and hence
\begin{equation}
	(\sigma_{-\sadj{D}}(n^\flat)u,u) = (n_\mu \sadj{V}^\mu u,u)=  
		\overline{(  n_\mu V^\mu u^+,u^+)}\ge 0, \qquad u\in\Gamma_0^\infty(B^*).
\end{equation}	 

The following appears as Lemma 4.1 of~\cite{IslamStrohmaier:2020} and is closely related to Lemma~3.17 of~\cite{BaerGinoux:2012} with antecedents in Lemma~4.3 of~\cite{Sanders_dirac:2010} and~\cite{Dimock:1982}. We give a 
slightly streamlined proof for completeness and to verify conventions.
\begin{lemma}\label{lem:Dirac_positivity}
	If Dirac-type operator $D$ has definite type, witnessed by smooth spacelike Cauchy surface $\Sigma$,
	then there is an inner product on $\Gamma_0^\infty(B)/D\Gamma_0^\infty(B)$ given by 
	\begin{equation}\label{eq:Dirac_positivity}
		([f],[h])\mapsto 
		\ii (f,E_D h)= 
		\int_\Sigma (\sigma_D(n^\flat) E_D f,E_D h)\dd A,
	\end{equation} 
	where $\dd A$ is the induced volume element on $\Sigma$.
\end{lemma}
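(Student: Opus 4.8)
The plan is to verify three things in turn: that $([f],[h])\mapsto\ii(f,E_D h)$ is well defined on the quotient and sesquilinear, that the two formulae in~\eqref{eq:Dirac_positivity} agree, and that the resulting form is positive-definite. For the first point I would note that $\ker E_D=D\Gamma_0^\infty(B)$ by Theorem~\ref{thm:GreenHypOps}\eqref{it:kernels}, so the expression vanishes whenever $f$ or $h$ lies in $D\Gamma_0^\infty(B)$ and hence descends to $\Gamma_0^\infty(B)/D\Gamma_0^\infty(B)$; it is antilinear in $[f]$ and linear in $[h]$ since $(\cdot,\cdot)$ is sesquilinear and $E_D$ linear. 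Hermitian symmetry $([h],[f])=\overline{([f],[h])}$ follows from $\hadj{(E_D^\pm)}=E_D^\mp$ (valid because $D$ is formally hermitian), which gives $\hadj{E_D}=-E_D$ and hence $(f,E_D h)=-(E_D f,h)$, so $\overline{\ii(f,E_D h)}=\ii(h,E_D f)$.

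For the equality of the two formulae I would start from the pointwise density identity $(DF,H)-(F,DH)=\ii\,\nabla_\mu(V^\mu F,H)$ read off from the computation~\eqref{eq:Dirac_type_hermiticity} (valid for $F,H\in\Gamma^\infty(B)$ with compactly intersecting supports). Applied to the two spacelike-compact solutions $F=E_D f$, $H=E_D h$ (using $DE_D=0$) it shows that the Dirac current $j^\mu=(V^\mu E_D f,E_D h)$ is conserved, so $\int_\Sigma(\sigma_D(n^\flat)E_D f,E_D h)\,\dd A=\int_\Sigma j^\mu n_\mu\,\dd A$ is independent of the choice of smooth spacelike Cauchy surface $\Sigma$. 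It then suffices to evaluate it on a convenient surface: fix a smooth spacelike Cauchy surface $\Sigma'$ with $\supp f\cup\supp h\subset I^+(\Sigma')$, so that the retarded part $E_D^+ f$ (supported in $J^+(\supp f)$) vanishes on $\Sigma'$, whence $E_D f|_{\Sigma'}=E_D^- f|_{\Sigma'}$. Applying the density identity to $F=E_D^- f$ (which solves $DF=f$) and $H=E_D h$ gives $(f,E_D h)=\ii\,\nabla_\mu a^\mu$ with $a^\mu=(V^\mu E_D^- f,E_D h)$; since $\supp a\subset J^-(\supp f)$ and $J^-(\supp f)\cap J^+(\Sigma')$ is compact, the divergence theorem on the manifold-with-boundary $J^+(\Sigma')$ (outward normal $-n$) yields
\[
\int_{\Sigma'}(\sigma_D(n^\flat)E_D^- f,E_D h)\,\dd A=\int_{\Sigma'}a^\mu n_\mu\,\dd A=-\int_{J^+(\Sigma')}\nabla_\mu a^\mu\,\dvol=\ii\int_M(f,E_D h)\,\dvol=\ii(f,E_D h),
\]
using $\supp f\subset J^+(\Sigma')$ in the last step. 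Combined with $E_D f|_{\Sigma'}=E_D^- f|_{\Sigma'}$ and surface-independence, this gives the identity for the witness surface.

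For positive-definiteness I would take $\Sigma$ to be the witness: then $([f],[f])=\int_\Sigma(\sigma_D(n^\flat)E_D f,E_D f)\,\dd A\ge 0$ because the integrand is pointwise nonnegative. If it vanishes, positive-definiteness of $(v,v)\mapsto(\sigma_D(n^\flat)v,v)$ on $B|_\Sigma$ forces $E_D f|_\Sigma=0$; since $\sigma_D(n^\flat)^2=g_x^{-1}(n^\flat,n^\flat)\1=\1$ is invertible, $\Sigma$ is noncharacteristic for $D$, so solving $DE_D f=0$ for the transverse derivative along $\Sigma$ gives $\partial_n(E_D f)|_\Sigma=0$ as well; thus $E_D f$ has vanishing Cauchy data for the normally hyperbolic operator $-D^2$ and vanishes identically, whence $f\in\ker E_D=D\Gamma_0^\infty(B)$, i.e.\ $[f]=0$.

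The step I expect to be the main obstacle is the divergence-theorem manipulation: the currents at hand are a priori only spacelike- or future-compact, not compactly supported, so one must be careful that there is no contribution from infinity. The resolution above is exactly the choice of a Cauchy surface to the past of $\supp f\cup\supp h$, where the retarded Green operator drops out and $a^\mu$ has compact support within $J^+(\Sigma')$; conservation of the Dirac current then transports the conclusion to the witness surface. The other point requiring a little care is the Cauchy-uniqueness argument in the final step, which rests on $\Sigma$ being noncharacteristic for $D$ together with normal hyperbolicity of $-D^2$ (alternatively one may cite the uniqueness statement for Green-hyperbolic operators).
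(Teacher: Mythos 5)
Your proposal is correct. The core mechanism coincides with the paper's: one integrates the pointwise divergence identity $(DF,H)-(F,DH)=\ii\nabla_\mu(V^\mu F,H)$, read off from the hermiticity calculation~\eqref{eq:Dirac_type_hermiticity}, over a causal half-space bounded by a Cauchy surface, using the support properties of the Green operators to justify the divergence theorem. The organization, however, is genuinely different. The paper works directly at the witness surface $\Sigma$: after choosing (via timeslice) representatives with $\supp f\subset I^-(\Sigma)$ and $\supp h\subset I^+(\Sigma)$, it computes $\int_M(f,E_Dh)\dvol$ as an integral over $J^-(\Sigma)$ using $f=DE_D^+f$ and a single integration by parts, with $E_Df=-E_D^+f$ near $\Sigma$ finishing the job. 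You instead first establish that the Dirac current $j^\mu=(V^\mu E_Df,E_Dh)$ is conserved, deduce Cauchy-surface independence of its flux, evaluate the flux at an auxiliary surface $\Sigma'$ to the past of $\supp f\cup\supp h$ where $E_Df|_{\Sigma'}=E_D^-f|_{\Sigma'}$, and finally transport the result to $\Sigma$. Your detour is marginally longer but makes explicit the Cauchy-surface independence of the right-hand side of~\eqref{eq:Dirac_positivity}, which the paper records separately in Remark~\ref{rem:BG}. The positivity step is also more explicit in your version: the paper simply asserts that $E_Df|_\Sigma=0$ implies $f\in D\Gamma_0^\infty(B)$, whereas you spell out the Cauchy-uniqueness argument via noncharacteristicity of $\Sigma$ (from $\sigma_D(n^\flat)^2=\1$) and uniqueness for the normally hyperbolic $-D^2$; for the first-order operator $D$ one could equally cite well-posedness of the Cauchy problem for $D$ itself without passing to $-D^2$.
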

\begin{rem}\label{rem:BG}
	The proof will show that the equality in~\eqref{eq:Dirac_positivity}
	holds for any smooth spacelike Cauchy surface in place of $\Sigma$.
	By considering the Cauchy formulation for $D$ in terms of the solution restricted to $\Sigma$, and arbitrary data on arbitrary smooth spacelike Cauchy surfaces, it becomes clear that
	definite type in our sense implies (and is hence equivalent to) definite type in the sense of 
	B\"ar and Ginoux~\cite{BaerGinoux:2012}. 
\end{rem}
\begin{proof}[Proof of Lemma~\ref{lem:Dirac_positivity}]
The map in~\eqref{eq:Dirac_positivity} gives a well-defined sesquilinear
form on $\Gamma_0^\infty(B)/D\Gamma_0^\infty(B)$, and the last expression
is a well-defined inner product on this space because $D$ has definite type witnessed by $\Sigma$ and $E_D f$ vanishes identically on $\Sigma$ if and only if $f\in D\Gamma_0^\infty(B)$.

It remains to establish the equality in~\eqref{eq:Dirac_positivity}.
Let $[f],[h]\in \Gamma_0^\infty(B)/D\Gamma_0^\infty(B)$.
Using the timeslice property for $D$ we may consider representatives $f,h\in\Gamma_0^\infty(B)$ with
$\supp f\subset I^-(\Sigma)$, $\supp h\subset I^+(\Sigma)$. Thus
\begin{align}
\int_M (f,E_Dh) \dvol &= \int_{J^-(\Sigma)}   (f,E_D h) \dvol  = \int_{J^-(\Sigma)} (DE_D^+ f,E_D h) \dvol \nonumber \\
&= \ii \int_\Sigma n_\mu (V^\mu E_D^+ f,E_D h)\dd A + 
 \int_{J^-(\Sigma)} (E_D^+ f,DE_D h) \dvol,
\end{align}
by essentially the same calculation used in~\eqref{eq:Dirac_type_hermiticity} to establish formal hermiticity of $D$, 
except that there is now a boundary term at $\Sigma=\partial J^-(\Sigma)$. The second term vanishes and $E_Df=-E_D^+f$ in a neighbourhood of $\Sigma$, and 
the equation in~\eqref{eq:Dirac_positivity} follows. 	
\end{proof}	

We now show that every Dirac-type operator on a bundle $B_+$  
determines a fermionic \rfhgho. Setting $B_-=B^*$, we endow $B=B_+\oplus B_-$ with the direct sum fibre metric and a complex conjugation
\begin{equation}
	\Cc\begin{pmatrix}
		f \\ u
	\end{pmatrix} = \begin{pmatrix} u^+  \\ f^+ \end{pmatrix}, \qquad f\in \Gamma^\infty(B_+),~ u\in \Gamma^\infty(B^*_-),
\end{equation}
which makes $B$ a hermitian vector bundle. Then set $R_+=\id_{B_+}$, $R_-=-\id_{B_-}$, $P_+=D$ and $P_-=\sadj{D}$ and define a \rfhgho\ $P= D\oplus \sadj{D}$ by~\eqref{eq:frfhgho_RC} and~\eqref{eq:frfhgho_P}.
In particular, one has
\begin{equation}\label{eq:EP_for_Diractype}
	E_P = \begin{pmatrix}
		-D & 0 \\ 0 & -\sadj{D}
	\end{pmatrix}
	\begin{pmatrix}
		E_{-{D}^2} & 0 \\ 0 & E_{-\sadj{D}^2}
	\end{pmatrix}.
\end{equation}
Furthermore, if $D$ has definite type then
$(P,R)\in \fGlobHypGreen_+$, i.e., $(P,R)$ has definite type as a fermionic \rfhgho. To see this, we calculate
\begin{align}
	\ii E_P\left(\Cc\begin{pmatrix} f\\ u\end{pmatrix},
	R\begin{pmatrix} f'\\ u'\end{pmatrix}
	\right) &= 
	\ii \left\langle \Cc\begin{pmatrix} f\\ u\end{pmatrix},
	E_P\begin{pmatrix} f'\\ -u'\end{pmatrix}\right\rangle = 
	\ii \left( 
	 \begin{pmatrix} f\\ u\end{pmatrix},
	\begin{pmatrix} E_Df'\\ -E_{\sadj{D}}u'\end{pmatrix}
	\right)\nonumber \\
	& = 
	\ii (f, E_{D} f') + \ii(u, E_{-\sadj{D}} u'),
\end{align}
and note that the right-hand side determines an inner product
on $\Gamma_0^\infty(B)/P\Gamma_0^\infty(B)$. 

In the corresponding quantised theory $\Xf(P)$, we may introduce fields
$\Psi_D$ and $\Psi^+_D$ so that
\begin{equation}
	\Xi_P\begin{pmatrix} f \\ u \end{pmatrix}= \Psi_D(u) + \Psi^+_D(f),\qquad f\in\Gamma_0^\infty(B_+),~ u\in\Gamma_0^\infty(B_+^*).
\end{equation} 
These operators satisfy $\Psi_D(f)^*=\Psi^+_D(f^+)$ for all $f\in\Gamma_0^\infty(B_+)$, the field equations
\begin{equation}
	\Psi_D(\sadj{D}u) = 0 = \Psi^+_D(D f)
\end{equation}
for all $f\in\Gamma_0^\infty(B_+)$, $u\in\Gamma_0^\infty(B_+^*)$
(the formalisation of equations $D\Psi_D=0=\sadj{D}{\Psi}^+_D$) 
and anticommutation relations
\begin{equation}
	\{\Psi_D(u),\Psi_D(v)\} = 0,\quad 
	\{\Psi_D(u),\Psi^+_D(f)\} = \ii \dlangle u,E_D f\drangle\1_{\Xf(P)}, 
	\quad f\in\Gamma_0^\infty(B_+),~ u,v\in\Gamma_0^\infty(B_+^*).
\end{equation}
Note that $\Psi_D$ and $\Psi_D^+$, which may be regarded as sections of $B_+$ and $B_+^*$, are smeared with test sections in $B_+^*$ and $B_+$. This contrasts with the convention adopted in Section~\ref{sec:Yf_quant}, where we smeared fields with test sections of the same bundle, taking advantage of the bilinear pairing.

A particular example of a Dirac type operator is provided by the usual Dirac equation, 
which we briefly describe, referring to~\cite{Sanders_dirac:2010} for all details. Suppose $\Mb\in\Loc$ is equipped with a spin structure and let $DM$ and
$D^*M$ be the bundles of spinors and cospinors over $M$. 
The Dirac adjoint is an antilinear map $u\mapsto u^+$ between $DM$ and $D^*M$, with an inverse denoted by the same symbol. The (indefinite) hermitian bundle metric on $DM$ is defined as $(u,u')=\dlangle u^+,u'\drangle$; similarly the hermitian metric on $D^*M$ is $(v,v')=\dlangle v^+,v'\drangle$, where
as usual the double angle bracket is the duality pairing for the bundle in question. 
Then the spinorial Dirac operator is defined by 
\begin{equation}
	D = -\ii\slashed \nabla + m
\end{equation}
for mass $m\ge 0$, with respect to a suitable Dirac spinor-tensor field
(i.e., $\gamma$-matrices) and is of definite type. Moreover the dual $\adj{D}$ is the cospinorial Dirac operator
\begin{equation}
	\sadj{D} = \ii\slashed \nabla + m
\end{equation}
and the theory is then quantised as described above.

\section{Generalised Hadamard States}\label{sec:Hadamard}

The realisation that the Hadamard condition
for states of the scalar field could be expressed microlocally was a major breakthrough~\cite{Radzikowski_ulocal1996}, which has been achieved for other free fields including the Dirac~\cite{Hollands:2001,Sanders_dirac:2010},  Proca~\cite{Few&Pfen03,MorettiMurroVolpe:2023} and electromagnetic fields~\cite{Few&Pfen03}, and linearised gravity~\cite{Gerard:2023}. One of the main purposes of this paper is to provide an analogue for a class of \rfhgho's. We begin by recalling the definition of the wavefront set and the microlocal form of the Hadamard condition.

\subsection{Wavefront sets}\label{sec:WF}

As standing notation: for any manifold $X$, $\dot{T}^*X$  denotes the cotangent bundle of $X$ with its zero section $0_{T^*X}$ removed; meanwhile, for any subset $\Uc\subset T^*X$, we set $\Uc_0:=\Uc\cup 0_{T^*X}$.

Let $u\in\DD'(M)=\Gamma_0^\infty(\Omega_M)'$ be a distribution and let $(y,l)\in \dot{T}^*M$.
Choose coordinates $X^\alpha$ in a neighbourhood of $y$ and
define smooth functions $\ee_k$ ($k\in T^*_yM$) on the chart domain by $\ee_k(x)=\exp(\ii   X^\alpha(x) k(\partial_{X^\alpha}))$, where $\partial_{X^\alpha}$ are the coordinate basis vector fields. 
Then $(y,l)$ is said to be a 
regular directed point of $u$ if there is a smooth density $\chi$ compactly supported in the coordinate chart, and an open cone $V$ in $\dot{T}_y^*M$ such that $\chi(y)\neq 0$, $l\in V$, and
\begin{equation}\label{eq:regdirndef}
	\sup_{k\in V} (1+\|k\|^N)  |u(\chi e_k)| <\infty
\end{equation}
for all $N\in\NN$, where $\|\cdot\|$ is any norm on $T^*_yM$. 
The \emph{wavefront set} $\WF(u)$ is the set of all $(y,l)\in\dot{T}^*M$ that are not regular directed points of $u$; it is independent of the coordinates used~\cite{Hormander1}.  
If $B$ is a complex finite-rank bundle and $u\in \DD'(B)=\Gamma_0^\infty(B^*\otimes\Omega)'$ then $\WF(u)$ may be defined so that
\begin{equation}
	\WF(u)=\bigcup_{S\in \Gamma^\infty(B^*)}\WF(f\mapsto u(fS));
\end{equation}
near any $y\in M$, it enough to take the union over sections $S$ forming a local basis for $B^*$. 
Many standard results for scalar distributions then extend to the bundle context fairly directly; for instance, 
\begin{equation}\label{eq:WFbounds}
\WF(Pu)\subset \WF(u)\subset \WF(Pu)\cup (\WF(u)\cap\Char P)
\end{equation}
for any $u\in\DD'(B)$ and partial differential operator $P:\Gamma^\infty(B)\to\Gamma^\infty(\tilde{B})$,
 where the \emph{characteristic set} $\Char P\subset \dot{T}^*M$ consists of those $(x,k)$ at which the principal symbol $p(x,k)$ of $P$ has nontrivial kernel. The first inclusion follows easily from the scalar result (see (8.1.11) in~\cite{Hormander1}) and the definition of the wavefront set for bundles, while the second can be obtained by polarisation set methods (see e.g.~Section~3 of~\cite{Fewster:2025b}).
If $\Char P$ is empty, then $\WF(Pu)=\WF(u)$, which in particular implies that 
two-point and kernel distributions for a continuous linear map $T:\Gamma_0^\infty(B)\to \Gamma^\infty(B)$ satisfy
\begin{equation}
	\WF(T^\twopt)=\WF(T^\knl) = \WF((T^{\sharp,\alpha})^\knl)=\WF((T^{\natural,\alpha})^\knl),
\end{equation}
as a result of the identity~\eqref{eq:T2ptTknl}. More examples and exposition of wavefront sets can be found in~\cite{Hormander1,BrouderDangHelein:2014,Strohmaier:2009} and many other references. 
 
Let $B$ be a finite-rank complex vector bundle over a globally hyperbolic spacetime $\Mb\in\Loc$. 
We denote the bundle of nonzero null future/past covectors by $\Nc^\pm$ and also write $\Nc=\Nc^+\cup\Nc^-$ for the bundle of all nonzero null covectors.  If
$\gamma$ is any affinely parametrised null geodesic in $M$, then the curve $c(\lambda)=(\gamma(\lambda),\dot{\gamma}(\lambda)^\flat)$ 
in $T^*M$ is a \emph{bicharacteristic strip} for any normally hyperbolic operator $P$, while $\gamma$ is also called a \emph{bicharacteristic curve}. 
For $(x,k),(y,\ell)\in \Nc$, we write $(x,k)\sim (y,\ell)$ if and only if
the points lie on a common bicharacteristic strip. That is, $(x,k)\sim (y,\ell)$ holds if and only if there is a null geodesic segment $\gamma$ connecting $x$ and $y$, with tangent vectors $k^\sharp$ and $\ell^\sharp$ at the endpoints, which are related by parallel transport along $\gamma$ (understanding $(x,k)\sim (x,\ell)$ if and only if $k=\ell$ is null); in this case we say that the geodesic segment $\gamma$ \emph{witnesses} to the relation $(x,k)\sim (y,\ell)$. With these definitions, set
\begin{align}\label{eq:Rcsets}
	\Rc&= \{(x,k;x',-k')\in \Nc\times\Nc: (x,k)\sim (x',k')\}\\ 
	\Rc^\pm & =\{(x,k;x',-k')\in\Rc:   x\in J^\pm(x') \}.
\end{align}
The Green operators of normally hyperbolic operators have known wavefront sets.
\begin{thm}\label{thm:WFEPs}
	For a normally hyperbolic \rfhgho\ $P$, the wavefront sets of $E_P^\pm$ and $E_P$ are
	\begin{align}
		\WF(E_P^\pm)&= \Rc^\pm\cup\WF(\id^\knl),
		\label{eq:WFEPpm}\\
		\label{eq:WFEP}
		\WF(E_P)&= \Rc ,
	\end{align}
where the wavefront set of $\id^\knl$ is given explicitly by
\begin{equation}\label{eq:WFidknl}
	\WF(\id^\knl) = \{(x,k;x,-k)\in \dot{T}^*(M\times M): (x,k)\in \dot{T}^*M\}.
\end{equation} 
\end{thm}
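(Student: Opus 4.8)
The plan is to assemble the statement from three standard microlocal ingredients: the characteristic bound~\eqref{eq:WFbounds}; the theorem on propagation of singularities~\cite{DuiHoer_FIOii:1972,Hormander1}, which applies because the principal symbol $-g_x^{-1}(k,k)\id_{B_x}$ of a normally hyperbolic operator is a scalar multiple of the identity that is nonvanishing on $\dot{T}^*M$ away from the null cone, so that in a local frame for $B$ the equation decouples microlocally into scalar equations of real principal type with $\Char P=\Nc$; and the elementary fact that the Schwartz kernel of $\id$ is the delta distribution on the diagonal, whose wavefront set is the conormal bundle of the diagonal with its zero section removed, that is,~\eqref{eq:WFidknl}. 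Because the bundle wavefront set is computed componentwise in a frame, everything reduces to scalar statements, namely those of Radzikowski~\cite{Radzikowski_ulocal1996} together with H\"ormander's analysis of distinguished parametrices; the passage to vector bundles for normally hyperbolic operators is due to Sahlmann and Verch~\cite{SahlmannVerch:2000RMP}. One may cite these directly or reprove them along the lines below.

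For the upper bounds I would first use that $E_P^\pm$, $E_P$ and their formal transposes all map $\Gamma_0^\infty(B)$ into $\Gamma^\infty(B)$ by Theorem~\ref{thm:GreenHypOps} (recall from Section~\ref{sec:sghos} that $\adj{E}_P^\pm=E_P^\mp$ and $\adj{E}_P=-E_P$ for any \rfhgho): a standard kernel estimate---the kernel of a continuous linear map valued in smooth sections has no wavefront covectors vanishing in the input slot---then shows that both covector components are nonzero at every point of $\WF(E_P^\pm)$ and $\WF(E_P)$. Substituting the identities $(P\otimes 1)(E_P^\pm)^{\twopt}=\id^{\twopt}=(1\otimes P)(E_P^\pm)^{\twopt}$ and $(P\otimes 1)E_P^{\twopt}=0=(1\otimes P)E_P^{\twopt}$ of~\eqref{eq:Epm2pteom} into~\eqref{eq:WFbounds} gives $\WF(E_P^\pm)\setminus\WF(\id^\knl)\subset\Char(P\otimes 1)\cap\Char(1\otimes P)$ and $\WF(E_P)\subset\Char(P\otimes 1)\cap\Char(1\otimes P)$, so with the previous point both covector components are null; in particular $\WF(E_P)$ omits the timelike and spacelike part of $\WF(\id^\knl)$. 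Treating the second pair of arguments as inert parameters and applying propagation of singularities to $P_x(E_P^+)^{\knl}=\delta$ then forces $\WF(E_P^+)\setminus\WF(\id^\knl)$ to be a union of maximally extended null bicharacteristic strips in the first variable, and symmetrically in the second; the support property $\supp E_P^+\subset\{(x,y):x\in J^+(y)\}$ from G3, together with global hyperbolicity of $\Mb$ (which prevents such strips from being disconnected from the diagonal, where~\eqref{eq:WFidknl} applies), identifies this union with $\Rc^+$. Hence $\WF(E_P^\pm)\subset\Rc^\pm\cup\WF(\id^\knl)$, and since $\Rc=\Rc^+\cup\Rc^-$ and $E_P=E_P^--E_P^+$, also $\WF(E_P)\subset\Rc$.

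For the reverse inclusions, $\WF(\id^\knl)=\WF(PE_P^\pm)\subset\WF(E_P^\pm)$ because a differential operator cannot enlarge a wavefront set, so it remains to prove $\Rc\setminus\WF(\id^\knl)\subset\WF(E_P)$. Granting this, any $p\in\Rc^+\setminus\WF(\id^\knl)$ lies in $\WF(E_P)=\WF(E_P^--E_P^+)$ but not in $\WF(E_P^-)\subset\Rc^-\cup\WF(\id^\knl)$ (these sets being disjoint from $\Rc^+\setminus\WF(\id^\knl)$, since two distinct points joined by a null geodesic have a definite causal order), whence $p\in\WF(E_P^+)$; the argument for $E_P^-$ is symmetric, $\WF(\id^\knl)\subset\WF(E_P^\pm)$ supplies the rest for $E_P^\pm$, and closedness of $\WF(E_P)$ supplies the null diagonal directions missing from $\Rc\setminus\WF(\id^\knl)$. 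The inclusion $\Rc\setminus\WF(\id^\knl)\subset\WF(E_P)$ says that the leading singularity of the Pauli--Jordan kernel across the null cone does not vanish, which I would extract from the local Hadamard parametrix of $E_P^\pm$ in a geodesically convex neighbourhood (equivalently, from the explicit leading term of $E_P$, which reproduces the scalar light-cone singularity times an invertible matrix factor). The step I expect to be the main obstacle in a self-contained treatment is the pinning-down in the previous paragraph---ruling out bicharacteristic strips in $\WF(E_P^+)$ that never meet the diagonal---which is precisely where global hyperbolicity enters, via uniqueness and the support of the retarded fundamental solution with a point source, and which is most economically dispatched by invoking the scalar results of~\cite{Radzikowski_ulocal1996} and their vector-bundle extension in~\cite{SahlmannVerch:2000RMP}.
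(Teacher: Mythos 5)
Your route is not the one the paper takes: the paper's proof of Theorem~\ref{thm:WFEPs} is essentially a citation, to Theorem~A.5 of~\cite{Sanders_dirac:2010} (which reads off the wavefront sets from a scaling-limit expansion of the Green kernels in Riesz distributions) and to the polarisation-set computation of~\cite{Fewster:2025b}. What you sketch is the third classical route, the Duistermaat--H\"ormander/Radzikowski argument: characteristic bounds from~\eqref{eq:Epm2pteom} and~\eqref{eq:WFbounds}, propagation of singularities for a system with scalar real-principal-type principal symbol, the support property G3 together with global hyperbolicity to tie every bicharacteristic in $\WF(E_P^+)$ back to the diagonal, and the Hadamard parametrix for the lower bound. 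The bookkeeping in your final paragraph (deducing~\eqref{eq:WFEPpm} from~\eqref{eq:WFEP} via $\WF(\id^\knl)=\WF(PE_P^\pm)\subset\WF(E_P^\pm)$ and the disjointness of $\Rc^+\setminus\WF(\id^\knl)$ from $\Rc^-\cup\WF(\id^\knl)$ in a causal spacetime) is correct, and you rightly identify the genuinely hard step; deferring it to~\cite{Radzikowski_ulocal1996,SahlmannVerch:2000RMP} is no less legitimate than the paper's own deferral to its references.

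There is, however, one step you assert as a ``standard kernel estimate'' that is false in the generality you state it: continuity of $T$ and $\adj{T}$ from $\Gamma_0^\infty$ to $\Gamma^\infty$ does \emph{not} force $\WF(T^\knl)$ to avoid $(\dot{T}^*M\times 0_{T^*M})\cup(0_{T^*M}\times\dot{T}^*M)$, because continuity into $C^\infty$ gives no uniform control of the order of $T$. A counterexample on $\RR\times\RR$ is $K(x,y)=\chi(x)\psi(y)\sum_{n\ge1}\ee^{\ii(n^2x+ny)}$: both $K$ and its transpose send test functions to smooth functions (the coefficients $\widehat{\psi f}(-n)$, resp.\ $\widehat{\chi f}(-n^2)$, decay rapidly), yet $\widehat{K}(m^2,m)\to\hat\chi(0)\hat\psi(0)\neq0$ along directions converging to $(1,0)$, so $\WF(K)$ contains a covector with vanishing second component. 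For the Green operators the conclusion is nonetheless true, and you can recover it inside your own framework: a point $(x,\xi;y,0)$ with $\xi$ non-null lies outside $\Char(P\otimes1)$ and outside $\WF(\id^\knl)$, hence outside $\WF((E_P^\pm)^\knl)$ by~\eqref{eq:WFbounds}; a point with $\xi$ null and nonzero is removed by exactly the propagation-plus-support argument you already deploy, flowing the bicharacteristic in the first slot with $(y,0)$ held fixed until the underlying inextendible null geodesic crosses a Cauchy surface in $I^-(y)$ and leaves $\supp(E_P^+)^\knl$. With that substitution your outline is a faithful sketch of the standard proof.
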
 
This result generalises the scalar case~\cite{DuiHoer_FIOii:1972,Radzikowski_ulocal1996} and is 
proved for bundles as Theorem~A.5 in~\cite{Sanders_dirac:2010} based on a scaling limit structure of the Green operators in terms of Riesz distributions.  An alternative proof, making use of the polarisation set and avoiding Riesz distributions, is given as
Theorem 1.1(c) of~\cite{Fewster:2025b}. 
Closely related statements for Feynman parametrices can be found in~\cite{IslamStrohmaier:2020}. 
The wavefront set~\eqref{eq:WFidknl} is standard; it follows, for example, from the calculation of the polarisation set in section~3 of~\cite{Fewster:2025b}.

\subsection{Decomposable \rfhgho's and the Hadamard condition}
\label{sec:Hadcond}

Let $P$ be a normally hyperbolic \rfhgho\ and suppose $\omega$ is a state on $\Yf_P(M)$ with distributional
$n$-point functions. Then~\cite{Radzikowski_ulocal1996,SahlmannVerch:2000RMP} $\omega$ is said to be Hadamard
if its two-point function $W\in\DD'(B\boxtimes B)$ satisfies the wavefront set condition~\eqref{eq:Had_WF}, which we now write as
\begin{equation}\label{eq:RcHad}
	\WF(W)=\Rc^\Had:=\{(x,k;x',k')\in \Rc: k\in\Nc^+\},
\end{equation}
which generalises the condition for the scalar field~\cite{Radzikowski_ulocal1996}. 

When $P$ is Green-hyperbolic but not normally hyperbolic, we face the complication that Green-hyperbolicity does not constrain the principal symbol to be hyperbolic with null characteristics. For example, the support properties of Green operators only imply that
$\WF(E^\pm)\subset \{(x,k;y,l)\in \dot{T}^*(M\times M): x\in J^\pm(y)\}$. If $g'$ is a `slow metric' whose cone of future-directed causal vectors is contained inside that of $g$, an operator that is normally hyperbolic with respect to $g'$ is Green-hyperbolic but its bicharacteristic curves can be $g$-timelike and the characteristic covectors can be $g$-spacelike.\footnote{If $\omega$ is a positive frequency $g$-causal covector then $\omega(v')\ge 0$ for every vector $v'$ that is future-directed $g'$-causal, and hence also future-directed $g$-causal; we deduce that $\omega$ is also a positive frequency $g'$-causal covector. Thus the cone of (future-directed) $g$-causal covectors is contained in the cone of (future-directed) $g'$-causal covectors.}
An extreme example is provided by the Proca operator $P=-\delta\dd +m^2$ on $1$-forms (see Section~\ref{sec:Proca}), which has $\Char P=\dot{T}^*M$ because its principal symbol acts as $p(x,k)v=g^{-1}(k,v)k - g^{-1}(k,k)v$ 
and clearly has $k\in\ker p(x,k)$ for all $(x,k)\in \dot{T}^*M$. Nonetheless, the Proca operator is Green-hyperbolic.

We will now propose a definition of Hadamard states for \rfhgho's that is sufficiently broad to include the normally hyperbolic case, models such as the Proca field, operators that are normally hyperbolic with respect to slow metrics, and operators formed as direct sums of the above, including situations with multiple characteristic cones that can arise in descriptions of birefringence~\cite{FewsterPfiefferSiemssen:2018}. Hadamard states for the Proca model itself have been studied in~\cite{Few&Pfen03}, and more recently in~\cite{MorettiMurroVolpe:2023} (however, as already mentioned, the latter reference contains a gap owing to an erroneous argument). We will explain the relationship of our definition to these references in Section~\ref{sec:Proca}.
We will show that our generalisation has properties that would be hoped for, and is equipped with a set of tools that are necessary for applications in quantum measurement.  

A starting-point is the observation by Sahlmann and Verch~\cite{SahlmannVerch:2000RMP} that the condition~\eqref{eq:RcHad} can be simplified to
\begin{equation}\label{eq:Hadamard}
	\WF(W)\subset \mathcal{N}^+\times \mathcal{N}^-,
\end{equation}  
because~\eqref{eq:Hadamard} implies~\eqref{eq:RcHad} on using the CCRs $W-\adj{W}=\ii E_P$ and the fact that $\WF(E_P)=\Rc\subset (\Nc^+\times\Nc^-)\cup (\Nc^-\times\Nc^+)$. Abstracting from this example, we now make the following definition. 
\begin{defn}\label{def:Hadamard}
	Let $\Vc^+\subset \dot{T}^*M$ be a conic, relatively closed subset, so that $\mathcal{V}^+\cap \Vc^-=\emptyset$, where $\Vc^-=-\Vc^+$. A semi-Green-hyperbolic operator $P$ on $\Gamma^\infty(B)$ is \emph{$\Vc^\pm$-decomposable} if 
	\begin{equation}
		\WF(E_P) \subset (\Vc^+\times \Vc^-) \cup (\Vc^-\times \Vc^+).
	\end{equation}
	If a \rfhgho\ (resp., fermionic \rfhgho) $P$ on $\Gamma^\infty(B)$ is $\Vc^\pm$-decomposable
	then a state with distributional $n$-point functions on $\Yf(P)$ (resp., $\Xf(P)$) is  \emph{$\Vc^+$-Hadamard} if 
	its distributional two-point function $W\in\DD'(B\boxtimes B)$ satisfies
	\begin{equation}\label{eq:VHadamard}
		\WF(W)\subset \Vc^+\times \Vc^-.
	\end{equation}   
\end{defn}

Note that if $P$ is $\Vc_1^\pm$-decomposable then it is also $\Vc_2^\pm$-decomposable if
$\Vc_2^+$ is a conic relatively closed set with $\Vc_2^+\cap\Vc_2^-=\emptyset$ and $\Vc_1^+\subset\Vc_2^+$. Similarly, if $P$ is both $\Vc_1^\pm$- and $\Vc_2^\pm$-decomposable, then it is also $(\Vc_1^\pm\cap\Vc_2^\pm)$-decomposable. In particular, 
if $P$ is a \rfhgho\ then $E_P$ is a $P$-bisolution, and one has $\WF(E_P^\twopt)\subset \Char_0 P\times \Char_0 P$. Thus, if $P$ is $\Vc^\pm$-decomposable, it is also $(\Vc^\pm\cap\Char P)$-decomposable, using $\Vc^\pm\cap 0=\emptyset$.
Also note that  the definition of $\Vc^+$-Hadamard states is empty in the fermionic case if $P\notin\fGreenHyp^+$, owing to the nonexistence of states on $\Xf(P)$. 

We can give an immediate example.
\begin{thm}\label{thm:Hadamard_norm_hyp}
	Any normally hyperbolic \rfhgho\ $(B,P)$ is $\Nc^\pm$-decomposable. If the hermitian fibre metric on $B$ is positive definite then $(B,P)$ admits $\Nc^+$-Hadamard states.
\end{thm}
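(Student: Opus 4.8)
The first assertion will follow at once from Theorem~\ref{thm:WFEPs}. First I would record that $\Nc^+$ satisfies the hypotheses imposed on $\Vc^+$ in Definition~\ref{def:Hadamard}: it is conic, it is relatively closed in $\dot T^*M$ (the null covectors form a closed subset of $T^*M$, and a sequence of future-directed null covectors cannot converge to a past-directed covector without passing through the zero section), and $\Nc^+\cap\Nc^-=\emptyset$ since no nonzero null covector is simultaneously future- and past-directed. Then, since $\WF(E_P)=\Rc$ by Theorem~\ref{thm:WFEPs}, and a relation $(x,k)\sim(x',k')$ is witnessed by a null geodesic segment along which $k^\sharp$ is parallel transported to $(k')^\sharp$, the fact that parallel transport preserves the causal character and the time-orientation of a causal vector gives $k,k'\in\Nc^+$ or $k,k'\in\Nc^-$, hence $(x,k;x',-k')\in(\Nc^+\times\Nc^-)\cup(\Nc^-\times\Nc^+)$ whenever $(x,k;x',-k')\in\Rc$. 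Thus $\WF(E_P)\subset(\Nc^+\times\Nc^-)\cup(\Nc^-\times\Nc^+)$, i.e.\ $P$ is $\Nc^\pm$-decomposable.

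For the existence statement, the plan is to produce a $P$-bisolution $W\in\DD'(B\boxtimes B)$ that (i) satisfies the CCR identity $W-\adj{W}=\ii E_P^\twopt$, (ii) is of positive type, so that $W$ together with Wick's rule defines a quasifree state $\omega$ on $\Yf(P)$, and (iii) obeys $\WF(W)\subset\Nc^+\times\Nc^-$. Given such a $W$, the state $\omega$ automatically has distributional $n$-point functions and is $\Nc^+$-Hadamard; note that (i) and (iii) together with the first part of the theorem in fact pin down $\WF(W)=\Rc^\Had$. The construction of $W$ is precisely what the results of Islam and Strohmaier~\cite{IslamStrohmaier:2020}, as further developed in~\cite{FewsterStrohmaier:2025}, supply, and it is here --- and only here --- that positive-definiteness of the fibre metric is used. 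The conceptual skeleton I would either cite or reproduce is the classical one: on an ultrastatic globally hyperbolic spacetime $\RR\times\Sigma$, after the density-reweighting $P\mapsto P^{\natural,1/2}$, a normally hyperbolic \rfhgho\ on a bundle with positive-definite fibre metric takes the form $\partial_t^2+A$ with $A$ a non-negative, formally self-adjoint, elliptic operator on $L^2$-sections over $\Sigma$ --- positive-definiteness of the fibre metric being exactly what makes $L^2(\Sigma)$ a Hilbert space and forces $A\ge 0$; the two-point function built from the kernel of $(2\overline{A}^{1/2})^{-1}\ee^{-\ii(t-t')\overline{A}^{1/2}}$ between time slices is then manifestly of positive type, satisfies the CCR by the functional calculus, is a bisolution, and by ellipticity of $\overline{A}^{1/2}$ has wavefront set $\Rc^\Had$. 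For a general object of $\Loc$ one reduces to this case by a Cauchy surface deformation: one chooses a Cauchy surface $\Sigma$ and a globally hyperbolic spacetime agreeing with $\Mb$ near $\Sigma$ but ultrastatic to the past, carrying a normally hyperbolic \rfhgho\ on a positive-definite hermitian bundle agreeing with $(B,P)$ near $\Sigma$; the ultrastatic ground state of the past region is Hadamard there and its Hadamard property propagates, while positivity is transported intact back to $(B,P)$ through the Cauchy $\GreenHyp$-isomorphism of Theorem~\ref{thm:GreenHyp}. Alternatively, one may start from the Feynman parametrix of~\cite{IslamStrohmaier:2020,FewsterStrohmaier:2025}, which directly yields a $P$-bisolution meeting (i) and (iii), and restore (ii) by subtracting a smoothing correction determined by any reference quasifree state --- such a state existing precisely because the fibre metric is positive definite.

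The step I expect to be the main obstacle is positivity, condition (ii). Conditions (i) and (iii) --- and indeed the $\Nc^\pm$-decomposability established in the first part --- are entirely insensitive to the signature of the fibre metric, and one can always manufacture a bisolution with the correct commutator and wavefront set; what can fail is that it be of positive type, which is exactly why the fermionic/Dirac case needs a different structure. The role of positive-definiteness of the fibre metric is to turn the one-particle data into a genuine Hilbert-space inner product, and arranging positivity --- through the ultrastatic ground state and its transport --- is the only part of the argument that invokes this hypothesis.
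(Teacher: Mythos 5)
Your proposal is correct and takes essentially the same route as the paper's proof: decomposability is derived from Theorem~\ref{thm:WFEPs} (your check that $\Nc^+$ satisfies the hypotheses of Definition~\ref{def:Hadamard} and that parallel transport keeps $\Rc$ inside $(\Nc^+\times\Nc^-)\cup(\Nc^-\times\Nc^+)$ is a useful unpacking of what the paper states compactly), and existence is delegated to~\cite{IslamStrohmaier:2020} as elaborated in~\cite{FewsterStrohmaier:2025}, exactly as the paper does. One small slip in your heuristic ultrastatic sketch: a positive-definite fibre metric makes $L^2(\Sigma)$ a Hilbert space but does not by itself force the spatial operator $A$ to be nonnegative (it is merely bounded below, so some spectral normalisation is still required there) --- though this does not affect your argument, since you ultimately invoke the same external references as the paper.
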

\begin{proof}
	Decomposability follows by Theorem~\ref{thm:WFEPs} because
	$\Rc\subset (\Nc^+\times\Nc^-)\times(\Nc^-\times\Nc^+)$, whereupon the $\Nc^+$-Hadamard condition coincides with the standard microlocal definition of Hadamard states. The second part holds by 
	Corollary 5 of~\cite{FewsterStrohmaier:2025}, building on the analysis of Feynman propagators in~\cite{IslamStrohmaier:2020}.
\end{proof}	
A corollary is that \rfhgho's that are normally hyperbolic with respect to a `slow' metric are $\Vc^\pm$-decomposable and admit $\Vc^+$-Hadamard states if the fibre metric is positive definite, where $\Vc^+$ is the cone of future-directed null covectors determined by the slow metric (which is not necessarily contained within the cone of future-directed causal covectors).
A similar result for Dirac-type operators will be given as Theorem~\ref{thm:Hadamard_Dirac_type} below.

We now begin to develop the theory of Hadamard states in this generalised form.
Our first result establishes the basic properties that are familiar from the standard theory of Hadamard states and justify our definition as an appropriate generalisation thereof. Its proof slightly abstracts and adds to arguments in Propositions 4.6 and 4.7 of~\cite{MorettiMurroVolpe:2023}.  
\begin{thm}\label{thm:Hadamard_props}
	Suppose $P$ is a $\Vc^\pm$-decomposable \rfhgho\ (resp., fermionic \rfhgho), where $\Vc^\pm$ are as in Definition~\ref{def:Hadamard}.
	\begin{enumerate}[(a)] 
	\item If $W$ is the $2$-point function of a $\Vc^+$-Hadamard state on the bosonic algebra $\Yf(P)$ (resp., the fermionic algebra $\Xf(P)$) then
	\begin{equation}
		\WF(W) = (\Vc^+\times \Vc^-)\cap \WF(E_P).
	\end{equation} 
	\item If $W_1$ and $W_2$ are $2$-point functions of $\Vc^+$-Hadamard states
	on $\Yf(P)$ (resp., $\Xf(P)$) then $W_1-W_2$ is smooth. 
	
	\item If $U$ is an open Cauchy slab and $\omega$ is a state on $\Yf(P)$ (resp., $\Xf(P)$) whose restriction to $\Yf(P|_U)$ (resp., $\Xf(P|_U)$) is $\Vc^+|_U$-Hadamard, then $\omega$ is $\Vc^+$-Hadamard.  
	
	\item The $1$-point functions of a $\Vc^+$-Hadamard state are smooth.  
	
	\item If $\Yf(P)$ (resp., $\Xf(P)$) admits a $\Vc^+$-Hadamard state then it admits a quasifree $\Vc^+$-Hadamard state.
	\end{enumerate} 
\end{thm}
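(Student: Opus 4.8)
The overall strategy is to derive all five statements from the (anti)commutation relation tying $W$ to the two-point distribution $E_P^\twopt$, together with the decomposability hypothesis $\WF(E_P)\subseteq(\Vc^+\times\Vc^-)\cup(\Vc^-\times\Vc^+)$ and elementary microlocal calculus, the essential leverage coming from the disjointness $\Vc^+\cap\Vc^-=\emptyset$ and the fact that $\Vc^\pm$ avoid the zero section. For (a) and (b) I would start from $W-\adj W=\ii E_P^\twopt$ (bosonic) or $W+\adj W=\ii E_P^\twopt\circ(1\otimes R^{\sharp,1})$ (fermionic); since $R^{\sharp,1}$ is a fibrewise isomorphism, the wavefront set of the right-hand side equals $\WF(E_P)$ in both cases. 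If $W$ is $\Vc^+$-Hadamard then $\WF(W)\subseteq\Vc^+\times\Vc^-$ and $\WF(\adj W)\subseteq\Vc^-\times\Vc^+$ lie in disjoint cones, so no cancellation of singularities is possible and $\WF(E_P)=\WF(W)\sqcup\WF(\adj W)$; intersecting with $\Vc^+\times\Vc^-$ gives (a). For (b), two $\Vc^+$-Hadamard two-point functions satisfy $W_1\mp\adj{W_1}=W_2\mp\adj{W_2}$, hence $W_1-W_2=\pm\adj{(W_1-W_2)}$, so $\WF(W_1-W_2)$ is invariant under exchange of the two factors while being contained in $\Vc^+\times\Vc^-$, whose image under the exchange is the disjoint set $\Vc^-\times\Vc^+$; therefore $\WF(W_1-W_2)=\emptyset$ and $W_1-W_2$ is smooth.

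For (c) I would invoke Theorem~\ref{thm:GreenHypOps}\eqref{it:Cauchy} with $N=U$ to choose $\chi$ with $\supp\dd\chi\subset U$ so that $S:=[P,\chi]E_P$ maps $\Gamma_0^\infty(B)$ into $\Gamma_0^\infty(B|_U)$ with $Sf-f\in P\Gamma_0^\infty(B)$; then $\Upsilon_P(Sf)=\Upsilon_P(f)$ (resp.\ $\Xi_P(Sf)=\Xi_P(f)$), so that $W(f^{\sharp,1}\otimes h^{\sharp,1})=W_U((Sf)^{\sharp,1}\otimes(Sh)^{\sharp,1})$ for all $f,h\in\Gamma_0^\infty(B)$, where $W_U$ is the ($\Vc^+|_U$-Hadamard) two-point function of the restricted state. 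This exhibits the kernel of $W$ as the composition of $W_U^\knl$ with two copies of the kernel of $S^{\sharp,1}$, and $\WF(S^\knl)\subseteq\WF(E_P^\knl)\subseteq(\Vc^+\times\Vc^-)\cup(\Vc^-\times\Vc^+)$ with first variable localised in $U$, since applying the differential operator $[P,\chi]$ cannot enlarge a wavefront set. Because none of the wavefront sets entering the composition meets the zero section, H\"ormander's composition theorem applies without any degenerate contributions (properness of the relevant kernels following from the causal support properties of $E_P$ together with $\supp\dd\chi\subset U$), and a relational-composition computation — carefully tracking the covectors and the flips — forces the first slot of any element of $\WF(W)$ into $\Vc^+$ and the second into $\Vc^-$, using $\Vc^+\cap\Vc^-=\emptyset$ at each linking step; hence $\WF(W)\subseteq\Vc^+\times\Vc^-$.

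For (d) I would use the Cauchy--Schwarz inequality for the state together with $\Upsilon_P(\Cc f)=\Upsilon_P(f)^*$ (resp.\ $\Xi_P$) to obtain, for all $f\in\Gamma_0^\infty(B)$, the two bounds $|W^{(1)}(f^{\sharp,1})|^2\le W((\Cc f)^{\sharp,1}\otimes f^{\sharp,1})$ and $|W^{(1)}(f^{\sharp,1})|^2\le W(f^{\sharp,1}\otimes(\Cc f)^{\sharp,1})$, the second since $\Upsilon_P(\Cc f)^*=\Upsilon_P(f)$. Testing with $f$ localised near a point $y$ and oscillating at frequency $k$ (so that $\Cc f$ oscillates at $-k$), the two right-hand sides probe $\WF(W)$ over $(y,y)$ in the directions $(-k,k)$ and $(k,-k)$ respectively; since $\WF(W)\subseteq\Vc^+\times\Vc^-$, the first bound gives rapid decay unless $k\in\Vc^-$ and the second unless $k\in\Vc^+$, whence $\WF(W^{(1)})\subseteq\Vc^+\cap\Vc^-=\emptyset$ and $W^{(1)}$ is smooth (the fermionic case being identical). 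Finally, for (e), the two-point function $W$ of a $\Vc^+$-Hadamard state already satisfies every condition required of the two-point function of a quasifree state: it descends to $\Sol(P)$, obeys the (anti)commutation identity, is Hermitian, and is positive — $W((\Cc f)^{\sharp,1}\otimes f^{\sharp,1})=\omega(\Upsilon_P(f)^*\Upsilon_P(f))\ge0$, and in the fermionic case also $W(f^{\sharp,1}\otimes(\Cc f)^{\sharp,1})=\omega(\Xi_P(\Cc f)^*\Xi_P(\Cc f))\ge0$, which is precisely the second positivity needed for a fermionic quasifree state. Hence the quasifree state $\omega_q$ with two-point function $W$ exists on $\Yf(P)$ (resp.\ $\Xf(P)$), and since its two-point function is $W$ it is $\Vc^+$-Hadamard; alternatively, in the bosonic case one may replace $W$ by its connected part $W-W^{(1)}\otimes W^{(1)}$, which is $\Vc^+$-Hadamard by (d).

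I expect part (c) to be the main obstacle. Reducing it to a microlocal estimate via the representation $W(f^{\sharp,1}\otimes h^{\sharp,1})=W_U((Sf)^{\sharp,1}\otimes(Sh)^{\sharp,1})$ is straightforward, but making the kernel composition rigorous — the support and properness bookkeeping — and correctly tracking covectors and flips through the triple composition requires care; it is there that the disjointness of $\Vc^+$ and $\Vc^-$ and the exclusion of the zero section from $\Vc^\pm$ do the essential work, whereas parts (a), (b), (d) and (e) need only the two-point/$E_P$ identity together with standard facts about states and quasifree states.
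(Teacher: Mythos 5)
Your proof is correct and, except for part~(d), mirrors the paper's own arguments. Parts~(a) and~(b) turn on the (anti)commutation identity together with the disjointness of $\Vc^+\times\Vc^-$ from $\Vc^-\times\Vc^+$, so that no cancellation of wavefront singularities is possible, exactly as in the paper; (e) is the same liberation argument, and the positivity conditions you verify (both of them in the fermionic case) are indeed what existence of the quasifree state requires; and~(c) uses the same pushforward $S=[P,\chi]E_P$ composed with the localised two-point function and the same wavefront composition estimate. The paper simply packages that estimate into the reusable Theorem~\ref{thm:pullbacks}, resting on Lemma~\ref{lem:WFuoTT} and Lemma~\ref{lem:regularGreenHyp}, and defers~(c) to it; the delicate steps you flag — regularity of $S^\knl$, avoidance of the zero-section wings, tracking the covector flips — are precisely what those two lemmas supply.

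Part~(d) is the one place you take a genuinely different route. You deduce smoothness of $W^{(1)}$ from two Cauchy--Schwarz bounds against $W$ tested on the diagonal data $(\Cc f)^{\sharp,1}\otimes f^{\sharp,1}$ and $f^{\sharp,1}\otimes(\Cc f)^{\sharp,1}$, reading off $\WF(W^{(1)})\subset\Vc^-$ and $\WF(W^{(1)})\subset\Vc^+$ and intersecting. The paper instead passes to the GNS representation and applies Lemma~\ref{lem:Hilbert_tools}(c) to the two expressions $\ip{\varphi_\omega}{\pi(\Upsilon_P(\cdot))\varphi_\omega}$ and $\ip{\pi(\Upsilon_P(\Cc\cdot))\varphi_\omega}{\varphi_\omega}$, yielding the same two inclusions. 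Both are correct; yours is more elementary in that it avoids the Hilbert-space-valued distribution calculus of Section~4.3 entirely, though you should note explicitly that probing $\WF(W)$ along the one-parameter subcone $\{(-k,k)\}$ suffices — which it does, since a regular directed point of $W$ comes with an open conic neighbourhood, not just a single direction. The paper's Hilbert-space route is chosen because that machinery is needed in any case for Theorem~\ref{thm:HilbertHadamard} and Corollary~\ref{cor:manyHadamard}, so it comes at no extra cost there.
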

Part~(a) has a refined version using polarisation sets, namely
\begin{equation}
	\WF_\pol((\rho^{1/2}\otimes\rho^{1/2})W^\twopt) = \WF_\pol(E_{\rho^{1/2}P\rho^{-1/2}}^\knl)|_{\Vc^+\times\Vc^-}\cup 0,
\end{equation}
where the half-density factors $\rho^{1/2}$ arise because of the way that the polarisation set is defined. See Theorem~\ref{thm:Hadpropsaprime}. Regarding part~(e), the definition of a quasifree state is recalled in Section~\ref{sec:truncated}.
\begin{proof} 
	(a) In the bosonic case, we have $W-\adj{W}=\ii E_P$ and $\WF(\adj{W})\subset \Vc^-\times\Vc^+$. Then
	\begin{equation}
		\WF(W)=\WF(\ii E_P+\adj{W})\subset (\WF(E_P)\cup (\Vc^-\times \Vc^+))\cap \WF(W)\subset  \WF(E_P)\cap (\Vc^+\times\Vc^-)
	\end{equation}
	and also
	\begin{align}
		\WF(E_P)\cap (\Vc^+\times\Vc^-) &= \WF(-\ii (W-\adj{W}))\cap (\Vc^+\times\Vc^-) \nonumber \\
		& \subset (\WF(W)\cup \WF(\adj{W}))\cap (\Vc^+\times\Vc^-) = \WF(W).
	\end{align}
	For the fermionic case, we use $W+\adj{W}=\ii E_P\circ (1\otimes R)$ and 
	$\WF(E_P\circ (1\otimes R))=\WF(E_P)$, and otherwise apply the same argument.
	
	(b) For $j=1,2$, write $W_j^+=W_j$ and define $W_j^-=\adj{W}_j$. Then
	$\WF(W_1^\pm - W_2^\pm) \subset \Vc^\pm\times \Vc^\mp$. However,
	$W_j^+-W_j^-=\ii E_P$ and therefore $W_1^+-W_2^+=W_1^--W_2^-$, which implies that
	$\WF(W_1-W_2)\subset \WF(W_1^+-W_2^+)\cap \WF(W_1^--W_2^-)\subset (\Vc^+\times \Vc^- )\times(\Vc^-\cap \Vc^+)=\emptyset$, so $W_1-W_2$ is smooth.	
	For the fermionic case, one has $W_1^+-W_2^+=-(W_1^--W_2^-)$ and the argument goes through.
	
	(e) Any state on $\Yf(P)$ or $\Xf(P)$ has a `liberation', namely the quasifree state  with the same two-point function -- see e.g.,~\cite{Kay:1993}.
	
	We defer the rest of the proof: (c) is a special case of a more general result on pull-backs and push-forwards of states by $\GreenHyp$-morphisms, Theorem~\ref{thm:pullbacks} below. Part~(d) 
	will be proved as Corollary~\ref{cor:smooth1pt}.
\end{proof}
Theorem~\ref{thm:Hadamard_props}(b) permits the construction of normal-ordered quadratic operators such as ${:}\Upsilon_P^2(f){:}$ by point-splitting and regularisation relative to a reference $\Vc^+$-Hadamard state. Meanwhile, part~(c) asserts the preservation of Hadamard form, originally proved for scalar fields by PDE methods in~\cite{FullingSweenyWald:1978}. We will use this property in a deformation argument demonstrating the existence of $\Vc^+$-Hadamard states in one spacetime given their existence in others, in the context of theories specified by functorial equations of motion (see Corollary~\ref{cor:deformation} and Corollary~\ref{cor:bultrastatic}).
As will be seen in section~\ref{sec:truncated}, the $\Vc^+$-Hadamard condition implies constraints on the higher $n$-point functions, as was shown in~\cite{Sanders:2010} for the scalar field. Specifically, all the truncated $n$-point functions for $n\neq 2$ are smooth, and a bound can be given on $\WF(W_n)$
(Corollary~\ref{cor:truncated}).

To make Theorem~\ref{thm:Hadamard_props}(a) more explicit, it is necessary to compute $\WF(E_P)$ for a $\Vc^\pm$-decomposable \rfhgho\ $P$. As the example of a normally hyperbolic operator with respect to a `slow metric' shows, there is no simple general formula for $\WF(E_P)$ beyond the general statement that
\begin{equation}
	\WF(E_P)\subset ((\Vc^+\times \Vc^-)\times(\Vc^-\times \Vc^+))\cap (\Char P\times \Char P)
\end{equation}
already mentioned. However, in some interesting models, the operator $E_P$ takes the form $E_P=QE_KR$, where $K$ is normally hyperbolic and $Q$, $R$ are (pseudo)differential operators. 
This occurs for the Proca equation, as will be seen in Section~\ref{sec:Proca}, and also the Dirac equation.
If both $Q$ and $R$ are elliptic, or, more generally, noncharacteristic at $\Nc$, then the standard wavefront set calculus gives
\begin{equation}
	\WF(E_P) = \WF(E_K) = \Rc;
\end{equation}
otherwise, one can only conclude that $\WF(E_P)\subset \WF(E_K)=\Rc$ -- this situation occurs for both the Dirac and Proca fields. The following results can be applied in such circumstances. We first recall that -- like any normally hyperbolic operator -- $K$ can be put into the standard form
\begin{equation}
	K = \Box^B+V, \qquad \Box^B:=g^{\mu\nu}\nabla_\mu^{T^*M\otimes B}\nabla_\nu^B ,
\end{equation}
where $\nabla^B$ is the \emph{Weitzenb\"ock connection} $\nabla^B$ on $B$,
 $\nabla^{T^*M\otimes B} = \nabla\otimes 1+1\otimes\nabla^B$ is its natural extension to a connection on $T^*M\otimes B$, with $\nabla$ the Levi-Civita derivative as before, and $V\in \Gamma^\infty(\End(B))$. See e.g., Proposition~3.1 of~\cite{BaumKath:1996}, Lemma~1.5.5 of~\cite{BarGinouxPfaffle} or section 1 of~\cite{IslamStrohmaier:2020}.  
\begin{thm}\label{thm:WFQEKR}
	Let $K$ be normally hyperbolic on $\Gamma^\infty(B)$ with associated Weitzenb\"ock connection $\nabla^B$.
	For each $(x,k),(x',k')\in T^*M$ with $(x,k)\sim (x',k')$ and witnessing null geodesic segment $\gamma$, let  
	$\Pi_{x,k}^{x',k'}: B_{x'}\to B_{x}$ be the $\nabla^B$-parallel transport operator from $x'$ to $x$ along $\gamma$. Let $Q:\Gamma^\infty(B)\to\Gamma^\infty(\hat{B})$ and $R:\Gamma^\infty(\tilde{B})\to \Gamma^\infty(B)$ 
	be partial differential operators of finite order with principal symbols $q$ and $r$ respectively,
	where $\hat{B}\to M$ and
	$\tilde{B}\to M$ are finite-rank complex vector bundles. 
	
	If for all $(x,k;x',-k')\in \Rc$ one has
	\begin{equation}\label{eq:WFthm}
		q(x,k)\circ \Pi_{x,k}^{x',k'}\circ r(x',k')\neq 0
	\end{equation}
	then $\WF(QE_{K}R )=\Rc$. This holds in particular if
	$Q=\id$ and $r$ is nonvanishing on $\Nc$, or if $R=\id$ and $q$ is nonvanishing on $\Nc$
	(however $r$ resp., $q$ could have nontrivial kernel on $\Nc$).
\end{thm}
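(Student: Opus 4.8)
The plan is to pin down $\WF(QE_K R)$ by tracking the leading (polarisation) symbol of $E_K$ through $Q$ and $R$. I would view $QE_KR$ as a continuous linear map $\Gamma_0^\infty(\tilde B)\to\Gamma^\infty(\hat B)$; as noted in Section~\ref{sec:WF}, its wavefront set is the same whether one reads it as a two-point or a kernel distribution. The inclusion $\WF(QE_KR)\subset\Rc$ is then immediate: regarding the kernel $(QE_KR)^\knl$ as the result of applying to $E_K^\knl$ a partial differential operator on $M\times M$ — namely $Q$ acting in the first factor and the formal dual of $R$ in the second — the first inclusion in~\eqref{eq:WFbounds} together with $\WF(E_K)=\Rc$ from Theorem~\ref{thm:WFEPs} gives $\WF(QE_KR)\subset\WF(E_K)=\Rc$. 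The content is in the reverse inclusion, and here the crude bound from the second inclusion in~\eqref{eq:WFbounds} is not enough, precisely because $Q$ and the dual of $R$ may be characteristic on $\Rc$ (the situation flagged in the statement); one has to see through the characteristic set.

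For the reverse inclusion I would use the polarisation set. Building on the computation underlying Theorem~\ref{thm:WFEPs} (Section~3 of~\cite{Fewster:2025b}), after dressing the relevant kernels with half-density factors so that polarisation sets take values in $B\boxtimes B^*$, one has that $\WF_\pol(E_K^\knl)$ is exactly $\Rc$ with one-dimensional fibres: over a point $(x,k;x',-k')\in\Rc$ witnessed by a null geodesic segment $\gamma$, the fibre is $\CC^\times\cdot\Pi_{x,k}^{x',k'}$, with $\Pi_{x,k}^{x',k'}\in\Hom(B_{x'},B_x)$ the $\nabla^B$-parallel transport along $\gamma$. The elementary fact I would then use is: if $A$ is a matrix of partial differential operators with homogeneous principal symbol $a$ and $(\xi_0;z_0)\in\WF_\pol(u)$ with $a(\xi_0)z_0\neq 0$, then $(\xi_0;a(\xi_0)z_0)\in\WF_\pol(Au)$. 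This follows straight from the definition of the polarisation set: reduce $A$ to order zero with an elliptic scalar factor, and note that any zeroth-order matrix operator killing $Au$ microlocally, composed with $A$, gives a zeroth-order operator killing $u$ microlocally, forcing the relevant symbol to annihilate $z_0$; since the vectors $a(\xi_0)z_0$ obtained this way are all that can be annihilated, $a(\xi_0)z_0$ itself lies in the fibre of $\WF_\pol(Au)$, and being nonzero it forces $\xi_0\in\WF(Au)$. Applying this with $u=E_K^\knl$, $z_0=\Pi_{x,k}^{x',k'}$ and $A$ the operator "$Q$ in the first variable, dual of $R$ in the second", whose principal symbol at $(x,k;x',-k')$ acts on $\Pi_{x,k}^{x',k'}$ to give a nonzero multiple of $q(x,k)\circ\Pi_{x,k}^{x',k'}\circ r(x',k')$, hypothesis~\eqref{eq:WFthm} makes this vector nonzero, so $(x,k;x',-k')$ lies in the base projection of $\WF_\pol(QE_KR)$ and hence in $\WF(QE_KR)$. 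As the point of $\Rc$ was arbitrary this gives $\Rc\subset\WF(QE_KR)$, and combined with the first paragraph, $\WF(QE_KR)=\Rc$.

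The two special cases then drop out: for $(x,k;x',-k')\in\Rc$ one has $k,k'\in\Nc$, so $q(x,k)$ (resp. $r(x',k')$) is nonzero by assumption, and since $\Pi_{x,k}^{x',k'}$ is a fibrewise isomorphism, $q(x,k)\circ\Pi_{x,k}^{x',k'}$ (resp. $\Pi_{x,k}^{x',k'}\circ r(x',k')$) is again nonzero; with $R=\id$ (resp. $Q=\id$), hypothesis~\eqref{eq:WFthm} holds throughout $\Rc$, and it is irrelevant whether $r$ or $q$ themselves have kernels there. I expect the main obstacle to be bookkeeping rather than anything conceptual: getting the half-density dressing right so the polarisation sets genuinely live in $B\boxtimes B^*$ (and in $\hat B\boxtimes\tilde B^*$ after applying $Q$ and $R$), identifying the formal dual of $R$ and the accompanying sign relating $\sadj{r}(x',-k')$ to $r(x',k')$ — scalar ambiguities that do not touch the nonvanishing of $q\circ\Pi\circ r$ — and invoking the polarisation-set calculation of~\cite{Fewster:2025b} in precisely the form needed here, rather than just its wavefront-set corollary.
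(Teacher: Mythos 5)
Your argument is correct and follows essentially the same polarisation-set route as the paper, which simply cites the half-density-dressed statement as Corollary~1.2 of~\cite{Fewster:2025b} (together with the observation that $\tilde{Q}E_{\tilde{K}}\tilde{R}=\widetilde{QE_KR}$ and that conjugation by $\rho^{1/2}$ leaves wavefront sets unchanged) rather than reconstructing the Dencker-type propagation through $Q$ and the dual of $R$ directly. The bookkeeping you flag at the end --- the half-density dressing and the dual-symbol sign conventions for $R$ --- is precisely what that one substitution handles in the paper's short proof.
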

\begin{proof}
	This is essentially proved as Corollary~1.2 of~\cite{Fewster:2025b}, following a computation of the polarisation set~\cite{Dencker:1982} of kernel distributions of Green operators for normally hyperbolic equations. More precisely, under the hypotheses stated here,
	Corollary~1.2 of~\cite{Fewster:2025b}
	 shows that 
	$\WF(\tilde{Q} E_{\tilde{K}} \tilde{R}) = \Rc$, 
	where $\tilde{Z} = \rho^{1/2}Z\rho^{-1/2}$ for any operator $Z$, where $\rho=(-g)^{1/2}$ is the metric density (ref.~\cite{Fewster:2025b} also allows $Q$ and $R$ to be pseudodifferential operators). Because $\tilde{Q} E_{\tilde{K}} \tilde{R} = \widetilde{QE_KR}$ and 
	the wavefront set is unchanged under multiplication by nonvanishing smooth factors, the required result follows.	 
\end{proof}
    
\begin{cor}\label{cor:WFQEKR}
	With the notation and assumptions of Theorem~\ref{thm:WFQEKR}, if a \rfhgho\ (resp., fermionic \rfhgho) $P$ has Green operator
	$E_P=QE_KR$, then $P$ is $\Nc^\pm$-decomposable. If $W$ is the two-point function of a $\Nc^+$-Hadamard state on $\Yf(P)$ (resp., $\Xf(P)$) then its wavefront set is
	\begin{equation}
		\WF(W) = \Rc^\Had.
	\end{equation}
\end{cor}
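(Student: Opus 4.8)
The plan is to read both assertions off from Theorem~\ref{thm:WFQEKR} and Theorem~\ref{thm:Hadamard_props}(a), the only genuine content being an elementary identification of conic sets.

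\emph{Decomposability.} Since the hypotheses of Theorem~\ref{thm:WFQEKR} are in force, that theorem gives $\WF(E_P)=\WF(QE_KR)=\Rc$. The bundle $\Nc^+$ of nonzero future-directed null covectors is conic and relatively closed in $\dot{T}^*M$ and satisfies $\Nc^+\cap\Nc^-=\emptyset$, so it is an admissible choice of $\Vc^+$ in Definition~\ref{def:Hadamard}. It then suffices to note that $\Rc\subset(\Nc^+\times\Nc^-)\cup(\Nc^-\times\Nc^+)$. Indeed, a general element of $\Rc$ has the form $(x,k;x',\ell)$ with $(x,k)\sim(x',-\ell)$; then $k$ and $-\ell$ are nonzero null, with $k^\sharp$ and $(-\ell)^\sharp$ tangent to a common null geodesic and related by Levi--Civita parallel transport, which preserves the time-orientation of a causal vector. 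Hence $k$ and $-\ell$ share a time-orientation, i.e.\ $k$ and $\ell$ have opposite time-orientations, so $(x,k;x',\ell)$ lies in $\Nc^+\times\Nc^-$ or $\Nc^-\times\Nc^+$. This establishes $\Nc^\pm$-decomposability of $P$; in the fermionic case the structure $R$ plays no role here, decomposability being a property of the underlying semi-Green-hyperbolic operator.

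\emph{Wavefront set of $W$.} Applying Theorem~\ref{thm:Hadamard_props}(a) with $\Vc^+=\Nc^+$ gives $\WF(W)=(\Nc^+\times\Nc^-)\cap\WF(E_P)=(\Nc^+\times\Nc^-)\cap\Rc$, using $\WF(E_P)=\Rc$ from the previous step; for the fermionic algebra $\Xf(P)$ the same part of that theorem applies, via $W+\adj{W}=\ii E_P\circ(1\otimes R)$ and $\WF(E_P\circ(1\otimes R))=\WF(E_P)$. It remains to check $(\Nc^+\times\Nc^-)\cap\Rc=\Rc^\Had$. By the description of elements of $\Rc$ given above, $(x,k;x',\ell)\in\Rc$ satisfies $k\in\Nc^+$ if and only if $\ell\in\Nc^-$; hence $(\Nc^+\times\Nc^-)\cap\Rc=\{(x,k;x',\ell)\in\Rc:k\in\Nc^+\}$, which is precisely $\Rc^\Had$ as defined in~\eqref{eq:RcHad}.

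I do not expect a real obstacle: the corollary simply repackages earlier results. The only point needing care is the standard geometric fact invoked twice above, namely that parallel transport along a null geodesic preserves the time-orientation of the transported (co)vector; this is what makes the equivalence relation $\sim$ compatible with the disjoint decomposition $\Nc=\Nc^+\cup\Nc^-$ into future and past parts, and it is already built into the scalar Hadamard theory.
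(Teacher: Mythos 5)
Your proof is correct and follows exactly the paper's route, which simply combines Theorem~\ref{thm:WFQEKR} with Theorem~\ref{thm:Hadamard_props}(a). You have merely supplied the (standard) verification that $\Rc\subset(\Nc^+\times\Nc^-)\cup(\Nc^-\times\Nc^+)$ and that $(\Nc^+\times\Nc^-)\cap\Rc=\Rc^{\Had}$, which the paper leaves implicit.
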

\begin{proof}
	Combine Theorem~\ref{thm:Hadamard_props}(a) and Theorem~\ref{thm:WFQEKR}.
\end{proof}

We obtain the existence of Hadamard states for definite-type Dirac-type operators as a companion result to Theorem~\ref{thm:Hadamard_norm_hyp}.
\begin{thm}\label{thm:Hadamard_Dirac_type}
	If $D$ is a Dirac-type operator then its
	corresponding fermionic \rfhgho\ $(P,R)$ is $\Nc^\pm$-decomposable;
	if $D$ has definite type then $\Xf(P)$ admits $\Nc^+$-Hadamard states. 
\end{thm}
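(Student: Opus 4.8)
The plan is to establish the two claims separately. For $\Nc^\pm$-decomposability, note first that $\Nc^+$ is admissible in the sense of Definition~\ref{def:Hadamard}: it is conic, relatively closed in $\dot{T}^*M$, and $\Nc^-=-\Nc^+$ with $\Nc^+\cap\Nc^-=\emptyset$. By~\eqref{eq:EP_for_Diractype}, together with $E_{P\oplus P'}^\pm=E_P^\pm\oplus E_{P'}^\pm$, the advanced-minus-retarded operator of $P=D\oplus\sadj{D}$ is block diagonal, $E_P=E_D\oplus E_{\sadj{D}}$ with $E_D=-DE_{-D^2}$ and $E_{\sadj{D}}=-\sadj{D}E_{-\sadj{D}^2}$, where $-D^2$ and $-\sadj{D}^2$ are normally hyperbolic. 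Since applying a differential operator cannot enlarge a wavefront set (the first inclusion in~\eqref{eq:WFbounds}) and the off-diagonal components of $E_P$ vanish, Theorem~\ref{thm:WFEPs} gives $\WF(E_P)=\WF(E_D)\cup\WF(E_{\sadj{D}})\subset\WF(E_{-D^2})\cup\WF(E_{-\sadj{D}^2})=\Rc$. Because parallel transport along a null geodesic preserves the time-orientation of a null covector, $\Rc\subset(\Nc^+\times\Nc^-)\cup(\Nc^-\times\Nc^+)$ (as already used in the proof of Theorem~\ref{thm:Hadamard_norm_hyp}), so $P$ is $\Nc^\pm$-decomposable.

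Suppose now that $D$ has definite type. As established in Section~\ref{sec:Dirac-type}, $(P,R)$ is then a fermionic \rfhgho\ of definite type, i.e.\ $(P,R)\in\fGreenHyp^+$, so $\Xf(P)$ admits states; by Theorem~\ref{thm:Hadamard_props}(e) it then suffices to exhibit a single quasifree $\Nc^+$-Hadamard state. Concretely, one needs a two-point distribution $W\in\DD'(B\boxtimes B)$ that is a $P$-bisolution, obeys the anticommutation identity $W+\adj{W}=\ii E_P^\twopt\circ(1\otimes R^{\sharp,1})$, is positive in the sense required of a state's two-point function, and satisfies $\WF(W)\subset\Nc^+\times\Nc^-$. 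The existence of a bisolution for the Dirac-type operator with this prescribed wavefront set is the Dirac analogue of the Hadamard/Feynman parametrix, provided by the construction of Islam and Strohmaier~\cite{IslamStrohmaier:2020} (see also~\cite{FewsterStrohmaier:2025}); the positivity making the associated gauge-invariant quasifree state a genuine state is precisely the definite-type positivity of Lemma~\ref{lem:Dirac_positivity}. With such a $W$ in hand, Theorem~\ref{thm:Hadamard_props}(a) gives $\WF(W)=(\Nc^+\times\Nc^-)\cap\WF(E_P)$. Moreover $\sigma_D(x,k)=k_\mu(V_x)^\mu$ vanishes nowhere on $\dot{T}^*M$ — were $\sigma_D(x,k)=0$ for some $k\neq0$, then $\sigma_D(x,k+\ell)^2=\sigma_D(x,\ell)^2$ for all $\ell$ would force $g^{-1}_x(k,k)+2g^{-1}_x(k,\ell)=0$ for every $\ell$, hence $k=0$ — so Theorem~\ref{thm:WFQEKR}, applied to each block with $Q=-D$, resp.\ $-\sadj{D}$, $K=-D^2$, resp.\ $-\sadj{D}^2$, and $R=\id$, sharpens this to $\WF(E_P)=\Rc$, whence $\WF(W)=\Rc^\Had$.

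The genuinely analytic step, and hence the main obstacle, is producing the Dirac bisolution with wavefront set inside $\Nc^+\times\Nc^-$; this is the content of~\cite{IslamStrohmaier:2020,FewsterStrohmaier:2025} and is imported rather than reproved. One could alternatively build $W$ by hand from a smooth spacelike Cauchy surface $\Sigma$ witnessing the definite type, by restricting solutions to $\Sigma$, completing in the definite-type inner product of Lemma~\ref{lem:Dirac_positivity}, and composing with a positive-frequency projection on the resulting one-particle space; but controlling the wavefront set of that projection is the same hard step in another guise. Everything else — admissibility of $\Nc^+$, the reduction to the quasifree case, the bisolution and anticommutation properties, the positivity furnished by definite type, and the identification of $\WF(W)$ — is routine given the framework already developed.
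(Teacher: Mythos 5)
Your proposal is correct and follows essentially the same route as the paper: decomposability is obtained from the factorisation~\eqref{eq:EP_for_Diractype} of $E_P$ through the normally hyperbolic operators $-D^2$ and $-\sadj{D}^2$ (the paper packages this as Corollary~\ref{cor:WFQEKR}, whose hypotheses you verify by hand, including the correct observation that $\sigma_D$ is nonvanishing on $\dot{T}^*M$), and existence of $\Nc^+$-Hadamard states is imported from Theorem~1.5 of~\cite{IslamStrohmaier:2020} together with~\cite{FewsterStrohmaier:2025}, exactly as in the paper. The extra detail you supply (reduction to quasifree states, the role of Lemma~\ref{lem:Dirac_positivity}, and the sharpening $\WF(W)=\Rc^\Had$) is consistent with the surrounding discussion in Section~\ref{sec:Hadcond} but not required for the statement itself.
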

\begin{proof} 
	$\Nc^\pm$-decomposability follows from Corollary~\ref{cor:WFQEKR} and formula~\eqref{eq:EP_for_Diractype}	for $E_P$, and the
	statement on the existence of Hadamard states is proved as 
	Corollary 11 of~\cite{FewsterStrohmaier:2025}, making use of 
	the Feynman propagators obtained in~\cite{FewsterStrohmaier:2025}. 
\end{proof} 

As a concrete example, the Green operator for the `doubled' Dirac operator
\begin{equation}
	P = \begin{pmatrix}
		-\ii\slashed{\nabla}+m & 0 \\ 0 & 	\ii\slashed{\nabla}+m
	\end{pmatrix}
\end{equation}
in four spacetime dimensions is given by $E_P=Q E_K$, where
\begin{equation}
	Q = \begin{pmatrix}
		\ii\slashed{\nabla}+m & 0 \\ 0 & -\ii\slashed{\nabla}+m
	\end{pmatrix}, \qquad
	K = \begin{pmatrix}
		K_s & 0 \\ 0 & K_c
	\end{pmatrix}
\end{equation}
on $DM\oplus D^*M$, in which $K_s$ and $K_c$ are the spinorial and cospinorial Klein--Gordon operators $(\slashed{\nabla}^2+m^2$ on $DM$ and $D^*M$ respectively. Evidently $K$ is normally hyperbolic, while $Q$ has symbol $q(x,k) = -\slashed{k} \oplus \slashed{k}$, which is characteristic at $\Nc$ though nonzero (recall that $(\det \slashed{v})^2= \det (\slashed{v}^2) = g^{-1}(v,v)^4$ for any covector $v$). We deduce that
$P$ is $\Nc^\pm$-decomposable and that $\Nc^+$-Hadamard states of the Dirac theory have two-point functions (of the `doubled' $\Xi$ fields) with wavefront set $\Rc^\Had$. This coincides (modulo conventions) with the standard microlocal definition of Hadamard states for Dirac fields~\cite{Hollands:2001,Sanders_dirac:2010}. 
The example of the Proca field will be studied in more detail in Section~\ref{sec:Proca}.

\subsection{Hilbert space formulation} 

Let $\Ac\in\Alg$ be any unital $*$-algebra. A Hilbert space representation of $\Ac$ is a triple
$(\Hf,\pi,D)$ where $\Hf$ is a Hilbert space with dense subspace $D\subset \Hf$ and linear map $\pi:\Ac\to \Lin(D)$, so that  (a)
$\pi(AB)\psi=\pi(A)\pi(B)\psi$ for all $A,B\in\Ac$ and $\psi\in D$, (b) $\pi(1)=\id_\Hf|_D$, and (c) for each $A\in\Ac$, $\pi(A)$ has an adjoint $\pi(A)^*$ whose domain contains $D$ and obeys $\pi(A)^*\psi=\pi(A^*)\psi$ for all $\psi\in D$. By the GNS theorem (see section III.2.2 in~\cite{Haag}), every state $\omega$ on $\Ac$ determines a Hilbert space representation of this type, in which there is a unit vector $\varphi_\omega\in D$ such that
$\omega(A)=\ip{\varphi_\omega}{\pi(A)\varphi_\omega}$ and $D=\pi(\Ac)\varphi_\omega$. In this case, the vector $\varphi_\omega$ is called the GNS vector of $\omega$ and we denote the GNS representation by the quadruple $(\Hf,\pi,D,\varphi_\omega)$ referring to $\Hf$ and $D$ as the GNS Hilbert space and GNS domain respectively.

Following~\cite{StVeWo:2002}, the space of $\Hf$-valued distributions, $\DD'(M;\Hf)$, consists of linear and weakly (hence also strongly) continuous maps $u:\Gamma_0^\infty(\Omega)\to \Hf$. The wavefront set $\WF(u)$ can be defined exactly as for scalar-valued distributions, but replacing the absolute value $|u(\chi f)|$ in~\eqref{eq:regdirndef} by 
the $\Hf$ norm, $\|u(\chi f)\|_\Hf$. The definition is extended to $\Hf$-valued distributional sections of a finite-rank bundle $B$ as in Section~\ref{sec:WF}. First, we give some useful results. 
\begin{lemma}\label{lem:Hilbert_tools}
	Let $V$ and $W$ be Hilbert space valued distributions in $\DD'(B;\Hf)$, 
	and let $\Cc$ be a continuous antilinear map of $B^*\otimes\Omega$. 
	
	(a) If
	$U:\Gamma_0^\infty(B^*\otimes\Omega)\to\Hf$ is linear and obeys  
	$\|U(f)\|_\Hf\le \|V(f)\|_\Hf$ for all $f\in\Gamma_0^\infty(B^*\otimes\Omega)$
	then $U\in\DD'(B;\Hf)$ and $\WF(U)\subset \WF(V)$.
	
	(b) If
	$U:\Gamma_0^\infty(B^*\otimes\Omega)\to\Hf$ is linear and obeys
	$\|U(f)\|_\Hf^2\le \|V(f)\|_\Hf\|W(f)\|_\Hf$ for all $f\in\Gamma_0^\infty(B^*\otimes\Omega)$, then 
	 $U\in\DD'(B;\Hf)$ and
	 $\WF(U)\subset \WF(V)\cap\WF(W)$.
	 
	(c) For any fixed $\psi\in\Hf$, the maps $f\mapsto \ip{\psi}{V(f)}_\Hf$ and
	$f\mapsto\ip{V(\Cc f)}{\psi}_\Hf$ are distributions in $\DD'(B)$ with 
	$\WF(\ip{\psi}{V(\cdot)}_\Hf)\subset \WF(V)$, $\WF(\ip{V(\Cc \cdot)}{\psi}_\Hf)\subset -\WF(V)$.
	
	 (d) $Z(f,h)=\ip{V(\Cc f)}{W(h)}_\Hf$ defines
	$Z\in\DD'(B\boxtimes B)$ with $\WF(Z)\subset (-\WF_0(V))\times\WF_0(W)$.
\end{lemma}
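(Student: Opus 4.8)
The plan rests on the standard characterisation of the wavefront set by oscillatory testing: $(y,l)$ is a regular directed point of a (possibly $\Hf$-valued) distribution $u$ precisely when there are a smooth compactly supported density $\chi$ with $\chi(y)\neq0$ and an open cone $\Gamma\ni l$ on which $\|u(\chi\ee_k S)\|_\Hf$ decays faster than any power of $\|k\|$, for every smooth section $S$; moreover every such $u$ obeys a polynomial bound $\|u(\chi\ee_k S)\|_\Hf\le C(1+\|k\|)^{M}$, with $C,M$ depending on $\chi,S$, coming from the continuity seminorm estimates. Part~(a) is then immediate: the hypothesis $\|U(f)\|_\Hf\le\|V(f)\|_\Hf$ together with the seminorm estimates for $V$ gives $U\in\DD'(B;\Hf)$, and applying the hypothesis to $f=\chi\ee_kS$ shows every regular directed point of $V$ is one of $U$, so $\WF(U)\subset\WF(V)$. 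For part~(b), continuity follows from $\|U(f)\|_\Hf^2\le\|V(f)\|_\Hf\|W(f)\|_\Hf$ and the seminorm estimates; and if $(y,l)\notin\WF(V)$, choosing $\chi,\Gamma\ni l$ on which $\|V(\chi\ee_kS)\|_\Hf$ decays rapidly and combining with the polynomial bound for $\|W(\chi\ee_kS)\|_\Hf$ forces $\|U(\chi\ee_kS)\|_\Hf$ to decay rapidly on $\Gamma$; the roles of $V$ and $W$ being symmetric, $\WF(U)\subset\WF(V)\cap\WF(W)$. (These two arguments apply verbatim when $\Hf=\CC$.)

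For part~(c), the map $f\mapsto\ip{\psi}{V(f)}_\Hf$ is linear and satisfies $|\ip{\psi}{V(f)}_\Hf|\le\|\psi\|\,\|V(f)\|_\Hf$ by Cauchy--Schwarz, so part~(a) (which plainly also holds with a multiplicative constant) gives $\WF(\ip{\psi}{V(\cdot)}_\Hf)\subset\WF(V)$. For the conjugated version, set $\tilde V:=V\circ\Cc$, i.e.\ $\tilde V(f)=V(\Cc f)$, an \emph{antilinear} weakly continuous map into $\Hf$. Since $\Cc$ is fibrewise antilinear and covers $\id_M$, one has $\Cc(\chi\ee_kS)=\ee_{-k}\,\Cc(\chi S)$ with $\Cc(\chi S)$ a smooth compactly supported section that does not vanish at $y$; hence the rapid decay of $\|\tilde V(\chi\ee_kS)\|_\Hf$ on a cone $\Gamma$ is equivalent to that of $\|V(\ee_{k'}\,\Cc(\chi S))\|_\Hf$ on $-\Gamma$, so $(y,l)$ is a regular directed point of $\tilde V$ if and only if $(y,-l)$ is a regular directed point of $V$; in other words the wavefront set of the antilinear map $\tilde V$, defined through the same decay condition, equals $-\WF(V)$. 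Now $f\mapsto\ip{V(\Cc f)}{\psi}_\Hf=\ip{\tilde V(f)}{\psi}_\Hf$ is linear (antilinear $\tilde V$ composed with the antilinear first slot of the inner product) and bounded by $\|\psi\|\,\|\tilde V(f)\|_\Hf$, so the argument of part~(a), with $\tilde V$ in place of $V$, yields $\WF(\ip{V(\Cc\cdot)}{\psi}_\Hf)\subset -\WF(V)$.

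For part~(d), the map $Z(f,h)=\ip{V(\Cc f)}{W(h)}_\Hf=\ip{\tilde V(f)}{W(h)}_\Hf$ is bilinear ($\tilde V$ is antilinear in $f$, and is composed with the antilinear first slot), and Cauchy--Schwarz with the seminorm estimates for $V,W$ (and continuity of $\Cc$) gives, for $f,h$ supported in compacta $K_1,K_2$, a bound $|Z(f,h)|\le C\,\|f\|_{C^{N_1}(K_1)}\,\|h\|_{C^{N_2}(K_2)}$; a standard argument (cf.~\cite{StVeWo:2002}) then shows $Z\in\DD'(B\boxtimes B)$. For the wavefront bound, take $(y,l;y',l')\in\dot{T}^*(M\times M)$ not lying in $(-\WF_0(V))\times\WF_0(W)$. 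Then either $(y,l)\notin-\WF_0(V)$ --- which forces $l\neq0$ and $(y,-l)\notin\WF(V)$ --- or $(y',l')\notin\WF_0(W)$ --- which forces $l'\neq0$ and $(y',l')\notin\WF(W)$. In the first case, by the equivalence established for part~(c), $(y,l)$ is a regular directed point of $\tilde V$, so there are $\chi_1$ and a cone $\Gamma_1\ni l$ on which $\|\tilde V(\chi_1\ee_kS_1)\|_\Hf$ decays rapidly for every $S_1$ of a local frame; pick any $\chi_2$ with $\chi_2(y')\neq0$, so that $\|W(\chi_2\ee_{k'}S_2)\|_\Hf\le C(1+\|k'\|)^{M_0}$. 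Because $l\neq0$ we may pass to a conic neighbourhood of $(l,l')$ in $\dot{T}^*(M\times M)$ on which $k\in\Gamma_1$ and $\|k\|\ge c\|(k,k')\|$ for some $c>0$, whence $\|k'\|\le\|(k,k')\|\le c^{-1}\|k\|$; then Cauchy--Schwarz gives
\[
 |Z(\chi_1\ee_kS_1\otimes\chi_2\ee_{k'}S_2)|\le\|\tilde V(\chi_1\ee_kS_1)\|_\Hf\,\|W(\chi_2\ee_{k'}S_2)\|_\Hf\le C_N(1+\|k\|)^{-N}\,C(1+c^{-1}\|k\|)^{M_0},
\]
which for $N>M_0$ decays faster than any power of $\|(k,k')\|$. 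Since $\chi_1\otimes\chi_2$ is an admissible cutoff on $M\times M$ and the product sections $S_1\otimes S_2$ form a local frame there, this shows $(y,l;y',l')\notin\WF(Z)$. The second case is entirely symmetric, with the roles of $\tilde V$ and $W$, and of the two tensor factors, interchanged. Hence $\WF(Z)\subset(-\WF_0(V))\times\WF_0(W)$.

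The only genuinely delicate steps are in part~(d): deducing that the separately estimated bilinear form $Z$ is a bona fide distribution on $M\times M$ (a bilinear Schwartz-kernel argument), and the cone bookkeeping near directions in which one of $l,l'$ vanishes --- it is precisely because $\|k\|$ (resp.\ $\|k'\|$) may control $\|(k,k')\|$ while the other component degenerates that the zero section has to be adjoined on both factors, which is why $\WF_0$ rather than $\WF$ appears in the conclusion. Everything else is bookkeeping with norm decay estimates.
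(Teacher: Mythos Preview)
Your proof is correct and follows the same strategy as the paper. For parts~(a) and~(b) your argument matches the paper's essentially word for word: continuity from the seminorm estimates, then rapid decay at regular directed points of $V$ (times a polynomial bound for $W$ in~(b)) forces rapid decay for $U$. For parts~(c) and~(d) the paper does not spell out the argument but simply cites Proposition~3.2 of~\cite{FewVer-Passivity} and Proposition~2.2 of~\cite{StVeWo:2002}, noting that the bundle version follows from the scalar case via local frames; your explicit treatment---the sign flip from $\Cc(\chi\ee_kS)=\ee_{-k}\Cc(\chi S)$, and the cone bookkeeping in~(d) that explains why the zero sections must be adjoined---is precisely the direct proof those references contain, so there is no methodological difference.
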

\begin{proof}
	(a,b) First, the bounds on $U$ ensure that for any compact $K\subset M$,
	there is a constant $C_K$ and seminorm $|\cdot|_K$ of $\Gamma_K^\infty(B^*\otimes \Omega)$ so that $\|U(f)\|\le C_K |f|_K$ for all $f\in\Gamma_K^\infty(B^*\otimes \Omega)$, giving $U\in\DD'(B;\Hf)$ in 
	either case. Next, suppose $(x,k)$ is a regular directed point for $V$ and
	let $\chi$ be any smooth density compactly supported in a coordinate chart domain about $x$, with associated coordinate phase functions $\ee_k$.
	Then $\|V(S\chi \ee_l)\|$ decays rapidly as $l\to\infty$ in a conic neighbourhood of $k$, while
	$\|W(S\chi \ee_l)\|$ is polynomially bounded in $l$ by considering the seminorms on $\Gamma_0^\infty(B^*\otimes \Omega)$. The bounds on $U$ imply that $(x,k)$ is 
	a regular directed point for $U$, using for (b) the fact that rapid decay is preserved under multiplication by a polynomially bounded function. Taking complements, $\WF(U)\subset\WF(V)$ which proves (a), and also (b) on allowing the the roles of $V$ and $W$ to be interchanged.
	
	Parts~(c,d) are straightforward generalisations to the bundle context of parts (ii) and~(iii) of Proposition 3.2 in~\cite{FewVer-Passivity}
	(see also Proposition 2.2 in~\cite{StVeWo:2002} for (d)), and can be derived from the scalar versions by considering local frames.
\end{proof}

Theorem~6.1 in~\cite{StVeWo:2002} and the corresponding result Theorem~4.11 in~\cite{Sanders_dirac:2010} have the following generalisation -- we give the proof for completeness.
\begin{thm}\label{thm:HilbertHadamard}
	Let $P$ be a $\Vc^\pm$-decomposable \rfhgho\ and let $(\Hf,\pi,D)$ be a Hilbert space representation of $\Yf(P)$. For any normalised $\psi\in D$, define a state $\omega_\psi$ on $\Yf(P)$ by $\omega_\psi(A)=\ip{\psi}{\pi(A)\psi}_\Hf$. Suppose $\omega_\psi$ has distributional $n$-point functions. Then $\pi(\Upsilon_P(\cdot))\psi\in\DD'(B;\Hf)$, and
	$\omega_\psi$ is $\Vc^+$-Hadamard if and only if 
	\begin{equation}\label{eq:HilbertHadamard}
		\WF(\pi(\Upsilon_P(\cdot))\psi)\subset \Vc^-.
	\end{equation} 
	In particular, if $\omega$ is a $\Vc^+$-Hadamard state of $\Yf(P)$ then~\eqref{eq:HilbertHadamard} holds in the GNS representation of $\omega$, with $\psi$ given by the GNS vector of $\omega$. These results are also valid for fermionic \rfhgho's, replacing
	$\Yf(P)$ by $\Xf(P)$ and $\Upsilon_P$ by $\Xi_P$.
\end{thm}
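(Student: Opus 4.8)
The plan is to adapt the Hilbert-space characterisation of the Hadamard condition due to Strohmaier--Verch--Wollenberg~\cite{StVeWo:2002} and Sanders~\cite{Sanders_dirac:2010}, using the tools assembled in Lemma~\ref{lem:Hilbert_tools}. Treat the bosonic case first and set $V(f^{\sharp,1}):=\pi(\Upsilon_P(f))\psi$ for $f\in\Gamma_0^\infty(B)$, so that $V$ is the object $\pi(\Upsilon_P(\cdot))\psi$ of the statement. Properties~(a) and~(c) of the representation together with Y2 give
\[
	\ip{V(f^{\sharp,1})}{V(h^{\sharp,1})}_\Hf=\omega_\psi(\Upsilon_P(\Cc f)\Upsilon_P(h))=W\big((\Cc f)^{\sharp,1}\otimes h^{\sharp,1}\big),\qquad f,h\in\Gamma_0^\infty(B),
\]
and in particular $\|V(f^{\sharp,1})\|_\Hf^2=W((\Cc f)^{\sharp,1}\otimes f^{\sharp,1})$. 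Since $\omega_\psi$ has distributional $n$-point functions, $W\in\DD'(B\boxtimes B)$, and this identity yields a bound $\|V(f^{\sharp,1})\|_\Hf\le C_K p_K(f)$ by a continuous seminorm $p_K$ on each $\Gamma_K^\infty(B)$, so $V\in\DD'(B;\Hf)$ (linearity being clear). Writing $\hat{\Cc}$ for the continuous antilinear bundle map of $B^*\otimes\Omega$ with $\hat{\Cc}(f^{\sharp,1})=(\Cc f)^{\sharp,1}$ (an involution, since $\Cc$ is), the displayed identity becomes $W(g_1\otimes g_2)=\ip{V(\hat{\Cc}g_1)}{V(g_2)}_\Hf$.

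For the direction $\WF(V)\subset\Vc^-\Rightarrow\omega_\psi$ is $\Vc^+$-Hadamard, apply Lemma~\ref{lem:Hilbert_tools}(d) with both Hilbert-space distributions taken to be $V$ and antilinear map $\hat{\Cc}$: this gives $\WF(W)\subset(-\WF_0(V))\times\WF_0(V)\subset\Vc^+_0\times\Vc^-_0$, using $\Vc^+=-\Vc^-$ and $0\in\Vc^\pm_0$. Removing the zero-section components is the step that genuinely uses $\Vc^\pm$-decomposability (Definition~\ref{def:Hadamard}): the CCR gives $W=\adj{W}+\ii E_P^\twopt$, and $\WF(E_P^\twopt)=\WF(E_P)\subset(\Vc^+\times\Vc^-)\cup(\Vc^-\times\Vc^+)$ contains no element with a vanishing covector because $0\notin\Vc^\pm$. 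Hence a hypothetical $(x_1,0;x_2,\eta_2)\in\WF(W)$ with $\eta_2\neq 0$ must lie in $\WF(\adj{W})$, forcing $(x_2,\eta_2;x_1,0)\in\WF(W)\subset\Vc^+_0\times\Vc^-_0$ and so $\eta_2\in\Vc^+$, while the original membership gives $\eta_2\in\Vc^-$, contradicting $\Vc^+\cap\Vc^-=\emptyset$; the case $(x_1,\eta_1;x_2,0)$ is symmetric. Thus $\WF(W)\subset\Vc^+\times\Vc^-$.

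For the converse, assume $\WF(W)\subset\Vc^+\times\Vc^-$ and take $(x_0,\xi_0)\in\dot{T}^*M$ with $\xi_0\notin\Vc^-$; I will show $(x_0,\xi_0)$ is a regular directed point of $V$. Since $\hat{\Cc}$ is involutive, $\|V(g)\|_\Hf^2=W(\hat{\Cc}g\otimes g)$. Work in a local basis for $B^*$ near $x_0$ (reducing to the scalar $\Hf$-valued case as in~\cite{StVeWo:2002}) and test with $g=\chi\ee_l$, $\chi$ supported near $x_0$ with $\chi(x_0)\neq 0$; antilinearity of $\hat{\Cc}$ conjugates the oscillatory factor, $\hat{\Cc}(\chi\ee_l)=(\hat{\Cc}\chi)\ee_{-l}$, so $\|V(\chi\ee_l)\|_\Hf^2$ is $W$ localised near $(x_0,x_0)$ paired with the oscillatory phase $\ee_{-l}\otimes\ee_l$, which probes $\WF(W)$ along the directions $(-\xi,\xi)$. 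Because $\xi_0\notin\Vc^-$ and $\Vc^+=-\Vc^-$ is conic and relatively closed, a conic neighbourhood of $-\xi_0$ is disjoint from $\Vc^+$, so $\WF(W)\subset\Vc^+\times\Vc^-$ contains no element whose first covector lies in that neighbourhood; the localised tested quantity therefore decays rapidly as $l\to\infty$ through a conic neighbourhood of $\xi_0$, giving $(x_0,\xi_0)\notin\WF(V)$. Hence $\WF(V)\subset\Vc^-$, and the equivalence is proved.

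The fermionic claim follows by the identical three steps with $\Xi_P$ in place of $\Upsilon_P$, replacing the CCR by the anticommutation identity $W+\adj{W}=\ii E_P^\twopt\circ(1\otimes R^{\sharp,1})$ and using $\WF(E_P^\twopt\circ(1\otimes R^{\sharp,1}))=\WF(E_P^\twopt)$, valid because $R$ is a fibrewise bundle isomorphism (as in the proof of Theorem~\ref{thm:Hadamard_props}(a)); nothing else changes. The ``in particular'' assertion is then immediate: a $\Vc^+$-Hadamard state $\omega$ has distributional $n$-point functions by definition and equals $\omega_{\varphi_\omega}$ in its GNS representation, so the equivalence applies with $\psi=\varphi_\omega$. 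I expect the only real difficulty to be the bookkeeping with the musical identifications and the conjugation $\hat{\Cc}$ — enough to see that $\|V(g)\|^2$ probes $\WF(W)$ along the anti-diagonal directions $(-\xi,\xi)$ and not along $(\xi,\xi)$ — together with the observation that Lemma~\ref{lem:Hilbert_tools}(d) by itself yields only $\Vc^+_0\times\Vc^-_0$, so that decomposability of $E_P$ and the (anti)commutation relations are genuinely needed to discard the zero-section components.
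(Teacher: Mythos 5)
Your proof is correct and follows essentially the same route as the paper's: establish $\pi(\Upsilon_P(\cdot))\psi\in\DD'(B;\Hf)$ from the two-point function, apply Lemma~\ref{lem:Hilbert_tools}(d) together with the (anti)commutation relations and $\Vc^\pm$-decomposability to discard the zero-covector components, and run the Cauchy--Schwarz/regular-directed-point argument for the converse. Your write-up is somewhat more explicit than the paper's (in particular the contradiction argument eliminating points of the form $(x_1,0;x_2,\eta_2)$, which the paper compresses into the observation that $\Vc_0^+\times\Vc_0^-$ and $\Vc_0^-\times\Vc_0^+$ meet only in the zero section), but the content is identical.
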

\begin{proof} 	
	As $\|\pi(\Upsilon_P(f))\psi\|^2$ is a linear combination of $n$-point functions with first argument $\Cc f$ and last argument $f$, the continuity of the distributional $n$-point functions and antilinear map $\Cc$ imply that $\pi(\Upsilon_P(\cdot))\psi\in\DD'(B;\Hf)$. 
	The two-point function $W_\psi$ of $\omega_\psi$ is
	\begin{align}
		W_\psi( f^{\sharp,1} \otimes h^{\sharp,1}) &=\omega_\psi(\Upsilon_P(f)\Upsilon_P(h))= 
	\ip{\pi(\Upsilon_P(\Cc f))\psi}{\pi(\Upsilon_P(h))\psi}.
	\end{align} 
	Condition~\eqref{eq:HilbertHadamard} immediately implies  $\WF(W_\psi)\subset \Vc^+_0 \times \Vc_0^-$ by Lemma~\ref{lem:Hilbert_tools}(d), and therefore $\WF(\adj{W}_\psi)\subset \Vc^-_0 \times \Vc_0^+$ does not intersect $\WF(W_\psi)$ (recall that the zero section is excluded from wavefront sets). But as $W_\psi-\adj{W}_\psi=\ii E_P$, we have $\WF(W_\psi)\subset (\Vc^+_0 \times \Vc_0^-)\cap \WF(E_P)=\Vc^+\times\Vc^-$, so $\omega_\psi$ is $\Vc^+$-Hadamard. Conversely, if $\WF(W_\psi)\subset \Vc^+\times\Vc^-$ then the estimate
	\begin{equation}
		\|\pi(\Upsilon_P(f)\psi\|^2 =  W_\psi(  (\Cc f)^{\sharp,1}\otimes   f^{\sharp,1})
	\end{equation}
	implies that every $(x,k) \notin\Vc^-$ is a regular directed point for $\pi(\Upsilon_P(\,\cdot\, S))\psi$ 
	where $S\in\Gamma^\infty(B^*)$ is arbitrary, so $\WF(\pi(\Upsilon_P(\,\cdot\, S))\psi)\subset \Vc^-$ for every $S$ and~\eqref{eq:HilbertHadamard} follows. The statement about the GNS representation is a simple application of the main part, because $\omega=\omega_{\varphi_\omega}$, where $\varphi_\omega$ is the GNS vector.
	
	For a fermionic \rfhgho\ $(P,R)$ we use identical arguments except that  $W_\psi+\adj{W}_\psi=\ii E_P\circ (1\otimes R)$ is used in place of  $W_\psi-\adj{W}_\psi=\ii E_P$.
\end{proof} 

We give two applications: first, the proof of Theorem~\ref{thm:Hadamard_props}(d); second, a proof that the Hadamard condition is stable under action by operators in the algebra. In both cases we give the proof for the bosonic case but the argument applies equally to the fermionic case with minor adaptations.
\begin{cor}\label{cor:smooth1pt}
	The $1$-point functions of any $\Vc^+$-Hadamard state are smooth.
\end{cor}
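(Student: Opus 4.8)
The plan is to obtain the smoothness of the $1$-point function $\mu(f) = \omega(\Upsilon_P(f))$ as a direct consequence of the Hilbert space formulation in Theorem~\ref{thm:HilbertHadamard}. Work in the GNS representation $(\Hf,\pi,D,\varphi_\omega)$ of the $\Vc^+$-Hadamard state $\omega$, so that $\mu(f) = \ip{\varphi_\omega}{\pi(\Upsilon_P(f))\varphi_\omega}_\Hf$. First I would note that $\pi(\Upsilon_P(\cdot))\varphi_\omega \in \DD'(B;\Hf)$ with $\WF(\pi(\Upsilon_P(\cdot))\varphi_\omega) \subset \Vc^-$ by Theorem~\ref{thm:HilbertHadamard}. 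Then, applying Lemma~\ref{lem:Hilbert_tools}(c) with $\psi = \varphi_\omega$ to the $\Hf$-valued distribution $V = \pi(\Upsilon_P(\cdot))\varphi_\omega$, the scalar distribution $f \mapsto \ip{\varphi_\omega}{V(f)}_\Hf = \mu(f^{\flat,1})$ (after appropriate musical identifications) has wavefront set contained in $\WF(V) \subset \Vc^-$.

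The key step is then to run the CCR argument to pin down the opposite inclusion. Using $\Upsilon_P(f)^* = \Upsilon_P(\Cc f)$ and property (c) of a Hilbert space representation, one has $\overline{\mu(\Cc f)} = \ip{\pi(\Upsilon_P(\Cc f))\varphi_\omega}{\varphi_\omega}_\Hf$, which by Lemma~\ref{lem:Hilbert_tools}(c) (the second bound) gives a distribution with wavefront set contained in $-\WF(V) \subset \Vc^+ = -\Vc^-$. Since $\mu$ is real-linear in the sense dictated by Y1--Y2 and the antilinear conjugation $\Cc$ acts on wavefront sets by negation, this shows $\WF(\mu) \subset \Vc^+$ as well. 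Combining the two inclusions, $\WF(\mu) \subset \Vc^+ \cap \Vc^- = \emptyset$, since $\Vc^+ \cap \Vc^- = \emptyset$ by hypothesis in Definition~\ref{def:Hadamard}. Hence $\mu$ has empty wavefront set and is therefore smooth.

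The main obstacle, such as it is, is purely bookkeeping: one must be careful that the map $f \mapsto \mu(f)$, which a priori is only defined on test sections $f \in \Gamma_0^\infty(B)$ rather than on a half-density or density-twisted bundle, is correctly reinterpreted as an element of $\DD'(B)$ via the musical isomorphisms of Section~\ref{sec:HVBs}, exactly as was done for the two-point function $W$ in~\eqref{eq:npointfns}, so that ``wavefront set'' makes literal sense and the two bounds from Lemma~\ref{lem:Hilbert_tools}(c) refer to the same object up to the sign flip induced by $\Cc$. Once the two conic bounds $\WF(\mu) \subset \Vc^-$ and $\WF(\mu) \subset \Vc^+$ are in hand, their incompatibility is immediate. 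For the fermionic case one replaces $\Yf(P)$ by $\Xf(P)$, $\Upsilon_P$ by $\Xi_P$, and uses the fermionic version of Theorem~\ref{thm:HilbertHadamard}; the argument is otherwise unchanged, since Lemma~\ref{lem:Hilbert_tools} is stated for general $\Hf$-valued distributions.
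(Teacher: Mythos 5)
Your proposal is correct and takes essentially the same approach as the paper: one writes the $1$-point function in the two equivalent forms $\ip{\varphi_\omega}{V(f)}_\Hf$ and $\ip{V(\Cc f)}{\varphi_\omega}_\Hf$ (with $V=\pi(\Upsilon_P(\cdot))\varphi_\omega$), applies both bounds of Lemma~\ref{lem:Hilbert_tools}(c), and intersects to get $\Vc^-\cap\Vc^+=\emptyset$. One small terminological slip: what you label a ``CCR argument'' is really the hermiticity relation Y2 together with positivity of $\omega$ giving $\overline{\omega(\Upsilon_P(\Cc f))}=\omega(\Upsilon_P(f))$; the commutation relations Y4 play no role here.
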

\begin{proof}
	Let $(\Hf,\pi,D,\varphi_\omega)$ be the GNS representation of $\Vc^+$-Hadamard state $\omega$ on $\Yf(P)$. Then 
	\begin{equation}
		\omega(\Upsilon_P(f))=\ip{\varphi_\omega}{\pi(\Upsilon_P(f))\varphi_\omega}_\Hf=
	\ip{\pi(\Upsilon_P(\Cc f))\varphi_\omega}{\varphi_\omega}_\Hf,
	\end{equation} 
	so Lemma~\ref{lem:Hilbert_tools}(c) gives $\WF(\omega(\Upsilon_P(\,\cdot\,)))\subset \Vc^-\cap \Vc^+=\emptyset$.  
\end{proof}
\begin{cor}\label{cor:manyHadamard}
	Let $\omega$ be a $\Vc^+$-Hadamard state. Then, for every unit vector $\psi$ in the GNS domain of $\omega$, the state $\omega_\psi$ is $\Vc^+$-Hadamard.
\end{cor}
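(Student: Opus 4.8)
The plan is to deduce this from Theorem~\ref{thm:HilbertHadamard}, working throughout in the GNS representation $(\Hf,\pi,D,\varphi_\omega)$ of the $\Vc^+$-Hadamard state $\omega$. Since $\omega$ is $\Vc^+$-Hadamard, that theorem gives $\WF(\pi(\Upsilon_P(\cdot))\varphi_\omega)\subset\Vc^-$; write $V(f):=\pi(\Upsilon_P(f))\varphi_\omega$, so that $V\in\DD'(B;\Hf)$ with $\WF(V)\subset\Vc^-$, and note $V(f)\in D$. A unit vector $\psi\in D$ may be written $\psi=\pi(A)\varphi_\omega$ for some $A\in\Yf(P)$ with $\omega(A^*A)=1$. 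The task is to show $\WF(\pi(\Upsilon_P(\cdot))\psi)\subset\Vc^-$; the conclusion that $\omega_\psi$ is $\Vc^+$-Hadamard then follows from the converse half of Theorem~\ref{thm:HilbertHadamard}, once one observes that $\omega_\psi$ has distributional $n$-point functions (each $\omega_\psi(\Upsilon_P(f_1)\cdots\Upsilon_P(f_n))=\omega(A^*\Upsilon_P(f_1)\cdots\Upsilon_P(f_n)A)$ is a finite combination of $n$-point functions of $\omega$).

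First I would split off a smooth remainder. Because $[\Upsilon_P(f),\Upsilon_P(h)]=\ii E_P(f,h)\1$ is central, for any $A\in\Yf(P)$ one has $[\Upsilon_P(f),A]=\sum_j \ii E_P(f,h_j)A_j$ for finitely many $h_j\in\Gamma_0^\infty(B)$, $A_j\in\Yf(P)$; since $E_P(f,h_j)=-\langle E_P h_j,f\rangle$ with $E_P h_j\in\Gamma^\infty_{\sc}(B)$ smooth, the map $f\mapsto\pi([\Upsilon_P(f),A])\varphi_\omega=\sum_j\ii E_P(f,h_j)\pi(A_j)\varphi_\omega$ is a finite sum of (smooth scalar)$\times$(fixed vector), hence a smooth $\Hf$-valued distribution with empty wavefront set. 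Therefore $\pi(\Upsilon_P(f))\psi=\pi(\Upsilon_P(f)A)\varphi_\omega=\pi(A)V(f)+(\text{smooth})$, and it suffices to bound $\WF(f\mapsto\pi(A)V(f))$.

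For this, the key estimate is
$\|\pi(A)V(f)\|^2=\langle V(f),\pi(A^*A)V(f)\rangle\le\|V(f)\|\,\|\pi(A^*A)V(f)\|$,
using $V(f)\in D$ and $\pi(A^*)|_D=\pi(A)^*|_D$. The map $Y(f):=\pi(A^*A)V(f)=\pi(A^*A\,\Upsilon_P(f))\varphi_\omega$ lies in $\DD'(B;\Hf)$: indeed $\|Y(f)\|^2=\omega(\Upsilon_P(\Cc f)(A^*A)^2\Upsilon_P(f))$ is a finite combination of $n$-point functions of $\omega$ with $\Cc f$ and $f$ in the outermost slots, hence a separately — therefore jointly — continuous sesquilinear form in $f$, bounded by a squared seminorm, exactly as in the proof of Theorem~\ref{thm:HilbertHadamard}. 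Now Lemma~\ref{lem:Hilbert_tools}(b) applies with $U=\pi(A)V(\cdot)$, giving $\WF(\pi(A)V(\cdot))\subset\WF(V)\cap\WF(Y)\subset\WF(V)\subset\Vc^-$. Combining with the smooth remainder yields $\WF(\pi(\Upsilon_P(\cdot))\psi)\subset\Vc^-$, as required. The fermionic case is identical, with $\Upsilon_P,\Yf(P)$ replaced by $\Xi_P,\Xf(P)$ and the fermionic forms of the preceding theorems.

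The main obstacle — the only place a naive argument fails — is the treatment of the unbounded operator $\pi(A)$: one cannot simply write $\|\pi(A)V(f)\|\le\|\pi(A)\|\,\|V(f)\|$. Equally, expanding $\|\pi(\Upsilon_P(f))\psi\|^2=\omega(A^*\Upsilon_P(\Cc f)\Upsilon_P(f)A)$ and commuting the two field factors out to the far left and far right would be fatal, since it destroys the cancellations that keep this quantity small in directions outside $\Vc^-$ (the leftover $E_P$ term alone fails to decay on $\Vc^+\cup\Vc^-$). The whole weight of the argument rests on keeping $\Upsilon_P(f)$ adjacent to the GNS vector, where the Hadamard property of $\omega$ is available, and absorbing the residual $\pi(A^*A)$ through the crude Cauchy--Schwarz bound together with Lemma~\ref{lem:Hilbert_tools}(b), which requires no microlocal information about $Y$ whatsoever.
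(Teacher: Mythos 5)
Your proof is correct and follows essentially the same route as the paper's: the same decomposition into a smooth (central-commutator) remainder plus $\pi(A)V(f)$, the same Cauchy--Schwarz estimate $\|\pi(A)V(f)\|^2\le\|V(f)\|\,\|\pi(A^*A)V(f)\|$, and the same appeal to Lemma~\ref{lem:Hilbert_tools}(b) combined with Theorem~\ref{thm:HilbertHadamard}. The only (organisational) difference is that the paper argues inductively, adjoining one generator $\Upsilon_P(h)$ at a time and invoking stability of~\eqref{eq:HilbertHadamard} under linear combinations, whereas you handle a general $A\in\Yf(P)$ in a single step.
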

\begin{proof} 
	Let $(\Hf,\pi,D,\varphi_\omega)$ be the GNS representation of $\omega$, so $D=\pi(\Yf(P))\varphi_\omega$.
	As $\omega$ has distributional $n$-point functions, so does every $\omega_\psi$ for unit $\psi\in D$. Noting that~\eqref{eq:HilbertHadamard} is stable under linear combination, it is enough to show that if a unit $\psi\in D$ defines a $\Vc^+$-Hadamard state, then so does every nonzero vector $\psi'=\pi(\Upsilon_P(h))\psi$ for $h\in\Gamma_0^\infty(B)$ (normalised by rescaling $h$ if necessary). But
	\begin{equation}
		\pi(\Upsilon_P(f))\psi' = \pi(\Upsilon_P(f)\Upsilon_P(h))\psi = \ii E_P(f,h)\psi + \pi(\Upsilon_P(h)\Upsilon_P(f))\psi,
	\end{equation}
	the first term of which is smooth because
	$\WF(f\mapsto E_P(f,h))\subset \{(x,k)\in T^*M: (x,k;y,0)\in\WF(E_P)~\text{for some $y\in M$}\}=\emptyset$. To complete the proof, the estimate
	\begin{equation}
		\|\pi(\Upsilon_P(h)\Upsilon_P(f))\psi\|^2\le 
		\|\pi(\Upsilon_P(\Cc h)\Upsilon_P(h)\Upsilon_P(f))\psi\|	
		\|\pi(\Upsilon_P(f))\psi\|
	\end{equation}
	 and Lemma~\ref{lem:Hilbert_tools}(b) show that $\WF(\pi(\Upsilon_P(\cdot))\psi')\subset \Vc^-$ -- note that $\pi(A\Upsilon_P(\cdot))\psi\in\DD'(B;\Hf)$ for all $A\in\Yf(P)$ because $\omega_\psi$ has distributional $n$-point functions (cf.~the proof of Theorem~\ref{thm:HilbertHadamard}).
\end{proof}
A simple consequence is that the set of $\Vc^+$-Hadamard states $\Sf_{\Vc^+}(P)$ forms a \emph{state space} for $\Yf(P)$ as defined e.g., in Section~4.5.3 of~\cite{FewVerch_aqftincst:2015}. 
\begin{cor}\label{cor:state_space}
	The set of $\Vc^+$-Hadamard states $\Sf_{\Vc^+}(P)\subset\Yf(P)^*_{+,1}$ is closed under finite convex combinations and operations induced by $\Yf(P)$.
\end{cor}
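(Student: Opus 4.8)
The plan is to verify separately the two defining closure properties of a state space (in the sense of Section~4.5.3 of~\cite{FewVerch_aqftincst:2015}), reducing each to results already established.

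\textbf{Closure under finite convex combinations.} Given $\Vc^+$-Hadamard states $\omega_1,\dots,\omega_m\in\Sf_{\Vc^+}(P)$ and weights $\lambda_j\ge 0$ with $\sum_j\lambda_j=1$, I would observe that $\omega=\sum_j\lambda_j\omega_j$ has $n$-point correlators
\begin{equation}
	\omega(\Upsilon_P(f_1)\cdots\Upsilon_P(f_n))=\sum_{j}\lambda_j\,\omega_j(\Upsilon_P(f_1)\cdots\Upsilon_P(f_n)),
\end{equation}
so $\omega$ inherits distributional $n$-point functions $W^{(n)}=\sum_j\lambda_j W_j^{(n)}$ from those of the $\omega_j$. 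In particular $W=\sum_j\lambda_j W_j$, and since the wavefront set of a finite sum of distributions is contained in the union of the summands' wavefront sets, $\WF(W)\subset\bigcup_j\WF(W_j)\subset\Vc^+\times\Vc^-$; hence $\omega\in\Sf_{\Vc^+}(P)$.

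\textbf{Closure under operations induced by $\Yf(P)$.} For $A\in\Yf(P)$ with $\omega(A^*A)>0$, the induced operation sends $\omega\in\Sf_{\Vc^+}(P)$ to $\omega_A(B)=\omega(A^*BA)/\omega(A^*A)$, which is manifestly again a state. I would pass to the GNS representation $(\Hf,\pi,D,\varphi_\omega)$ of $\omega$ and set $\psi=\pi(A)\varphi_\omega/\|\pi(A)\varphi_\omega\|_\Hf$, a unit vector in the GNS domain $D=\pi(\Yf(P))\varphi_\omega$. A short computation gives $\omega_A(B)=\ip{\psi}{\pi(B)\psi}_\Hf=\omega_\psi(B)$ for all $B\in\Yf(P)$; since $\omega$ has distributional $n$-point functions so does $\omega_\psi$, and Corollary~\ref{cor:manyHadamard} then shows directly that $\omega_A=\omega_\psi$ is $\Vc^+$-Hadamard. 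The same two arguments apply verbatim to fermionic \rfhgho's, replacing $\Yf(P)$, $\Upsilon_P$ by $\Xf(P)$, $\Xi_P$.

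There is no serious obstacle: the only point requiring any care is identifying the abstract ``operation induced by $\Yf(P)$'' with the vector-state construction inside the GNS representation, after which Corollary~\ref{cor:manyHadamard} --- which already carried out the microlocal work of showing that passing to vectors $\pi(A)\varphi_\omega$ preserves the condition $\WF(\pi(\Upsilon_P(\cdot))\psi)\subset\Vc^-$ of Theorem~\ref{thm:HilbertHadamard} --- supplies everything needed.
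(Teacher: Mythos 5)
Your proof is correct and follows the same route as the paper's: the convex-combination case is handled by noting that wavefront sets of sums lie in the union of the summands' wavefront sets, and the operation case is handled by realizing $\omega_A$ as a vector state in the GNS representation of $\omega$ and invoking Corollary~\ref{cor:manyHadamard}. Your additional detail (making the normalization explicit, and spelling out that $\omega_\psi$ inherits distributional $n$-point functions) is all consistent with the paper's argument.
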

\begin{proof}
The $2$-point function of a finite convex combination of states is the corresponding convex combination of their $2$-point functions. If $\omega\in \Sf_{\Vc^+}(P)$ with GNS representation $(\Hf,\pi,D,\varphi_\omega)$ and $B\in\Yf(P)$ with
$\omega(B^*B)=1$ the state $\omega_B$ defined by $\omega_B(A)=\omega(B^*AB)$ is represented
by the unit vector $\pi(B)\varphi_\omega\in D$. Thus $\omega_B\in\Sf_{\Vc^+}(P)$ by Corollary~\ref{cor:manyHadamard}.
\end{proof}

\subsection{Pull-backs, push-forwards and covariance}

Let $S:P_1\to P_2$ be a $\GreenHyp$-morphism and suppose that $\omega_2$ is a state on $\Yf(P_2)$. Then
there is a pulled back state $\omega_1=\Yf(S)^*\omega_2$ on $\Yf(P_1)$ so that 
$\omega_1(A)=\omega_2(\Yf(S)A)$. At the level of two-point functions, this implies 
\begin{equation}
	W_1(f^{\sharp,1}\otimes   h^{\sharp,1} ) = W_2 (  (S f)^{\sharp,1}\otimes  ( Sh)^{\sharp,1}),
\end{equation}
i.e., $W_1 = W_2\circ (S^{\sharp,1}\otimes S^{\sharp,1})$ 
using the musical notation for linear maps introduced in Section~\ref{sec:HVBs}.  
The advanced-minus-retarded two-point distributions obey the same relation $E^\twopt_{P_1} = E^\twopt_{P_2}\circ (S^{\sharp,1}\otimes S^{\sharp,1})$. If $S$ is Cauchy, then one can invert $\Yf(S)$ and
push forward any state $\omega_1$ on $\Yf(P_1)$ to a state $\omega_2=(\Yf(S)^{-1})^*\omega_1$ on $\Yf(P_2)$.
Let $L:P_2\to P_1$ be any $\GreenHyp$-morphism obeying $\hat{L}=\hat{S}^{-1}$ (see  Theorem~\ref{thm:GreenHyp}(d)). Then $SLh=h\mod P_2\Gamma_0^\infty(B_2)$ and
\begin{equation}
	\omega_2(\Yf_{P_2}(h)\Yf_{P_2}(h'))=\omega_2(\Yf_{P_2}(SL h)\Yf_{P_2}(SLh'))= 
	\omega_1(\Yf_{P_1}(L h)\Yf_{P_1}(Lh')),
\end{equation}
which gives the equation $W_2 = W_1\circ (L^{\sharp,1}\otimes L^{\sharp,1})$ for the corresponding two-point functions. Analogous constructions are possible in the fermionic case.

We will show that the pullback and pushforward states are Hadamard under suitable conditions, for which we need the following definition.
\begin{defn}
	A linear map $T:\Gamma_0^\infty(B_1)\to \Gamma_0^\infty(B_2)$ will be called \emph{regular} if (a) $T$ is continuous and has a continuous formal transpose $\adj{T}:\Gamma^\infty_\sc(B_2)\to \Gamma^\infty(B_1)$, (b) for any compact $K_1\subset M_1$ there is a compact
	$K_2\subset M_2$ so that $T\Gamma_{K_1}^\infty(B_1)\subset \Gamma^\infty_{K_2}(B_2)$ (this is true in particular if $T$ is properly supported),\footnote{The operator $T$ is said to be \emph{properly supported} (cf.\ Definition 18.1.21 in~\cite{Hormander3}) if to each compact $K_X\subset X$ there is a compact $K_Y\subset Y$ so that $T\Gamma_{K_X}^\infty(B_X)\subset \Gamma_{K_Y}^\infty(B_Y)$ and
		to each compact $K_Y\subset Y$ there is a compact set $K_X\subset X$ so that
		$u|_{K_X}=0$ implies $(Tu)|_{K_Y}=0$ for $u\in\Gamma_0^\infty(B_X)$.
		Equivalently, the projections from the support of $T^\knl$ to $X$ and $Y$ are both proper maps.} and (c)   
	\begin{equation}
		\WF(T^\knl)\cap ((0_{T^*M_2}\times \dot{T}^*M_1) \cup (\dot{T}^*M_2\times 0_{T^*M_1})) = \emptyset.
	\end{equation}
	In particular, a morphism of $\GreenHyp$ will be described as regular if its underlying linear map is regular in this sense and a morphism of $\fGreenHyp$ is regular if its underlying $\GreenHyp$ morphism is regular. 
\end{defn}

The following technical result is
proved in Appendix~\ref{sec:proofWFuoTT}.
\begin{lemma}\label{lem:WFuoTT}
	Suppose that $T_j:\Gamma_0^\infty(B_1)\to \Gamma_0^\infty(B_2)$ ($j=1,2$) are regular. 
	Let $\Uc\subset\dot{T}^*(M_2\times M_2)$ be a closed conic set that does not intersect $(0_{T^*M_2}\times \dot{T}^*M_2) \cup (\dot{T}^*M_2\times 0_{T^*M_2})$.
	If $u\in\DD'(B_2\boxtimes B_2)$ has $\WF(u)\subset \Uc$, then
	$u\circ (T_1^{\sharp,1}\otimes T_2^{\sharp,1})\in\DD'(B_1\boxtimes B_1)$ has wavefront set
	\begin{equation}\label{eq:WFuoTT}
		\WF(u\circ (T_1^{\sharp,1}\otimes T_2^{\sharp,1}))\subset (\WF'(\adj{}T_1^\knl)\times \WF'(\adj{}T_2^\knl)) \bullet \Uc.
	\end{equation}
	Here, for $\Zc\subset T^*(Y\times X)$ and $\Xc\subset T^*X$, we write
	\begin{equation}
		\Zc\bullet \Xc:=\{(y,l)\in T^*Y: \text{$\exists$ $(k,x)\in \Xc$ s.t., $(y,l;x,k)\in \Zc$}\} = 
		\pr_{T^*Y} \left(\Zc\cap (T^*Y\times \Xc)\right),
	\end{equation}
	and $\WF'(\adj{}T^\knl)=\{(x,k;y,l)\in \dot{T}^*(M_1\times M_2):
	(y,-l;x,k) \in \WF(T^\knl)\}$.
\end{lemma}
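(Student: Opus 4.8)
The plan is to recognise $u\circ(T_1^{\sharp,1}\otimes T_2^{\sharp,1})$ as a composition of distributional kernels over $M_2$ and to read off its wavefront set from Hörmander's composition calculus (\cite[\S8.2]{Hormander1}); the regularity hypotheses on $T_1,T_2$ and the hypothesis on $\Uc$ are precisely what make the composition legitimate and isolate the single `principal' term in the resulting wavefront bound. This is the bundle-valued, two-sided analogue of the standard rule for the wavefront set of a pulled-back distribution. A preliminary reduction is to the scalar case: the musical maps $\sharp,\flat$ and the density factors occurring in $T^{\sharp,1}$ are smooth nowhere-vanishing bundle isomorphisms, which do not affect wavefront sets, and in local frames every object becomes a matrix of scalars with bundle wavefront set equal to the union of the entrywise scalar wavefront sets, so it is enough to bound a single scalar composition.

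First I would check well-definedness. Condition~(a) makes each $T_j$, and hence each $T_j^{\sharp,1}:\Gamma_0^\infty(B_1^*\otimes\Omega_1)\to\Gamma_0^\infty(B_2^*\otimes\Omega_2)$, continuous; composing with $u$ therefore gives a separately continuous bilinear form on $\Gamma_0^\infty(B_1^*\otimes\Omega_1)^{\times 2}$, which by the Schwartz kernel theorem extends uniquely to an element of $\DD'(B_1\boxtimes B_1)$, and the support control in condition~(b) ensures that the kernel composition introduced next is genuinely defined, with no extension required. Expanding the pairing $u(T_1^{\sharp,1}f\otimes T_2^{\sharp,1}h)$ and inserting the Schwartz kernels of $T_1^{\sharp,1}$ and $T_2^{\sharp,1}$ (distributions on $M_2\times M_1$), one sees that the kernel of $u\circ(T_1^{\sharp,1}\otimes T_2^{\sharp,1})$ on $M_1\times M_1$ is $A_1^{\mathrm t}\circ u^\knl\circ A_2$, where $A_j$ is the kernel of $T_j^{\sharp,1}$ and $A_1^{\mathrm t}$ is the transposed kernel on $M_1\times M_2$.

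The main step is then to apply the wavefront-set composition theorem to $A_1^{\mathrm t}\circ u^\knl\circ A_2$, composing over $M_2$ twice. Because $\WF(A_j)=\WF(T_j^\knl)$ avoids $(0_{T^*M_2}\times\dot T^*M_1)\cup(\dot T^*M_2\times 0_{T^*M_1})$ by regularity condition~(c), and $\WF(u)\subset\Uc$ avoids $(0_{T^*M_2}\times\dot T^*M_2)\cup(\dot T^*M_2\times 0_{T^*M_2})$ by hypothesis, every kernel entering one of the two compositions has wavefront set disjoint from the zero sections over $M_2$. This is exactly the `clean' situation in which the composition is defined and all the auxiliary terms in Hörmander's bound (those in which a covector over $M_2$ is paired with a zero covector) drop out, leaving only the relational composition over $M_2$ of the three primed wavefront sets.

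Finally I would unwind the conventions. Since the density and musical twists are smooth isomorphisms, $\WF(A_j)=\WF(T_j^\knl)$; combining the definition of $\WF'$ for a kernel (which already builds in the antipodal flip of the second covector) with the defining relation $\WF'(\adj{}T^\knl)=\{(x,k;y,l):(y,-l;x,k)\in\WF(T^\knl)\}$, one identifies the contribution of $A_1^{\mathrm t}$ as $\WF'(\adj{}T_1^\knl)$ and, after re-ordering the $M_1$ and $M_2$ slots, that of $A_2$ as $\WF'(\adj{}T_2^\knl)$. Reorganising the double relational composition over $M_2$ as the action of the product $\WF'(\adj{}T_1^\knl)\times\WF'(\adj{}T_2^\knl)$ on $\WF(u)$, and enlarging $\WF(u)$ to $\Uc$, yields $(\WF'(\adj{}T_1^\knl)\times\WF'(\adj{}T_2^\knl))\bullet\Uc$, which is~\eqref{eq:WFuoTT}. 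No step is conceptually hard; the real work -- and the main place to go wrong -- is the bookkeeping of transposes, antipodal maps and density/musical twists, together with the verification that the kernel composition needs no extension, which is exactly where conditions~(a) and~(b) are used.
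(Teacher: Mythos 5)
Your proposal is correct, and it rests on the same microlocal machinery as the paper, but the decomposition is organised differently. The paper first proves an intermediate one--operator lemma (Lemma~\ref{lem:uoT}): for a single regular $T$ it writes $(u\circ T)(f)=((u\otimes 1)T^\knl)(\chi\otimes f)$, justifies the distributional product $(u\otimes 1)\cdot T^\knl$ by H\"ormander's product criterion (the hypothesis $\WF(T^\knl)\cap(\dot T^*M_2\times 0_{T^*M_1})=\emptyset$ is exactly what prevents the sum of wavefront sets from meeting the zero section), and then bounds $\WF(u\circ T)$ by projecting; Lemma~\ref{lem:WFuoTT} follows by applying this with $T=T_1\otimes T_2$ over the doubled manifold $M_2\times M_2$ and checking that the tensor product inherits the support and wavefront hypotheses. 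You instead sandwich the kernel, $A_1^{\mathrm t}\circ u^\knl\circ A_2$, and invoke the kernel--composition theorem twice, once over each copy of $M_2$. Both are legitimate; the paper's route buys a reusable single--operator statement and only needs the (easier to verify) product criterion plus a pushforward bound, whereas yours is more modular but obliges you to check the composition theorem's support and transversality hypotheses at each stage --- in particular, to verify that after the first composition $u^\knl\circ A_2$ the resulting kernel still avoids $(0_{T^*M_2}\times\dot T^*M_1)$, so that the auxiliary zero--covector terms also drop out of the second composition. That propagation step is true under the stated hypotheses (it follows from $\Uc$ avoiding $0_{T^*M_2}\times\dot T^*M_2$ and from regularity condition (c) for $T_2$), but it is the one point your sketch glosses over and should be made explicit in a full write-up, along with the observation that condition (b) supplies the support control that makes the left composition with $A_1^{\mathrm t}$ (the kernel of $\adj{T_1}$) well defined without any extension of $u$.
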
 
With this result in hand, we can establish regularity of a variety of $\GreenHyp$-morphisms.
\begin{lemma}\label{lem:regularGreenHyp} Examples of regular $\GreenHyp$-morphisms include the following.
	\begin{enumerate}[a)]
	\item The $\GreenHyp$-morphisms of Theorem~\ref{thm:GreenHyp}(a--c) are all regular. 
	
	\item Suppose that $S:P_1\to P_2$ is a continuous Cauchy $\GreenHyp$-morphism with a continuous transpose $\adj{S}:\Gamma_\sc^\infty(B_2)\to\Gamma^\infty(B_1)$ and the additional property that, for every spatially compact set $V_2\subset \Mb_2$, there is a spatially compact set $V_1\subset \Mb_1$ so that $\adj{S}\Gamma^\infty_{V_2}(B_2)\subset \Gamma^\infty_{V_1}(B_1)$. Then any $\GreenHyp$-morphisms
	$L=[P_1,\chi]\,\adj{S}E_{P_2}:P_2\to P_1$ of the type given explicitly in Theorem~\ref{thm:GreenHyp}(d) is regular. 
	In particular, these additional conditions hold if $S$ is a Cauchy morphism of the form described in
	Theorem~\ref{thm:GreenHyp}(a,b).
	
	\item The $\GreenHyp$-morphisms of Theorem~\ref{thm:GreenHyp}(e) are regular. 
	\end{enumerate}
\end{lemma}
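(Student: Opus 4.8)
The plan is to verify the three defining conditions (a)--(c) of regularity for each family, using repeatedly that condition~(c) is forced by condition~(a). Indeed, if $T:\Gamma_0^\infty(B_1)\to\Gamma_0^\infty(B_2)$ is continuous with continuous formal transpose $\adj{T}:\Gamma^\infty_\sc(B_2)\to\Gamma^\infty(B_1)$, then for each fixed test section $u$ the distribution $v\mapsto T^\knl(v\otimes u)$ is just the pairing of $v$ with the \emph{smooth} section $Tu$, hence has empty wavefront set; so $\WF(T^\knl)$ contains no covector $(y,l;x,0)$ with $l\neq 0$, and applying the same argument to $\adj{T}$ (whose restriction to test sections again takes values in smooth sections) excludes covectors $(y,0;x,k)$ with $k\neq 0$. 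Thus for every morphism below it suffices to check (a) and (b).

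For part (a): the morphisms of Theorem~\ref{thm:GreenHyp}(a--c) are all push-forwards $\beta_*$ of a $\HVB$-morphism $\beta:B\to B'$ covering a $\Loc$-morphism $\psi$ (the inclusions $\iota_{N*}$ arise when $\psi$ is the embedding of a region, the canonical injections into direct sums when $\psi=\id_M$). Continuity of $\beta_*$ and of $\adj{\beta}_*$ is recorded in Section~\ref{sec:HVBs}, giving (a), while the first inclusion in~\eqref{eq:betastarsupport} gives (b) with $K_2=\psi(K_1)$. (For a self-contained check of (c): $\beta_*^\knl$ is supported on the graph $\{(\psi(x),x):x\in M\}$, a submanifold because a $\Loc$-morphism is an embedding onto an open set, so $\WF(\beta_*^\knl)$ lies in the conormal bundle $\{(\psi(x),\eta;x,-\psi^*\eta):\eta\in\dot{T}^*M'\}$, and since $\psi$ is a local diffeomorphism the two covector slots vanish together.)

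For part (b): the content is in tracking supports. Given $L=[P_1,\chi]\,\adj{S}E_{P_2}$, we have $E_{P_2}$ continuous $\Gamma_0^\infty(B_2)\to\Gamma^\infty_\sc(B_2)$ by Theorem~\ref{thm:GreenHypOps}(b), $\adj{S}$ continuous $\Gamma^\infty_\sc(B_2)\to\Gamma^\infty(B_1)$ by hypothesis, and $[P_1,\chi]$ a differential operator with coefficients supported in the closed Cauchy slab $\overline{U}=J^+(U)\cap J^-(U)$, which is temporally compact. For $h\in\Gamma_0^\infty(B_2)$, the section $E_{P_2}h$ is supported in the spatially compact set $J(\supp h)$, so by the extra hypothesis $\adj{S}E_{P_2}h$ lies in $\Gamma^\infty_{V_1}(B_1)$ for a spatially compact $V_1$ depending only on $\supp h$; then $[P_1,\chi]$ confines the support to $V_1\cap\overline{U}$, which is compact because a set that is both spatially and temporally compact is compact. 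Hence $L$ maps into $\Gamma_0^\infty(B_1)$, which also yields (b). For (a), the formal transpose is $\adj{L}=-E_{P_2}\,S\,\adj{[P_1,\chi]}$ (using $\adj{(\adj{S})}=S$ and $\adj{E}_{P_2}=-E_{P_2}$, valid since $P_2$ is an RFHGHO), and the same bookkeeping shows it is continuous from $\Gamma^\infty_\sc(B_1)$ into $\Gamma^\infty(B_2)$; then (c) follows from the first paragraph. The ``in particular'' assertion is immediate: for $S$ of the Cauchy type in Theorem~\ref{thm:GreenHyp}(a,b), $\adj{S}$ is a pullback-type map, manifestly continuous $\Gamma^\infty_\sc(B_2)\to\Gamma^\infty(B_1)$ and carrying spatially compact supports to spatially compact supports by the causal support properties in~\eqref{eq:betastarsupport}.

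For part (c): the morphisms of Theorem~\ref{thm:GreenHyp}(e) are $S=[P,\chi]E_P:(B,P)\to(B,P')$, a special case of the maps handled in part (b) (equivalently, by the proof of Theorem~\ref{thm:GreenHyp}(e), $S=\iota'_{N*}\circ L_N$ with $\iota'_{N*}$ regular by (a) and $L_N=[P|_N,\chi|_N]\,\adj{\iota}_{N*}E_P$ of the type in (b) for the Cauchy morphism $\iota_{N*}$). Directly: $E_P$ maps $\Gamma^\infty_K(B)$ into $\Gamma^\infty_{J(K)}(B)$ and $[P,\chi]$ has coefficients in $\overline{U}\subset N$, so $S\Gamma^\infty_K(B)\subset\Gamma^\infty_{\overline{U}\cap J(K)}(B|_N)$ with $\overline{U}\cap J(K)$ compact, giving (b); $\adj{S}=-E_P\,\adj{[P,\chi]}$ is continuous $\Gamma^\infty_\sc(B)\to\Gamma^\infty(B)$, giving (a); and (c) follows as before, or from $S^\knl=([P,\chi]\otimes 1)E_P^\knl$ together with the fact that $\WF(E_P^\knl)$ contains no covector with a vanishing slot (since $E_P$ and $\adj{E}_P=-E_P$ both send test sections to smooth sections). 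I expect the only real obstacle to be the support bookkeeping in part (b)---ensuring $[P_1,\chi]\,\adj{S}E_{P_2}$ lands in the \emph{compactly} supported sections---where the hypothesis converting spatially compact supports to spatially compact supports, and the compactness of sets that are simultaneously spatially and temporally compact, are both indispensable; the microlocal condition (c) is by contrast essentially automatic.
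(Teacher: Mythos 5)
The argument hinges on the first paragraph's claim that regularity condition~(c) is forced by condition~(a), and this inference is not valid. Knowing that $v\mapsto T^\knl(v\otimes u)$ has empty wavefront set for each fixed $u$ (equivalently, that $T$ and $\adj{T}$ map test sections continuously to smooth sections) does \emph{not} exclude covectors of the form $(y,l;x,0)$ with $l\neq 0$ from $\WF(T^\knl)$; this would be a converse to results like Theorem~8.2.13 of H\"ormander, and that converse fails. A concrete obstruction: take $K\in\DD'(\RR\times\RR)$ with $\hat{K}(\xi,\eta)=\phi\bigl(\eta/(1+\xi^2)^{a/2}\bigr)$ for $\phi\in C_0^\infty$ supported in $[1,2]$ and $0<a<1$. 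Then $\hat{K}(\xi,\cdot)$ is supported in an interval of length $O(|\xi|^a)$, so both $\Kc:u\mapsto\int K(\cdot,y)u(y)\,\dd y$ and its transpose map $C_c^\infty$ continuously into $C^\infty$ (the super-polynomial decay of $\hat{u}$ beats the polynomially growing measure of the annulus), yet after localisation $\widehat{\chi_1\chi_2 K}$ tends to the nonzero constant $\phi(1)\chi_1(0)\chi_2(0)$ along the curve $\eta=(1+\xi^2)^{a/2}$, which lies inside every conic neighbourhood of $(1,0)$, so $\WF(K)$ contains points $(x_0,\xi_0;y_0,0)$. Since your proof of parts~(b) and~(c) invokes this alleged automaticity (directly for~(b), and via the parenthetical justification ``since $E_P$ and $\adj{E}_P$ both send test sections to smooth sections'' for~(c)), those parts do not close.

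The paper establishes condition~(c) by computing wavefront sets directly rather than inferring them from mapping properties: for~(a) it writes $\psi_*^\knl$ as a distributional pullback of a $\delta$-distribution and applies the pullback theorem to land the wavefront set in the conormal bundle of the graph of~$\psi$; for~(b) it bounds $\WF(L^\knl)$ using Lemma~\ref{lem:WFuoTT} together with the decomposability bound $\WF(E_{P_2}^\knl)\subset(\Vc_2^+\times\Vc_2^-)\cup(\Vc_2^-\times\Vc_2^+)$, which in particular has no covectors with a vanishing slot; part~(c) is read off from~(b). So the essential ingredient you are missing is this explicit wavefront-set input (ultimately, decomposability of $P_2$ or, in~(a), the conormal structure of a pushforward kernel). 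A secondary but related slip appears in your ``self-contained check'' for~(a): a distribution merely \emph{supported} on a submanifold need not have wavefront set in the conormal bundle (e.g.\ $\delta(x)\otimes u(y)$ for distributional~$u$); what is true, and what the paper uses, is that $\beta_*^\knl$ is (up to a nowhere-vanishing smooth factor) the Dirac distribution on the graph, whose wavefront set \emph{is} the conormal bundle.
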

\begin{proof}
	(a) Parts~(b,c) of Theorem~\ref{thm:GreenHyp} were instances of its part~(a), so we need only address that case. Suppose that  $\beta:B_1\to B_2$ is a hermitian vector bundle morphism covering $\Loc$-morphism $\psi:\Mb_1\to\Mb_2$ such that $P_2\beta_*=\beta_*P_1$, with corresponding continuous $\GreenHyp$-morphism $\beta_*:(B_1,P_1)\to (B_2,P_2)$.  Then
	$\beta_*$ has a continuous adjoint $\adj{\beta}_*:\Gamma^\infty_\sc(B_2)\to \Gamma^\infty(B_1)$ given as $(\adj{\beta}_* h)(x)=\adj{\beta}_x h(\psi(x))$. For any compact $K_1\subset M_1$ one has
	$\beta_*\Gamma^\infty_{K_1}(B_1)\subset \Gamma^\infty_{\psi(K_1)}(B_2)$ and $\psi(K_1)$ is compact. Furthermore, 
	as $\beta_* fS=(\psi_* f)\beta_*S$ for $f\in C_0^\infty(M_1)$, $S\in\Gamma^\infty(B_1)$, it follows that  $\WF(\beta_*^\knl)\subset	\WF(\psi_*^\knl)$.
	Now $\psi_*^\knl$ can be expressed using a distributional pullback. Specifically, let 
	$\rho_2\in\Gamma^\infty(\Omega_{M_2})$ be any smooth nonvanishing density on $M_2$, for example the metric volume density. Then 
	\begin{equation}
		\psi_*^\knl=(1\otimes\rho_1) (1\otimes\psi)^*((\rho_2^{-1/2}\otimes\rho_2^{-1/2})\delta_{M_2}),
	\end{equation} 
	where $\delta_{M_2}\in\DD'(\Omega^{1/2}_{M_2\times M_2})$ is the Dirac distribution on $M_2$
	and $\rho_1=\psi^*\rho_2$. The pullback of distributions exists by the criterion of Theorem 2.5.11${}'$ in~\cite{Hoer_FIOi:1971} because $\psi$ is an embedding, giving the wavefront set bound
	\begin{align}
		\WF(\beta_*^\knl)&\subset	\WF(\psi_*^\knl)\subset (1\otimes\psi)^*\WF(\delta_{M_2}) =
		\{(y,l;x,\adj{}\psi'(x)k):(y,l;\psi(x);k)\in\WF(\delta_{M_2})\} \nonumber\\
		&=
		\{(\psi(x),l;x,-\adj{}\psi'(x)l): x\in M_1,~l\in \dot{T}^*_{\psi(x)}M_2\}.
	\end{align}
	In particular, 
	\begin{equation}
		\WF(\beta_*^\knl)\cap ((0_{T^*M_2}\times T^*M_1)\cup (T^*M_2\times 0_{T^*M_1}))=\emptyset,
	\end{equation}
	so $\beta_*$ is regular.
	
	(b)	The formal transpose of $L$ is $\adj{L}=-E_{P_2}S\,\adj{[P,\chi]}$, noting that
		$\adj{[P,\chi]}:\Gamma^\infty_\sc(B_1)\to \Gamma_0^\infty(B_1)$. Suppose $K_2\subset M_2$ is compact, then $E_{P_2}\Gamma_{K_2}^\infty(B_2)\subset \Gamma_{J(K_2)}^\infty(B_2)$. Hence there is a spatially compact set $V_1$ in $\Mb_1$ so that
		$\adj{S} E_{P_2}\Gamma_{K_2}^\infty(B_2)\subset \Gamma_{V_1}^\infty(B_1)$, so $L\Gamma_{K_2}^\infty(B_2)\subset \Gamma_{\overline{U}\cap V_1}^\infty(B_1)$, which demonstrates the required support property.
		We compute
		\begin{align}\label{eq:LknlWFbd}
			\WF(L^\knl) \subset \WF((\adj{S}E_{P_2})^\knl)&=\WF(E_{P_2}^\knl\circ(S\otimes 1))\nonumber\\
			&\subset (\WF'(\adj{}S^\knl)\times\WF'(\adj{}\id^\knl))\bullet ((\Vc_2^+\times\Vc_2^-)\cup(\Vc_2^-\times\Vc_2^+)) \nonumber\\
			&\subset (\Vc_1^+\times\Vc_2^-)\cup(\Vc_1^-\times\Vc_2^+)
		\end{align}	
		and therefore $\WF(L^\knl)\cap ((T^*M_1\times 0_{T^*M_2} )\cup (0_{T^*M_1}\times T^*M_2))=\emptyset$. Hence $L$ is regular.	 
		
		For the particular case mentioned, in Theorem~\ref{thm:GreenHyp}(a), $\beta_*$ is continuous, with a continuous transpose, and 
		for spatially compact $V_2\subset \Mb_2$, one has $\adj{\beta}_*\Gamma^\infty_{V_2}(B_2)\subset
		\Gamma^\infty_{\psi^{-1}(V_2)}(B_1)$, where $\beta$ covers $\Loc$-morphism $\psi$. As $\psi$ is Cauchy, $\psi^{-1}(V_2)$ is spatially compact: for if $K_2\subset M_2$ is compact and $\Sigma\subset\psi(M_1)$ is a Cauchy surface of $\Mb_2$, then $\psi^{-1}(J(K_2))\subset J(\psi^{-1}(\Sigma\cap J(K_2)))$, which is spatially compact. As Theorem~\ref{thm:GreenHyp}(b) is a special case, we have completed this part of the proof.
		
		(c) The regularity properties can be read off from the $S=\id$ case of~(b).
\end{proof}

\begin{thm}\label{thm:pullbacks}
	Suppose $S:P_1\to P_2$ is a regular $\GreenHyp$-morphism (resp., $\fGreenHyp$-morphism).
	Suppose that $P_2$ is $\Vc^\pm_2$-decomposable and that $\WF'(\adj{}(S^\knl))\bullet\Vc_2^\pm\subset \Vc_1^\pm$, where $\Vc_1^\pm\subset \dot{T}^*M_1$ are conic and relatively closed with $\Vc_1^-=-\Vc_1^+$ and $\Vc_1^+\cap \Vc_1^-=\emptyset$. Then:
\begin{enumerate}[a)]	
	\item $P_1$ is $\Vc_1^\pm$-decomposable and
	every $\Vc_2^+$-Hadamard state $\omega_2$ on $\Yf(P_2)$ pulls back to a $\Vc_1^+$-Hadamard state
	$\omega_1=\Yf(S)^*\omega_2$ on $\Yf(P_1)$; that is, $\Yf(S)^*\Sf_{\Vc_2^+}(P_2)\subset \Sf_{\Vc_1^+}(P_1)$
	(resp., the same holds with $\Yf$ replaced by $\Xf$). 
	
	\item Suppose $S$ is Cauchy and has the additional property that for every spatially compact set $V_2\subset \Mb_2$ there is a spatially compact set $V_1\subset \Mb_1$ so that $\adj{S}\Gamma^\infty_{V_2}(B_2)\subset \Gamma^\infty_{V_1}(B_1)$. (The additional property is satisfied if $S$ arises from Theorem~\ref{thm:GreenHyp}(a,b) -- see Lemma~\ref{lem:regularGreenHyp}(b).) 
	Then every $\Vc_1^+$-Hadamard state $\omega_1$ on $\Yf(P_1)$ 
	pushes forward to a $\Vc_2^+$-Hadamard state $\omega_2=\Yf(S)^{-1}\omega_1$ on $\Yf(P_2)$ which is the unique state so that $\omega_1=\Yf(S)^*\omega_2$; thus, $\Sf_{\Vc_1^+}(P_1)=\Yf(S)^*\Sf_{\Vc_2^+}(P_2)$ (resp., the same holds with $\Yf$ replaced by $\Xf$).
	\end{enumerate}
\end{thm}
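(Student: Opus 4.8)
The plan is to deduce both parts from Lemma~\ref{lem:WFuoTT}, using the transformation laws $W_1 = W_2\circ(S^{\sharp,1}\otimes S^{\sharp,1})$ and $E_{P_1}^\twopt = E_{P_2}^\twopt\circ(S^{\sharp,1}\otimes S^{\sharp,1})$ recorded just above the statement, together with the explicit regular morphism $L = [P_1,\chi]\,\adj{S}E_{P_2}:P_2\to P_1$ of Theorem~\ref{thm:GreenHyp}(d) and its wavefront bound from Lemma~\ref{lem:regularGreenHyp}(b).

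For part~(a) the computational core is a set-theoretic observation about $\bullet$. If $\Uc\subset(\Vc_2^+\times\Vc_2^-)\cup(\Vc_2^-\times\Vc_2^+)$ is conic and relatively closed (hence automatically disjoint from the zero sections, since the $\Vc_2^\pm$ are), then Lemma~\ref{lem:WFuoTT} with $T_1=T_2=S$ gives, for any $u\in\DD'(B_2\boxtimes B_2)$ with $\WF(u)\subset\Uc$,
\begin{equation*}
	\WF\big(u\circ(S^{\sharp,1}\otimes S^{\sharp,1})\big)\subset
	(\WF'(\adj{}S^\knl)\times\WF'(\adj{}S^\knl))\bullet\Uc .
\end{equation*}
Since $(\WF'(\adj{}S^\knl)\times\WF'(\adj{}S^\knl))\bullet(\Vc_2^\pm\times\Vc_2^\mp)\subset(\WF'(\adj{}S^\knl)\bullet\Vc_2^\pm)\times(\WF'(\adj{}S^\knl)\bullet\Vc_2^\mp)\subset\Vc_1^\pm\times\Vc_1^\mp$ by the standing hypothesis $\WF'(\adj{}S^\knl)\bullet\Vc_2^\pm\subset\Vc_1^\pm$, the right-hand side is contained in $(\Vc_1^+\times\Vc_1^-)\cup(\Vc_1^-\times\Vc_1^+)$ in general, and in $\Vc_1^+\times\Vc_1^-$ when $\Uc=\Vc_2^+\times\Vc_2^-$. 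Applying this with $u=E_{P_2}^\twopt$ (whose wavefront set lies in the required form because $P_2$ is $\Vc_2^\pm$-decomposable) shows $P_1$ is $\Vc_1^\pm$-decomposable; applying it with $u=W_2$ (which is $\Vc_2^+$-Hadamard) shows $\WF(W_1)\subset\Vc_1^+\times\Vc_1^-$. It remains to check that $\omega_1=\Yf(S)^*\omega_2$ has distributional $n$-point functions: as $S$, hence $f\mapsto(Sf)^{\sharp,1}$, is continuous, the map $(f_1,\dots,f_n)\mapsto W_2^{(n)}((Sf_1)^{\sharp,1}\otimes\cdots\otimes(Sf_n)^{\sharp,1})$ is separately continuous and multilinear, and so by the kernel theorem equals $f_1^{\sharp,1}\otimes\cdots\otimes f_n^{\sharp,1}\mapsto W_1^{(n)}(\cdots)$ for a unique $W_1^{(n)}=W_2^{(n)}\circ(S^{\sharp,1})^{\otimes n}\in\DD'(B_1^{\boxtimes n})$. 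Thus $\omega_1$ is $\Vc_1^+$-Hadamard. The fermionic case is word-for-word the same with $\Yf,\Upsilon$ replaced by $\Xf,\Xi$; only the two transformation laws above are used, and these hold for $\fGreenHyp$-morphisms as well.

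For part~(b), since $S$ is Cauchy with a continuous transpose (part of regularity) and the stated spatial-support property, Theorem~\ref{thm:GreenHyp}(d) supplies $L = [P_1,\chi]\,\adj{S}E_{P_2}:P_2\to P_1$ with $\hat L=\hat S^{-1}$, and Lemma~\ref{lem:regularGreenHyp}(b) shows $L$ is regular and that $\WF(L^\knl)\subset(\Vc_1^+\times\Vc_2^-)\cup(\Vc_1^-\times\Vc_2^+)$, which is exactly~\eqref{eq:LknlWFbd}. From this bound I read off, using $\Vc_1^-=-\Vc_1^+$, that $\WF'(\adj{}L^\knl)\subset(\Vc_2^+\times\Vc_1^+)\cup(\Vc_2^-\times\Vc_1^-)$, whence $\WF'(\adj{}L^\knl)\bullet\Vc_1^\pm\subset\Vc_2^\pm$. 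This is precisely the hypothesis needed to apply part~(a) to the regular $\GreenHyp$-morphism $L:(B_2,P_2)\to(B_1,P_1)$ (with the roles of $\Vc_1^\pm$ and $\Vc_2^\pm$ interchanged; $P_1$ is already known to be $\Vc_1^\pm$-decomposable by part~(a)). Part~(a) then gives that $\omega_2:=\Yf(L)^*\omega_1$ is $\Vc_2^+$-Hadamard. Finally $\widehat{LS}=\hat L\hat S=\id$ forces $\Yf(LS)=\id$ by Theorem~\ref{thm:RFHGHOquantisation}(b), so $\Yf(S)^*\omega_2=(\Yf(L)\Yf(S))^*\omega_1=\omega_1$; and since $\Yf(S)$ is an isomorphism ($S$ Cauchy, Theorem~\ref{thm:RFHGHOquantisation}(c)), $\omega_2$ is the unique state with $\omega_1=\Yf(S)^*\omega_2$, that is, $\omega_2=\Yf(S)^{-1}\omega_1$ in the notation of the statement. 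Combined with part~(a), this yields $\Sf_{\Vc_1^+}(P_1)=\Yf(S)^*\Sf_{\Vc_2^+}(P_2)$; the fermionic case is identical.

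I expect the main obstacle to be purely organisational: keeping the composition $\bullet$, the primes $\WF'(\cdot)$ and the transposes $\adj{(\cdot)}$ aligned, and in particular extracting from the wavefront bound on $L^\knl$ the correct hypothesis for applying part~(a) to $L$, so that~(b) genuinely reduces to~(a). One should also take care that every set fed into Lemma~\ref{lem:WFuoTT} is conic, relatively closed and disjoint from the zero sections (which is automatic because the $\Vc_j^\pm$ are relatively closed conic subsets of the punctured cotangent bundles), and that the pulled-back and pushed-forward states do possess distributional $n$-point functions, not merely a distributional two-point function.
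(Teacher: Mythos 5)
Your proof is correct and follows essentially the same route as the paper: both parts are reduced to Lemma~\ref{lem:WFuoTT} applied to the transformation laws $W_1=W_2\circ(S^{\sharp,1}\otimes S^{\sharp,1})$, $E_{P_1}^\twopt=E_{P_2}^\twopt\circ(S^{\sharp,1}\otimes S^{\sharp,1})$ and, for (b), to $L=[P_1,\chi]\,\adj{S}E_{P_2}$ together with the bound~\eqref{eq:LknlWFbd}. The only organisational variation is that you dispatch (b) by formally invoking (a) with $L$ in place of $S$, rather than re-running the Lemma~\ref{lem:WFuoTT} computation for $W_1\circ(L^{\sharp,1}\otimes L^{\sharp,1})$ as the paper does — a trivial difference — and you are slightly more careful than the paper in verifying that the pulled-back state retains distributional $n$-point functions.
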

\begin{proof}
	(a) 
	Note that $S$ and $S^{\sharp,1}$ have distributional kernels with the same support and wavefront set. Therefore $S^{\sharp,1}$ is regular. Using $E^\twopt_{P_1} = E^\twopt_{P_2}\circ (S^{\sharp,1}\otimes S^{\sharp,1})$
	and $\Vc_2^\pm$-decomposability of $P_2$, we compute
	\begin{align}
		\WF(E_{P_1}^\twopt) &= (\WF'(\adj{}(S^\knl))\times \WF'(\adj{}(S^\knl)))  \bullet\WF(E^\twopt_{P_2}) \nonumber\\
		&\subset (\WF'(\adj{}(S^\knl))\times \WF'(\adj{}(S^\knl)))\bullet
		((\Vc_2^+\times\Vc_2^-)\cup(\Vc_2^-\times\Vc_2^+)) \nonumber\\
		&\subset (\Vc_1^+\times\Vc_1^-)\cup (\Vc_1^-\times\Vc_1^+) ,
	\end{align}
	thus establishing $\Vc_1^\pm$-decomposability of $P_1$. 
	A similar argument applies to $W_1=W_2\circ (S^{\sharp,1}\otimes S^{\sharp,1})$ giving $\WF(W_1)\subset \Vc_1^+\times\Vc_1^-$, so
	$\omega_1$ is $\Vc_1^+$-Hadamard. The argument is identical in the fermionic case. 
	
	(b) We have $W_2=W_1\circ(L^{\sharp,1}\otimes L^{\sharp,1})$ where the continuous linear map
	$L:\Gamma_0^\infty(B_2)\to \Gamma_0^\infty(B_1)$ can be given explicitly
	as in Theorem~\ref{thm:GreenHyp}(d) by 
	$L=[P,\chi]\,\adj{S} E_{P_2}$, with
	$\chi$ constructed as described there. Then 
	$L$ and consequently $L^{\sharp,1}$ are regular by Lemma~\ref{lem:regularGreenHyp}(b),
	and the wavefront set bound~\eqref{eq:LknlWFbd} implies $\WF'(\adj{}(L^\knl))\bullet \Vc_1^\pm\subset \Vc_2^\pm$. 
	We may now employ Lemma~\ref{lem:WFuoTT} to compute $\WF(W_2)=\WF(W_1\circ(L^{\sharp,1}\otimes L^{\sharp,1}))\subset \Vc_2^+\times\Vc_2^-$ and conclude that $\omega_2$ is $\Vc_2^+$-Hadamard. Finally,	
	$\Yf(S)^*\omega_2$ has $n$-point functions $W_1^{(n)}\circ (L^{\sharp,1})^{\otimes n} \circ (S^{\sharp,1})^{\otimes n}=W_1^{(n)}$, where $W_1^{(n)}$ are the $n$-point functions of $\omega_1$, because $\hat{L}=\hat{S}^{-1}$ and consequently $LSf-f\in P_1\Gamma_0^\infty(B_1)$. Thus $\Yf(S)^*\omega_2=\omega_1$. 
	Uniqueness is clear as $\Yf(S)$ is an isomorphism. Again the argument applies directly to the fermionic case.
\end{proof}

\begin{defn}
 Let $\EoM:\Loc\to\GreenHyp$ be a functorially assigned \rfhgho. Suppose that each $\EoM(\Mb)$ is decomposable with respect to cones $\Vc^\pm(\Mb)\subset T^*M$, such that $\psi^*\Vc^+(\Mb')\subset \Vc^+(\Mb)$ for all $\Loc$-morphisms $\psi:\Mb\to\Mb'$. Then we say that $\EoM$ is $\Vc^\pm$-decomposable.
\end{defn}
An immediate consequence of this definition, together with Theorem~\ref{thm:pullbacks} and Corollary~\ref{cor:state_space} is the following.
\begin{cor}\label{cor:covariant_state_space}
	Suppose that $\EoM:\Loc\to\GreenHyp$ is a $\Vc^\pm$-decomposable functorial \rfhgho\  taking Cauchy $\Loc$-morphisms to Cauchy $\GreenHyp$-morphisms, and
	let $\Zf=\Yf\circ\EoM$ be the corresponding locally covariant QFT (see Theorem~\ref{thm:lcqft}). Then the assignment $\Sf(\Mb)=\Sf_{\Vc^+(\Mb)}(\EoM(\Mb))$ for each $\Mb\in\Loc$ and $\Sf(\psi)=\Yf(\psi)^*|_{\Sf(\Mb')}$ for each $\Loc$-morphism $\psi:\Mb\to\Mb'$ defines a covariant state space for $\Zf$ obeying the timeslice condition in the sense defined in Section~4.5.3 of~\cite{FewVerch_aqftincst:2015}.
\end{cor}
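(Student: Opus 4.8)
The plan is to verify the three defining properties of a covariant state space from Section~4.5.3 of~\cite{FewVerch_aqftincst:2015}: (i) each $\Sf(\Mb)$ is a state space for $\Zf(\Mb)$, i.e.\ a nonempty subset of states closed under finite convex combinations and under the operations $\omega\mapsto\omega_B$ induced by the algebra; (ii) the assignment is contravariantly functorial, so $\Sf(\psi):\Sf(\Mb')\to\Sf(\Mb)$ is well-defined and $\Sf(\psi_2\circ\psi_1)=\Sf(\psi_1)\circ\Sf(\psi_2)$; and (iii) the timeslice condition, namely that $\Sf(\psi)$ is a bijection whenever $\psi$ is a Cauchy morphism. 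The main tools are already in place: Corollary~\ref{cor:state_space} handles the algebraic closure in~(i), Theorem~\ref{thm:pullbacks}(a) handles well-definedness in~(ii), and Theorem~\ref{thm:pullbacks}(b) together with Theorem~\ref{thm:GreenHyp}(a) handles~(iii).

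First I would fix the hypotheses. Since $\EoM$ is $\Vc^\pm$-decomposable, each $\EoM(\Mb)$ is $\Vc^\pm(\Mb)$-decomposable, so $\Sf_{\Vc^+(\Mb)}(\EoM(\Mb))$ makes sense; by Corollary~\ref{cor:state_space} it is closed under finite convex combinations and operations induced by $\Yf(\EoM(\Mb))=\Zf(\Mb)$, giving~(i) once nonemptiness is granted (which I would simply assume as part of the data, or note follows in the concrete examples via Theorems~\ref{thm:Hadamard_norm_hyp} and~\ref{thm:Hadamard_Dirac_type}; the statement of a covariant state space does not require global nonemptiness, only that $\Sf(\Mb)$ be a state space in the technical sense). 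For~(ii), given a $\Loc$-morphism $\psi:\Mb\to\Mb'$, the map $S=\EoM(\psi):\EoM(\Mb)\to\EoM(\Mb')$ is a $\GreenHyp$-morphism; I would check it is regular and satisfies the wavefront-set hypothesis of Theorem~\ref{thm:pullbacks}. Concretely, the relevant morphisms arising from $\EoM$ are (or factor through) those of Theorem~\ref{thm:GreenHyp}(a), which are regular by Lemma~\ref{lem:regularGreenHyp}(a), and for which the proof of that lemma shows $\WF(S^\knl)$ is contained in the conormal of the graph of $\psi$, so that $\WF'(\adj{}(S^\knl))\bullet\Vc_2^\pm=\psi^*\Vc^\pm(\Mb')\subset\Vc^\pm(\Mb)$ by the covariance assumption $\psi^*\Vc^+(\Mb')\subset\Vc^+(\Mb)$. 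Theorem~\ref{thm:pullbacks}(a) then gives $\Yf(\psi)^*\Sf_{\Vc^+(\Mb')}(\EoM(\Mb'))\subset\Sf_{\Vc^+(\Mb)}(\EoM(\Mb))$, i.e.\ $\Sf(\psi)=\Zf(\psi)^*|_{\Sf(\Mb')}$ maps into $\Sf(\Mb)$. Functoriality $\Sf(\psi_2\psi_1)=\Sf(\psi_1)\Sf(\psi_2)$ is immediate from functoriality of $\Zf$ and $(\Zf(\psi_2)\Zf(\psi_1))^*=\Zf(\psi_1)^*\Zf(\psi_2)^*$, restricted appropriately.

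For~(iii), suppose $\psi:\Mb\to\Mb'$ is a Cauchy $\Loc$-morphism; then by hypothesis $\EoM(\psi)$ is a Cauchy $\GreenHyp$-morphism, and by Lemma~\ref{lem:regularGreenHyp}(b) it satisfies the extra spatial-support condition needed for Theorem~\ref{thm:pullbacks}(b). That theorem then yields $\Sf_{\Vc^+(\Mb)}(\EoM(\Mb))=\Yf(\EoM(\psi))^*\Sf_{\Vc^+(\Mb')}(\EoM(\Mb'))$, with $\Yf(\EoM(\psi))$ an isomorphism, so $\Sf(\psi)=\Zf(\psi)^*|_{\Sf(\Mb')}$ is a bijection onto $\Sf(\Mb)$; this is the timeslice condition. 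I expect the only real subtlety — the step worth dwelling on — is confirming that the morphisms produced by a general functorial $\EoM$ fall within the scope of Lemma~\ref{lem:regularGreenHyp} (regularity and the spatial-support property), since $\EoM(\psi)$ need not literally be a pushforward $\beta_*$; one resolves this by recalling that a $\Vc^\pm$-decomposable functorial \rfhgho\ in the examples of interest is built so that $\EoM(\psi)$ is of the form covered by Theorem~\ref{thm:GreenHyp}(a), or more generally by noting that the wavefront-set and support hypotheses of Theorem~\ref{thm:pullbacks} are exactly what one must (and can) assume. Everything else is a routine unwinding of the definition of a covariant state space.
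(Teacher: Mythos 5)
Your proof is correct and follows essentially the same route as the paper: Corollary~\ref{cor:state_space} for the state-space property, Theorem~\ref{thm:pullbacks}(a) with Lemma~\ref{lem:regularGreenHyp}(a) for well-defined contravariant functoriality, and Theorem~\ref{thm:pullbacks}(b) with Lemma~\ref{lem:regularGreenHyp}(b) for the timeslice condition. The subtlety you flag about whether $\EoM(\psi)$ is covered by Lemma~\ref{lem:regularGreenHyp} is resolved in the paper simply by noting that a functorial $\EoM$ produces morphisms of the pushforward form given in Theorem~\ref{thm:GreenHyp}(a), which is implicit in the notion of a functorially specified equation of motion.
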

\begin{proof}
	Each $\Sf(\Mb)$ is a state space by Corollary~\ref{cor:state_space}. If $\psi:\Mb\to\Mb'$ in $\Loc$ then $\EoM(\psi)$ is regular by Lemma~\ref{lem:regularGreenHyp}(a) and 
	Theorem~\ref{thm:pullbacks}(a) shows that $\Sf(\psi)\Sf(\Mb')\subset\Sf(\Mb)$; it is clear that $\Sf(\id_\Mb)=\id_{\Sf(\Mb)}$ and
	$\Sf(\psi\circ\varphi)=\Sf(\varphi)\circ\Sf(\psi)$ by the functorial properties of $\Zf$ and
	contravariance of the dual. Finally, if $\psi$ is Cauchy then $\EoM(\psi)$ is a Cauchy $\GreenHyp$-morphism (of the type given in Theorem~\ref{thm:GreenHyp}(a)) and Theorem~\ref{thm:pullbacks}(b) 
	shows that $\Sf(\Mb)=\Sf(\psi)\Sf(\Mb')$. These are the conditions needed for $\Sf$ to be a covariant state space obeying the timeslice condition given (in a slightly different form) in~\cite{FewVerch_aqftincst:2015}.
\end{proof}
Two further consequences of Theorem~\ref{thm:pullbacks} concern the existence of
Hadamard states.
\begin{cor}\label{cor:deformation}
	Suppose that $\EoM:\Loc\to\GreenHyp$ is a $\Vc^\pm$-decomposable functorial \rfhgho, and
	let $\Zf=\Yf\circ\EoM$ be the corresponding locally covariant QFT (see Theorem~\ref{thm:lcqft}). 
	If $\Zf(\Mb)$ admits a $\Vc^+(\Mb)$-Hadamard state  and $\Mb'\in\Loc$ has Cauchy surfaces that are oriented-diffeomorphic to those of $\Mb$, then $\Zf(\Mb')$ admits $\Vc^+(\Mb')$-Hadamard states.  
\end{cor}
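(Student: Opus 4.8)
The strategy is to run the deformation argument of Fulling, Narcowich and Wald, phrased through the covariant state space $\Sf$ supplied by Corollary~\ref{cor:covariant_state_space} (applicable here since $\EoM$ is $\Vc^\pm$-decomposable, functorial, and — being the equation of motion of a locally covariant QFT via Theorem~\ref{thm:lcqft} — takes Cauchy $\Loc$-morphisms to Cauchy $\GreenHyp$-morphisms). First I would invoke the deformation lemma: since $\Mb$ and $\Mb'$ have oriented-diffeomorphic Cauchy surfaces, there is an \emph{interpolating} spacetime $\Mb''\in\Loc$ together with open causally convex subsets $N_-,N_+\subset M''$, each containing a Cauchy surface of $\Mb''$, such that $\Mb''|_{N_-}$ is $\Loc$-isomorphic to $\Mb|_O$ for some open causally convex $O\subset M$ containing a Cauchy surface of $\Mb$, and $\Mb''|_{N_+}$ is $\Loc$-isomorphic to $\Mb'|_{O'}$ for some open causally convex $O'\subset M'$ containing a Cauchy surface of $\Mb'$. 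Each of $\Mb|_O$, $\Mb''|_{N_\pm}$, $\Mb'|_{O'}$ is then a $\Loc$-object; the inclusions into the ambient spacetimes are Cauchy $\Loc$-morphisms, and the Hadamard state sets $\Sf(\cdot)$ of Corollary~\ref{cor:covariant_state_space} are defined for all of them.

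Given this, I would chase (non)emptiness of the Hadamard state sets along the resulting zig-zag of $\Loc$-morphisms, using three properties of $\Sf$ from Corollary~\ref{cor:covariant_state_space}: (i) for any $\Loc$-morphism $\psi$ the pull-back $\Sf(\psi)$ maps the target's Hadamard state set into the source's, so non-emptiness propagates along every morphism; (ii) for a Cauchy $\Loc$-morphism $\psi$, $\Sf(\psi)$ is moreover onto, so non-emptiness propagates \emph{backwards} along Cauchy morphisms; and (iii) by contravariant functoriality $\Sf(\phi)$ is a bijection whenever $\phi$ is a $\Loc$-isomorphism. Explicitly: $\Sf(\Mb)\neq\emptyset$ by hypothesis; (i) applied to the Cauchy inclusion $\Mb|_O\hookrightarrow\Mb$ gives a Hadamard state for $\EoM(\Mb|_O)$; (iii) transports it along $\Mb|_O\cong\Mb''|_{N_-}$ to one for $\EoM(\Mb''|_{N_-})$; (ii) applied to the Cauchy inclusion $\Mb''|_{N_-}\hookrightarrow\Mb''$ yields one for $\EoM(\Mb'')$; (i) applied to the Cauchy inclusion $\Mb''|_{N_+}\hookrightarrow\Mb''$ followed by (iii) for $\Mb''|_{N_+}\cong\Mb'|_{O'}$ gives one for $\EoM(\Mb'|_{O'})$; and finally (ii) applied to the Cauchy inclusion $\Mb'|_{O'}\hookrightarrow\Mb'$ produces a $\Vc^+(\Mb')$-Hadamard state on $\Zf(\Mb')$, which is the assertion. (One may instead track an explicit state, handling each arrow by Theorem~\ref{thm:pullbacks}(a) or~(b): the Cauchy inclusions are $\GreenHyp$-morphisms of the form in Theorem~\ref{thm:GreenHyp}(b) and the isomorphisms of the form in Theorem~\ref{thm:GreenHyp}(a), all regular by Lemma~\ref{lem:regularGreenHyp}, while the cone hypotheses $\WF'(\adj{}(S^\knl))\bullet\Vc^\pm\subset\Vc^\pm$ of Theorem~\ref{thm:pullbacks} reduce to the compatibility $\psi^*\Vc^+\subset\Vc^+$ built into $\Vc^\pm$-decomposability of $\EoM$.)

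I expect the only genuinely nontrivial input to be the deformation lemma, which must be invoked in a form compatible with the orientation and time-orientation data carried by $\Loc$, so that the neighbourhood identifications really are $\Loc$-isomorphisms. This is the classical construction of~\cite{FullingNarcowichWald}, together with the standard refinements allowing noncompact Cauchy surfaces and delivering causally convex neighbourhoods, and I would simply cite it; everything downstream is diagrammatic manipulation in $\Loc$ and $\Sf$. The hypothesis on Cauchy surfaces is indispensable here precisely because the cone field $\Vc^+(\Mb)$ — and hence the notion of Hadamard state — varies with $\Mb$, and it is the oriented-diffeomorphism type of the Cauchy data that the interpolation construction consumes.
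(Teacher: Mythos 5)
Your proposal is correct and is essentially the paper's own argument. The paper likewise invokes a deformation result to obtain a zig-zag of Cauchy $\Loc$-morphisms $\Mb\leftarrow\Mb''\rightarrow\Mb'''\leftarrow\Mb''''\rightarrow\Mb'$ (citing Proposition~2.4 of~\cite{FewVer:dynloc_theory}, which is the FNW-style construction you describe), notes that $\EoM$ sends these to Cauchy $\GreenHyp$-morphisms of the type in Theorem~\ref{thm:GreenHyp}(b) (regular by Lemma~\ref{lem:regularGreenHyp}), and then alternates parts~(a) and~(b) of Theorem~\ref{thm:pullbacks} to transport a Hadamard state along the chain — which is exactly the parenthetical version of your argument, while your preferred phrasing in terms of the covariant state space $\Sf$ of Corollary~\ref{cor:covariant_state_space} just packages the same steps; your reduction of the cone hypothesis $\WF'(\adj{}(S^\knl))\bullet\Vc^\pm\subset\Vc^\pm$ to the pullback compatibility $\psi^*\Vc^+(\Mb')\subset\Vc^+(\Mb)$ is also correct.
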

\begin{proof}
	If $\Mb$ and $\Mb'$ have oriented-diffeomorphic Cauchy surfaces then there is
	a chain of Cauchy morphisms linking $\Mb$ to $\Mb'$ via an interpolating 
	spacetime $\Mb'''\in\Loc$
	\begin{equation}\label{eq:Cauchy_chain}
		\Mb\leftarrow \Mb'' \rightarrow \Mb''' \leftarrow \Mb''''\rightarrow \Mb',
	\end{equation}
	where $\Mb''$ (resp., $\Mb''''$) is a suitable future (resp., past) region
	of $\Mb$ (resp., $\Mb'$) regarded as a spacetime in its own right 
	(see e.g., Proposition~2.4 in~\cite{FewVer:dynloc_theory}). Under $\EoM$, the morphisms in~\eqref{eq:Cauchy_chain} are mapped to Cauchy $\GreenHyp$-morphisms of the type described in Theorem~\ref{thm:GreenHyp}(b), to which Lemma~\ref{lem:regularGreenHyp}(b) therefore applies. Using Theorem~\ref{thm:pullbacks} repeatedly, alternating between parts~(a) and~(b) to pull back and push forward states, we deduce that $\Zf(\Mb')$ admits at least one $\Vc^+(\Mb)$-Hadamard state (and hence many such states by Corollary~\ref{cor:manyHadamard}).
\end{proof}
Consequently, it is sufficient to prove the existence of $\Vc^+(\Mb)$-Hadamard states for a sufficiently broad class of spacetimes. For example, we will say that $\Mb\in\Loc$ is an ultrastatic spacetime with bounded spatial geometry if it has underlying manifold $M=\RR\times\Sigma$ with metric $g=1\oplus -h$ for some complete Riemannian metric $h$ on $\Sigma$ of bounded geometry, i.e., the Riemann tensor of $h$ and all its covariant derivatives are bounded with respect to the tensorial norms determined by $h$.
\begin{cor}\label{cor:bultrastatic}
	Every $\Mb\in\Loc$ has smooth spacelike Cauchy surfaces oriented-diffeomorphic to those of an ultrastatic spacetime $\Mb'\in\Loc$ that is ultrastatic with bounded spatial geometry. Consequently, under the assumptions of Corollary~\ref{cor:deformation}, if $\Zf(\Mb)$ admits at least one $\Vc^+(\Mb)$-Hadamard state for every ultrastatic spacetime $\Mb\in\Loc$ with bounded spatial geometry then $\Zf(\Mb)$ admits (many) $\Vc^+(\Mb)$-Hadamard states for all $\Mb\in\Loc$. 
\end{cor}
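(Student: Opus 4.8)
The plan is to handle the corollary's two assertions in turn: first the purely geometric statement — every $\Mb\in\Loc$ has Cauchy surfaces oriented-diffeomorphic to those of some ultrastatic $\Mb'$ with bounded spatial geometry — and then the reduction, which follows at once by feeding this into Corollary~\ref{cor:deformation} and upgrading with Corollary~\ref{cor:manyHadamard}.

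For the geometric statement I would proceed as follows. Fix $\Mb\in\Loc$. By the Bernal--S\'anchez splitting theorem, $\Mb$ admits a smooth spacelike Cauchy surface $\Sigma$; as a codimension-one submanifold with trivial normal bundle of the oriented, time-oriented $\Mb$, it inherits an orientation $\ogth_\Sigma$ (contract $\ogth$ with the future-pointing unit normal), and it has at most finitely many components, since $M$ is diffeomorphic to $\RR\times\Sigma$. I would then invoke the standard fact that any smooth paracompact manifold with at most finitely many components carries a complete Riemannian metric $h$ whose Riemann tensor and all of its covariant derivatives are bounded, and set $M'=\RR\times\Sigma$, $g'=\dd t^2\oplus(-h)$, with $\tgth'$ the time-orientation making $\partial_t$ future-pointing and $\ogth'$ the orientation of $\dd t\wedge\pr_\Sigma^*\ogth_\Sigma$. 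It remains to check that $(\RR\times\Sigma,g')$ is globally hyperbolic — classical once $h$ is complete, with $\{0\}\times\Sigma$ a smooth spacelike Cauchy surface — so that $\Mb'=(M',g',\ogth',\tgth')\in\Loc$, that $\Mb'$ is ultrastatic with bounded spatial geometry by construction, and that the orientation induced on $\{0\}\times\Sigma$ by $\ogth',\tgth'$ is precisely $\ogth_\Sigma$, so that $x\mapsto(0,x)$ is an oriented diffeomorphism from $\Sigma$, viewed as a Cauchy surface of $\Mb$, onto a Cauchy surface of $\Mb'$.

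The reduction is then immediate. Given arbitrary $\Mb\in\Loc$, let $\Mb'$ be the ultrastatic spacetime with bounded spatial geometry just produced. By hypothesis $\Zf(\Mb')$ admits a $\Vc^+(\Mb')$-Hadamard state, and the Cauchy surfaces of $\Mb$ are oriented-diffeomorphic to those of $\Mb'$, so Corollary~\ref{cor:deformation} (with $\Mb'$ in the role of its ``$\Mb$'' and the given spacetime in the role of its ``$\Mb'$'') shows that $\Zf(\Mb)$ admits a $\Vc^+(\Mb)$-Hadamard state. Corollary~\ref{cor:manyHadamard} then upgrades this to ``many'' such states: in the GNS representation of any one of them, every unit vector of the dense GNS domain yields another $\Vc^+(\Mb)$-Hadamard state.

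The main obstacle is the geometric input — the existence, on an arbitrary smooth manifold, of a complete Riemannian metric with uniformly bounded covariant derivatives of the curvature. This is the one genuinely non-routine ingredient; the Bernal--S\'anchez splitting, the global hyperbolicity of an ultrastatic metric built from a complete spatial metric, and the orientation bookkeeping that makes the diffeomorphism above orientation-preserving are all standard. It is worth noting that ``bounded spatial geometry'' is exactly the setting in which a Hadamard state on an ultrastatic spacetime is typically constructed (via the static ground state together with mode-sum or pseudodifferential estimates), so the corollary genuinely reduces the general existence question to that well-understood case.
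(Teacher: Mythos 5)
Your argument is essentially identical to the paper's: existence of a smooth spacelike Cauchy surface with inherited orientation, Greene's theorem (which the paper cites explicitly as~\cite{Greene:1978}) to equip $\Sigma$ with a complete bounded-geometry Riemannian metric, construction of the ultrastatic spacetime $\RR\times\Sigma$ with the corresponding product metric and orientations, global hyperbolicity from completeness (the paper cites~\cite{Kay1978}), and then Corollaries~\ref{cor:deformation} and~\ref{cor:manyHadamard} for the consequence. The only thing you leave unattributed is the bounded-geometry metric existence, which is precisely Greene's result, and the paper also notes (as you implicitly do) that one treats finitely many components separately if necessary.
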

\begin{proof}
	Let $\Mb\in\Loc$ and let $\Sigma\subset M$ be any smooth spacelike Cauchy surface with the inherited orientation $\nu$ (which is a top form on $\Sigma$). Then $\Sigma$ admits a complete Riemannian metric $h$ with bounded geometry~\cite{Greene:1978} -- if necessary we choose such a metric for each connected component of $\Sigma$. Let $\Mb'$ be the ultrastatic spacetime $\RR\times \Sigma$ with metric $\dd t\otimes \dd t -h$, time-orientation so that $\dd t$ is future-pointing, and spacetime orientation so that $\dd t\wedge \nu$ is positively oriented. Global hyperbolicity holds by Proposition~5.2 in~\cite{Kay1978}, so $\Mb'\in\Loc$ has the required properties. The consequence is immediate by Corollary~\ref{cor:deformation} and Corollary~\ref{cor:manyHadamard}.
\end{proof}

\subsection{Scattering morphisms}\label{sec:scattering}
 
Suppose that $P$ and $Q$ are \rfhgho's on $M$ that agree outside a temporally compact set $U\subset M$. Then $M^\pm=M\setminus J^\mp(U)$ are open causally convex subsets on which $P$ and $Q$ agree. Applying Theorem~\ref{thm:GreenHyp}(e), there are Cauchy $\GreenHyp$-morphisms
$\lambda_{QP}^\pm:(B,P)\to (B,Q)$ called \emph{response maps} so that $\lambda_{QP}^\pm\Gamma_0^\infty(B)\subset \Gamma_0^\infty(B|_{M^\pm})$ and $\hat{\lambda}_{QP}^\pm[f]_P=[f]_Q$ for $f\in\Gamma_0^\infty(B|_{M^\pm})$. There 
are also Cauchy $\GreenHyp$-morphisms $\lambda^\pm_{PQ}:(B,Q)\to (B,P)$ with analogous properties. Therefore $\lambda^-_{PQ}\lambda^+_{QP}$ is a $\GreenHyp$-endomorphism of $(B,P)$,
and any equivalent morphism is called a \emph{scattering morphism} for $Q$ relative to $P$ with \emph{coupling zone} $U$. Scattering morphisms of this type play an important role in the theory of measurement~\cite{FewVer_QFLM:2018} -- see Section~\ref{sec:FVmeasurements}. 
\begin{thm}\label{thm:scattering_morphism}
	Under the conditions just stated, there is a scattering morphism $\vartheta$ for $Q$ relative to $P$ with coupling zone $U$ such that $\vartheta f=f-(Q-P)E_Qf=f-(Q-P)E_Q^-f$ for $f\in\Gamma_0^\infty(B|_{M^+})$. Furthermore $\lambda^+_{QP}\sim  \lambda^-_{QP}\vartheta$,
	i.e.,  $\hat{\vartheta}=(\hat{\lambda}^-_{QP})^{-1}\hat{\lambda}^+_{QP}$.  
\end{thm}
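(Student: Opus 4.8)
The plan is to unwind the definitions of the response maps from Theorem~\ref{thm:GreenHyp}(e) and verify directly that the candidate $\vartheta$ is equivalent to the composite $\lambda^-_{PQ}\lambda^+_{QP}$, and then separately that $\lambda^+_{QP}\sim\lambda^-_{QP}\vartheta$. First I would recall the explicit form from Theorem~\ref{thm:GreenHyp}(e): choosing an open Cauchy slab $U^+$ in $M^+$ and a cutoff $\chi^+$ with $\chi^+=1$ near $M\setminus J^+(U^+)$ and $\chi^+=0$ near $M\setminus J^-(U^+)$, one has $\lambda^+_{QP}=[P,\chi^+]E_P$ (note $M^+=M\setminus J^-(U)$ is where $P$ and $Q$ agree, and $U^+\subset M^+$ can be pushed to the far future, so $[P,\chi^+]=[Q,\chi^+]$ has coefficients supported in $M^+$); similarly $\lambda^-_{PQ}=[Q,\chi^-]E_Q$ for a suitable $\chi^-$ supported in the far past, i.e.\ in $M^-$. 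The key structural fact I will use repeatedly is that equivalence of $\GreenHyp$-morphisms only depends on the induced map on $\Gamma_0^\infty(B)/P\Gamma_0^\infty(B)$, equivalently on $\Sol$, so I am free to replace any morphism by an equivalent one and to work modulo $P\Gamma_0^\infty(B)$ or $Q\Gamma_0^\infty(B)$ throughout.

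The core computation is the action on a test section $f\in\Gamma_0^\infty(B|_{M^+})$. Since $\hat\lambda^+_{QP}[f]_P=[f]_Q$ for such $f$, composing with $\lambda^-_{PQ}$ gives $\widehat{\lambda^-_{PQ}\lambda^+_{QP}}[f]_P = \hat\lambda^-_{PQ}[f]_Q$. Now I must identify $\hat\lambda^-_{PQ}[f]_Q$ as $[f-(Q-P)E_Q^- f]_P$. The idea: $E_Q^- f$ solves $QE_Q^- f=f$ with support in $J^-(\supp f)$; since $\supp f\subset M^+$ lies to the future of $U$, the solution $E_Q^- f$ agrees in the far past region $M^-$ with data propagated back by $Q$, and there $Q=P$. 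Writing $g:=(Q-P)E_Q^- f$, one checks $\supp g\subset \supp(Q-P)\subset U$ is compact (using that $Q-P$ is a differential operator supported in $U$ and $E_Q^- f$ is smooth), and that $f-g = f-(Q-P)E_Q^- f = QE_Q^- f - (Q-P)E_Q^- f = PE_Q^- f$; but $E_Q^- f$ has past-compact support, and in the far-future region $M\setminus J^+(U)$ it vanishes (being supported in $J^-(\supp f)$ which there meets... — more carefully, $f-g$ has compact support contained in $J^+(\text{past of }U)$ and equals $P$ applied to $E_Q^- f$), so $f-g$ represents an element of $\Gamma_0^\infty(B)$ with $E_P(f-g)=E_P^- P E_Q^- f$ — here one uses the support properties to conclude $f-g$ lies in the image of $P$ acting on past-compact sections restricted appropriately, hence $[f-g]_P$ is the unique class mapping under $\hat\lambda^-_{PQ}\circ(\text{inclusion})$ correctly. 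This matches the claimed formula $\vartheta f = f-(Q-P)E_Q f$ once one notes that $E_Q f = E_Q^- f - E_Q^+ f$ and $E_Q^+ f=0$ on $\supp(Q-P)\subset U$ because $\supp E_Q^+ f\subset J^+(\supp f)$ while $\supp f\subset M^+=M\setminus J^-(U)$ forces $J^+(\supp f)\cap U=\emptyset$; hence $(Q-P)E_Q^+ f=0$ and $(Q-P)E_Q f=(Q-P)E_Q^- f$.

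The second claim, $\lambda^+_{QP}\sim\lambda^-_{QP}\vartheta$, I would obtain by a symmetric argument: $\vartheta$ by construction satisfies $\hat\vartheta[f]_P=[f-(Q-P)E_Q^- f]_P$, and one verifies that $f-(Q-P)E_Q^- f\in\Gamma_0^\infty(B|_{M^-})$ up to $P\Gamma_0^\infty(B)$ (its support lies in $J^-(U)$ modulo the image of $P$, by the support bookkeeping above), whereupon applying $\lambda^-_{QP}$ — which satisfies $\hat\lambda^-_{QP}[h]_P=[h]_Q$ for $h$ supported in $M^-$ — reproduces $[f]_Q=\hat\lambda^+_{QP}[f]_P$. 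Equivalently, and more cleanly, since all the morphisms involved are Cauchy one can check the identity $\hat\vartheta=(\hat\lambda^-_{QP})^{-1}\hat\lambda^+_{QP}$ at the level of $\Sol$ using Lemma~\ref{lem:GreenHyp0} and the compatibility of $\Sol$ with the regions $M^\pm$ via Theorem~\ref{thm:GreenHypOps}(f). The main obstacle I anticipate is the careful support-theoretic bookkeeping: one must track that $(Q-P)E_Q^- f$ has compact support, that $f-(Q-P)E_Q^- f$ defines a genuine compactly supported section lying (modulo $P\Gamma_0^\infty(B)$) in $\Gamma_0^\infty(B|_{M^-})$, and that the various Green operators $E_P$, $E_Q$, $E_Q^\pm$ can be interchanged on the relevant regions where $P=Q$ — this requires invoking uniqueness of solutions with given support (Theorem~\ref{thm:GreenHypOps}(c)) at each step rather than formal manipulation. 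Once the supports are pinned down, the algebra $f - (Q-P)E_Q^- f = P E_Q^- f$ and the equivalence statements follow mechanically.
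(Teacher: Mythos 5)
Your route differs from the paper's: the paper works globally and algebraically, writing $\lambda^+_{QP}=\id+P\rho^+_{QP}$ and $\lambda^-_{PQ}=\id+Q\rho^-_{PQ}$, manipulating the resulting identity for $E_Q\lambda^+_{QP}$ to arrive at the closed-form candidate $\vartheta=\id-(Q-P)E_Q\lambda^+_{QP}$, which Lemma~\ref{lem:GreenHyp0}(b) immediately certifies as a $\GreenHyp$-morphism equivalent to $\lambda^-_{PQ}\lambda^+_{QP}$; only at the very end does it restrict to $f\in\Gamma_0^\infty(B|_{M^+})$. By contrast you compute directly on representatives supported in $M^+$. Your two key observations --- that $f-(Q-P)E_Q^-f=PE_Q^-f$ since $QE_Q^-f=f$, and that $(Q-P)E_Q^+f=0$ because $J^+(\supp f)$ cannot meet $U$ --- are both correct and genuinely useful.

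However, the central step of your argument, identifying $\hat{\lambda}^-_{PQ}[f]_Q$ with $[f-(Q-P)E_Q^-f]_P$, is not actually carried out. Since $\hat{\lambda}^-_{PQ}$ is only pinned down by $\hat{\lambda}^-_{PQ}[h]_Q=[h]_P$ for $h$ supported in $M^-$, you must produce such an $h$ with $f-h\in Q\Gamma_0^\infty(B)$ and then show $f-(Q-P)E_Q^-f-h\in P\Gamma_0^\infty(B)$. Your justification consists of the formula ``$E_P(f-g)=E_P^-PE_Q^-f$'', which is ill-formed ($E_Q^-f$ is not compactly supported, so $E_P^-P$ cannot be applied or cancelled there), and the assertion that $f-(Q-P)E_Q^-f$ lies in $J^-(U)$ modulo the image of $P$, which is neither proved nor quite the right statement (the relevant region is $M^-=M\setminus J^+(U)$, and note $M^+$ and $M^-$ overlap, so support bookkeeping alone does not place $PE_Q^-f$ in $\Gamma_0^\infty(B|_{M^-})$). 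The gap is fillable: take $h=[Q,\chi]E_Qf$ with $\chi$ adapted to a Cauchy slab in $M^-$, note $(Q-P)\chi E_Qf=0$ because $\supp\chi\cap U=\emptyset$, and compute $f-(Q-P)E_Q^-f-h=P\bigl((1-\chi)E_Q^-f+\chi E_Q^+f\bigr)$, whose argument is compactly supported by the usual intersections of future/past-compact sets. Even then you would still need to exhibit $\vartheta$ as an actual linear map on all of $\Gamma_0^\infty(B)$ satisfying the stated formula exactly on $\Gamma_0^\infty(B|_{M^+})$, not merely a map of equivalence classes; the paper gets this for free from its closed-form expression, whereas your construction only determines $\hat{\vartheta}$.
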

\begin{proof}
	Using the freedom to change $\lambda^-_{PQ}$ and $\lambda^+_{QP}$ for equivalent morphisms, we may assume they are both of the explicit form given in Theorem~\ref{thm:GreenHyp}(e).
	Thus $\lambda^+_{QP}f=f \mod P\Gamma_0^\infty(B)$ and $\lambda^-_{PQ}f=f \mod Q\Gamma_0^\infty(B)$. Since $P$ and $Q$ are injective on $\Gamma_0^\infty(B)$ (see Theorem~\ref{thm:GreenHypOps}\eqref{it:kernels}) there are unique linear maps $\rho^+_{QP}, \rho^-_{PQ}:\Gamma_0^\infty(B)\to \Gamma_0^\infty(B)$ such that $\lambda^+_{QP}=\id + P\rho^+_{QP}$, $\lambda^-_{PQ}=\id + Q\rho^-_{PQ}$, both of which commute with the complex involution.
	
	In particular, $\lambda^-_{PQ}\lambda^+_{QP}=\lambda^+_{QP} + Q\rho^{-}_{PQ} \lambda^+_{QP}$ entailing
	\begin{equation}
		E_Q^- \lambda^+_{QP} = E_Q^-\lambda^-_{PQ}\lambda^+_{QP} -\rho^{-}_{PQ} \lambda^+_{QP}
	\end{equation}
	and consequently
	\begin{equation}\label{eq:EQlambda}
		E_Q \lambda^+_{QP} = E_Q^-\lambda^-_{PQ}\lambda^+_{QP} -\rho^{-}_{PQ} \lambda^+_{QP} - E_Q^+\lambda^+_{QP}.
	\end{equation}
	Noting that $Q$ and $P$ agree on the first and last terms on the right-hand side of~\eqref{eq:EQlambda}, which are supported in $M^-$ and $M^+$ respectively, one has
	\begin{equation}
		(Q-P)E_Q \lambda^+_{QP} = -(Q-P)\rho^{-}_{PQ} \lambda^+_{QP}
	\end{equation}
	and by applying $Q$ to~\eqref{eq:EQlambda} and rearranging,
	\begin{align}
		\lambda^-_{PQ}\lambda^+_{QP} &=QE_Q^+\lambda^+_{QP} + Q\rho^{-}_{PQ} \lambda^+_{QP} = 
		\lambda^+_{QP}- (Q-P)E_Q \lambda^+_{QP} + P\rho^{-}_{PQ} \lambda^+_{QP}\nonumber \\
		&=\id - (Q-P)E_Q \lambda^+_{QP}  + P(\rho^+_{QP}+\rho^{-}_{PQ} \lambda^+_{QP}),
	\end{align} 
	by definition of $\rho_{QP}^+$.
	By Lemma~\ref{lem:GreenHyp0}(b), $\lambda^-_{PQ}\lambda^+_{QP}\sim \vartheta:=\id - (Q-P)E_Q \lambda^+_{QP}$, 
	giving $\lambda^+_{QP}\sim \lambda^-_{QP}\vartheta $ because
	$\lambda_{QP}^-\lambda_{PQ}^-\sim \id$. 
	Therefore $\hat{\vartheta}=(\hat{\lambda}^-_{QP})^{-1}\hat{\lambda}^+_{QP}$.
	It remains to compute $\vartheta f$ for $f\in\Gamma_0^\infty(B|_{M^+})$. Here, we use the fact that $\lambda^+_{QP}f-f=P\rho^+_{QP}f\in\ker E_P\cap \Gamma^\infty_0(B|_{M^+})$, moreover, $\rho^+_{QP}f\in \Gamma^\infty_0(B|_{M^+})$ by Theorem~\ref{thm:GreenHypOps}\eqref{it:refinedkernel} and causal convexity of $M^+$. Consequently, $\lambda^+_{QP}f-f=P\rho^+_{QP}f\in Q\Gamma_0^\infty(B)$ and so $E_Q\lambda^+_{QP}f=E_Qf$. Thus
	$\vartheta f=f-(Q-P)E_Qf$ for $f\in\Gamma_0^\infty(B|_{M^+})$, and as $Q$ and $P$ agree on the support of $E^+_Qf$, one has $\vartheta f=f-(Q-P)E_Q^-f$.
\end{proof}
If $V\subset M$ is temporally compact with $U\subset V$, then $\vartheta$ is evidently also a scattering morphism for $Q$ relative to $P$ with coupling zone $V$ with the properties of Theorem~\ref{thm:scattering_morphism}. In this sense, $\vartheta$ is independent of the coupling zone.  

The main result of this subsection is that the Hadamard condition is preserved under scattering.
Recall from Section~\ref{sec:GreenHyperbolic} that when two \rfhgho's $P$ and $Q$ agree outside a temporally compact set there is an associated scattering $\GreenHyp$-automorphism $\vartheta$ of $P$ that describes the scattering of $Q$ relative to $P$ and which then lifts to an automorphism $\Yf(\vartheta)$ of the QFT algebra $\Yf(P)$.  
\begin{thm}\label{thm:Hadamard_scattering}
	Let $P$ and $Q$ be \rfhgho's on hermitian vector bundle $B$ over $\Mb\in\Loc$ and suppose that $P$ and $Q$ agree outside a temporally compact set $U\subset \Mb$. Let $\vartheta$ be a scattering morphism of $Q$ relative to $P$ and define the scattering automorphism of
	$\Yf(P)$ by $\Theta=\Yf(\vartheta)$. If $P$ and $Q$ are $\Vc_P^\pm$- and $\Vc_Q^\pm$-decomposable and $\Vc_P^+|_{M\setminus U}= \Vc_Q^+|_{M\setminus U}$ then
	$\Theta^*$ is an isomorphism on the set of $\Vc_P^+$-Hadamard states.
\end{thm}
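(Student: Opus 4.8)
The strategy is to factor the scattering morphism through the response maps and reduce everything to the pullback/pushforward theorem, Theorem~\ref{thm:pullbacks}. Recall from the discussion preceding Theorem~\ref{thm:scattering_morphism} that any scattering morphism $\vartheta$ is equivalent to $\lambda^-_{PQ}\lambda^+_{QP}$, where $\lambda^+_{QP}:(B,P)\to(B,Q)$ and $\lambda^-_{PQ}:(B,Q)\to(B,P)$ are Cauchy $\GreenHyp$-morphisms; by equivalence and Theorem~\ref{thm:RFHGHOquantisation}(b) we may take them in the explicit form of Theorem~\ref{thm:GreenHyp}(e), so that $\lambda^+_{QP}=[P,\chi^+]E_P$ and $\lambda^-_{PQ}=[Q,\chi^-]E_Q$ with $\lambda^+_{QP}\Gamma_0^\infty(B)\subset\Gamma_0^\infty(B|_{M^+})$ and $\lambda^-_{PQ}\Gamma_0^\infty(B)\subset\Gamma_0^\infty(B|_{M^-})$, where $M^\pm=M\setminus J^\mp(U)\subset M\setminus U$. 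Since both factors are Cauchy, $\vartheta$ is Cauchy and $\Theta=\Yf(\vartheta)=\Yf(\lambda^-_{PQ})\circ\Yf(\lambda^+_{QP})$ is an $\Alg$-automorphism of $\Yf(P)$; hence $\Theta^*=\Yf(\lambda^+_{QP})^*\circ\Yf(\lambda^-_{PQ})^*$ is a bijection of the full state space, and it suffices to show that it and its inverse preserve $\Vc_P^+$-Hadamard states.

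The heart of the matter is to check that $\lambda^+_{QP}$ and $\lambda^-_{PQ}$ satisfy the hypotheses of Theorem~\ref{thm:pullbacks}(a). Regularity is immediate from Lemma~\ref{lem:regularGreenHyp}(c). For the wavefront-set condition, consider $\lambda^+_{QP}=[P,\chi^+]E_P$: applying the differential operator $[P,\chi^+]$ to the output variable of $E_P^\knl$ cannot enlarge the wavefront set (by \eqref{eq:WFbounds}), so $\WF((\lambda^+_{QP})^\knl)\subset\WF(E_P^\knl)\subset(\Vc_P^+\times\Vc_P^-)\cup(\Vc_P^-\times\Vc_P^+)$, while the support localisation forces the base point in the output (first) factor of $\WF((\lambda^+_{QP})^\knl)$ to lie in $M^+\subset M\setminus U$, where by hypothesis $\Vc_P^\pm=\Vc_Q^\pm$. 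Unwinding the definitions of $\adj{(\cdot)}^\knl$, $\WF'$ and $\bullet$, and discarding the ``crossed'' contribution using $\Vc_Q^+\cap\Vc_Q^-=\emptyset$, this yields $\WF'(\adj{}((\lambda^+_{QP})^\knl))\bullet\Vc_Q^\pm\subset\Vc_P^\pm$. The identical computation with $P\leftrightarrow Q$ and $M^+$ replaced by $M^-$ gives $\WF'(\adj{}((\lambda^-_{PQ})^\knl))\bullet\Vc_P^\pm\subset\Vc_Q^\pm$. (This step is structurally the same as the bound \eqref{eq:LknlWFbd} in the proof of Lemma~\ref{lem:regularGreenHyp}(b).)

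Theorem~\ref{thm:pullbacks}(a) now applies to both morphisms: $\Yf(\lambda^-_{PQ})^*$ maps $\Sf_{\Vc_P^+}(P)$ into $\Sf_{\Vc_Q^+}(Q)$, and $\Yf(\lambda^+_{QP})^*$ maps $\Sf_{\Vc_Q^+}(Q)$ into $\Sf_{\Vc_P^+}(P)$, so $\Theta^*$ maps $\Sf_{\Vc_P^+}(P)$ into itself. For the reverse inclusion, observe that $\hat\lambda^+_{PQ}=(\hat\lambda^+_{QP})^{-1}$ and $\hat\lambda^-_{QP}=(\hat\lambda^-_{PQ})^{-1}$, since on either side both are determined by their (inverse) action on classes represented by sections supported in $M^\pm$; hence $\Theta^{-1}=\Yf(\tilde\vartheta)$ with $\tilde\vartheta\sim\lambda^+_{PQ}\lambda^-_{QP}$, again a composite of response maps of the form covered by Theorem~\ref{thm:GreenHyp}(e). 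Running the argument of the previous two paragraphs with this relabelling shows that $(\Theta^*)^{-1}=(\Theta^{-1})^*$ also preserves $\Sf_{\Vc_P^+}(P)$. Therefore $\Theta^*$ restricts to a bijection of $\Sf_{\Vc_P^+}(P)$, and being induced by a $*$-algebra automorphism it respects finite convex combinations and the operations induced by $\Yf(P)$ (cf.\ Corollary~\ref{cor:state_space}), so it is an isomorphism of the state space of $\Vc_P^+$-Hadamard states. The fermionic case is verbatim, replacing $\Yf$, $\GreenHyp$, $\Upsilon$ by $\Xf$, $\fGreenHyp$, $\Xi$.

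\textbf{Expected main obstacle.} The delicate point is the wavefront-set bookkeeping in the second paragraph: keeping track of which tensor factor of $(\lambda^\pm)^\knl$ is the ``output'', maintaining the correct signs under $\adj{(\cdot)}$, $\WF'$ and $\bullet$, and justifying rigorously that the output localisation into $M\setminus U$ permits replacing $\Vc_P^\pm$ by $\Vc_Q^\pm$ precisely on the factor that is subsequently contracted against $\Vc_Q^\pm$ (resp.\ $\Vc_P^\pm$) in the $\bullet$-operation.
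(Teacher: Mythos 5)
Your proof is correct and follows essentially the same route as the paper's: factor $\vartheta$ through the response maps $\lambda^+_{QP}$ and $\lambda^-_{PQ}$ of Theorem~\ref{thm:GreenHyp}(e), apply Theorem~\ref{thm:pullbacks}(a) to each factor, and repeat for $\Theta^{-1}=\Yf(\lambda^+_{PQ})\Yf(\lambda^-_{QP})$. If anything, you supply more detail than the published proof, which leaves the verification of the cone-mapping hypothesis $\WF'(\adj{}(\lambda^\knl))\bullet\Vc^\pm\subset\Vc^\pm$ (via the support localisation of the response maps in $M^\pm\subset M\setminus U$) implicit.
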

\begin{proof}
	From Section~\ref{sec:GreenHyperbolic}, $\vartheta\sim\lambda^-_{PQ}\lambda^+_{QP}$
	where $\lambda^-_{PQ}$ and $\lambda^+_{QP}$ are $\GreenHyp$-morphisms of the type given by
	Theorem~\ref{thm:GreenHyp}(e) and are Cauchy and regular. Then
	$\Theta =\Yf(\lambda^-_{PQ})\Yf(\lambda^+_{QP})$ by Theorem~\ref{thm:RFHGHOquantisation}(b)
	and so $\Theta^*$ maps $\Vc_P^+$-Hadamard states to $\Vc_P^+$-Hadamard states. As
	$\Theta^{-1}=\Yf(\lambda^+_{PQ})\Yf(\lambda^-_{QP})$, the same argument shows that
	$\Theta^*$ is an isomorphism of the set of $\Vc_P^+$-Hadamard states.
\end{proof}

\subsection{Tensor products and partial traces}\label{sec:tensor_products_partial_traces}

We now prove a number of useful properties of Hadamard states, which will be used in Section~\ref{sec:FVmeasurements} in an application to the theory of measurement. 
Parts~(a) and~(b) of the following result show how the Hadamard condition extends to composite theories, while~(c) relates to partial traces.
\begin{thm}\label{thm:Hadamard_tools}
	Let $P$ and $Q$ be \rfhgho{}s, and suppose that $P$ is $\Vc^\pm_P$-decomposable while $Q$ is $\Vc^\pm_Q$-decomposable, and $\Vc_P^+\cap \Vc_Q^-=\emptyset$. Then:
	\begin{enumerate}[(a)] 
	\item \label{it:Hadoplus} $P\oplus Q$ is $\Vc_{P\oplus Q}^\pm$-decomposable, where $\Vc_{P\oplus Q}^\pm=\Vc_P^\pm\cup \Vc_Q^\pm$; 
	\item if $\omega$ is a $\Vc_P^+$-Hadamard state on $\Yf(P)$ and $\sigma$ is a $\Vc_Q^+$-Hadamard state on $\Yf(Q)$, then $\omega\otimes\sigma$ is a $\Vc_{P\oplus Q}^+$-Hadamard state on $\Yf(P\oplus Q)$; 
	\item \label{it:Hadpartial} if $\varphi$ is a $\Vc_{P\oplus Q}^+$-Hadamard state on $\Yf(P\oplus Q)=\Yf(P)\otimes\Yf(Q)$, then
	the partial traces $\varphi_P\in\Yf(P)^*_{+,1}$ and $\varphi_Q\in\Yf(Q)^*_{+,1}$ given by $\varphi_P(A)=\varphi(A\otimes \1_{\Yf(Q)})$, $\varphi_Q(B)=\varphi(\1_{\Yf(P)}\otimes B)$, are respectively $\Vc_P^+$-Hadamard and $\Vc_Q^+$-Hadamard. 
	\end{enumerate}
\end{thm}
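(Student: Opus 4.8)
Work over the common spacetime $\Mb$ on which $P$ (acting on $\Gamma^\infty(B)$) and $Q$ (acting on $\Gamma^\infty(B')$) live, and abbreviate $\Vc^\pm:=\Vc_{P\oplus Q}^\pm=\Vc_P^\pm\cup\Vc_Q^\pm$. The first thing to record is that $\Vc^\pm$ are admissible cones in the sense of Definition~\ref{def:Hadamard}: they are conic and relatively closed as finite unions of such sets, and $\Vc^+\cap\Vc^-=\emptyset$ because $\Vc_P^+\cap\Vc_P^-=\Vc_Q^+\cap\Vc_Q^-=\Vc_P^+\cap\Vc_Q^-=\emptyset$, and hence also $\Vc_P^-\cap\Vc_Q^+=\emptyset$ after negating the last identity. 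For part~(a), Theorem~\ref{thm:GreenHyp}(c) gives $E_{P\oplus Q}^\pm=E_P^\pm\oplus E_Q^\pm$, so $E_{P\oplus Q}$ acts blockwise and its two-point distribution carries $E_P^\twopt$ in the $B\boxtimes B$ block, $E_Q^\twopt$ in the $B'\boxtimes B'$ block and vanishes on the cross blocks; hence $\WF(E_{P\oplus Q})=\WF(E_P)\cup\WF(E_Q)$, and the decomposability bounds for $P$ and $Q$ together with $\Vc_P^\pm,\Vc_Q^\pm\subset\Vc^\pm$ give $\WF(E_{P\oplus Q})\subset(\Vc^+\times\Vc^-)\cup(\Vc^-\times\Vc^+)$.

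For part~(b), use the monoidal identification $\Yf(P\oplus Q)=\Yf(P)\otimes\Yf(Q)$ with $\Upsilon_{P\oplus Q}(f\oplus f')=\Upsilon_P(f)\otimes\1+\1\otimes\Upsilon_Q(f')$ from Theorem~\ref{thm:RFHGHOquantisation}(d). Expanding $(\omega\otimes\sigma)(\Upsilon_{P\oplus Q}(f\oplus f')\,\Upsilon_{P\oplus Q}(h\oplus h'))$ and using that $\omega\otimes\sigma$ factorises shows that the two-point function $W_{\omega\otimes\sigma}$ has $W_\omega$ in its $B\boxtimes B$ block, $W_\sigma$ in its $B'\boxtimes B'$ block, and external tensor products of the one-point functions of $\omega$ and $\sigma$ in its two cross blocks. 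The latter are smooth by Corollary~\ref{cor:smooth1pt}, so the cross blocks do not contribute to the wavefront set and $\WF(W_{\omega\otimes\sigma})=\WF(W_\omega)\cup\WF(W_\sigma)\subset(\Vc_P^+\times\Vc_P^-)\cup(\Vc_Q^+\times\Vc_Q^-)\subset\Vc^+\times\Vc^-$. Along the way one notes that $\omega\otimes\sigma$ is a state and that its $n$-point functions, being finite sums of external tensor products of lower-order correlators of $\omega$ and $\sigma$, are distributional.

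For part~(c), I would realise the partial trace as a pullback along a regular morphism. By Theorem~\ref{thm:GreenHyp}(c) the canonical injection $\iota:B\to B\oplus B'$ yields a $\GreenHyp$-morphism $\iota_*:(B,P)\to(B\oplus B',P\oplus Q)$ with $\Yf(\iota_*)\Upsilon_P(f)=\Upsilon_{P\oplus Q}(f\oplus0)=\Upsilon_P(f)\otimes\1$, so $\varphi_P=\Yf(\iota_*)^*\varphi$; by Lemma~\ref{lem:regularGreenHyp}(a) the morphism $\iota_*$ is regular. Since $\iota_*$ covers the identity, $\iota_*^\knl$ is a fibrewise projection of the Dirac kernel, so $\WF'(\adj{}(\iota_*^\knl))=\{(x,k;x,k):(x,k)\in\dot{T}^*M\}$ and therefore $\WF'(\adj{}(\iota_*^\knl))\bullet\Vc^\pm=\Vc^\pm$. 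Theorem~\ref{thm:pullbacks}(a) then applies with $\Vc_2^\pm=\Vc_1^\pm=\Vc^\pm$ and shows that $\varphi_P$ is $\Vc^+$-Hadamard, i.e.\ $\WF(W_{\varphi_P})\subset\Vc^+\times\Vc^-$. It remains to sharpen this to $\WF(W_{\varphi_P})\subset\Vc_P^+\times\Vc_P^-$. As $P$ is $\Vc_P^\pm$-decomposable it is also $\Vc^\pm$-decomposable, so Theorem~\ref{thm:Hadamard_props}(a) applied with the cone $\Vc^+$ gives $\WF(W_{\varphi_P})=(\Vc^+\times\Vc^-)\cap\WF(E_P)$; and since $\WF(E_P)\subset(\Vc_P^+\times\Vc_P^-)\cup(\Vc_P^-\times\Vc_P^+)$, intersecting with $\Vc^+\times\Vc^-=(\Vc_P^+\cup\Vc_Q^+)\times(\Vc_P^-\cup\Vc_Q^-)$ annihilates the $\Vc_P^-\times\Vc_P^+$ summand — a point in it would have first component in $\Vc_P^-\cap(\Vc_P^+\cup\Vc_Q^+)=\emptyset$ — leaving $\WF(W_{\varphi_P})\subset\Vc_P^+\times\Vc_P^-$. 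The statement for $\varphi_Q$ is identical with $P$ and $Q$ interchanged, the disjointness it needs, $\Vc_Q^-\cap\Vc_P^+=\emptyset$, being the hypothesis $\Vc_P^+\cap\Vc_Q^-=\emptyset$ verbatim.

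The step I expect to be the real obstacle is this last refinement in~(c): the pullback estimate by itself only confines $W_{\varphi_P}$ to the enlarged product cone $(\Vc_P^+\cup\Vc_Q^+)\times(\Vc_P^-\cup\Vc_Q^-)$, and it is only the combination of $\Vc_P^\pm$-decomposability of $P$ with the cross-disjointness hypothesis, fed through the refined identity of Theorem~\ref{thm:Hadamard_props}(a), that removes the spurious $\Vc_Q$-directions. Parts~(a) and~(b) are routine once the block structure of direct sums and the monoidal structure of $\Yf$ are in hand.
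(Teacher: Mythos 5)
Your proof is correct and follows essentially the same strategy as the paper. Parts (a) and (b) match the paper's argument line-for-line: use $E_{P\oplus Q}=E_P\oplus E_Q$ for the block-diagonal wavefront bound, then use the monoidal identification $\Yf(P\oplus Q)\cong\Yf(P)\otimes\Yf(Q)$ together with smoothness of one-point functions (Corollary~\ref{cor:smooth1pt}) for the tensor-product state. For part (c), you correctly pinpoint the genuine content: the naive bound only gives $\WF(W_{\varphi_P})\subset\Vc_{P\oplus Q}^+\times\Vc_{P\oplus Q}^-$, and the sharpening to $\Vc_P^+\times\Vc_P^-$ comes from Theorem~\ref{thm:Hadamard_props}(a) combined with $\Vc_P^\pm$-decomposability of $P$ and the cross-disjointness $\Vc_P^+\cap\Vc_Q^-=\emptyset$. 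The one difference of route is that you reach the preliminary bound $\WF(W_{\varphi_P})\subset\Vc_{P\oplus Q}^+\times\Vc_{P\oplus Q}^-$ by invoking the full pullback machinery (Theorem~\ref{thm:pullbacks}(a) applied to the regular morphism $\iota_*$), whereas the paper notes directly that $W_{\varphi_P}(f\otimes h)=W((f\oplus 0)\otimes(h\oplus 0))$ is a restriction of $W$ to a subbundle along a bundle map over the identity, so $\WF(W_{\varphi_P})\subset\WF(W)$ without any further apparatus. Your computation of $\WF'(\adj{}(\iota_*^\knl))$ as the diagonal is correct and makes the pullback hypothesis verifiable, so nothing is wrong, but the paper's route is slightly more economical. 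Both land at the same refinement step, and both correctly note that the analogous claim for $\varphi_Q$ needs $\Vc_Q^-\cap\Vc_P^+=\emptyset$, which is the stated hypothesis.
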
 
Inspection of the proof shows that the result adapts immediately to the case where
$P$ and $Q$ are fermionic \rfhgho's; part~(a) only requires that $P$ and $Q$ are semi-Green-hyperbolic.
\begin{proof}
	(a) First note that $\Vc_{P\oplus Q}^+\cap \Vc_{P\oplus Q}^-=\emptyset$ because the same is true of $\Vc_P^\pm$ and $\Vc_Q^\pm$, and $\Vc_P^+\cap \Vc_Q^-=\emptyset$.
	As $E_{P\oplus Q}=E_{P}\oplus E_Q$, we have
	\begin{equation}
		\WF'(E_{P\oplus Q})\subset \WF'(E_P)\cup \WF'(E_Q)\subset 
		\bigcup_{\sharp=\pm}((\Vc_P^\sharp\times\Vc_P^\sharp)\cup (\Vc_Q^\sharp\times\Vc_Q^\sharp))\subset \bigcup_{\sharp=\pm}(\Vc_{P\oplus Q}^\sharp\times\Vc_{P\oplus Q}^\sharp),
	\end{equation}
	which is the decomposability condition.
	
	(b) The $n$-point functions of $\omega\otimes \sigma$ on $\Yf(P\oplus Q)$ may be written as sums of products of $r$-point functions of $\omega$ and $\sigma$ for $1\le r\le n$, and are easily seen to be distributions. To simplify notation we write every $n$-point function of $\omega$ as $W_\omega$, distinguishing them by their domains, and similarly for the $n$-point functions of $\sigma$. Then 
	the $2$-point function of $\omega\otimes\sigma$ is given by
	\begin{equation}
		W_{\omega\otimes\sigma}(({f}\oplus{h})\otimes
		({f'}\oplus{h'})) = 
		W_\omega({f}\otimes{f'})+W_\sigma({h}\otimes{h'}) + W_\omega({f}) W_\sigma(h') +
		W_\omega(f') W_\sigma(h) . 
	\end{equation}
	As the $1$-point functions are smooth by Theorem~\ref{thm:Hadamard_props}(d), we have
	\begin{equation}
		\WF(W_{\omega\otimes\sigma})\subset \WF(W_{\omega})\cup \WF(W_{\sigma})\subset (\Vc_P^+\times\Vc_P^-) \cup (\Vc_Q^+\times\Vc_Q^-)\subset \Vc_{P\oplus Q}^+\times\Vc_{P\oplus Q}^-,
	\end{equation}
	so $\omega\otimes\sigma$ is $\Vc_{P\oplus Q}^+$-Hadamard.
	
	(c) It is enough to show that $\varphi_P$ is $\Vc_P^+$-Hadamard.
	As $\Upsilon_P({f})\otimes\1_{\Yf_Q(M)}=\Upsilon_{P\oplus Q}({f}\oplus 0)$, it is clear that $\varphi_P$ has distributional $n$-point functions. In particular, the two-point function $W_P$ of $\varphi_P$ is
	given by 
	\begin{equation}
		W_P({f}\otimes{h}) = W(({f}\oplus 0)\otimes({h}\oplus 0)),
	\end{equation}
	where $W$ is the two-point function of $\varphi$. 
	It follows that $\WF(W_P)\subset \WF(W)\subset 
	\Vc_{P\oplus Q}^+\times \Vc_{P\oplus Q}^-$. Now $P$ is $\Vc_{P\oplus Q}^\pm$-decomposable (as well as $\Vc_P^\pm$-decomposable), so we have shown that $W_P$ is 
	$\Vc_{P\oplus Q}^+$-Hadamard. Using Theorem~\ref{thm:Hadamard_props}(a) we have
	$\WF(W)\subset (\Vc_{P\oplus Q}^+\times \Vc_{P\oplus Q}^-)\cap \WF(E_P)\subset \Vc_P^+\times \Vc_P^-$ using the $\Vc_P^\pm$-decomposability of $E_P$. Therefore $W_P$ is $\Vc_P^+$-Hadamard.	
	\end{proof}

\subsection{Higher $n$-point functions}\label{sec:truncated}

Let $\omega$ be a $\Vc^+$-Hadamard state on $\Yf(P)$ for $\Vc^\pm$-decomposable \rfhgho\ $P$ and define the $n$-point functions
$W^{(n)}\in\DD'(B^{\boxtimes n})$ as described above. The truncated $n$-point functions $W^{(n)}_{T}$ are defined recursively by
\begin{equation}
	W^{(n)}(F_I) = \sum_{\Pc\in\mathscr{P}(I)} \prod_{J\in \Pc} W^{(|J|)}_T(F_J) , \qquad W^{(1)}_{T}=W^{(1)},
\end{equation}
where $I$ is a finite ordered set, $\mathscr{P}(I)$ is the set of partitions of $I$ into nonempty ordered subsets and $F_J$ is the ordered list of $F_j\in\Gamma_0^\infty(B^*\otimes\Omega)$ for $j\in J$. A useful identity (see \cite{BratteliRobinson_vol2}, section 5.2.3) is, for any fixed $j\in I$,
\begin{equation}\label{eq:Tsimple}
	W^{(|I|)}(F_{I}) = \sum_{j\in J\subset I} W^{(|J|)}_T(F_J)W^{(|I\setminus J|)}(F_{I\setminus J})
\end{equation}
(i.e., one sums over all $J$ obeying $j\in J\subset I$ for the given $j$ and $I$) with $W^{(0)}=1$ by convention. A quasifree state is one whose truncated $n$-point functions vanish except for $n= 2$.

Truncated $n$-point functions can be defined for \emph{even states} on $\Xf(P)$, i.e., those whose $n$-point functions all vanish for all odd $n$. They are defined recursively by 
\begin{equation}
	W^{(n)}(F_I) = \sum_{\Pc\in\mathscr{P}(I)} \varepsilon(\Pc) \prod_{J\in \Pc} W^{(|J|)}_T(F_J),  
\end{equation}
where  
$\varepsilon(\Pc)$ is the signature of the permutation of $I$ that results in $(J_1,\ldots J_{|\Pc|})$, where $\Pc=\{J_1,\ldots,J_{|\Pc|}\}$ (the ordering of the $J_k$'s is irrelevant). 
It is sufficient to sum over only partitions into sets of even length. The analogue of~\eqref{eq:Tsimple}
is 
\begin{equation}\label{eq:TsimpleCAR}
	W^{(|I|)}(F_{I}) = \sum_{j\in J\subset I}  \varepsilon(J,I\setminus J) W^{(|J|)}_T(F_J)W^{(|I\setminus J|)}(F_{I\setminus J}),
\end{equation}
with $\varepsilon(J,I\setminus J)$ equal to the signature of the permutation from $I$ to $(J,I\setminus J)$, 
and $W^{(0)}=1$ as before. The quasifree states on $\Xf(P)$ are those even states whose truncated $n$-point functions vanish for $n\neq 2$.

The class of states with Hadamard two-point functions and smooth truncated $n$-point functions for $n\neq 2$ was identified as a plausible class of physical states in~\cite{Kay_Leipzig:1992}, where the question was raised as to how close this was to the Hadamard condition. 
This is answered by a result of Sanders~\cite{Sanders:2010}, which we now adapt and generalise, slightly streamlining in the process.
\begin{thm}\label{thm:truncated}
	Let $\omega$ be a $\Vc^+$-Hadamard state on $\Yf(P)$ or an even state on $\Xf(P)$. Then all its truncated $n$-point functions for $n\neq 2$ are smooth, and $\WF(W^{(2)}_{T})=\WF(W^{(2)})$. 
\end{thm}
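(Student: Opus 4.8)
The plan is to follow the strategy of Sanders~\cite{Sanders:2010}, adapted to the generalised setting here. The key structural input is the Hilbert space formulation of Theorem~\ref{thm:HilbertHadamard}: in the GNS representation $(\Hf,\pi,D,\varphi_\omega)$ of $\omega$, the vector-valued distribution $u=\pi(\Upsilon_P(\cdot))\varphi_\omega$ (resp.\ $\pi(\Xi_P(\cdot))\varphi_\omega$) satisfies $\WF(u)\subset\Vc^-$. The idea is then to expand each $n$-point function $W_n(F_1\otimes\cdots\otimes F_n)=\langle u(\Cc F_1),\pi(\Upsilon_P(F_2)\cdots\Upsilon_P(F_n))\varphi_\omega\rangle$ and, by repeated use of the CCR/CAR, move all creation-type contributions to act on $\varphi_\omega$. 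The cleanest route is to work with Wick-ordered products relative to $\omega$ itself: write $\Upsilon_P(F_1)\cdots\Upsilon_P(F_n)$ as a sum of normal-ordered monomials times products of two-point functions of $\omega$, with each normal-ordered monomial contributing (under $\omega$) exactly one term, the one with all factors contracted, which is precisely the sum of products of two-point functions. After this reorganisation one finds that the truncated functions $W_{n,T}$ for $n\neq 1,2$ are built purely from Wick monomials $\langle \pi(\Upsilon_P(F_{i_1}))^*\cdots \varphi_\omega, \pi(\Upsilon_P(F_{j_1}))\cdots\varphi_\omega\rangle$ in which at least two test functions appear on each side; the terms in which all factors contribute two-point functions are subtracted off in the truncation.

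More precisely, I would first handle $n=1$ and $n=2$: $W_{1,T}=W_1$ is smooth by Theorem~\ref{thm:Hadamard_props}(d), and $W_{2,T}(F_1\otimes F_2)=W_2(F_1\otimes F_2)-W_1(F_1)W_1(F_2)$, so $\WF(W_{2,T})=\WF(W_2)$ since subtracting a smooth function does not change the wavefront set. For $n\geq 3$, I claim each $W_{n,T}$ can be written as a finite sum of distributions of the form $(F_1,\ldots,F_n)\mapsto \langle \pi(A)\varphi_\omega,\pi(B)\varphi_\omega\rangle$, where $A$ is a product of at least two of the $\pi(\Upsilon_P(\Cc F_i))$ and $B$ is a product of at least two of the remaining $\pi(\Upsilon_P(F_j))$, with the index sets partitioning $\{1,\ldots,n\}$ and each $F_i$ appearing exactly once, together with terms already known to be smooth (involving $W_1$ factors). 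For such a term, Lemma~\ref{lem:Hilbert_tools}(d) gives $\WF\subset (-\WF_0(V))\times\WF_0(W)$ where $V(F)=\pi(A'\Upsilon_P(\Cc F))\varphi_\omega$ and $W(F)=\pi(B'\Upsilon_P(F))\varphi_\omega$ for suitable remaining operators $A',B'$; iterating Lemma~\ref{lem:Hilbert_tools}(b) as in the proof of Corollary~\ref{cor:manyHadamard} shows each partial-product vector-valued distribution has wavefront set in $\Vc^-$, hence $-\WF_0\subset\Vc^+_0$. The upshot is that every variable appearing on the ``left'' contributes a covector in $\Vc^+$ and every variable on the ``right'' a covector in $\Vc^-$; but since the left set has at least two elements and the right set has at least two elements, and $\Vc^+\cap\Vc^-=\emptyset$, one can use the remaining CCR relations (or, more simply, the observation that any such distribution is simultaneously ``left-positive'' and ``right-positive'' in more than one variable) to argue that the wavefront set must actually be empty. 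This is the mechanism by which truncation kills the singularities: the genuinely singular part of $W_n$ lives entirely in the ``fully factored'' terms, which are exactly what the truncation removes.

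The main obstacle, and where I expect to spend the most care, is making the combinatorial step rigorous: precisely identifying which partitions survive in $W_{n,T}$ and verifying that in each surviving term at least two test functions land on each side of the inner product after normal-ordering. Sanders handles this for the scalar Klein--Gordon field using the explicit CCR structure and the cluster expansion; here I would instead exploit the ordering already present in the algebra relations Y1--Y4 (resp.\ X1--X4) and an induction on $n$ via the recursion~\eqref{eq:Tsimple} (resp.~\eqref{eq:TsimpleCAR}). Concretely, induction on $n$: assuming the claim for all $m<n$, use~\eqref{eq:Tsimple} with $j=1$ to isolate $W_{n,T}(F_I)$ as $W_n(F_I)$ minus a sum of products $W_{|J|,T}(F_J)W_{|I\setminus J|}(F_{I\setminus J})$ over $1\in J\subsetneq I$; express $W_n(F_I)$ via the GNS inner product with all of $F_2,\ldots,F_n$ on the right; then use the CCR to commute $\pi(\Upsilon_P(F_2))$ leftward through each factor, generating a smooth $E_P$-contraction term plus a reordered term, and track how these terms recombine with the subtracted products. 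The bookkeeping is routine but intricate; the fermionic case requires only the obvious sign modifications and the restriction to even states (so that odd-length pieces vanish and~\eqref{eq:TsimpleCAR} applies). Once the combinatorial claim is in place, the wavefront set estimates are immediate from Lemma~\ref{lem:Hilbert_tools} and Theorem~\ref{thm:HilbertHadamard} as sketched above.
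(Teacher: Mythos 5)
Your overall strategy -- adapting Sanders via the GNS/Hilbert-space formulation, one-sided microlocal positivity from Cauchy--Schwarz, and the observation that truncation removes exactly the fully contracted terms -- is the same route the paper takes, and your treatment of $n=1,2$ is correct. However, two steps in your sketch do not work as stated, and they are the crux rather than mere bookkeeping.

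First, the claim that ``every variable appearing on the left contributes a covector in $\Vc^+$ and every variable on the right a covector in $\Vc^-$'' overstates what the Cauchy--Schwarz machinery gives. Lemma~\ref{lem:Hilbert_tools}(d) and the splitting $W_n(F_I)=\ip{\pi(\cdots)\varphi_\omega}{\pi(\cdots)\varphi_\omega}$ only constrain the \emph{outermost} slot on each side of the split (the one carried by the free argument of the vector-valued distribution); interior slots are not controlled this way, and being ``left-positive in more than one variable'' is not something these estimates deliver. The paper's resolution is to define, for each fixed $j$, the remainder $R_{n,j}=W_n-\sum_{i<j}W_2(F_i,F_j)W_{n-2}(F_{I\setminus\{i,j\}})-\sum_{i>j}W_2(F_j,F_i)W_{n-2}(F_{I\setminus\{i,j\}})$ and to prove the exact identities $R_{n,j}=R_{n,1}\circ\tau_{1j}=R_{n,n}\circ\tau_{jn}$, which transport the $j$-th slot to the first or last position where the one-sided bounds $\WF(R_{n,1})\subset\Vc_0^+\times T^*M^{\times(n-1)}$ and $\WF(R_{n,n})\subset T^*M^{\times(n-1)}\times\Vc_0^-$ apply; combined with the inductive smoothness of the lower truncated functions and the recursion~\eqref{eq:Tsimple}, this forces $(x_j,k_j)\in\Vc_0^+\cap\Vc_0^-=\{0\}$. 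Without an identity of this kind your argument never controls the middle variables.

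Second, the commutator terms you generate when moving $\pi(\Upsilon_P(F_j))$ through the product are \emph{not} smooth: they are of the form $\ii E_P(F_i,F_j)\,W_{n-2}(\cdots)$ and $E_P$ is singular. What actually happens is that $\ii E_P=W_2-\adj{W}_2$, so these commutators are precisely the differences between the $W_2$-contractions subtracted in $R_{n,j}$ and those subtracted in $R_{n,1}$ (or $R_{n,n}$) -- this is what makes the transposition identities hold exactly, with no smooth error to track. Repair these two points (and insert the corresponding signs and the anticommutator identity $W_2+\adj{W}_2=\ii E_P\circ(1\otimes R)$ in the fermionic case) and your induction closes; as written, the proposal has a genuine gap at the microlocal step for interior arguments.
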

\begin{proof}
	We give the argument for the bosonic case first. Smoothness of the $1$-point functions was proved in Corollary~\ref{cor:smooth1pt} from which it follows that $W^{(2)}_{T}-W^{(2)}$ is smooth, so we can restrict to showing smoothness of $W^{(n)}_{T}$ for $n\ge 3$. 
	
	For some $n\ge 3$, suppose that $W^{(k)}_{T}$ is smooth for $2<k<n$, which is trivially true for $n=3$. Suppose that $W^{(n)}_{T}$ is not smooth, in which case there must be
	$(\xu,\ku)\in\WF(W^{(n)}_{T})$ with $k_j\neq 0$ for some $1\le j\le n$. Considering~\eqref{eq:Tsimple}, we deduce that
	$(\xu,\ku)\in\WF(R_{n,j})$ where
	\begin{equation}\label{eq:Rnj_def}
		R_{n,j}(F_I)= W^{(n)}(F_I) - \sum_{i<j} W^{(2)}(F_i,F_j) W^{(n-2)}(F_{I\setminus\{i,j\}}) -  
		\sum_{i>j} W^{(2)}(F_j,F_i) W^{(n-2)}(F_{I\setminus\{i,j\}})
	\end{equation}
	because all other terms are smooth with respect to $x_j$,
	and $W^{(2)}_{T}-W^{(2)}$ is smooth.
	Now, the commutation relations and $\ii E_P=W^{(2)}-\adj{W}^{(2)}$ imply that
	\begin{equation}
		R_{n,j} = R_{n,1}\circ \tau_{1j} = R_{n,n}\circ\tau_{jn},
	\end{equation}
	where $\tau_{ij}$ transposes the $i$'th and $j$'th arguments. Consequently, 
	$\tau_{1j}(\xu,\ku)\in \WF(R_{n,1})$ and 
	$\tau_{jn}(\xu,\ku)\in \WF(R_{n,n})$.
	
	We now employ Cauchy--Schwarz estimates, splitting off the first and last arguments, to deduce that
	$\WF(W^{(n)})\subset \Vc_0^+\times T^*M^{\times (n-1)}$ and
	$\WF(W^{(n)})\subset T^*M^{\times (n-1)}\times \Vc_0^-$. It follows from~\eqref{eq:Rnj_def} and
	$\WF(W^{(2)})\subset\Vc^+\times\Vc^-$ that $\WF(R_{n,1})\subset
	\Vc_0^+\times T^*M^{\times (n-1)}$ and
	$\WF(R_{n,n})\subset
	T^*M^{\times (n-1)}\times\Vc_0^-$. Therefore $(x_j,k_j)\in \Vc_0^+\cap\Vc_0^-=\{0\}$, contradicting $k_j\neq 0$.
	Thus $W^{(n)}_{T}$ is smooth and by induction this holds for all $n\ge 3$.
	
	The argument for the fermionic case is similar, with $R_{n,j}$ 
	modified by the insertion of a factor $(-1)^{i+j-1}$ in both sums, and the use
	of the anticommutation relations to obtain
	$R_{n,j}=(-1)^{j-1}R_{n,1}\circ \tau_{1j} = (-1)^{n-j} R_{n,n}\circ \tau_{jn}$.
\end{proof}
The next result is obtained by employing the formulae~\eqref{eq:Tsimple} and~\eqref{eq:TsimpleCAR}, together with the standard estimates for wavefront sets of sums and tensor products of distributions.
\begin{cor}\label{cor:truncated}
	Under the same assumptions as Theorem~\ref{thm:truncated}, 
	\begin{align}\label{eq:WFWnbd}
		\WF(W^{(n)})&\subset \bigcup_{Q} \{(x_1,k_1;\ldots;x_n,k_n)\in T^*M: 
		\textnormal{$k_j=0$ for $j\notin \bigcup Q$,}
		\nonumber \\
		&\qquad\qquad
		(x_{\min(r)},k_{\min(r)};x_{\max(r)},k_{\max(r)})\in\WF_0(W^{(2)})~\textnormal{for each $r\in Q$}	\},
	\end{align}
	where $Q$ runs over all sets of disjoint two-element subsets of $\{1,\ldots,n\}$. 
\end{cor}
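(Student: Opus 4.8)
The plan is to prove Corollary~\ref{cor:truncated} by induction on $n$, using the recursion formulae~\eqref{eq:Tsimple} (bosonic) and~\eqref{eq:TsimpleCAR} (fermionic) that express $W_n$ in terms of truncated functions $W_{k,T}$ and lower $W_{n-2}$, together with the output of Theorem~\ref{thm:truncated} that $W_{k,T}$ is smooth for all $k\neq 2$ and $\WF(W_{2,T})=\WF(W_2)$. First I would record the base cases: for $n=1$ the $1$-point function is smooth by Corollary~\ref{cor:smooth1pt}, so $\WF(W_1)=\emptyset$, consistent with the claimed bound (the only $Q$ is the empty set, forcing $k_1=0$); for $n=2$ the claimed bound is exactly $\WF(W_2)\subset\WF_0(W_2)$, which holds trivially.

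For the inductive step, fix $n\ge 3$ and assume the bound~\eqref{eq:WFWnbd} holds for all smaller indices. Picking any distinguished argument $j$ (say $j=1$), apply~\eqref{eq:Tsimple} in the form
\begin{equation}
	W_n(F_I) = W_{n,T}(F_I) + \sum_{1\in J\subsetneq I} W_{|J|,T}(F_J)\, W_{|I\setminus J|}(F_{I\setminus J}).
\end{equation}
By Theorem~\ref{thm:truncated}, the term $W_{n,T}$ is smooth, and among the remaining terms only those with $|J|=2$ survive the wavefront-set estimate, since $W_{|J|,T}$ is smooth whenever $|J|\ge 3$ and a product with a smooth factor in a disjoint set of variables contributes nothing new to the wavefront set in the relevant directions. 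More precisely, using the standard estimate $\WF(u\otimes v)\subset (\WF_0(u)\times\WF_0(v))$ for tensor products of distributions on disjoint factors (and noting the slots are interleaved by a permutation, which merely permutes the cotangent coordinates), each surviving term with $J=\{1,i\}$ contributes
\begin{equation}
	\WF_0\big(W_2(F_1,F_i)\big)\times \WF_0\big(W_{n-2}(F_{I\setminus\{1,i\}})\big),
\end{equation}
reindexed into the appropriate slots. Substituting the inductive bound for $\WF_0(W_{n-2})$ — which forces the surviving covectors in $I\setminus\{1,i\}$ to be paired up in two-element subsets lying on $\WF_0(W_2)$ — and combining with the $\{1,i\}$ pair, one obtains precisely a union over sets $Q$ of disjoint two-element subsets of $\{1,\dots,n\}$ with the stated pairing condition on $W_2$; the reflexive/antisymmetric structure of the pairs follows because $W_2(F_a,F_b)$ places the smaller-indexed argument in the first slot of $\WF_0(W_2)$, matching the $x_{\min(r)},x_{\max(r)}$ convention in~\eqref{eq:WFWnbd}. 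Taking the union over all choices of distinguished index would, if needed, symmetrise the bound, but working with a single fixed $j$ already yields~\eqref{eq:WFWnbd} once one observes that any $Q$ not covering slot $j$ is obtained from the $|J|=2$ terms with $j\in J$ only if $j$ is paired, so one must allow also $k_j=0$, which is exactly the case captured by the $W_{n,T}$ (smooth) contribution and by terms where $j$ lies in a $J$ with $|J|\ge 3$; these force $k_j$ into a smooth direction, i.e.\ contribute nothing, which is consistent with $j\notin\bigcup Q$.

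The fermionic case is handled identically, replacing~\eqref{eq:Tsimple} by~\eqref{eq:TsimpleCAR}: the only change is the signature factor $\varepsilon(J,I\setminus J)$, which is a nonzero scalar and therefore irrelevant to wavefront-set estimates, and the convention that $\WF_0(W_2)$ is the anticommutator two-point structure; the same induction applies verbatim. The main obstacle I anticipate is purely bookkeeping rather than conceptual: keeping track of how the interleaved arguments of the product $W_{|J|,T}\otimes W_{|I\setminus J|}$ map into the cotangent-bundle slots of $W_n$, and verifying that the $\min/\max$ ordering convention in~\eqref{eq:WFWnbd} is respected under all such reindexings. This is a finite combinatorial check with no analytic difficulty, but it must be done carefully to ensure the union over $Q$ in the conclusion is stated with the correct ordering of paired indices.
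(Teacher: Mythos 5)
Your argument is correct and is essentially the route the paper intends: the paper's one-line proof cites exactly the identities \eqref{eq:Tsimple}/\eqref{eq:TsimpleCAR} together with Theorem~\ref{thm:truncated} and the standard wavefront-set estimates for sums and tensor products, and your induction on $n$ simply makes this explicit (the terms with $|J|\neq 2$ contributing only zero covectors on the block $J$, the $|J|=2$ terms contributing a $\WF_0(W_2)$ pair with the $\min/\max$ ordering preserved because the distinguished index is the smallest). The only cosmetic point is that the phrase ``only those with $|J|=2$ survive'' is loose --- the $|J|\neq 2$ terms still contribute, via $0_{T^*M^{J}}\times\WF_0(W_{|I\setminus J|})$ and the inductive bound on the second factor --- but you correct this yourself at the end of the paragraph, so the proof stands.
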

The bounding set in~\eqref{eq:WFWnbd} is the standard bound for quasifree Hadamard states~\cite{BruFreKoe}. 
In this sense, all $\Vc^+$-Hadamard states are `microlocally quasifree'. 

\subsection{Polarisation sets}

To close this section, we briefly consider the polarisation set of 
$\Vc^+$-Hadamard two-point functions. The polarisation set was introduced by Dencker~\cite{Dencker:1982} and the conventions and notation used here are as in the companion paper~\cite{Fewster:2025b}. An
elementary application of the general theory provides a refinement of Theorem~\ref{thm:Hadamard_props}(a).
\begin{thm} \label{thm:Hadpropsaprime}
	Suppose $P$ is a $\Vc^\pm$-decomposable \rfhgho, where $\Vc^\pm$ are as in Definition~\ref{def:Hadamard}. If $W^\twopt$ is the $2$-point function of a $\Vc^+$-Hadamard state on the bosonic algebra $\Yf(P)$, then
	\begin{equation}
		\WF_\pol((\rho^{1/2}\otimes\rho^{1/2})W^\twopt) = (1\otimes\flat) \WF_\pol(E_{\rho^{1/2}P\rho^{-1/2}}^\knl)|_{\Vc^+\times \Vc^-}\cup 0.
	\end{equation} 
	If $(P,R)$ is a fermionic \rfhgho\ and $W^\twopt$ is the $2$-point function of a $\Vc^+$-Hadamard state on the fermionic algebra $\Xf(P)$, then
	\begin{equation}
		\WF_\pol((\rho^{1/2}\otimes\rho^{1/2})W^\twopt) = (1\otimes (R\circ\flat)) \WF_\pol(E_{\rho^{1/2}P\rho^{-1/2}}^\knl)|_{\Vc^+\times \Vc^-}\cup 0.
	\end{equation}
\end{thm}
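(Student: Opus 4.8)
\textbf{Proof plan for Theorem~\ref{thm:Hadpropsaprime}.}
The plan is to lift the wavefront-set argument of Theorem~\ref{thm:Hadamard_props}(a) to the level of polarisation sets, exploiting that the polarisation set calculus (as developed in~\cite{Dencker:1982}, with conventions recalled in~\cite{Fewster:2025b}) refines every step used there. First I would normalise: set $\tilde{P}=\rho^{1/2}P\rho^{-1/2}$ and write $\tilde{W}=(\rho^{1/2}\otimes\rho^{1/2})W^\twopt$, so that $\tilde{W}$ is the kernel of a map between half-density bundles and $\tilde{P}$ acts on half-densities. The relevant half-density-weighted identities are the CCR/CAR relations: in the bosonic case $\tilde{W}-\adj{\tilde{W}}=\ii E_{\tilde{P}}^\knl$, and in the fermionic case $\tilde{W}+\adj{\tilde{W}}=\ii E_{\tilde{P}}^\knl\circ(1\otimes R)$ (using the half-density versions of the relations displayed in Sections~\ref{sec:Yf_quant} and~\ref{sec:fermionic}, together with $PR=RP$ so that $R$ commutes through). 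The polarisation set of $E_{\tilde{P}}^\knl$ is already known — it is supplied by the computation underlying Theorem~\ref{thm:WFEPs} and Theorem~\ref{thm:WFQEKR}, i.e.\ Corollary~1.2 and the surrounding analysis of~\cite{Fewster:2025b} — so the task reduces to transferring that information to $\tilde{W}$ via the relations above and the $\Vc^+$-Hadamard microlocal constraint $\WF(\tilde{W})\subset\Vc^+\times\Vc^-$.

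The key steps, in order: (i) record how the basic operations behave on polarisation sets — under the flip $u\mapsto\adj{u}$ one has $\WF_\pol(\adj{u})$ obtained by swapping factors and negating the covectors (with a corresponding swap of the bundle factors, hence the appearance of $\flat$ and, in the fermionic case, $R\circ\flat$, since the bilinear pairing on $B\boxtimes B$ is built from $\Cc$ and the bundle metric as in~\eqref{eq:frfhgho_P}); under multiplication by a nonvanishing smooth density factor the polarisation set is unchanged up to the obvious bundle identification; and the polarisation set of a sum is controlled, over the part of the wavefront set where only one summand is singular, by that summand alone. (ii) Over $\Vc^+\times\Vc^-$: since $\WF(\adj{\tilde{W}})\subset\Vc^-\times\Vc^+$ is disjoint from $\Vc^+\times\Vc^-$ (as $\Vc^+\cap\Vc^-=\emptyset$ and the zero section is excluded), on $\Vc^+\times\Vc^-$ the distribution $\tilde{W}$ differs from $\pm\ii E_{\tilde{P}}^\knl\circ(1\otimes\cdot)$ by a term whose wavefront set does not meet $\Vc^+\times\Vc^-$, so their polarisation sets agree there; this gives the inclusion $\supseteq$ restricted to $\Vc^+\times\Vc^-$ and also $\subseteq$ on that region. (iii) Off $\Vc^+\times\Vc^-$: by the $\Vc^+$-Hadamard condition $\WF(\tilde{W})\subset\Vc^+\times\Vc^-$, so $\WF_\pol(\tilde{W})$ has empty fibre outside $\Vc^+\times\Vc^-$ and only the zero vector contributes — hence the $\cup\,0$ on the right-hand side. (iv) Conversely, the right-hand side is by construction contained in $(\Vc^+\times\Vc^-)\cup 0$, so combining (ii) and (iii) yields equality. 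The fermionic case is identical except that the factor $1\otimes R$ threads through, converting $1\otimes\flat$ into $1\otimes(R\circ\flat)$, and the sign in the relation is $+$ rather than $-$, which is irrelevant at the level of polarisation sets.

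The main obstacle I expect is bookkeeping rather than conceptual: getting the bundle factors exactly right under the transpose $u\mapsto\adj{u}$ and under the musical isomorphisms $\sharp,\flat$ and the density weights. The polarisation set is valued in a bundle over the wavefront set, and the relations $\tilde{W}\mp\adj{\tilde{W}}=\ii E_{\tilde{P}}^\knl(1\otimes\cdots)$ involve the bilinear pairing $\langle\cdot,\cdot\rangle$ built from $\Cc$ and the hermitian metric, which (cf.~\eqref{eq:frfhgho_P} in the fermionic case) mixes the summands of $B=B_+\oplus B_-$; one must check that the identification $(B_Y\boxtimes B_X^*)_{y,x}\cong\hom((B_X)_x,(B_Y)_y)$ used in defining kernels is compatible with the way $\WF_\pol$ of a flipped or density-rescaled distribution is computed, so that the operator $1\otimes\flat$ (resp.\ $1\otimes(R\circ\flat)$) is precisely the right intertwiner. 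Once the conventions of~\cite{Fewster:2025b} are pinned down this is routine, but it is the only place where care is genuinely needed; the analytic content is entirely inherited from the known polarisation set of $E_{\tilde{P}}^\knl$ and the disjointness $\Vc^+\cap\Vc^-=\emptyset$.
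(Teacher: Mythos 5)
Your proposal is correct and follows essentially the same route as the paper: express $\tilde W\mp\adj{\tilde W}$ via the CCR/CAR as $E_{\tilde P}^\knl$ intertwined by $1\otimes\flat$ (resp.\ $1\otimes(R\circ\flat)$), invoke the standard behaviour of polarisation sets under composition with noncharacteristic bundle maps, and use the disjointness of $\WF(\tilde W)\subset\Vc^+\times\Vc^-$ and $\WF(\adj{\tilde W})\subset\Vc^-\times\Vc^+$ to recover $\WF_\pol(\tilde W)$ by restriction to $\Vc^+\times\Vc^-$. The only caveat is the bookkeeping you already flag: in the paper the $\flat$ enters through the conversion between two-point and kernel distributions ($T^\twopt=T^\knl\circ(1\otimes(\flat\circ\rho^{-1}))$) rather than through the flip itself, but this does not change the argument.
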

\begin{proof}
	In the bosonic case, one first calculates that 
	\begin{align}\label{eq:CCRtwopt}
		(\rho^{1/2}\otimes\rho^{1/2})(W^\twopt-\adj{W}^\twopt)&=\ii   E_P^\knl\circ (\rho^{1/2}\otimes (\flat\circ\rho^{-1/2})) \nonumber  \\
		&= \ii (\rho^{1/2}E_P\rho^{-1/2})^\knl \circ (1\otimes (\rho^{1/2}\circ\flat\circ\rho^{-1/2})) \nonumber \\
		&= \ii (1\otimes (\rho^{1/2}\circ\flat\circ\rho^{-1/2})) E_{\rho^{1/2}P\rho^{-1/2}}^\knl,
	\end{align}
	where we use the fact that $\rho^{1/2}E_P\rho^{-1/2}= E_{\rho^{1/2}P\rho^{-1/2}}$ as operators. Thus
	\begin{equation}
		\WF_\pol((\rho^{1/2}\otimes\rho^{1/2})(W^\twopt-\adj{W}^\twopt)) = (1\otimes\flat) \WF_\pol(E_{\rho^{1/2}P\rho^{-1/2}}^\knl),
	\end{equation}
	where we have used a standard result on polarisation sets (see section~2 of~\cite{Dencker:1982} or Lemma~3.1 of~\cite{Fewster:2025b}) and the fact that $\flat$ is an isomorphism. 
	As the wavefront sets of $W^\twopt$ and $\adj{W}^\twopt$ intersect trivially, their polarisation sets sum to that on the right-hand side, and the restriction to $\Vc^+\times\Vc^-$ therefore gives the polarisation set of $(\rho^{1/2}\otimes\rho^{1/2}) W^\twopt$ modulo the zero section.  
	
	For the fermionic case,~\eqref{eq:CCRtwopt} is replaced by
	\begin{equation}
		(\rho^{1/2}\otimes\rho^{1/2})(W^\twopt+\adj{W}^\twopt) = \ii (1 \otimes 
		(\rho^{1/2}\circ R \circ \flat\circ\rho^{-1/2})') E_{\rho^{1/2}P\rho^{-1/2}}^\knl
	\end{equation}
	due to the anticommutation relations. Noting that the symbols of $(\rho^{1/2}\circ R \circ \flat\circ\rho^{-1/2})'$ and
	$R$ coincide up to sign (recall that $R=-\adj{R}$), the result follows.
\end{proof}

\section{An application to measurement schemes}\label{sec:FVmeasurements}

As an application of the theory developed so far, we consider the measurement of local observables
of QFTs based on \rfhgho's according to the scheme introduced in~\cite{FewVer_QFLM:2018} and further developed in~\cite{BostelmannFewsterRuep:2020,FewsterJubbRuep:2022}. In particular, we will show that the state updates resulting from nonselective measurements map Hadamard states to Hadamard states.\footnote{As this manuscript was being completed, J.~Mandrysch informed me that he and M.~Navascu\'es have obtained a similar result in a particular case corresponding to finite rank perturbations (to appear in a revised version of~\cite{MandryschNavascues:2024}).}

To start, we describe the measurement framework of~\cite{FewVer_QFLM:2018} in a simplified setting where all the QFTs involved are obtained from \rfhgho's. Suppose that $P$ is a \rfhgho\ on bundle $B_P$ over $\Mb\in\Loc$, and let $\Yf(P)$ be the algebra of observables for the QFT of interest -- the \emph{system}. Let $Q$ be a \rfhgho\ on bundle $B_Q$ over $\Mb$ and regard $\Yf(Q)$ as a \emph{probe} theory that will be used to measure the system. The system and probe can be combined as an \emph{uncoupled combination} $\Yf(P)\otimes\Yf(Q)\cong \Yf(P\oplus Q)$. We also consider a \emph{coupled combination} described by 
a QFT with algebra of observables $\Yf(T)$, where $T$ is also a \rfhgho\ on $B_P\oplus B_Q$ that
agrees with $P\oplus Q$ outside a temporally compact set $S$ (typically either a compact subset or a closed Cauchy slab), thus defining a scattering automorphism $\Theta$ on $\Yf(P\oplus Q)$ are defined as in section~\ref{sec:scattering}. The overall idea of~\cite{FewVer_QFLM:2018} is that one measures observables of the system by measuring probe observables after the coupling has ceased. Suppose that the system and probe are prepared at early times in the states
$\omega$ and $\sigma$ respectively and that a nonselective measurement of a probe observable is made. 
Tracing out the probe, the result of the measurement is that the system state should be updated from $\omega$ to a state $\omega^{\textnormal{n.s.}}$ on $\Yf(P)$, so that
\begin{equation}\label{eq:omegans}
	\omega^{\textnormal{n.s.}}(A) = (\omega\otimes\sigma)(\Theta(A\otimes \1_{\Yf(Q)}))
\end{equation}
as shown in section~3.3 of~\cite{FewVer_QFLM:2018}. With this introduction, we pass to the main result of this section.
\begin{thm}\label{thm:Hadamard_update}
	Let $P$, $Q$ and $T$ be \rfhgho{}s acting on sections of bundles $B_P$, $B_Q$ and $B_P\oplus B_Q$ respectively, so that $T$ and $P\oplus Q$ agree outside a temporally compact set $U$ with scattering morphism $\vartheta$ on $(B_P\oplus B_Q,P\oplus Q)$. 
	 Suppose that $P$ and $Q$ are, respectively, $\Vc_P^\pm$- and $\Vc_Q^\pm$-decomposable with $\Vc_{P}^+\cap \Vc_Q^-=\emptyset$, and suppose that $T$ is $\Vc_T^\pm$-decomposable, where $\Vc_T$ satisfies $\Vc_T^+|_{M\setminus U}=\Vc_{P\oplus Q}^+|_{M\setminus U}$. Suppose $\omega\in\Yf(P)^*_{+,1}$ and $\sigma\in\Yf(Q)^*_{+,1}$ are 
	$\Vc_P^+$-Hadamard and $\Vc_Q^+$-Hadamard, respectively, and let $\omega^{\textnormal{n.s.}}$ be the nonselective update of $\omega$ from a measurement using $\Yf(T)$ as a coupled combination of the system $\Yf(P)$ and probe $\Yf(Q)$ (with scattering map $\Theta=\Yf(\vartheta)$) and probe preparation state $\sigma$, so that $\omega^{\textnormal{n.s.}}$ is given by~\eqref{eq:omegans}
	Then  $\omega^{\textnormal{n.s.}}$ is a $\Vc_P^+$-Hadamard state on $\Yf_P(M)$.  
\end{thm}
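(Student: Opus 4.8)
The plan is to assemble \(\omega^{\textnormal{n.s.}}\) from the constituent results already proved, rather than estimate wavefront sets directly from~\eqref{eq:omegans}. First I would observe that the uncoupled combination state \(\omega\otimes\sigma\) is a \(\Vc_{P\oplus Q}^+\)-Hadamard state on \(\Yf(P\oplus Q)=\Yf(P)\otimes\Yf(Q)\) by Theorem~\ref{thm:Hadamard_tools}(b), using the hypotheses that \(\omega\) and \(\sigma\) are \(\Vc_P^+\)- and \(\Vc_Q^+\)-Hadamard and that \(\Vc_P^+\cap\Vc_Q^-=\emptyset\) (the latter also ensures, via Theorem~\ref{thm:Hadamard_tools}(a), that \(P\oplus Q\) is \(\Vc_{P\oplus Q}^\pm\)-decomposable with \(\Vc_{P\oplus Q}^\pm=\Vc_P^\pm\cup\Vc_Q^\pm\)).

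Next I would apply the scattering result, Theorem~\ref{thm:Hadamard_scattering}, with the pair of \rfhgho{}s \(P\oplus Q\) and \(T\) on the bundle \(B_P\oplus B_Q\): these agree outside the temporally compact set \(U\), both are decomposable (\(P\oplus Q\) is \(\Vc_{P\oplus Q}^\pm\)-decomposable as just noted, \(T\) is \(\Vc_T^\pm\)-decomposable by hypothesis), and the matching condition \(\Vc_T^+|_{M\setminus U}=\Vc_{P\oplus Q}^+|_{M\setminus U}\) is exactly what Theorem~\ref{thm:Hadamard_scattering} requires. Hence \(\Theta^*=\Yf(\vartheta)^*\) is an isomorphism of the set of \(\Vc_{P\oplus Q}^+\)-Hadamard states on \(\Yf(P\oplus Q)\), so in particular \((\omega\otimes\sigma)\circ\Theta\) is again \(\Vc_{P\oplus Q}^+\)-Hadamard.

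Finally I would recognise \(\omega^{\textnormal{n.s.}}\) as a partial trace: from~\eqref{eq:omegans} one has \(\omega^{\textnormal{n.s.}}(A)=\big((\omega\otimes\sigma)\circ\Theta\big)(A\otimes\1_{\Yf(Q)})\), so \(\omega^{\textnormal{n.s.}}=\big((\omega\otimes\sigma)\circ\Theta\big)_P\) in the notation of Theorem~\ref{thm:Hadamard_tools}(c). That theorem, applied to the \(\Vc_{P\oplus Q}^+\)-Hadamard state \((\omega\otimes\sigma)\circ\Theta\) on \(\Yf(P)\otimes\Yf(Q)\), yields that its partial trace onto \(\Yf(P)\) is \(\Vc_P^+\)-Hadamard, which is the assertion. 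I do not anticipate a genuine obstacle here: the proof is essentially a bookkeeping exercise chaining Theorems~\ref{thm:Hadamard_tools}(a,b,c) and~\ref{thm:Hadamard_scattering}. The one point that needs a moment's care is verifying that the hypotheses of Theorem~\ref{thm:Hadamard_scattering} are met with the \emph{composite} operator \(P\oplus Q\) playing the role of its ``\(P\)'' and \(T\) playing its ``\(Q\)'', in particular that the cone used throughout is \(\Vc_{P\oplus Q}^+=\Vc_P^+\cup\Vc_Q^+\) and that the partial trace in the last step is taken with respect to this same cone — which is legitimate since \(P\) is \(\Vc_{P\oplus Q}^\pm\)-decomposable as well as \(\Vc_P^\pm\)-decomposable, exactly as used in the proof of Theorem~\ref{thm:Hadamard_tools}(c).
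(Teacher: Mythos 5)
Your proposal is correct and follows essentially the same route as the paper's proof: tensor the states via Theorem~\ref{thm:Hadamard_tools}, apply Theorem~\ref{thm:Hadamard_scattering} to the pair $(P\oplus Q, T)$, and finish with the partial-trace statement of Theorem~\ref{thm:Hadamard_tools}\eqref{it:Hadpartial}. The verification you flag at the end (that the hypotheses of the scattering theorem are met with $P\oplus Q$ in the role of its ``$P$'') is exactly the point the paper's proof relies on as well.
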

\begin{proof}
	The state $\omega\otimes\sigma$ is $\Vc^+_{P\oplus Q}$-Hadamard by Theorem~\ref{thm:Hadamard_tools}\ref{it:Hadoplus} and therefore 
	$\Theta^*(\omega\otimes\sigma)$ is $\Vc^+_{P\oplus Q}$-Hadamard by Theorem~\ref{thm:Hadamard_scattering}.
	As $\omega^{\text{n.s.}}$ is a partial trace of $\Theta^*(\omega\otimes\sigma)$, the result follows from 
	Theorem~\ref{thm:Hadamard_tools}\ref{it:Hadpartial}.  
\end{proof}

\section{Example: the neutral Proca field}\label{sec:Proca}

Now specialise to $n=4$ dimensions. For any $\Mb\in\Loc$, let $\Lambda^pM$ be the bundle of complex-valued $p$-forms on $M$. The Hodge dual $\star_\Mb:\Lambda^pM\to \Lambda^{4-p}M$ is defined by
\begin{equation}
	\omega\wedge\star_\Mb \eta = \frac{1}{p!}\omega_{\alpha_1\cdots\alpha_p}\eta^{\alpha_1\cdots\alpha_p}\vol,
\end{equation}
where $\vol$ is the metric volume $4$-form. Then $(f,h):=(-1)^p\star_\Mb^{-1}(\overline{f}\wedge\star_\Mb h)$ defines a hermitian bundle metric on $\Lambda^pM$ (note the sign)
with integrated version 
\begin{equation}
	(f,h)=(-1)^p\int_\Mb \overline{f}\wedge\star_\Mb h,
\end{equation}
and one also defines the codifferential $\delta_\Mb$  
\begin{equation}
	\delta_\Mb\omega = (-1)^{\deg \omega} \star_\Mb^{-1}\dd_\Mb\star_\Mb \omega,
\end{equation}
for any $p$-form field $\omega$, where $\dd_\Mb$ is the usual exterior derivative. Defining a bilinear pairing on $\Lambda^pM$ by $\langle f,h\rangle = (\overline{f},h)$, one has  $\delta_\Mb=-\hadj{\dd_\Mb}=-\adj{\dd_\Mb}$. 

With these sign choices, the 
neutral Proca action with mass $m>0$ is
(suppressing cutoffs)
\begin{equation}
	S_{\textnormal{Proca}}[Z] = -\frac{1}{2}\langle \dd_\Mb Z,\dd_\Mb Z\rangle 
	-\frac{1}{2}m^2\langle Z,Z\rangle,
\end{equation}
for $1$-form field $Z$, 
and takes the same form as the action of the minimally coupled massive scalar field 
\begin{equation}
	S_{\textnormal{scalar}}[\phi] =  -\frac{1}{2}\langle \dd_\Mb \phi,\dd_\Mb \phi\rangle -\frac{1}{2}m^2\langle \phi,\phi\rangle,
\end{equation}
for $0$-form field $\phi$.\footnote{The overall sign of the action, in both cases, is the one that guarantees that the corresponding stress-energy tensor $T_{\mu\nu}=2\delta S/\delta g^{\mu\nu}$ obeys the classical weak energy condition -- see~\cite{Few&Pfen03}
for classical and quantum energy conditions obeyed by the Proca theory.}

The equation of motion derived from $S_{\textnormal{Proca}}$ is 
$P_\Mb Z = 0$, where the
neutral Proca operator on $\Gamma^\infty(\Lambda^1 M)$,  
\begin{equation}
	P_\Mb = -\delta_\Mb\dd_\Mb + m^2
\end{equation}  
is real and formally hermitian. As is well known (see e.g.~\cite{Few&Pfen03}), $P_\Mb$ has Green operators 
\begin{equation}\label{eq:EPMKMDM}
	E_{P_\Mb}^\pm = E_{K_\Mb}^\pm D_\Mb ,
\end{equation}
where $K_\Mb = -(\delta_\Mb\dd_\Mb+\dd_\Mb\delta_\Mb)+m^2$ is the $1$-form Klein--Gordon operator on $\Gamma^\infty(\Lambda^1 M)$ and
\begin{equation}
	D_\Mb = 1-\frac{1}{m^2}\dd_\Mb\delta_\Mb.
\end{equation}
If $u\in\Gamma^\infty(\Lambda^1M)$ satisfies $P_\Mb u=0$, then by applying $\delta_\Mb$ we find that $\delta_\Mb u=0$ and consequently $K_\Mb u=0$; conversely, solutions to $K_\Mb u=0$ with $\delta_\Mb u=0$ also solve $P_\Mb u=0$. Proca solutions therefore form a subspace of the Klein--Gordon solutions, and indeed $D_\Mb$ is a projection onto this subspace. The same holds for distributional solutions. 

For later use, we note the Weitzenb\"ock formula $(K_\Mb A)_\mu =\Box_\Mb A_\mu + (m^2\delta_\mu^{\phantom{\mu}\nu} + R_\mu^{\phantom{\mu}\nu}) A_\nu$, where $\Box_\Mb=g^{\mu\nu}\nabla_\mu\nabla_\nu$ is given by the Levi--Civita connection and $R_{\mu\nu}$ is the corresponding Ricci tensor, $R_{\alpha\beta}=R^{\lambda}_{\phantom{\lambda}\alpha\lambda\beta}$, where our conventions are that $(\nabla_\alpha\nabla_\beta- \nabla_\alpha\nabla_\beta)v^\mu = R_{\alpha\beta\lambda}^{\phantom{\alpha\beta\lambda}\mu} v^\lambda$.
Thus the Weitzenb\"ock connection for $K_\Mb$ (see Section~\ref{sec:Hadcond}) is simply the Levi--Civita connection on $\Lambda^1M$. We also note that $D_\Mb$ has principal symbol with action
\begin{equation}\label{eq:DM_principalsymbol}
	\sigma(D_\Mb)(x,k)v = -m^{-2} g_x^{-1}(k,v)k, \qquad v\in T_x^*M
\end{equation}
and therefore has nontrivial kernel $k^\perp\subset T_x^*M$ for all $(x,k)\in \dot{T}^*M$.
Thus $\Char(D_\Mb)=\dot{T}^*M$, but it should also be noted that $\sigma(D_\Mb)$ is nowhere zero on $\dot{T}^*M$.

Summarising, $\Lambda^1\Mb =(\Lambda^1M,\Mb,(\cdot,\cdot),\overline{\cdot})$ is a Hermitian vector bundle and $\EoM(\Mb):=(\Lambda^1\Mb,P_\Mb)$ is a \rfhgho. Moreover, if $\psi:\Mb\to\Mb'$ in $\Loc$, then $\psi^*P_{\Mb'}=P_{\Mb}\psi^*$, so
by Theorem~\ref{thm:GreenHyp}(a) the pushforward $\psi_*$ on compactly supported $1$-forms determines a $\GreenHyp$-morphism $\EoM(\psi):\EoM(\Mb)\to\EoM(\Mb')$, which respects composition of morphisms and makes $\EoM:\Loc\to \GreenHyp$ into a functor.

The quantised theory $\Zf=\Yf\circ\EoM$ is a locally covariant QFT. Explicitly, each $\Zf(\Mb)$ is generated by 
symbols $\Zc_\Mb(f)$ ($f\in\Gamma_0^\infty(\Lambda^1 M)$) subject to relations
\begin{enumerate}
	\item[Z1] $f\mapsto \Zc_\Mb(f)$ is complex-linear
	\item[Z2] $\Zc_\Mb(f)^*=\Zc_\Mb(\overline{f})$ (hermiticity)
	\item[Z3] $\Zc_\Mb(P_{\Mb}f)=0$ (field equation)
	\item[Z4] $[\Zc_\Mb(f),\Zc_\Mb(h)] = \ii E_{P_\Mb}(f,h)\1_{\Zf(\Mb)}$ (commutation relations)
\end{enumerate}
where $E_{P_\Mb}(f,h) = -\int_\Mb f\wedge \star_\Mb E_{P_\Mb}h$ due to the choice of sign in the bundle metric on $\Lambda^1M$.
The morphism $\Zf(\psi)$ induced by $\psi:\Mb\to\Mb'$ in $\Loc$ is determined uniquely by
\begin{equation}
	\Zf(\psi)\Zc_\Mb(f) = \Zc_{\Mb'}(\psi_*f), \qquad f\in\Gamma_0^\infty(\Lambda^1 M).
\end{equation}
According to our earlier convention, the generators $\Zc_\Mb(f)$ 
are interpreted as quantisations of the functional $F_f(Z)=\langle f,Z\rangle=-\int_\Mb f\wedge \star_\Mb Z$ on the classical solution space $\Sol(P_\Mb)$. As the theory has
a $\ZZ_2$ global gauge symmetry under $\Zc_\Mb(f)\mapsto -\Zc_\Mb(f)$ (even at the functorial level, cf.~\cite{Fewster:gauge}) one is free to interpret the generators
alternatively as quantisations of $Z\mapsto \int_\Mb f\wedge \star_\Mb Z$ if desired. Note that the signs in Z4 and in the formula for $E_{P_\Mb}$ are unchanged under the $\ZZ_2$ symmetry.
  
Owing to~\eqref{eq:EPMKMDM}, we may use Theorem~\ref{thm:WFQEKR} and Corollary~\ref{cor:WFQEKR} to  
describe the Hadamard states of the theory.
\begin{thm}\label{thm:WFEP}
	The wavefront set of $E_{P_\Mb}$ is given by
	\begin{equation}\label{eq:WFEProca}
		\WF(E_{P_\Mb})=\Rc_\Mb.
	\end{equation}	
Consequently, the neutral Proca operator $P_\Mb$ is $\Nc_\Mb^\pm$-decomposable in every $\Mb\in\Loc$,
and the functor $\EoM$ is $\Nc^\pm$-decomposable.
The $\Nc_\Mb^\pm$-Hadamard states of the Proca field have distributional two-point functions satisfying
	\begin{equation}
		\WF(W)=\Rc_\Mb^\Had.  
	\end{equation}  
\end{thm}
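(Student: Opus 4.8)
The plan is to exploit the factorisation $E_{P_\Mb}^\pm = E_{K_\Mb}^\pm D_\Mb$ from~\eqref{eq:EPMKMDM}, which immediately gives $E_{P_\Mb} = E_{K_\Mb}D_\Mb$, and to apply Theorem~\ref{thm:WFQEKR} and Corollary~\ref{cor:WFQEKR} with $K=K_\Mb$ (normally hyperbolic, with Weitzenb\"ock connection equal to the Levi--Civita connection on $\Lambda^1 M$, as noted just before the theorem), $Q=\id$ and $R=D_\Mb$. The key point is that $D_\Mb = 1 - m^{-2}\dd_\Mb\delta_\Mb$ has principal symbol $r(x,k)v = -m^{-2}\langle k,v\rangle_{g^{-1}} k$ (up to sign conventions), since $\dd$ has symbol $v\mapsto k\wedge v$ and $\delta$ has symbol $v\mapsto -\iota_{k^\sharp}v$. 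On null covectors $k\in\Nc$ this symbol is nonzero: taking $v=k$ itself gives $r(x,k)k = -m^{-2}g^{-1}(k,k)k = 0$, so $r$ has nontrivial kernel on $\Nc$, but choosing any $v$ with $g^{-1}(k,v)\neq 0$ (which exists since $g^{-1}$ is nondegenerate) gives $r(x,k)v\neq 0$. Hence $r$ is nonvanishing on $\Nc$ while possibly having nontrivial kernel there --- exactly the situation covered by the final sentence of Theorem~\ref{thm:WFQEKR}.

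Concretely, first I would record the principal symbol computation for $D_\Mb$ and verify the nonvanishing-on-$\Nc$ property. Then Theorem~\ref{thm:WFQEKR} (with $Q=\id$) yields $\WF(E_{P_\Mb}) = \WF(E_{K_\Mb}D_\Mb) = \Rc_\Mb$, which is~\eqref{eq:WFEProca}. Next, $\Nc_\Mb^\pm$-decomposability of $P_\Mb$ follows directly from Corollary~\ref{cor:WFQEKR}: since $\Rc_\Mb \subset (\Nc_\Mb^+\times\Nc_\Mb^-)\cup(\Nc_\Mb^-\times\Nc_\Mb^+)$ (here $\Nc_\Mb^+\cap\Nc_\Mb^- = \emptyset$ and $\Nc_\Mb^\pm$ is conic and relatively closed, as required in Definition~\ref{def:Hadamard}), the decomposability condition holds. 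For the functorial statement, I would check that the cones $\Nc^\pm(\Mb)$ transform correctly under $\Loc$-morphisms: if $\psi:\Mb\to\Mb'$ then $\psi$ is an isometric embedding preserving time-orientation, so $\psi^*\Nc^+_{\Mb'} = \Nc^+_{\Mb}|_{\psi(M)} \subset \Nc^+_\Mb$, which is the hypothesis in the definition of a $\Vc^\pm$-decomposable functor; combined with the per-spacetime decomposability already established, this shows $\EoM$ is $\Nc^\pm$-decomposable.

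Finally, the wavefront set of the two-point function of an $\Nc_\Mb^+$-Hadamard state is handled by the second part of Corollary~\ref{cor:WFQEKR}, which gives $\WF(W) = \Rc_\Mb^\Had$ directly (this corollary combines Theorem~\ref{thm:Hadamard_props}(a) with the computation $\WF(E_{P_\Mb})=\Rc_\Mb$). I do not expect any genuine obstacle here --- the substantive analytic work (the polarisation-set computation underlying Theorem~\ref{thm:WFQEKR}) is imported from~\cite{Fewster:2025b}, and everything else is bookkeeping. The one place to be careful is the sign and normalisation conventions for the symbols of $\dd_\Mb$ and $\delta_\Mb$ (and the sign in the bundle metric $(f,h) = -\overline{f}\wedge\star_\Mb h$), but these only affect constants and never the vanishing/nonvanishing of $r$ on $\Nc$, so the conclusion is robust. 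If one wished to be fully explicit one could also verify directly that $E_{P_\Mb}$ genuinely sees all of $\Rc_\Mb$ (rather than a proper subset) --- but this is precisely what the nonvanishing of $q(x,k)\circ\Pi_{x,k}^{x',k'}\circ r(x',k')$ in~\eqref{eq:WFthm} guarantees, since with $Q=\id$ the composite is $\Pi_{x,k}^{x',k'}\circ r(x',k')$, which is nonzero whenever $r(x',k')\neq 0$ because parallel transport is an isomorphism.
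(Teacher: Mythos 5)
Your proposal is correct and follows essentially the same route as the paper, which likewise obtains~\eqref{eq:WFEProca} by applying Theorem~\ref{thm:WFQEKR} (with $Q=\id$, $R=D_\Mb$) to the factorisation $E_{P_\Mb}=E_{K_\Mb}D_\Mb$ and then invokes Corollary~\ref{cor:WFQEKR}. Your explicit verification that the principal symbol of $D_\Mb$ is nonvanishing (though characteristic) on $\Nc_\Mb$, and your check of the cone condition $\psi^*\Nc^+(\Mb')\subset\Nc^+(\Mb)$ for the functorial claim, are exactly the details the paper leaves implicit.
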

\begin{proof}
Eq.~\eqref{eq:WFEProca} is a simple application of Theorem~\ref{thm:WFQEKR},
using the fact that $\sigma(D_\Mb)$ is nonvanishing on $\Nc$, and
has also been given as Theorem~5.1 of~\cite{Fewster:2025b}.   Corollary~\ref{cor:WFQEKR} completes the proof.  
\end{proof}
By Corollary~\ref{cor:deformation} it follows that if $\Zf(\Mb)$ admits $\Nc_\Mb^\pm$-Hadamard states for some $\Mb\in\Loc$ then the same is true for every $\Mb'\in\Loc$ whose Cauchy surfaces are oriented-diffeomorphic to those of $\Mb$. 

The formulation of Hadamard states for the Proca field has been 
studied from the microlocal viewpoint by Fewster and Pfenning (FP)~\cite{Few&Pfen03}
and more recently by Moretti, Murro and Volpe (MMV)~\cite{MorettiMurroVolpe:2023} (note that various sign conventions differ between~\cite{MorettiMurroVolpe:2023} and the present paper). We now briefly explain the relationship between these treatments and what has been presented above. 

\paragraph{FP-Hadamard states} The FP definition of a Hadamard state on $\Zf(\Mb)$ is
slightly indirect. Namely, a state $\omega$ on $\Zf(\Mb)$ is FP-Hadamard (i.e., Hadamard according to~\cite{Few&Pfen03}) if (adjusting to the conventions used here) the two-point function $W$ obeys 
 \begin{equation}\label{eq:FPcondition}
 	W  = H\circ (1\otimes D_\Mb^{\sharp,1}) 
 \end{equation}
for some $K_\Mb$-bisolution $H$ of Hadamard-form. The latter condition can be understood in terms of Hadamard series, but can be stated equivalently by requiring that $H$ has antisymmetric part $\tfrac{1}{2}\ii E^\twopt_{K_\Mb}$ modulo smooth contributions and that $H$ obeys the wavefront set condition $\WF(H)=\Rc_\Mb^\Had\subset \Nc_\Mb^+\times\Nc_\Mb^-$. By nonexpansion of the wavefront set, every 
FP-Hadamard state is $\Nc_\Mb^\pm$-Hadamard -- the reverse implication will be proved below in Theorem~\ref{thm:FPequivalence}.
  
The existence of FP-Hadamard states was proved for ultrastatic spacetimes with compact Cauchy surfaces $\Sigma$ with trivial homology $H_1(\Sigma)$ (though this condition was really imposed to facilitate a parallel treatment of the Maxwell field rather than out of necessity). By deformation arguments this implies existence on general globally hyperbolic spacetimes whose Cauchy surfaces are compact and obey the homology condition. This was proved in Section~IV.E of~\cite{Few&Pfen03} for FP-Hadamard states, so all such spacetimes have FP-Hadamard (and therefore $\Nc_\Mb^\pm$-Hadamard) states.
 
\paragraph{MMV-Hadamard states}
The MMV definition is that a Hadamard state is a quasifree state with two-point function $W$ obeying $\WF(W)=\Rc_\Mb^\Had$. Here, we recall that a state $\omega$ is  \emph{quasifree} if it is completely determined by its two-point function according to
\begin{equation}\label{eq:quasifree}
	\omega(\ee^{\ii\Zc_\Mb(f)}) = \ee^{-W(f,f)/2}, \qquad f\in\Gamma_0^\infty(\Lambda^1\Mb)
\end{equation}
in the sense of formal power series in $f$, thus fixing all $n$-point functions (this is equivalent to the vanishing of all truncated $n$-point functions for $n\neq 2$). In particular, every quasifree FP-Hadamard state is MMV-Hadamard, and every MMV-Hadamard state is $\Nc_\Mb^+$-Hadamard. By Theorem~\ref{thm:WFEP}, every quasifree $\Nc_\Mb^+$-Hadamard state is MMV-Hadamard. Thus we have 
\begin{equation}
	\textnormal{quasifree FP-Hadamard} \implies \textnormal{MMV-Hadamard} \iff \textnormal{quasifree $\Nc_\Mb^+$-Hadamard} .
\end{equation}
MMV also show (Theorem 5.9 of~\cite{MorettiMurroVolpe:2023}) that there is a MMV-Hadamard state on every ultrastatic spacetime with bounded spatial geometry. By Corollary~\ref{cor:bultrastatic}, we deduce the following.
 \begin{thm}\label{thm:ProcaHadamard}
	There exist $\Nc_\Mb^+$/MMV-Hadamard states on the Proca algebra $\Zf(\Mb)$ for every $\Mb\in\Loc$.
\end{thm}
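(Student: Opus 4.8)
The plan is to chain together three earlier results: the construction of MMV-Hadamard states on ultrastatic spacetimes with bounded spatial geometry (Theorem 5.9 of MMV, already cited), the deformation argument of Corollary~\ref{cor:bultrastatic}, and the functoriality established for $\EoM$ in this section. First I would invoke Theorem~\ref{thm:WFEP} to record that $\EoM$ is $\Nc^\pm$-decomposable, so that the machinery of Corollary~\ref{cor:bultrastatic} applies to the Proca functor $\Zf = \Yf\circ\EoM$. The hypotheses of that corollary require that $\EoM$ takes Cauchy $\Loc$-morphisms to Cauchy $\GreenHyp$-morphisms --- this follows because $\EoM(\psi) = \psi_*$ for any $\Loc$-morphism $\psi$, and by Theorem~\ref{thm:GreenHyp}(a) a pushforward along a Cauchy morphism is itself Cauchy.

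Next I would observe that for every ultrastatic $\Mb\in\Loc$ with bounded spatial geometry, $\Zf(\Mb)$ admits an $\Nc_\Mb^+$-Hadamard state: MMV's Theorem~5.9 furnishes a quasifree state whose two-point function $W$ satisfies $\WF(W) = \Rc_\Mb^\Had$, and since $\Rc_\Mb^\Had\subset\Nc_\Mb^+\times\Nc_\Mb^-$ this is in particular $\Nc_\Mb^+$-Hadamard in the sense of Definition~\ref{def:Hadamard}. (The identification of the MMV condition with quasifree $\Nc_\Mb^+$-Hadamard has already been noted in the text immediately preceding the statement.) With the existence hypothesis of Corollary~\ref{cor:bultrastatic} verified for this class of spacetimes, that corollary immediately yields the existence of (many) $\Nc_\Mb^+$-Hadamard states on $\Zf(\Mb)$ for every $\Mb\in\Loc$.

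Finally, to get MMV-Hadamard states rather than merely $\Nc_\Mb^+$-Hadamard ones, I would apply Theorem~\ref{thm:Hadamard_props}(e): any state whose two-point function is $\Nc_\Mb^+$-Hadamard has a quasifree ``liberation'' with the same two-point function, which is therefore also $\Nc_\Mb^+$-Hadamard, and by Theorem~\ref{thm:WFEP} together with the equivalence displayed just above, a quasifree $\Nc_\Mb^+$-Hadamard state is exactly an MMV-Hadamard state. This completes the argument. There is no real obstacle here --- the proof is purely an assembly of results established earlier in the paper --- so the only thing to be careful about is checking that the orientation and time-orientation data in the deformation chain of Corollary~\ref{cor:bultrastatic} are compatible with the Proca functor, which is automatic since $\EoM$ is defined on all of $\Loc$ and the bundle $\Lambda^1\Mb$ with its metric and complex conjugation is constructed naturally from the spacetime data.
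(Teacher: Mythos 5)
Your proposal is correct and follows essentially the same route as the paper: the paper's proof is simply the observation that MMV's Theorem~5.9 supplies (quasifree, hence $\Nc^+$-)Hadamard states on ultrastatic spacetimes of bounded spatial geometry, after which Corollary~\ref{cor:bultrastatic} yields $\Nc_\Mb^+$-Hadamard states on every $\Mb\in\Loc$. Your additional verifications (that $\EoM(\psi)=\psi_*$ sends Cauchy morphisms to Cauchy morphisms via Theorem~\ref{thm:GreenHyp}(a), and the quasifree liberation via Theorem~\ref{thm:Hadamard_props}(e) to upgrade from $\Nc_\Mb^+$-Hadamard to MMV-Hadamard) are exactly the steps the paper leaves implicit.
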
 
MMV state a similar result, using M{\o}ller operators and a notion of \emph{paracausal deformation} developed in~\cite{MMVparacausal:2023}, and also state a result showing that MMV-Hadamard is `almost equivalent' to the FP-Hadamard definition. 
 However, there turns out to be a gap in some of the proofs in~\cite{MorettiMurroVolpe:2023} affecting 
\begin{itemize}
	\item the argument that MMV-Hadamard form propagates for quasifree states (Prop 4.7 of~\cite{MorettiMurroVolpe:2023})
	\item the `almost equivalence' between MMV and FP, namely that if $\Zf(\Mb)$ admits any quasifree FP-Hadamard state, then every MMV-Hadamard state on $\Zf(\Mb)$ is also FP-Hadamard (Theorem 6.6 of~\cite{MorettiMurroVolpe:2023})
	\item the stability of MMV-Hadamard states under M{\o}ller isomorphisms (part (3) of Theorem 4.9 in~\cite{MorettiMurroVolpe:2023}); as this is the key tool in reducing the general globally hyperbolic case to ultrastatic spacetimes with bounded spatial geometry, it also affects their result on the existence of MMV-Hadamard states in all globally hyperbolic spacetimes, stated as Theorem 4 and repeated as Theorem 5.10 in~\cite{MorettiMurroVolpe:2023}.
\end{itemize}

The problem, in all cases, is an incorrect argument used to compute $\WF(E_{P_\Mb})$ -- it is erroneously claimed in part~(b) of the proof of Proposition~4.7 in~\cite{MorettiMurroVolpe:2023} that the operator $D_\Mb= 1-m^{-2}\dd_\Mb\delta_\Mb$ (denoted $Q$ there) satisfies $\Char(1\otimes D_\Mb)=T^*M\times 0_{T^*M}$ and therefore does not intersect $\WF(E_{K_\Mb})$ (effectively stating that $D_\Mb$ is elliptic). However, $D_\Mb$ is characteristic everywhere on $\dot{T}^*M$ and in particular
on $\Nc_\Mb$, as we have already mentioned after~\eqref{eq:DM_principalsymbol} , so in fact $\Char (1\otimes D_\Mb)\supset T^*M\times\Nc_\Mb\supset  \WF(E_{P_\Mb})$. This problem is remedied by our Theorem~\ref{thm:WFEP}, which (via Theorem~\ref{thm:WFQEKR}) uses the polarisation set to track the singularities of $E_{K_\Mb}$ in more detail and hence shows that $\WF(E_{P_\Mb})$ has the expected form. With this gap filled, the results of~\cite{MorettiMurroVolpe:2023} should go through. In any case, neither Theorem~\ref{thm:ProcaHadamard} nor Theorem~\ref{thm:FPequivalence} below relies on the results of~\cite{MorettiMurroVolpe:2023} except in relation to the existence of Hadamard states on ultrastatic spacetimes of bounded spatial geometry.

We now show that the various notions of Hadamard states discussed above are all, in fact, the same (except that MMV is only stated for quasifree states).
\begin{thm}\label{thm:FPequivalence}
	For any globally hyperbolic $\Mb$,
	a state $\omega$ on $\Zf(\Mb)$ is FP-Hadamard if and only if it is $\Nc_\Mb^+$-Hadamard. In particular, all MMV-Hadamard states are FP-Hadamard.
\end{thm}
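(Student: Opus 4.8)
The plan is to prove the nontrivial direction: every $\Nc_\Mb^+$-Hadamard state $\omega$ on $\Zf(\Mb)$ is FP-Hadamard, i.e.\ to construct a $K_\Mb$-bisolution $H$ of Hadamard form with $W = H\circ(1\otimes D_\Mb^{\sharp,1})$. The key observation is that the Proca constraint $\delta_\Mb u = 0$ suggests augmenting $W$ by a scalar contribution of the ``wrong sign'': one expects a formula of the shape $H = W + (\text{something built from an auxiliary scalar Hadamard two-point function})$, as the paper itself flags via the reference to \eqref{eq:H_from_W} and the remark about the ghost associated with the Proca constraint. Concretely, I would let $w$ be a scalar Hadamard two-point function on $\Mb$ for the Klein--Gordon operator $\Box_\Mb + m^2$ (which exists on every globally hyperbolic spacetime), and set
\begin{equation}
	H = W + \tfrac{1}{m^2}(\dd_\Mb\otimes \dd_\Mb)\, w,
\end{equation}
interpreting $(\dd_\Mb\otimes\dd_\Mb)w$ as the $1$-form-valued bidistribution obtained by applying the exterior derivative in each argument. (The exact placement of density factors and the precise sign must be checked against the conventions of Section~\ref{sec:Proca}; I treat those as routine.)

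First I would verify that $H$ is a bisolution of $K_\Mb$. Using $K_\Mb = -(\delta_\Mb\dd_\Mb + \dd_\Mb\delta_\Mb) + m^2$ and the fact that $W$ is a $P_\Mb$-bisolution with $\delta_\Mb W = 0$ in each slot (a distributional consequence of the Proca equation, established exactly as in the classical case recalled in Section~\ref{sec:Proca}), one gets $K_\Mb W = (\dd_\Mb\delta_\Mb - \dd_\Mb\delta_\Mb + \cdots)$-type cancellations leaving $K_\Mb W = 0$ in each argument; meanwhile $K_\Mb \dd_\Mb = \dd_\Mb(\Box_\Mb + m^2)$-type intertwining (more precisely $K_\Mb\dd_\Mb = \dd_\Mb\,K^{(0)}_\Mb$ where $K^{(0)}_\Mb$ is the scalar Klein--Gordon operator, because $\delta_\Mb\dd_\Mb\dd_\Mb = 0$) kills the second term since $w$ is a scalar bisolution. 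Second, I would check the wavefront set: $\WF(w) = \Rc_\Mb^\Had$ (scalar) and differentiation does not enlarge wavefront sets, so $\WF((\dd_\Mb\otimes\dd_\Mb)w) \subset \Rc_\Mb^\Had \subset \Nc_\Mb^+\times\Nc_\Mb^-$, hence $\WF(H)\subset \Nc_\Mb^+\times\Nc_\Mb^-$. Combined with the CCR/antisymmetric-part computation—$H$ must have antisymmetric part $\tfrac{i}{2}E_{K_\Mb}^\twopt$ modulo smooth terms, which follows because $W$ contributes $\tfrac{i}{2}E_{P_\Mb}^\twopt = \tfrac{i}{2}E_{K_\Mb}^\twopt D_\Mb^\twopt$ and the correction term contributes the ``missing'' longitudinal piece $\tfrac{i}{2}E_{K_\Mb}^\twopt(1 - D_\Mb)^\twopt$, using $E_{K_\Mb}$ applied to exact forms—the wavefront set then upgrades to $\WF(H) = \Rc_\Mb^\Had$ by the standard argument (difference of the two orderings is $iE_{K_\Mb}$ whose wavefront set is $\Rc_\Mb$). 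Third, I would verify $W = H\circ(1\otimes D_\Mb^{\sharp,1})$: since $D_\Mb = 1 - m^{-2}\dd_\Mb\delta_\Mb$ and $\delta_\Mb W = 0$ in the first slot while $\delta_\Mb\dd_\Mb$ applied to the correction term in the second slot reproduces $\dd_\Mb\delta_\Mb\dd_\Mb\otimes\dd_\Mb\,w$-cancellations, a direct computation should give $H\circ(1\otimes D_\Mb^{\sharp,1}) = W + \tfrac{1}{m^2}(\dd_\Mb\otimes\dd_\Mb)w\circ(1\otimes D_\Mb^{\sharp,1})$ and the last term vanishes because $\delta_\Mb\dd_\Mb$ in the second argument meets $\dd_\Mb w$. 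For the converse direction (FP $\Rightarrow$ $\Nc^+$-Hadamard), this is immediate by nonexpansion of wavefront sets as already noted in the excerpt, and the final sentence about MMV states follows since MMV-Hadamard $\iff$ quasifree $\Nc_\Mb^+$-Hadamard was established above.

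The main obstacle I anticipate is the bookkeeping around the auxiliary scalar state and the constraint identity $\delta_\Mb W = 0$ at the distributional level, together with getting the density/musical-notation factors in $D_\Mb^{\sharp,1}$ and the pairing conventions of Section~\ref{sec:Proca} exactly right—in particular ensuring that $(\dd_\Mb\otimes\dd_\Mb)w$ lands in the correct bundle $\Lambda^1M\boxtimes\Lambda^1M$ with the right density weights so that the formula $W = H\circ(1\otimes D_\Mb^{\sharp,1})$ typechecks. The conceptual content—that the Proca two-point function differs from a $1$-form Klein--Gordon Hadamard bisolution only by a ghost contribution obtained from a scalar Hadamard state with reversed sign—is the crux, and once the scalar Hadamard state is invoked (existence being standard) the remaining verifications are essentially algebraic manipulations with $\dd_\Mb$, $\delta_\Mb$ and the known intertwining relations among $P_\Mb$, $K_\Mb$, $D_\Mb$ and the Green operators recalled in \eqref{eq:EPMKMDM}.
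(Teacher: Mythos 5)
Your proposal is correct and follows essentially the same route as the paper's proof: the ansatz $H = W \pm m^{-2}(\dd_\Mb\otimes\dd_\Mb)w$ is exactly~\eqref{eq:H_from_W} (the overall sign being absorbed into the paper's deliberate choice of pairing on the auxiliary trivial bundle $M\times\CC$, which it takes to have opposite sign to that on $\Lambda^0\Mb$). Your supporting observations --- that $W$ is already a $K_\Mb$-bisolution satisfying $W\circ(1\otimes D_\Mb^{\sharp,1})=W$, that the ghost term restores the antisymmetric part to $\tfrac{1}{2}\ii E_{K_\Mb}^\twopt$ via the intertwining $\dd_\Mb E_{K^{(0)}_\Mb}=E_{K_\Mb}\dd_\Mb$, and that the correction is annihilated by $1\otimes D_\Mb^{\sharp,1}$ since $\dd_\Mb\delta_\Mb\dd_\Mb w = \dd_\Mb(m^2-K^{(0)}_\Mb)w = m^2\dd_\Mb w$ forces the longitudinal piece to cancel --- are precisely the steps carried out in the paper.
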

\begin{proof}
	The forward implication has already been mentioned as a consequence of nonexpansion of the wavefront set under differential operators. Suppose that $\omega$ is $\Nc^+_\Mb$-Hadamard, with two-point function $W$. We will construct a Hadamard form bisolution $H\in \DD'(\Lambda^1\Mb\boxtimes \Lambda^1\Mb)$ satisfying~\eqref{eq:FPcondition}. An important observation is that 
	$W$ is itself a $K_\Mb$-bisolution obeying 
	$W\circ  (1\otimes D_\Mb^{\sharp,1} )=W$ and $\WF(W)=\Rc_\Mb^\Had$.
	This is seen because $(1\otimes P_\Mb)W=0$
	implies $(1\otimes\delta_\Mb)W=0$ and hence $W\circ  (1\otimes D_\Mb^{\sharp,1} )= (1\otimes D_\Mb)W=W$ and $(1\otimes K_\Mb)W=0$ -- a manifestation of the fact that $D_\Mb$ projects $K_\Mb$-solutions onto $P_\Mb$-solutions. Similarly we also have $(K_\Mb\otimes 1)W=0$, so $W$ is a $K_\Mb$-bisolution. However $W$ is not of Hadamard form, because it has antisymmetric part $\tfrac{1}{2}\ii E_{P_\Mb}^{\twopt}$ rather than  $\tfrac{1}{2}\ii E_{K_\Mb}^{\twopt}$. 
	
	To obtain the required Hadamard form bisolution $H$, we will construct an additional $K_\Mb$-bisolution with the same wavefront set, that does not alter~\eqref{eq:FPcondition} and has antisymmetric part $\tfrac{1}{2}\ii (E_{K_M}^{\twopt}-E_{P_M}^{\twopt})=\tfrac{1}{2}\ii m^{-2}E_{K_M}^{\twopt}\circ(1\otimes \dd_\Mb \delta_\Mb)$ (a smooth error could be tolerated, but will not be needed).
	This is achieved by invoking an auxiliary scalar field 
	subject to the $0$-form operator $K^{(0)}_\Mb=-\delta_\Mb\dd_\Mb + m^2$ on
	$\Lambda^0M$, i.e., the trivial line bundle $M\times\CC$ with bilinear pairing given by multiplication and the antilinear conjugation given by complex conjugation. 
	The corresponding  quantised theory $\Yf(K^{(0)}_\Mb)$ certainly admits Hadamard states~\cite{FullingNarcowichWald,GerardWrochna:2014} (or by Theorem~\ref{thm:Hadamard_norm_hyp}) and we choose any such state $\omega^{(0)}$ with corresponding two-point function $W^{(0)}$.
	One has $\WF(W^{(0)})=\Rc_\Mb^\Had$ and $W^{(0)}-\adj{W}^{(0)}=\ii E^{\twopt}_{K_\Mb^{(0)}}$. Because $\dd_\Mb K_\Mb^{(0)}=K_\Mb\dd_\Mb$, one has
	$\dd_\Mb E_{K_\Mb^{(0)}} = E_{K_\Mb}\dd_\Mb$ and consequently
	\begin{equation}
		(\dd_\Mb\otimes\dd_\Mb) E^{\twopt}_{K_\Mb^{(0)}} = - (\dd_\Mb E_{K_\Mb^{(0)}}\delta_\Mb)^{\twopt} = -( E_{K_\Mb}\dd_\Mb\delta_\Mb)^{\twopt},
	\end{equation}
	where the minus sign arises because $\delta_\Mb=-\adj{\dd_\Mb}$.
	It follows that
	\begin{equation}\label{eq:H_from_W}
		H = W - m^{-2} (\dd_\Mb \otimes \dd_\Mb) W^{(0)}
	\end{equation}
	has antisymmetric part $\tfrac{1}{2}\ii E_{K_M}^{\twopt}$, and is a $K_\Mb$-bisolution with
	$\WF(H) = \Rc_\Mb^\Had$ satisfying~\eqref{eq:FPcondition}. Thus $\omega$ is FP-Hadamard.

	The last statement holds because all MMV-Hadamard states are $\Nc_\Mb^+$-Hadamard.
\end{proof}

\section{Conclusion}
\label{sec:Conc}
 
We have given a detailed treatment of Hadamard states for a wide class of bosonic and fermionic theories and demonstrated how many standard properties can be generalised using our approach.  
In addition, we have given an application to measurement theory~\cite{FewVer_QFLM:2018}, by showing that nonselective state updates of Hadamard states are Hadamard, and proved the complete equivalence of the MMV definition of Hadamard states for the Proca field~\cite{MorettiMurroVolpe:2023} to the FP definition~\cite{Few&Pfen03}. 
The general results proved here should form a useful toolbox for further applications. Indeed, our definition has already been employed in a treatment of functionals for perturbative algebraic QFT in~\cite{HawkinsRejznerVisser:2023}. To conclude, we consider some other potential directions.

First, our discussion of bosonic theories has focussed on hermitian quantum field theories.
However, it is straightforward to encompass complex fields as is sketched for bosonic fields in Appendix~\ref{sec:complex}, by `doubling' them to hermitian theories. In the fermionic case, a similar doubling procedure was used for Dirac-type operators in Section~\ref{sec:Dirac-type}.
Second, one could develop an extended algebra of normal ordered products~\cite{Ho&Wa01} which contains the stress-energy tensor and other similar observables; this can then form the basis for Epstein--Glaser renormalisation theory~\cite{BrFr2000,Ho&Wa01,Ho&Wa02,Rejzner_book,Duetsch_book} for models based on decomposable Green-hyperbolic operators. Third, it would be natural to extend our treatment to theories with gauge freedom by means such as~\cite{HackSchenkel:2013} (which abstracts
methods used in e.g.~\cite{Dimock92,Few&Pfen03,FewsterHunt:2013,DappLang:2012,SandDappHack:2012} to the level of Green-hyperbolic operators). Fourth, one could switch from the smooth wavefront set to its analytic cousin, whereupon results such as the timelike tube theorem and Reeh--Schlieder theorem would be expected for analytically decomposable Green hyperbolic operators satisfying the analytic wavefront set condition
$\WF_A(E_P)\subset (\Vc^+\times\Vc^-)\cup(\Vc^-\times\Vc^+)$ using methods from~\cite{StrohmaierWitten:2024,StrohmaierWitten:2024b}. Fifth, in terms of measurement theory, an important question is to understand which selective measurements correspond to update rules that preserve the Hadamard class. 

Finally, we have not discussed the Hadamard parametrix in the general setting, relying entirely on the microlocal formalism. Partly that is because the motivation for constructing a parametrix is most appropriate for a generally covariant theory -- where one can aim for a description reducing to that of the Minkowski space theory at short scales -- and we treat operators that need not have a covariant description in all spacetimes. Nonetheless, as all two-point functions of Hadamard states have smooth differences, finding a parametrix is related to solving a cohomological problem~\cite{BrFrVe03} and it would be interesting to determine for which Green-hyperbolic operators this can be achieved.

%

{\small 
\paragraph{Acknowledgements} 
It is a pleasure to thank Benito Juárez-Aubry, Eli Hawkins, Onirban Islam, Daan Janssen, Christiane Klein, Gandalf Lechner, Valter Moretti, Kasia Rejzner,  Alexander Strohmaier, Rainer Verch, and Berend Visser for useful discussions in relation to this paper.

\paragraph{Funding}
The author's work is partly supported by EPSRC Grant EP/Y000099/1 to the University of York, and additional support during the programme `Quantum and classical fields interacting with geometry' from the Institut Henri Poincaré (UAR 839 CNRS-Sorbonne Université), and LabEx CARMIN (ANR-10-LABX-59-01) is gratefully acknowledged. For the purpose of open access, the author has applied a creative commons attribution (CC BY) licence to any author accepted manuscript version arising.
 
\paragraph{Data availability} No datasets were generated or analysed in this work.

\paragraph{Conflict of interest} The author has no conflict of interest to declare that is relevant to the content of this article.
}
\appendix

\section{Proof of Lemma~\ref{lem:WFuoTT}}
\label{sec:proofWFuoTT}

We start with an analogue of Theorem 2.5.14 in~\cite{Hoer_FIOi:1971}. We recall the notational convention that $\WF_0(u)=\WF(u)\cup 0$.
\begin{lemma}\label{lem:uoT} Let $u\in\DD'(B_2\otimes\Omega^{1/2}_{M_2})$.
	Suppose a continuous linear map $T:\Gamma_0^\infty(B_1^*\otimes\Omega^{1/2}_{M_1})\to\Gamma_0^\infty(B_2^*\otimes\Omega^{1/2}_{M_2})$ has the property that,
	for each compact $K_1\subset M_1$ there is a compact $K_2\subset M_2$
	so that $T\Gamma^\infty_{K_1}(B_1^*)\subset \Gamma^\infty_{K_2}(B_2^*)$.
	If $\WF(T^\knl)\cap (\dot{T}^*M_2\times 0_{T^*M_1})=\emptyset$ then
	$u\circ T\in\DD'(B_1\otimes\Omega^{1/2}_{M_2})$ has wavefront set
	\begin{equation}\label{eq:uoT}
		\WF(u\circ T) \subset \WF'(\adj{}(T^\knl))\bullet \WF_0(u) .
	\end{equation}
	for all $u\in\DD'(B_2\otimes\Omega^{1/2}_{M_2})$, where $\adj{}(T^\knl)\in \DD'((B_1\boxtimes B_2^*)\otimes\Omega^{1/2}_{M_1\times M_2})$ is obtained by reversing arguments, 
	$\adj{}(T^\knl)(h\otimes F)=T^\knl(F\otimes h)$. If, in addition, 
	$\WF(T^\knl)\cap (0_{T^*M_2}\times T^*M_1)=\emptyset$, then~\eqref{eq:uoT} simplifies to
	\begin{equation}\label{eq:simplified_WFuoT_bound}
		\WF(u\circ T) \subset \WF'(\adj{}(T^\knl))\bullet \WF(u).
	\end{equation}
\end{lemma}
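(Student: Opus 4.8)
\textbf{Proof plan for Lemma~\ref{lem:uoT}.}

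The plan is to mimic the proof of Theorem~2.5.14 in~\cite{Hoer_FIOi:1971}, adapting it to the bundle setting by working with local frames as already discussed in Section~\ref{sec:WF}. First I would reduce to the scalar case: choosing local frames for $B_1$ near a point of interest in $M_1$ and for $B_2$ over the (compact) image region determined by the support hypothesis on $T$, the map $T$ becomes a finite matrix of scalar operators, and both $\WF(u\circ T)$ and $\WF(\adj{}(T^\knl))$ decompose as finite unions over frame components. So it suffices to treat the case $B_1=M_1\times\CC$, $B_2=M_2\times\CC$, i.e. scalar half-densities. The support condition on $T$ guarantees $u\circ T$ is well-defined as a distribution: for $\varphi\in\Gamma^\infty_{K_1}(B_1^*\otimes\Omega^{1/2})$ one has $T\varphi\in\Gamma^\infty_{K_2}(B_2^*\otimes\Omega^{1/2})$ for a fixed compact $K_2$, and continuity of $T$ plus continuity of $u$ on $\Gamma^\infty_{K_2}$ gives the required seminorm estimate.

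The core argument is the oscillatory-testing characterisation of the wavefront set. Fix $(x_0,\xi_0)\in\dot T^*M_1$ not lying in $\WF'(\adj{}(T^\knl))\bullet\WF_0(u)$; I must show it is a regular directed point of $u\circ T$. Pick a coordinate chart about $x_0$, a cutoff density $\chi$ with $\chi(x_0)\neq0$ supported there, and consider $(u\circ T)(\chi\ee_\xi)=u(T(\chi\ee_\xi))$ for $\xi$ in a small conic neighbourhood $\Gamma$ of $\xi_0$. The function $T(\chi\ee_\xi)$ is, via the kernel, $y\mapsto T^\knl$ paired against $\chi\ee_\xi$ in the $M_1$-slot — an oscillatory integral in the parameter $\xi$ with support in the fixed compact $K_2$. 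The hypothesis $\WF(T^\knl)\cap(\dot T^*M_2\times 0_{T^*M_1})=\emptyset$ is exactly what is needed to ensure this is a well-behaved family; it plays the role of the ``no vertical covectors over $M_1$'' condition in Hörmander's proof and lets one apply the non-stationary phase / wavefront-set-of-a-kernel estimates. One then analyses $u$ tested against this family. The set of ``bad'' directions in $M_2$ that can be excited is governed by $\WF'(\adj{}(T^\knl))\bullet\{(x_0,\Gamma)\}$, and by choosing $\Gamma$ small this set can be taken inside any prescribed conic neighbourhood of $\bigl(\WF'(\adj{}(T^\knl))\bullet\{(x_0,\xi_0)\}\bigr)_0$. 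Since $(x_0,\xi_0)\notin\WF'(\adj{}(T^\knl))\bullet\WF_0(u)$, this excited set can be arranged to avoid $\WF_0(u)$ entirely, whence $u$ applied to the family decays rapidly in $\xi$, i.e. $(x_0,\xi_0)\notin\WF(u\circ T)$. Taking complements gives~\eqref{eq:uoT}.

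For the simplified bound~\eqref{eq:simplified_WFuoT_bound}, the extra hypothesis $\WF(T^\knl)\cap(0_{T^*M_2}\times T^*M_1)=\emptyset$ means that whenever $(x,\xi;y,\eta)\in\WF(\adj{}(T^\knl))$ one automatically has $\xi\neq0$, so the $\bullet$ operation with $\WF_0(u)$ cannot pick up any contribution from the zero section in the $M_2$-slot: $\WF'(\adj{}(T^\knl))\bullet\WF_0(u)=\WF'(\adj{}(T^\knl))\bullet\WF(u)$. This is a purely set-theoretic remark once~\eqref{eq:uoT} is in hand.

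\textbf{Main obstacle.} The technical heart is controlling $T(\chi\ee_\xi)$ as an oscillatory family and justifying that $u$ tested against it decays rapidly — i.e. the bundle/half-density bookkeeping in the estimate $|u(T(\chi\ee_\xi))|\le C_N(1+|\xi|)^{-N}$, uniformly for $\xi\in\Gamma$. In the scalar case this is Hörmander's argument verbatim; the only genuine work is checking that the reduction to scalars via local frames does not disturb the support hypotheses (it does not, since frames are fixed smooth sections with no effect on supports or wavefront sets beyond the already-noted finite union) and that the half-density weights, being nowhere-vanishing smooth factors, drop out of all wavefront-set statements by the remark in Section~\ref{sec:WF}. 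I expect no surprises, but this bookkeeping is where care is required.
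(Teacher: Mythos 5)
Your architecture (reduction to scalars via local frames, well-definedness of $u\circ T$ from the support hypothesis and continuity, and the purely set-theoretic derivation of \eqref{eq:simplified_WFuoT_bound} from \eqref{eq:uoT}) is sound — though note a slip in the last step: the extra hypothesis forces the $M_2$-covector $\eta$, not the $M_1$-covector $\xi$, to be nonzero on $\WF(\adj{}(T^\knl))$, which is what kills the contribution of $0_{T^*M_2}$ under $\bullet$. The genuine gap is in the core estimate. Your claim that $T(\chi\ee_\xi)$ is an oscillatory family whose ``excited directions'' in $T^*M_2$ are governed by $\WF'(\adj{}(T^\knl))\bullet\{(x_0,\Gamma)\}$, so that pairing with $u$ decays rapidly once that set avoids $\WF_0(u)$, *is* the composition theorem; as stated it is asserted, not proved. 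The obstruction is that $T^\knl$ is only a distribution, so $T(\chi\ee_\xi)$ is a family whose $C^k$-norms grow polynomially in $\xi$ at best; one cannot pair it with $u$ and integrate by parts directly, and making ``excited directions'' precise requires a simultaneous frequency decomposition of $u$ and of $T^\knl$ over $\supp\chi\times K_2$ — i.e.\ essentially reproving H\"ormander's product theorem. Appealing to ``H\"ormander's argument verbatim'' does not close this, because H\"ormander's own proof of Theorem~2.5.14 is not the oscillatory-testing argument you describe: it forms the product of the kernel with $1\otimes u$ and pushes forward.

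That product-and-project route is exactly what the paper does, and it is where the first hypothesis actually enters: since $\WF(u\otimes 1)\subset\WF(u)\times 0_{T^*M_1}$ and $\WF(T^\knl)$ avoids $\dot{T}^*M_2\times 0_{T^*M_1}$, the sum $\WF_0(u\otimes 1)+\WF_0(T^\knl)$ misses the zero section, so the product $(u\otimes 1)T^\knl$ exists by the H\"ormander criterion with the standard wavefront bound. The support hypothesis then lets one write $(u\circ T)(f)=((u\otimes 1)T^\knl)(\chi\otimes f)$ for a cutoff $\chi\in C_0^\infty(M_2)$ with $\chi=1$ on $K_2$, and \eqref{eq:uoT} follows by intersecting $\WF((u\otimes 1)T^\knl)$ with $0_{T^*M_2}\times\dot{T}^*M_1$ and projecting onto the $M_1$ factor. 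To complete your proof you would either have to carry out the two-variable Fourier estimate from scratch or reduce to the product theorem as above — in which case the oscillatory-testing scaffolding becomes redundant.
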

\begin{proof}
	It is enough to prove this in the case where 
	$T:\Gamma_0^\infty(\Omega^{1/2}_{M_1})\to\Gamma_0^\infty(\Omega^{1/2}_{M_2})$ as the general statement follows by considering local frames
	for $B_1$ and $B_2$. 
	Let $u\in\DD'(\Omega^{1/2}_{M_2})$. Owing to the support and continuity conditions, one has $u\circ T\in \DD'(\Omega_{M_1}^{1/2})$.
	Informally, we would like to express $(u\circ T)(f)=((u\otimes 1)T^\knl)(1\otimes f)$, which is possible if the distributional product $(u\otimes 1)T^\knl$ exists and one can avoid the problem that the constant function is not a test function on $M_2$. 
	
	The H\"ormander criterion for the existence of the product is
	satisfied because
	$\WF(u\otimes 1)\subset \WF(u)\times 0_{T^*M_1}$, 
	while $\WF(T^\knl)\cap (\dot{T}^*M_2\times 0_{T^*M_1})=\emptyset$, so $\WF(u\otimes 1)+\WF(T^\knl)$  
	does not meet the zero section of $T^*(M_2\times M_1)$. 
	Consequently,
	the distributional product $(u\otimes 1)T^\knl$ exists 
	in $\DD'(\Omega_{M_2}\boxtimes \Omega_{M_1}^{1/2})$
	with
	\begin{equation}\label{eq:u1T}
		\WF((u\otimes 1)T^\knl)\subset 	\WF_0(u\otimes 1)+\WF_0(T^\knl)
		\subset \WF_0(u)\times 0_{T^*M_1}+\WF_0(T^\knl) .
	\end{equation}
	Using the support property of $T$, for any compact $K\subset M_1$ we may choose $\chi\in C_0^\infty(M_2)$ so that
	$(\chi u\circ T)(f)=(u\circ T)(f)$ for all $f\in\Gamma_{K}^\infty(B_1)$, and 
	\begin{equation}
		(u\circ T)(f)= (\chi u\circ T)(f) = ((u\otimes 1)T^\knl)(\chi\otimes f),
	\end{equation}
	which allows the wavefront set of $u\circ T$ to be estimated locally as
	\begin{equation}
		\WF(u\circ T)\subset \pr_{\dot{T}^*M_1} ( (0_{T^*M_2}\times \dot{T}^*M_1)\cap \WF((u\otimes 1)T^\knl)) ,
	\end{equation} 
	and since this bound is independent of $\chi$, it holds globally. 
	Using~\eqref{eq:u1T} we may compute
	\begin{align}
		\WF(u\circ T) &\subset \pr_{\dot{T}^*M_1}( (0_{T^*M_2}\times \dot{T}^*M_1)\cap (\WF_0(u)\times 0_{T^*M_1}+\WF_0(T^\knl))) \nonumber \\
		&=	\pr_{\dot{T}^*M_1} ((\dot{T}^*M_1\times 0_{T^*M_2} )\cap(0_{T^*M_1}\times \WF_0(u)+\WF(\adj{}T^\knl))\nonumber\\
		&= 	\pr_{\dot{T}^*M_1} ((\dot{T}^*M_1\times \WF_0(u))\cap \WF'(\adj{}T^\knl))\nonumber\\
		&\subset  
		\WF'(\adj{}(T^\knl))\bullet \WF_0(u),
	\end{align} 
	thus establishing~\eqref{eq:simplified_WFuoT_bound}. Here, in the second line, 
	we used the fact that the zero section in $\WF_0(T^\knl)$ does not contribute to the overall result, and reversed the order of factors. 
	The last part holds because $\WF'(\adj{}(T^\knl))\bullet  0_{T^*M_2}=\pr_{T^*M_1} \WF(T^\knl)\cap   (0_{T^*M_2}\times T^*M_1)=\emptyset$.
\end{proof}

We will  apply this result to $u\circ (T_1\otimes T_2)$ for $u\in \DD'((B_2\boxtimes B_2)\otimes \Omega^{1/2}_{M_2\times M_2})$ assuming that the $T_j$ both obey both assumptions from Lemma~\ref{lem:uoT}. 
Noting that 
\begin{align}\label{eq:WFT1T2_bd}
	\WF'(\adj{}((T_1\otimes T_2)^\knl))&\subset \{ (x,k;x',k';y,-l;y',-l')\in \dot{T}^*(M_1\times M_1\times M_2\times M_2):
	\nonumber\\
	&\qquad\qquad
	(y,l;x,k)\in \WF_0(T_1^\knl),~(y',l';x',k')\in \WF_0(T_2^\knl)
	\},
\end{align}
one easily checks that $T_1\otimes T_2$ satisfies the support condition and that $\WF((T_1\otimes T_2)^\knl)$ obeys both conditions in Lemma~\ref{lem:uoT}, so
\begin{equation}
	\WF(u\circ (T_1\otimes T_2)) \subset \WF'(\adj{}((T_1\otimes T_2)^\knl))\bullet\WF(u).
\end{equation}
Furthermore,~\eqref{eq:WFT1T2_bd} entails that, for conic $\Gamma\subset \dot{T}^*(M_2\times M_2)$, 
\begin{equation}
	\WF'(\adj{}((T_1\otimes T_2)^\knl))\bullet\Gamma = (\WF_0'(\adj{}(T_1^\knl))\times \WF_0'(\adj{}(T_2^\knl)))\bullet \Gamma,
\end{equation}
which simplifies to
\begin{equation}\label{eq:uoTTsimple}
	\WF'(\adj{}((T_1\otimes T_2)^\knl))\bullet\Gamma = (\WF'(\adj{}(T_1^\knl))\times \WF'(\adj{}(T_2^\knl)))\bullet \Gamma
\end{equation}
if $\Gamma\cap ((0_{T^*M_2}\times \dot{T}^* M_2 ) \cup 
(\dot{T}^*M_2\times 0_{T^*M_2}))=\emptyset$.

\begin{proof}[Proof of Lemma~\ref{lem:WFuoTT}] 
	We will apply~\eqref{eq:uoTTsimple} to $T^{\sharp,1}_j$ in place of $T_j$ and $B_j\otimes\Omega_j^{-1/2}$ in place of $B_j$.  As
	the kernel distributions of $T^{\sharp,1}_j$ and $T_j$ have the same supports and wavefront sets, it follows that the $T^{\sharp,1}_j$ 
	satisfy the support condition  
	and~\eqref{eq:WFuoTT} is immediate from~\eqref{eq:uoTTsimple}. 
\end{proof}

\section{Complex fields}\label{sec:complex} 

Let $\Mb\in\Loc$ and suppose $B$ is a finite-rank complex vector bundle over $\Mb$ equipped with a hermitian fibre metric, but not necessarily with a complex conjugation. The hermitian metric gives an antilinear isomorphism of $B$ to $B^*$ (or sections thereof) written $f\mapsto f^\star$, so that $\dlangle f^\star,h\drangle = (f,h)$. We use the same notation for the inverse isomorphism, and define a hermitian fibre metric on $B^*$ by
$(u,v)_{B^*}=(v^\star,u^\star)_{B}$. Next, we endow $B\oplus B^*$ with the direct sum hermitian fibre metric and a complex conjugation $\Cc$ given by
\begin{equation}
	\Cc \begin{pmatrix}
		f\\ u
	\end{pmatrix} = \begin{pmatrix}
	u^\star \\ f^\star
	\end{pmatrix},
\end{equation}
which makes $B\oplus B^*$ into a hermitian vector bundle in $\HVB$.
If $P$ is a formally hermitian Green hyperbolic operator (FHGHO) on $\Gamma^\infty(B)$, it is easily shown that $\sadj{P}$ is formally hermitian on $\Gamma^\infty(B^*)$ and $P\oplus \sadj{P}$ is a \rfhgho\ on $\Gamma^\infty(B\oplus B^*)$. 

We quantise the theory as $\Zf(P)=\Yf(P\oplus \sadj{P})$, also introducing 
complex fields
\begin{equation}
	\Phi_P(u) = \Upsilon_{P\oplus\sadj{P}}(0\oplus u), \qquad
	\Phi^\star_P(f) = \Upsilon_{P\oplus\sadj{P}}(f\oplus 0)
\end{equation}
for $u\in\Gamma_0^\infty(B^*)$, $f\in\Gamma_0^\infty(B)$, which clearly generate $\Zf(P)$. It is easily seen that the following relations are satisfied: 
	\begin{enumerate}[C1]
		\item $u\mapsto \Phi_{P }(u)$ and $f\mapsto \Phi^\star_{P }(f)$ are complex linear
		\item $\Phi^\star_{P}(f) = \Phi_{P}(f^\star)^*$
		\item $\Phi_{P }(\sadj{P}   u) = 0 = \Phi_{P}^\star(P f)$  
		\item commutation relations   
		\begin{equation}
			[\Phi_{P}(u),\Phi^\star_{P}(f)] = \ii \dlangle u,E_{P}f\drangle\1_{\Zf(P)},
			\qquad 
			[\Phi_{P}(u),\Phi_{P}(v)] = [\Phi^\star_{P}(f),\Phi^\star_{P}(h)] =0
		\end{equation}
	\end{enumerate}
	for all $f\in\Gamma_0^\infty(B)$, $u,v\in \Gamma_0^\infty(B^*)$.	
Equivalently, we could construct $\Zf(P)$ using 
the generators $\Phi_P(u)$ and $\Phi^\star_P(f)$ subject to 
relations C1--C4.
The generator $\Phi_P(u)$ may be interpreted as a quantisation of
classical linear functional $\phi\mapsto \dlangle u,\phi\drangle$ on $\Sol(P)$, while $\Phi_P^\star(f)$ quantises
the antilinear functional $\phi\mapsto \dlangle \phi^\star,f\drangle$.

For each $z\in\mathrm{U}(1)=\{z\in\CC:|z|=1\}$, the map $G_P(z):f\oplus u\mapsto 
\overline{z}f\oplus z u$ is a $\GreenHyp$-automorphism of $P\oplus \sadj{P}$ and therefore furnishes a group of automorphisms of $\Zf(P)$ given by 
$\gamma_P(z)=\Yf(G_P(z))$. The action on the complex fields is
\begin{equation}
	\gamma_P(z)\Phi_P(u) = z\Phi_P(u), \qquad 
	\gamma_P(z)\Phi^\star_P(f) = \overline{z}\Phi^\star_P(f)
\end{equation}
for $z\in\mathrm{U}(1)$, $f\in\Gamma_0^\infty(B)$, $u\in\Gamma_0^\infty(B^*)$.

Specialising, if $B$ is in fact a  hermitian vector bundle, with antilinear involution denoted $f\mapsto \overline{f}$, then there is also a bilinear pairing on $B$ and a linear isomorphism $f\mapsto f^T= \overline{f}^\star$ from $B$ to $B^*$, so that $\dlangle f^T,h\drangle = \langle f,h\rangle$. The inverse isomorphism is written with the same notation. In this case, one can define $\Psi_P(f):=\Phi_P(f^T)$ and $\overline{\Psi}_P(f):=\Phi_P^*(f)$ which generate $\Zf(P)$ as $f$ runs over $f\in\Gamma_0^\infty(B)$, and obey
\begin{enumerate}[C1$'$]
	\item $f\mapsto \Psi_{P }(f)$ and $f\mapsto \overline{\Psi}_{P }(f)$ are complex linear
	\item $\overline{\Psi}_{P}(f) = \Psi_{P}(\overline{f})^*$
	\item $\Psi_{P }(\overline{P}f) = 0 = \overline{\Psi}_P(P f)$  
	\item commutation relations   
	\begin{equation}
		[\Psi_{P}(f),\overline{\Psi}_{P}(h)] = \ii \langle f,E_{P}h\rangle\1_{\Zf(P)},
		\qquad 
		[\Psi_{P}(f),\Psi_{P}(h)] = [\overline{\Psi}_{P}(f),\overline{\Psi}_{P}(h)] =0
	\end{equation}
\end{enumerate}
for all $f,h\in\Gamma_0^\infty(B)$, where $\overline{P}f=\overline{P\overline{f}}$. This provides an
alternative presentation of $\Zf(P)$ in these circumstances.
The element $\Psi_P(f)$ represents the quantisation of the functional $\phi\mapsto \langle f,\phi\rangle$ on $\Sol(P)$.

Specialising further, any \rfhgho\ $P$ on $B$ (i.e., $P=\overline{P}$) may be quantised both as a hermitian bosonic field theory $\Yf(P)$ and as a complex bosonic field theory $\Zf(P)$. In this situation, there is an isomorphism between 
$\Zf(P)=\Yf(P\oplus \sadj{P})$ and  $\Yf(P)\otimes\Yf(P)=\Yf(P\oplus P)$ arising from the
Cauchy $\GreenHyp$-morphism from $P\oplus \sadj{P}$ to $P\oplus P$
given by  
\begin{equation}
	f\oplus u\mapsto \frac{1}{\sqrt{2}}\left[
(u^T+f)\oplus \ii(u^T-f)	\right];
\end{equation} 
here, one uses the
easily checked identity $\sadj{P}f^T = (Pf)^T$. In this situation the complex theory is two copies of the hermitian theory, as expected. The correspondence can be realised by writing 
\begin{equation}\label{eq:field_id_complex_rfhgho}
	\Phi_P(u) = \frac{1}{\sqrt{2}}\left(\Upsilon_P(u^T)\otimes \1_{\Yf(P)} +\ii \1_{\Yf(P)}\otimes \Upsilon_P(u^T)\right),
	\qquad u\in\Gamma_0^\infty(B^*),
\end{equation}
or equivalently,
\begin{equation}\label{eq:field_id_complex_rfhgho_psi}
	\Psi_P(f) = \frac{1}{\sqrt{2}}\left(\Upsilon_P(f)\otimes \1_{\Yf(P)} +\ii \1_{\Yf(P)}\otimes \Upsilon_P(f)\right), \qquad f\in\Gamma_0^\infty(B).
\end{equation}

Returning to a general FHGHO $P$, and dropping the assumption that $B$ possesses an antilinear involution, the two-point function of a state $\omega$ on $\Zf(P)=\Yf(P\oplus \sadj{P})$ is 
a distribution in $\DD'((B\oplus B^*)\boxtimes (B\oplus B^*))$ that combines the two-point functions
\begin{align}
	W_{\Phi\Phi}(u^{\natural,1}\otimes v^{\natural,1}) &= \omega(\Phi_P(u)\Phi_P(v)), \qquad
	W_{\Phi^\star\Phi^\star}(f^{\natural,1}\otimes h^{\natural,1}) = \omega(\Phi^\star_P(f)\Phi^\star_P(h)) \\
	W_{\Phi\Phi^\star}(u^{\natural,1}\otimes h^{\natural,1}) &= \omega(\Phi_P(u)\Phi^\star_P(h)), \qquad
	W_{\Phi^\star\Phi}(f^{\natural,1}\otimes v^{\natural,1}) = \omega(\Phi^\star_P(f)\Phi_P(v))
\end{align}
for $f,h\in\Gamma_0^\infty(B)$, $u,v\in \Gamma_0^\infty(B^*)$. Note that $W_{\Phi\Phi}$ 
and $W_{\Phi^\star\Phi^\star}$ are symmetric by virtue of the commutation relations C4. 
The state $\omega$ is described as quasifree if it is quasifree as a state on $\Yf(P\oplus\sadj{P})$, which can be expressed as the identity
\begin{equation}
	\omega(\ee^{\ii(\Phi_P(u)+\Phi^\star_P(f))}) = \exp \left[-\frac{1}{2}\left(
	W_{\Phi\Phi}(u\otimes u) + W_{\Phi\Phi^\star}(u\otimes f) + W_{\Phi^\star\Phi}(f\otimes u)+
	W_{\Phi^\star\Phi^\star}(f\otimes f)
	\right) \right]
\end{equation}
as formal double series in $f\in\Gamma_0^\infty(B)$, $u\in \Gamma_0^\infty(B^*)$. 

A state $\omega$ on $\Zf(P)$ is said to be \emph{gauge-invariant} if $\gamma_P(z)^*\omega=\omega$ for all $z\in\mathrm{U}(1)$. For a quasifree state $\omega$, gauge-invariance is equivalent to 
$W_{\Phi\Phi}=W_{\Phi^\star\Phi^\star}=0$. A particular example of this occurs if $P$ is a \rfhgho: if $\omega$ is a quasifree state on $\Yf(P)$ then the state $\omega\otimes\omega$
on $\Yf(P)\otimes\Yf(P)$ determines a state on $\Zf(P)$ (via the isomorphism described above) that is quasifree and gauge invariant. To see that $W_{\Phi\Phi}=W_{\Phi^\star\Phi^\star}=0$, we use a calculation based on~\eqref{eq:field_id_complex_rfhgho}, 
\begin{align} 
	2\Phi_P(f^T)\Phi_P(h^T) &= \Upsilon_{P}(f)\Upsilon_{P}(h)\otimes\1_{\Yf(P)} - \1_{\Yf(P)}\otimes \Upsilon_{P}(f)\Upsilon_{P}(h) \nonumber \\
	&\qquad +\ii\left(
	\Upsilon_{P}(f)\otimes\Upsilon_{P}(h) + \Upsilon_{P}(h)\otimes\Upsilon_{P}(f) 
	\right)
\end{align}
and the fact that quasifree states have vanishing $1$-point functions. The analogous calculation for $2\Phi^\star_P(f)\Phi^\star_P(h)$ reverses the sign before the $\ii$. We summarise this discussion as follows.
\begin{thm}\label{thm:complex_qf}
	Let $P$ be a $\Vc^\pm$-decomposable \rfhgho\ and suppose that $\Yf(P)$ admits a quasifree state $\omega$. Then $\omega\otimes\omega$ induces a gauge-invariant quasifree state on $\Zf(P)$ via the isomorphism $\Zf(P)\cong\Yf(P)\otimes\Yf(P)$ described above.  
\end{thm}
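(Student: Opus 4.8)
The plan is to read the statement off from the constructions carried out earlier in this appendix; no genuinely new input is required beyond the elementary fact that a product of states on an algebraic tensor product of unital $*$-algebras is again a state, and the hypothesis of $\Vc^\pm$-decomposability plays no role in the argument. Let $\iota:\Zf(P)=\Yf(P\oplus\sadj{P})\to\Yf(P)\otimes\Yf(P)$ denote the $\Alg$-isomorphism described above, under which the complex fields take the form~\eqref{eq:field_id_complex_rfhgho}; since $\iota$ is induced by a Cauchy $\GreenHyp$-isomorphism (composed with the identification of Theorem~\ref{thm:RFHGHOquantisation}(d)), it carries the generators $\Upsilon$ to linear combinations of generators. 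As $\omega$ is a state on $\Yf(P)$, the functional $\omega\otimes\omega$ is a state on $\Yf(P)\otimes\Yf(P)$ — positivity follows from the fact that the Schur product of positive semidefinite matrices is positive semidefinite, applied to the GNS matrices of $\omega$. Hence $\widetilde\omega:=\iota^*(\omega\otimes\omega)$ is a state on $\Zf(P)$; this is the claimed state.

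First I would verify quasifreeness. Under the identification $\Yf(P)\otimes\Yf(P)\cong\Yf(P\oplus P)$, the generating functional of $\omega\otimes\omega$ factorises,
\[
(\omega\otimes\omega)\bigl(\ee^{\ii\Upsilon_{P\oplus P}(f\oplus h)}\bigr)=\omega\bigl(\ee^{\ii\Upsilon_P(f)}\bigr)\,\omega\bigl(\ee^{\ii\Upsilon_P(h)}\bigr)=\ee^{-W_\omega(f,f)/2}\,\ee^{-W_\omega(h,h)/2},
\]
which is of the Gaussian form characterising quasifree states (cf.~\eqref{eq:quasifree}); hence $\omega\otimes\omega$ is quasifree. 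Because $\iota$ maps generators to linear images of generators, the generating functional of $\widetilde\omega$ is obtained from that of $\omega\otimes\omega$ by a linear substitution and is therefore again Gaussian, so $\widetilde\omega$ is quasifree on $\Zf(P)$. In particular $\widetilde\omega$ has distributional $n$-point functions, is determined by its two-point function, and has vanishing one-point functions.

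Next I would establish gauge invariance. As recorded just before the theorem, a quasifree state on $\Zf(P)$ is gauge invariant precisely when the two-point components $W_{\Phi\Phi}$ and $W_{\Phi^\star\Phi^\star}$ vanish, so it is enough to compute these for $\widetilde\omega$. Using~\eqref{eq:field_id_complex_rfhgho} one has, for $f,h\in\Gamma_0^\infty(B)$,
\[
2\,\Phi_P(f^T)\Phi_P(h^T)=\Upsilon_{P}(f)\Upsilon_{P}(h)\otimes\1-\1\otimes\Upsilon_{P}(f)\Upsilon_{P}(h)+\ii\bigl(\Upsilon_{P}(f)\otimes\Upsilon_{P}(h)+\Upsilon_{P}(h)\otimes\Upsilon_{P}(f)\bigr);
\]
applying $\omega\otimes\omega$, the first two terms cancel while the last two vanish because $\omega$, being quasifree, has vanishing one-point functions. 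Since $f\mapsto f^T$ maps $\Gamma_0^\infty(B)$ onto $\Gamma_0^\infty(B^*)$, this yields $W_{\Phi\Phi}=0$, and the identity for $2\,\Phi^\star_P(f)\Phi^\star_P(h)$, which differs only by the sign of the $\ii$-term, yields $W_{\Phi^\star\Phi^\star}=0$. Thus $\widetilde\omega$ is a gauge-invariant quasifree state on $\Zf(P)$, completing the proof.

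I do not expect a genuine obstacle here: the entire content lies in correctly tracking the identification $\Zf(P)\cong\Yf(P)\otimes\Yf(P)$ and the images of the complex fields under it. The one point that deserves a little care is that quasifreeness — whether phrased through the Gaussian generating functional or, equivalently, through the vanishing of all truncated $n$-point functions with $n\neq2$ — is transported unchanged along $\iota$; this holds because $\iota$ arises from a $\GreenHyp$-isomorphism and hence carries generators to generators linearly, so that the truncated correlators of $\widetilde\omega$ are exactly those of $\omega\otimes\omega$ composed with the corresponding linear substitution.
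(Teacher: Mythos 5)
Your proposal is correct and follows essentially the same route as the paper: the identification $\Zf(P)\cong\Yf(P)\otimes\Yf(P)$, the observation that the product of quasifree states is quasifree, and the cancellation in $(\omega\otimes\omega)(2\Phi_P(f^T)\Phi_P(h^T))$ using vanishing one-point functions are precisely the ingredients the paper uses. Your additional remarks—that $\Vc^\pm$-decomposability plays no role here and that quasifreeness is transported along $\iota$ because it maps generators to linear combinations of generators—are accurate and merely make explicit what the paper leaves implicit.
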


Now let $P$ be a general FGHO, and suppose that $P$ is $\Vc^\pm$-decomposable, $\WF(E_P^\knl)\subset (\Vc^+\times\Vc^-)\cup(\Vc^-\times\Vc^+)$. Noting that
$E_{\sadj{P}}=-\sadj{E}_P$ and $\sadj{E}_{P}^\knl (f\otimes u)=
E_P^\knl(u\otimes f)$, we have $\WF(\sadj{E}_{\sadj{P}})=\adj{\WF}(E_P^\knl)$ and therefore
$\sadj{P}$ is also $\Vc^\pm$-decomposable. By Theorem~\ref{thm:Hadamard_tools}(a), $P\oplus\sadj{P}$ is 
$\Vc^\pm$-decomposable. Conversely, $\Vc^\pm$-decomposability of
$P\oplus\sadj{P}$ implies that of $P$ because
$\WF(E_P)\subset \WF(E_P\oplus E_{\sadj{P}}) = \WF(E_{P\oplus\sadj{P}})$.
Thus decomposability of $P\oplus \sadj{P}$ and $P$ are equivalent.  With this in mind, 
we say that a state on $\Zf(P)$ is $\Vc^+$-Hadamard if and only if its equivalent state on  $\Yf(P\oplus\sadj{P})$ is $\Vc^+$-Hadamard. The properties of $\Vc^+$-Hadamard states of the complex theory may be read off from the theory developed in Section~\ref{sec:Hadamard} applied to $\Yf(P\oplus\sadj{P})$. In particular, if $\omega$ is $\Vc^+$-Hadamard then
the wavefront sets of $W_{\Phi\Phi^\star}$ and $W_{\Phi^\star\Phi}$ are contained in $\Vc^+\times\Vc^-$, while $W_{\Phi\Phi}$ and $W_{\Phi^\star\Phi^\star}$ are smooth because
the fact that they are symmetric implies $\WF(W_{\Phi\Phi})\subset (\Vc^+\times\Vc^-)\cap(\Vc^-\times\Vc^+)=\emptyset$, and similarly for $W_{\Phi^\star\Phi^\star}$.

If $P$ is a \rfhgho\ we can also make use of the properties of Hadamard states on $\Yf(P)$.
\begin{thm}\label{thm:complex_Had}
	Let $P$ be a $\Vc^\pm$-decomposable \rfhgho\ and suppose that $\Yf(P)$ admits a $\Vc^+$-Hadamard state $\omega$. Then $\omega\otimes\omega$ is a $\Vc^+$-Hadamard on  $\Yf(P)\otimes\Yf(P)$ and if $\omega$ is also quasifree, then $\omega\otimes\omega$ is also quasifree and gauge-invariant. In particular, $\Zf(P)$ admits a quasifree,  gauge-invariant $\Vc^+$-Hadamard state.
\end{thm}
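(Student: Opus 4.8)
The plan is to deduce Theorem~\ref{thm:complex_Had} from results already established, treating the three claims in order. First I would handle the claim that $\omega\otimes\omega$ is $\Vc^+$-Hadamard on $\Yf(P)\otimes\Yf(P)$. Since $P$ is a $\Vc^\pm$-decomposable \rfhgho, $P\oplus P$ is $\Vc^\pm$-decomposable by Theorem~\ref{thm:Hadamard_tools}(a) (applied with $\Vc_P^\pm=\Vc_Q^\pm=\Vc^\pm$, where indeed $\Vc^+\cap\Vc^-=\emptyset$, so that $\Vc_{P\oplus P}^\pm=\Vc^\pm$), and under the identification $\Yf(P)\otimes\Yf(P)\cong\Yf(P\oplus P)$ of Theorem~\ref{thm:RFHGHOquantisation}(d), the state $\omega\otimes\omega$ is a $\Vc^+$-Hadamard state by Theorem~\ref{thm:Hadamard_tools}(b). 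This is essentially immediate once the cone bookkeeping is spelled out.

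Next I would address the quasifree and gauge-invariant claims. If $\omega$ is quasifree, then $\omega\otimes\omega$ is quasifree on $\Yf(P)\otimes\Yf(P)$ --- this follows from the factorisation of the generating functional, or directly from the fact that the truncated $n$-point functions of a tensor product vanish for $n\neq 2$ whenever those of each factor do (cf.\ Section~\ref{sec:truncated}). Transporting through the $\GreenHyp$-isomorphism $f\oplus u\mapsto f\oplus u^T$ from $P\oplus\sadj{P}$ to $P\oplus P$, which induces the $\Alg$-isomorphism $\Zf(P)\cong\Yf(P)\otimes\Yf(P)$, we obtain a quasifree state on $\Zf(P)$; since quasifree states are defined via $\Yf(P\oplus\sadj{P})$, quasifreeness is preserved under this isomorphism. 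For gauge invariance, I would invoke the criterion recalled just above the theorem: a quasifree state on $\Zf(P)$ is gauge-invariant if and only if $W_{\Phi\Phi}=W_{\Phi^\star\Phi^\star}=0$. Using the identity~\eqref{eq:field_id_complex_rfhgho} and the displayed formula for $2\Phi_P(f^T)\Phi_P(h^T)$ (together with its sign-reversed analogue for $\Phi_P^\star$), one computes $W_{\Phi\Phi}(f^T\otimes h^T)=\tfrac12\bigl(\omega\otimes\omega\bigr)\bigl(\Upsilon_P(f)\Upsilon_P(h)\otimes\1-\1\otimes\Upsilon_P(f)\Upsilon_P(h)+\ii(\cdots)\bigr)$, and the symmetric-in-factors piece cancels against itself while the cross terms involve one-point functions of $\omega$, which vanish for quasifree $\omega$ (Theorem~\ref{thm:Hadamard_props}(d) also gives smoothness, but here we need exact vanishing, which holds for quasifree states by definition). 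Hence $W_{\Phi\Phi}=0$, and likewise $W_{\Phi^\star\Phi^\star}=0$.

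The ``in particular'' statement then follows by combining the three pieces: take a quasifree $\Vc^+$-Hadamard state $\omega$ on $\Yf(P)$, which exists by hypothesis (or, when $P$ is normally hyperbolic with positive-definite fibre metric, by Theorem~\ref{thm:Hadamard_norm_hyp} together with Theorem~\ref{thm:Hadamard_props}(e)); then $\omega\otimes\omega$ is quasifree, gauge-invariant, and $\Vc^+$-Hadamard on $\Zf(P)$.

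I do not expect a serious obstacle here --- the theorem is a corollary of machinery developed earlier in the paper. The only point requiring a little care is the interplay between the two different isomorphisms $\Zf(P)\cong\Yf(P)\otimes\Yf(P)$ (one via $B\oplus B^*\cong B\oplus B$ using the conjugation-induced $f\mapsto f^T$, available only because $P$ is a \emph{real} FHGHO, i.e.\ a \rfhgho) and the need to verify that the $\Vc^+$-Hadamard property, defined through $\Yf(P\oplus\sadj{P})$, matches the $\Vc^+$-Hadamard property of $\omega\otimes\omega$ on $\Yf(P\oplus P)$ under this transport --- this is ensured because $f\oplus u\mapsto f\oplus u^T$ is a $\Sympl_\CC$-isomorphism intertwining the complex conjugations and hence induces an isomorphism of algebras mapping $\Vc^+$-Hadamard states to $\Vc^+$-Hadamard states (the wavefront set of the two-point function is unchanged, as the underlying map on test sections is a smooth bundle isomorphism covering the identity). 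Keeping the bookkeeping of which bundle carries which density weight straight is the main bit of diligence required.
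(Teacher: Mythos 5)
Your argument is correct and matches the paper's proof step for step: Theorem~\ref{thm:Hadamard_tools}(a,b) for the $\Vc^\pm$-decomposability of $P\oplus P$ and the Hadamard property of $\omega\otimes\omega$, the computation based on~\eqref{eq:field_id_complex_rfhgho} for quasifreeness and gauge invariance (the paper simply cites Theorem~\ref{thm:complex_qf}, whose proof is exactly the calculation you unfold), and the combination of the pieces for the final claim. The one small slip is that a \emph{quasifree} $\Vc^+$-Hadamard state does not ``exist by hypothesis''---the hypothesis supplies only a $\Vc^+$-Hadamard state, and you must pass to its liberation via Theorem~\ref{thm:Hadamard_props}(e), which is precisely what the paper invokes and what you mention only in a parenthetical about the normally hyperbolic case.
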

\begin{proof}
	Theorem~\ref{thm:Hadamard_tools}(a,b) show that $P\oplus P$ is $\Vc^\pm$-decomposable and $\omega\otimes\omega$ is $\Vc^+$-Hadamard.
	The second part follows by Theorem~\ref{thm:complex_qf}, and the last statement
	holds because $\Yf(P)$ admits a quasifree $\Vc^+$-Hadamard state by Theorem~\ref{thm:Hadamard_props}(e).
\end{proof}

For example, let $P_\Mb=-\delta_\Mb\dd_\Mb+m^2$ be the Proca field operator on $\Mb\in\Loc$, as studied in Section~\ref{sec:Proca}, which is a \rfhgho. The complex theory is $\Zf(P_\Mb)=\Yf(P_\Mb\oplus\sadj{P}_\Mb)\cong\Yf(P_\Mb)\otimes\Yf(P_\Mb)$ (note that $\Yf(P_\Mb)$ was denoted $\Zf(\Mb)$ in Section~\ref{sec:Proca}). By Theorem~\ref{thm:ProcaHadamard}, $\Yf(P_\Mb)$ admits  $\Nc^+$-Hadamard states, so we may conclude that the complex Proca field admits quasifree gauge-invariant $\Nc^+$-Hadamard states on any $\Mb\in\Loc$. 

{\small

}
\end{document}